\documentclass{elsarticle}

\usepackage{lineno,hyperref}
\modulolinenumbers[5]

\usepackage{amsmath}
\usepackage{amssymb}
\usepackage{amsthm}

\usepackage{xspace}
\usepackage{misc}
\usepackage{enumerate}

\usepackage{boxedminipage}
\usepackage{tikz}
\usetikzlibrary{arrows}

\usepackage{url}

\theoremstyle{plain}
\newtheorem{theorem}{Theorem}
\newtheorem{lemma}[theorem]{Lemma}

\newtheorem{proposition}[theorem]{Proposition}
\theoremstyle{definition}
\newtheorem{definition}[theorem]{Definition}
\newtheorem{example}[theorem]{Example}
\theoremstyle{remark}
\newtheorem*{remark}{Remark}

\begin{document}

\begin{frontmatter}
 
\title{A Complete Classification of the Complexity and Rewritability of Ontology-Mediated Queries based on the Description Logic \EL}

\author{Carsten Lutz and Leif Sabellek}
\address{Department of Computer Science, University of Bremen, Germany}

\begin{abstract}
  We provide an ultimately fine-grained analysis of the data complexity and
  rewritability of ontology-mediated queries (OMQs) based on an \EL
  ontology and a conjunctive query~(CQ).
  Our main results are that every such OMQ is in~$\AC^0$, $\NL$-complete, or $\PTime$-complete and that containment in $\NL$
  coincides with rewritability into linear Datalog (whereas
  containment in $\AC^0$ coincides with rewritability into first-order
  logic). We establish natural characterizations of the three cases in terms
  of bounded depth and (un)bounded pathwidth, and
  show that every of the associated meta problems such as deciding wether
  a given OMQ is rewritable into linear Datalog is \ExpTime-complete. We
  also give a way to
  construct linear Datalog rewritings when they exist and prove that
  there is no constant bound on the arity of IDB relations in linear
  Datalog rewritings.
\end{abstract}

\begin{keyword}
description logic \sep ontology-mediated querying \sep  complexity classification
\sep rewritability \sep linear datalog
\end{keyword}

\end{frontmatter}


\section{Introduction}

An important application of ontologies is to enrich data with a semantics and with domain knowledge while also providing additional vocabulary for query formulation \cite{DBLP:conf/rweb/CalvaneseGLLPRR09,DBLP:conf/rweb/KontchakovRZ13,BienvenuCLW14,DBLP:conf/rweb/BienvenuO15}. In this context, it makes sense to view the combination of a database query and an ontology as a compound query, commonly referred to as an \emph{ontology-mediated query (OMQ)}.  An \emph{OMQ language} $(\Lmc,\Qmc)$ is then constituted by an ontology language and a query language \Qmc.  Prominent choices for \Lmc include many description logics (DLs) such as $\EL$, Horn-$\mathcal{SHIQ}$, and \ALC~\cite{DL-Textbook} while the most common choices for \Qmc are conjunctive queries (CQs), unions thereof (UCQs), and the simple atomic queries (AQ) which are of the form $A(x)$.  Substantial research efforts have been invested into understanding the properties of the resulting OMQ languages, with two important topics being
\begin{enumerate}

\item the \emph{data
    complexity} of OMQ evaluation \cite{hustadt-2005,LutzKri07,Rosati07,DBLP:journals/ai/CalvaneseGLLR13, BienvenuCLW14}, where data complexity means that only the data is
  considered the input while the OMQ is fixed, and 

\item the \emph{rewritability} of OMQs into more standard database query languages such as SQL (which in this context is often equated with first-order logic) and Datalog \cite{DBLP:conf/aaai/EiterOSTX12,BieLuWo-IJCAI13,BienvenuCLW14,kaminski14,DBLP:conf/ijcai/AhmetajOS16,ICDT17-JournalVersion}.  

\end{enumerate}
\PTime data complexity is often considered a necessary condition for efficient query evaluation in practice. Questions about rewritability are also motivated by practical concerns: Since most database systems are unaware of ontologies, rewriting OMQs into standard database query languages provides an important avenue for implementing OMQ execution in practical applications~\cite{DBLP:conf/rweb/CalvaneseGLLPRR09,HLSW-IJCAI15,perezurbina10tractable,DBLP:journals/ws/TrivelaSCS15}.  Both subjects are thoroughly intertwined since rewritability into first-order logic (FO) is closely related to $\AC^0$ data complexity while rewritability into Datalog is closely related to $\PTime$ data complexity. We remark that FO-rewritability of an OMQ implies rewritability into a UCQ and thus into Datalog~\cite{BienvenuCLW14}. From now one, when speaking about complexity we always mean data complexity.

Regarding compexity and rewritability, modern DLs can roughly be divided into two families: `expressive DLs' such as \ALC and $\mathcal{SHIQ}$ that result in OMQ languages with $\coNP$ complexity and where rewritability is guaranteed neither into FO nor into Datalog \cite{BienvenuCLW14,DBLP:journals/ws/TrivelaSCS15,ICDT17-JournalVersion}, and `Horn DLs' such as $\EL$ and Horn-$\mathcal{SHIQ}$ which typically have $\PTime$ complexity and where rewritability into (monadic) Datalog is guaranteed, but FO-rewritability is not \cite{BieLuWo-IJCAI13,HLSW-IJCAI15,ijcai16}. 
In practical applications, however, ontology engineers often need to use language features that are only available in expressive DLs, but they typically do so in a way such that one may hope for hardness to be avoided by the concrete ontologies that are being designed. 

Initiated in \cite{LutzWolter12,BienvenuCLW14}, this has led to studies of data complexity and rewritability that are much more fine-grained than the analysis of entire ontology languages, see also \cite{DBLP:conf/dlog/ZakharyaschevKG18,DBLP:conf/ijcai/LutzSW15}. The ultimate aim is to understand, for relevant OMQ languages $(\Lmc,\Qmc)$, the exact complexity and rewritability status of every OMQ from $(\Lmc,\Qmc)$. For expressive DLs, this turns out to be closely related to the complexity classification of constraint satisfaction problems (CSPs) with a fixed template \cite{FederVardi}. Very important progress has recently been made in this area with the proof that CSPs enjoy a dichotomy between \PTime and \NPclass \cite{DBLP:journals/corr/Bulatov17a,DBLP:journals/corr/Zhuk17}. Via the results in \cite{BienvenuCLW14}, this implies that OMQ evaluation in languages such as $(\ALC,\text{UCQ})$ enjoys a dichotomy between \PTime and \coNP. However, the picture is still far from being fully understood. For example, neither in CSP nor in expressive OMQ languages it is known whether there is a dichotomy between \NL and \PTime, and whether containment in \NL coincides with rewritability into linear Datalog.

The aim of this paper is to carry out an ultimately fine-grained analysis of the data complexity and rewritability of OMQs from the languages $(\EL,\textnormal{CQ})$ and $(\EL,\text{AQ})$ where \EL is a fundamental and widely known Horn DL that is at the core of the OWL EL profile of the OWL 2 ontology language \cite{profiles}. In fact, we completely settle the complexity and rewritability status of each OMQ from $(\EL,\textnormal{CQ})$. Our first main result is a trichotomy: Every OMQ from $(\EL,\textnormal{CQ})$ is in $\AC^0$, $\NL$-complete, or $\PTime$-complete, and all three complexities actually occur already in $(\EL,\textnormal{AQ})$.  We consider this a remarkable sparseness of complexities. Let us illustrate the trichotomy using an example. Formally, an OMQ from $(\EL,\textnormal{CQ})$ is a triple $(\Tmc,\Sigma,q)$ with \Tmc an \EL TBox that represents the ontology, $q$ a CQ, and $\Sigma$ an ABox signature, that is, a set of concept and role names that can occur in the data.
\begin{example}
Consider an ontology that represents knowledge about genetic diseases, where $\mn{Disease1}$ is caused by $\mn{Gene1}$ and $\mn{Disease2}$ by $\mn{Gene2}$. A patient carries $\mn{Gene1}$ if both parents carry $\mn{Gene1}$, and the patient carries $\mn{Gene2}$ if at least one parent carries $\mn{Gene2}$ (dominant and recessive inheritance, respectively). Let
$$
\begin{array}{rcl}
\Tmc = \{ \hspace*{4cm}\mn{Gene1} &\sqsubseteq &\mn{Disease1},\\
\mn{Gene2} &\sqsubseteq &\mn{Disease2} \\
\exists \mn{father}.\mn{Gene1} \sqcap \exists \mn{mother}.\mn{Gene1} &\sqsubseteq &\mn{Gene1} \\
\exists \mn{father}.\mn{Gene2}& \sqsubseteq &\mn{Gene2}\\
\exists \mn{mother}.\mn{Gene2}& \sqsubseteq &\mn{Gene2} \qquad \}.
\end{array}
$$
For $\Sigma = \{\mn{Gene1}, \mn{Gene2}, \mn{mother}, \mn{father}\}$, the OMQ $(\Tmc, \Sigma, \mn{Disease1}(x))$ is $\PTime$-complete and not rewritable into linear Datalog, while $(\Tmc, \Sigma, \mn{Disease2}(x))$ is $\NL$-complete and rewritable into linear Datalog.
\end{example}
Our second main result is that for OMQs from $(\EL,\textnormal{CQ})$, evaluation in \NL coincides with rewritability into linear Datalog. It is known that evaluation in AC$^0$ coincides with FO-rewritability \cite{ijcai16} and thus each of the three occurring complexities coincides with rewritability into a well-known database language: AC$^0$ corresponds to FO, \NL to linear Datalog, and \PTime to monadic Datalog. We also show that there is no constant bound on the arity of IDB relations in linear Datalog rewritings, that is, we find a sequence of OMQs from $(\EL,\textnormal{CQ})$ (and in fact, even from $(\EL, \textnormal{AQ})$) that are all rewritable into linear Datalog, but require higher and higher arities of IDB relations. 

We remark that rewritability into linear Datalog might also be interesting from a practical perspective. In fact, the equation ``SQL = FO'' often adopted in ontology-mediated querying ignores the fact that SQL contains linear recursion from its version 3 published in 1999 on, which exceeds the expressive power of FO. We believe that, in the context of OMQs, linear Datalog might be a natural abstraction of SQL that includes linear recursion, despite the fact that it does not contain full FO.  Indeed, the fact that all OMQs from $(\EL,\textnormal{CQ})$ that are FO-rewritable are also UCQ-rewritable shows that the expressive power of FO that lies outside of linear Datalog is not useful when using SQL as a target language for OMQ rewriting. 
 
The second main result is proved using a characterization of linear Datalog rewritability in terms of bounded pathwidth that may be of independent interest. It is easiest to state for $(\EL,\textnormal{AQ})$: an OMQ $Q$ is rewritable into linear Datalog (equivalently: can be evaluated in $\NL$) if the class $\Mmc_Q$ of the following ABoxes \Amc has bounded pathwidth: \Amc is tree-shaped, delivers the root as an answer to $Q$, and is minimal w.r.t.\ set inclusion regarding the latter property. For $(\EL,\textnormal{CQ})$, we have to replace in $\Mmc_Q$ tree-shaped ABoxes with pseudo tree-shaped ones in which the root is an ABox that can have any relational structure, but whose size is bounded by the size of the actual query in $q$.  These results are closely related to results on bounded pathwidth obstructions of CSPs, see for example \cite{Dalmau05,DalmauK08,CarvalhoDK10}.

Finally, we consider the meta problems associated to the studied properties of OMQs, such as whether a given OMQ is rewritable into linear Datalog, \NL-hard, \PTime-hard, etc. Each of these problems turns out to be \ExpTime-complete, both in $(\EL,\textnormal{CQ})$ and in $(\EL,\textnormal{AQ})$. In the case of linear Datalog rewritability, our results provide a way of constructing a concrete rewriting when it exists. 

This paper is organized as follows. We introduce preliminaries in Section~\ref{sec:prelims} and then start with considering the OMQ language $(\EL,\text{conCQ})$ where conCQ refers to the class of CQs that are connected when viewed as a graph; these CQs might have any arity, including 0. In Section~\ref{sec:AC0NL}, we show that $(\EL,\text{conCQ})$ enjoys a dichotomy between AC$^0$ and \NL, using a notion of bounded depth that was introduced in \cite{ijcai16}. In particular, it was shown in \cite{ijcai16} that when the ABoxes in $\Mmc_Q$ have bounded depth, then $Q$ can be evaluated in AC$^0$. We prove that otherwise, we find certain gadget ABoxes (we say that $Q$ \emph{has the ability to simulate \REACH}) that allow us to reduce the reachability problem in directed graphs, thus showing \NL-hardness.  In Section~\ref{sec:NLPTime}, we prove a dichotomy between \NL and \PTime, still for $(\EL,\text{conCQ})$. We first show that if $\Mmc_Q$ has unbounded pathwidth, then we can find certain gadget ABoxes (we say that $Q$ \emph{has the ability to simulate \PSA}) that allow us to reduce the path accessibility problem, thus showing \PTime-hardness. This result is similar to, but substantially more difficult than the \NL-hardness result in Section~\ref{sec:AC0NL}. We then proceed by showing that if $\Mmc_Q$ has bounded pathwidth, then we can construct a two-way alternating word automaton that accepts suitable representations of pairs $(\Amc,\abf)$ where \Amc is an ABox of low pathwidth and \abf and answer to $Q$ on \Amc. We further show how to convert this automaton into a linear Datalog rewriting, which yields $\NL$ complexity.  Section~\ref{sec:disconnected} is concerned with extending both of our dichotomies to potentially disconnected CQs. In Section~\ref{sec:hierarchy}, we prove that there is a sequence of OMQs that are linear Datalog rewritable but for which the width of IDB relations in linear Datalog rewritings is not bounded by a constant. This strengthens a result by \cite{DalmauK08} who establish an analogous statement for CSPs.  In Section~\ref{sec:decidability} we prove decidability and \ExpTime-completeness of the meta problems. The upper bounds are established using the ability to 
simulate \PSA from Section~\ref{sec:NLPTime} and alternating tree automata.

This paper is an extended version of \cite{LS-IJCAI17}. The main differences are that \cite{LS-IJCAI17} only treats atomic queries but no conjunctive queries, does not provide characterizations in terms of bounded pathwidth, and achieves less optimal bounds on the width of IDB relations in constructed linear Datalog programs.

\section{Preliminaries}
\label{sec:prelims}

We introduce description logics, ontology-mediated queries, central technical notions such as universal models and the pathwidth of ABoxes, as well as linear Datalog and a fundamental glueing construction for ABoxes. We refer to \cite{DL-Textbook} for more extensive information on description logics and to \cite{AbiteboulHV95} for background in database theory.

\smallskip
\noindent
\textbf{TBoxes and Concepts.} In description logic, an ontology is formalized as a TBox. Let $\NC$, $\NR$, and $\NI$ be disjoint countably infinite sets of \emph{concept names}, \emph{role names}, and \emph{individual names}.  An \emph{\EL-concept} is built according to the syntax rule $C,D ::= \top \mid A \mid C \sqcap D \mid \exists r . C$ where $A$ ranges over concept names and $r$ over role names. While this paper focuses on $\EL$, there are some places where we also consider the extension \ELI of \EL with inverse roles. An \emph{\ELI-concept} is built according to the syntax rule $C,D ::= \top \mid A \mid C \sqcap D \mid \exists r . C \mid \exists r^-.C$, the symbol ranges being as in the case of \EL-concepts. An expression of the form $r^-$ is an \emph{inverse role}.  An \emph{\EL-TBox} ($\ELI$-TBox, resp.)  is a finite set of \emph{concept inclusions (CIs)} of the form $C \sqsubseteq D$, $C$ and $D$ \EL-concepts (\ELI-concepts, resp.). 

The \emph{size} of a TBox, a concept, or any other syntactic object $O$, denoted $|O|$, is the number of symbols needed to write $O$, with each concept and role name counting as one symbol.

\smallskip
\noindent
\textbf{ABoxes.} An \emph{ABox} is the DL way to store data. Formally, it is 
defined as a finite set of \emph{concept assertions} $A(a)$ and
\emph{role assertions} $r(a,b)$ where $A$ is a concept name, $r$ a
role name, and $a,b$ individual names. We use $\mn{ind}(\Amc)$ to
denote the set of individuals of the ABox $\Amc$.
A \emph{signature}
is a set of concept and role names. We often assume that the ABox is
formulated in a prescribed signature, which we call the \emph{ABox
  signature}.  An ABox that only uses concept and role names from a signature $\Sigma$ is called a \emph{$\Sigma$-ABox}. We remark that the ABox signature plays the same role as a schema in the database literature \cite{AbiteboulHV95}.
If \Amc is an ABox and $S \subseteq \mn{ind}(\Amc)$, then we use $\Amc|_S$
to denote the restriction of \Amc to the assertions that only use individual
names from $S$.
A \emph{homomorphism} from an ABox $\Amc_1$ to an ABox $\Amc_2$ is a function $h:\mn{ind}(\Amc_1) \rightarrow \mn{ind}(\Amc_2)$ such that $A(a) \in \Amc_1$ implies $A(h(a)) \in \Amc_2$ and $r(a,b) \in \Amc_1$ implies $r(h(a),h(b)) \in \Amc_2$.

\smallskip

Every ABox $\Amc$ is associated with a directed graph $G_{\Amc}$ with nodes $\mn{ind}(\Amc)$ and edges $\{(a,b) \mid r(a,b) \in \Amc\}$. A directed graph $G$ is a \emph{tree} if it is acyclic, connected and has a unique node with indegree $0$, which is then called the \emph{root} of $G$. An ABox $\Amc$ is \emph{tree-shaped} if $G_\Amc$ is a tree and there are no multi-edges, that is, $r(a,b) \in \Amc$ implies $s(a,b) \notin \Amc$ for all $s \neq r$. The \emph{root} of a tree-shaped ABox \Amc is the root of $G_\Amc$ and we call an individual $b$ a \emph{descendant} of an individual $a$ if $a \neq b$ and the unique path from the root to $b$ contains $a$.

\smallskip
\noindent
\textbf{Semantics.} An \emph{interpretation} is a tuple $\Imc = (\Delta^\Imc, \cdot^\Imc)$, where $\Delta^\Imc$ is a non-empty set, called the \emph{domain} of $\Imc$, and $\cdot^\Imc$ is a function that assigns to every concept name $A$ a set $A^\Imc \subseteq \Delta^\Imc$ and to every role name $r$ a binary relation $r^\Imc \subseteq \Delta^\Imc \times \Delta^\Imc$. The function $\cdot^\Imc$ can be inductively extended to assign to every $\ELI$ concept $C$ a set $C^\Imc \subseteq \Delta^\Imc$ in the following way.

\begin{align*}
(C_1 \sqcap C_2)^\Imc &\;=\; C_1^\Imc \cap C_2^\Imc\\
(\exists r.C_1)^\Imc &\;=\; \{d \in \Delta^\Imc \mid \exists \, e \in \Delta^\Imc : r(d,e) \wedge C_1(e)\}\\
(\exists r^-.C_1)^\Imc &\;=\; \{e \in \Delta^\Imc \mid \exists \, d \in \Delta^\Imc : r(d,e) \wedge C_1(d)\}\\
\top^\Imc &\;=\; \Delta^\Imc
\end{align*}

An interpretation $\Imc$ \emph{satisfies} a CI $C \sqsubseteq D$ if $C^\Imc \subseteq D^\Imc$, a concept assertion $A(a)$ if $a \in A^\Imc$, and a role assertion $r(a,b)$ if $(a,b) \in r^\Imc$. It is a \emph{model} of a TBox \Tmc if it satisfies all CIs in it and a \emph{model} of an ABox \Amc if it satisfies all assertions in it. For an interpretation \Imc and $\Delta \subseteq \Delta^\Imc$, we use $\Imc|_\Delta$ to
denote the restriction of \Imc to the elements in $\Delta$. 

\smallskip
\noindent
\textbf{Conjunctive queries.} A \emph{conjunctive query (CQ)} is a first order formula of the form $q=\exists \ybf \phi(\xbf,\ybf)$, $\phi$ a conjunction of relational atoms, that uses only unary and binary relations that must be from $\NC$ and $\NR$, respectively. A \emph{CQ with equality atoms} is a CQ where additionally, atoms of the form $x=y$ are allowed. We also interpret $q$ as the set of its atoms. The variables in $\xbf$ are called \emph{answer variables} whereas the variables in $\ybf$ are called \emph{quantified variables}. We set $\mn{var}(q) = \xbf \cup \ybf$.  Every CQ $q$ can be viewed as an ABox $\Amc_q$ by viewing (answer and quantified) variables as individual names. A CQ is \emph{connected} if $G_{\Amc_q}$ is and \emph{rooted} if every connected component of $G_{\Amc_q}$ contains at least one answer variable. A CQ is \emph{tree-shaped} if $\Amc_q$ is. If $q$ is a CQ and $V \subseteq \mn{var}(q)$, then we use $q|_V$ to denote the restriction of $q$ to the atoms that only use variables from $V$ (this may change the arity of $q$).  An \emph{atomic query (AQ)} is a CQ of the form $A(x)$.

A \emph{union of conjunctive queries (UCQ)} $q$ is a disjunction of CQs that have the same answer variables. We write $q(\xbf)$ to emphasize that \xbf are the answer variables in $q$.  The \emph{arity} of a (U)CQ $q$, denoted $\mn{ar}(q)$, is the number of its answer variables. We say that $q$ is \emph{Boolean} if $\mn{ar}(q)=0$.  Slightly overloading notation, we write CQ to denote the set of all CQs, $\textnormal{CQ}^=$ to denote the set of all CQs where equality atoms are allowed, conCQ for the set of all connected CQs, AQ for the set of all AQs, and UCQ for the set of all UCQs.

Let $q(\xbf)$ be a UCQ and \Imc an interpretation. A tuple $\abf \in (\Delta^\Imc)^{\mn{ar}(q)}$ is an \emph{answer to $q$ on \Imc}, denoted $\Imc \models q(\abf)$, if there is a \emph{homomorphism} $h$ from $q$ to \Imc with $h(\xbf)=\abf$, that is, a function $h:\mn{var}(q) \rightarrow \Delta^\Imc$ such that $A(x) \in q$ implies $h(x) \in A^\Imc$ and $r(x,y) \in q$ implies $(h(x),h(y)) \in r^\Imc$.

\smallskip
\noindent
\textbf{Ontology-mediated queries.}  An \emph{ontology-mediated query (OMQ)} is a triple $Q=(\Tmc, \Sigma, q)$ that consists of a TBox~$\Tmc$, an ABox signature $\Sigma$ and a query $q$ such as a CQ or a UCQ.  Let \Amc be a $\Sigma$-ABox. A tuple $\abf \in \mn{ind}(\Amc)^{\mn{ar}(q)}$ is an \emph{answer to $Q$ on \Amc}, denoted $\Amc \models Q(\abf)$, if for every common model $\Imc$ of $\Amc$ and $\Tmc$, \abf is an answer to $q$ on \Imc. If the TBox should be emphasized, we write $\Tmc, \Amc \models q(\abf)$ instead of $\Amc \models Q(\abf)$. 
For an ontology language \Lmc and query language $\Qmc$, 
we use $(\Lmc, \Qmc)$ to denote the OMQ language in which TBoxes are formulated in $\Lmc$ and the actual queries are from \Qmc; we also identify this language with the set of all OMQs that it admits. In this paper, we mainly concentrate on the OMQ languages $(\EL,\textnormal{CQ})$ and $(\EL,\textnormal{AQ})$.

For an OMQ $Q=(\Tmc,\Sigma,q)$, we use {\sc eval}$(Q)$ to denote the following problem: given a $\Sigma$-ABox \Amc and a tuple $\abf \in \mn{ind}(\Amc)^{\mn{ar}(q)}$, decide whether $\Amc \models Q(\abf)$.

\smallskip
\noindent
\textbf{TBox normal form.} Throughout the paper, we generally and without further notice assume TBoxes to be in {\em normal form}, that is, to contain only concept inclusions of the form
$\exists r.A_1 \sqsubseteq A_2$, $\top \sqsubseteq A_1$, $A_1 \sqcap A_2 \sqsubseteq A_3$, $A_1 \sqsubseteq \exists r.A_2$, where all $A_i$ are concept names and $r$ is a role name or, in the case of \ELI-TBoxes, an inverse role. Every TBox \Tmc can be converted into a TBox $\Tmc'$ in normal form in linear time~\cite{BaaderBL05}, introducing fresh concept names; the resulting TBox $\Tmc'$ is a conservative extension of~\Tmc, that is, every model of $\Tmc'$ is a model of $\Tmc$ and, conversely, every model of $\Tmc$ can be extended to a model of $\Tmc'$ by interpreting the fresh concept names. Consequently, when \Tmc is replaced in an OMQ $Q=(\Tmc,\Sigma,q)$ with $\Tmc'$, resulting in an OMQ $Q'$, then $Q$ and $Q'$ are equivalent in the
sense that they give the same answers on all $\Sigma$-ABoxes. Thus, conversion
of the TBox in an OMQ into normal form does not impact its data complexity nor
rewritability into linear Datalog (or any other language).

\smallskip
\noindent
\textbf{First order Rewritability.} Let $Q = (\Tmc, \Sigma, q) \in (\EL, \textnormal{CQ})$. We call $Q$ \emph{FO-rewritable} if there exists a first-order formula $\varphi(\xbf)$ without function symbols and constants, potentially using equality, and using relational atoms of arity one and two only, drawing unary relation symbols from \NC and binary relation symbols from \NR such that for every ABox $\Amc$ and every tuple $\abf$ of individuals of $\Amc$, we have $\Amc \models Q(\abf)$ if and only if $\Amc \models \varphi(\abf)$, where $\Amc$ is interpreted as a relational structure over $\Sigma$.

\smallskip
\noindent
\textbf{Linear Datalog Rewritability.} A \emph{Datalog rule} $\rho$ has the form $S(\xbf) \leftarrow R_1(\ybf_1)\land \cdots\land R_n(\ybf_n)$, $n> 0$, where $S,R_1,\dots,R_n$ are relations of any arity and $\xbf, \ybf_i$ denote tuples of variables.  We refer to $S(\xbf)$ as the \emph{head} of $\rho$ and to $R_1(\ybf_1) \wedge \cdots \wedge R_n(\ybf_n)$ as the \emph{body}. Every variable that occurs in the head of a rule is required to also occur in its body. 
A \emph{Datalog program} $\Pi$ is a finite set of Datalog rules with a selected \emph{goal relation} \mn{goal} that does not occur in rule bodies.
The \emph{arity of $\Pi$}, denoted $\mn{ar}(\Pi)$, is the arity of the \mn{goal} relation.
Relation symbols that occur in the head of at least one rule of $\Pi$ are \emph{intensional (IDB) relations}, and all remaining relation symbols in $\Pi$ are \emph{extensional (EDB)
  relations}.  In our context, EDB relations must be unary or binary and are identified with concept names and role names.  Note that, by definition, \mn{goal} is an IDB relation. A Datalog program is \emph{linear} if each rule body contains at most one IDB relation. The \emph{width} of a Datalog program is the maximum arity of non-\mn{goal} IDB relations used in it and its \emph{diameter} is the maximum number of variables that occur in a rule in $\Pi$.

For an ABox \Amc that uses no IDB relations from $\Pi$ and a tuple $\abf \in \mn{ind}(\Amc)^{\mn{ar}(\Pi)}$, we write $\Amc \models \Pi(a)$ if $a$ is an answer to $\Pi$ on \Amc, defined in the usual way \cite{AbiteboulHV95}: $\Amc \models \Pi(a)$ if $\mn{goal}(a)$ is a logical consequence of $\Amc \cup \Pi$ viewed as a set of first-order sentences (all variables in rules quantified universally).  We also admit body atoms of the form $\top(x)$ that are vacuously true.  This is just syntactic sugar since any rule with body atom $\top(x)$ can equivalently be replaced by a set of rules obtained by replacing $\top(x)$ in all possible ways with an atom $R(x_1,\dots,x_n)$ where $R$ is an EDB relation and where $x_i=x$ for some $i$ and all other $x_i$ are fresh variables.

A Datalog program $\Pi$ over EDB signature $\Sigma$ is a \emph{rewriting} of an OMQ $Q=(\Tmc,\Sigma,q)$ if $\Amc \models Q(\abf)$ iff $\Amc \models \Pi(\abf)$ for all $\Sigma$-ABoxes \Amc and all $\abf \in \mn{ind}(\Amc)$.  We say that $Q$ is \emph{(linear) Datalog-rewritable} if there is a (linear) Datalog program that is a rewriting of~$Q$. It is well-known that all OMQs from $(\EL,\textnormal{CQ})$ are Datalog-rewritable. It follows from the results in this paper that there are rather simple OMQs $Q=(\Tmc,\Sigma,q)$ that are not linear Datalog-rewritable, choose e.g.\ $\Tmc = \{ \exists r . A \sqcap \exists s . A \sqsubseteq A \}$, $\Sigma = \{r,s,A\}$, and $q=A(x)$.

\smallskip
\noindent
\textbf{Universal models.}
It is well known \cite{DBLP:journals/jsc/LutzW10} that for every $\ELI$-TBox $\Tmc$ and ABox $\Amc$ there is a \emph{universal model} $\Umc_{\Amc,\Tmc}$ with certain nice properties. These are summarized in the following lemma. Homomorphisms
between interpretations are defined in the expected way, ignoring individual names.
\begin{lemma} \label{lem:unimodelproperties}
Let $\Tmc$ be an $\ELI$-TBox in normal form and $\Amc$ an ABox.
Then there is an interpretation $\Umc_{\Amc,\Tmc}$ such that
\begin{enumerate}
\item $\Umc_{\Amc,\Tmc}$ is a model of $\Amc$ and $\Tmc$;
\item for every model $\Imc$ of $\Amc$ and $\Tmc$, there is a homomorphism
from $\Umc_{\Amc,\Tmc}$ to $\Imc$ that is the identity on $\mn{ind}(\Amc)$;
\item for all CQs $q$ and $\abf \in \mn{ind}(\Amc)^{\mn{ar}(q)}$, 
$\Amc,\Tmc \models q(\abf)$ iff $\Umc_{\Amc,\Tmc} \models q(\abf)$.
\end{enumerate}
\end{lemma}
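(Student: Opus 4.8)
The plan is to construct $\Umc_{\Amc,\Tmc}$ explicitly as the result of a (possibly infinite) chase and then verify the three properties in turn, deriving the third almost for free from the first two. I would build an increasing chain of interpretations $\Imc_0 \subseteq \Imc_1 \subseteq \cdots$ whose limit is $\Umc_{\Amc,\Tmc}$. Set $\Imc_0$ to be $\Amc$ read as an interpretation with domain $\mn{ind}(\Amc)$. Fix a \emph{fair} enumeration of all potential rule applications, where an application is a CI of $\Tmc$ together with a tuple of domain elements witnessing its left-hand side; fairness means that every application available at some finite stage is eventually processed, which is arranged by a standard dovetailing over elements and CIs that accounts for the fact that new elements keep appearing. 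To obtain $\Imc_{n+1}$ from $\Imc_n$, process the next application: for a CI of the form $\top \sqsubseteq A_1$, $A_1 \sqcap A_2 \sqsubseteq A_3$, or $\exists r.A_1 \sqsubseteq A_2$ (with $r$ possibly an inverse role) we merely add the forced concept membership; for a generating CI $A_1 \sqsubseteq \exists r.A_2$ applied to some $d \in A_1^{\Imc_n}$ with no $r$-successor in $A_2$, we add a fresh element $e$, put $e \in A_2$, and add the edge $r(d,e)$. Since every generated element has a unique element that created it, the part of $\Umc_{\Amc,\Tmc}$ outside $\mn{ind}(\Amc)$ is tree-shaped, with the trees rooted at ABox individuals.

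For property~1, the interpretation $\Umc_{\Amc,\Tmc} = \bigcup_n \Imc_n$ is a model of $\Amc$, since $\Imc_0$ already satisfies all assertions and the chase is monotone. It is a model of $\Tmc$ by fairness: if some CI were violated at elements of $\Umc_{\Amc,\Tmc}$, the corresponding application would become available at a finite stage, because each element is created and each of its concept memberships is established after finitely many steps, so the application would eventually be processed, contradicting the violation. For property~2, let $\Imc$ be any model of $\Amc$ and $\Tmc$; I would obtain the homomorphism $h : \Umc_{\Amc,\Tmc} \to \Imc$ as the union of an increasing chain of partial maps $h_n$, each a homomorphism from $\Imc_n$ to $\Imc$ that is the identity on $\mn{ind}(\Amc)$. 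Take $h_0$ to be the identity on $\mn{ind}(\Amc)$, which is a homomorphism because $\Imc$ satisfies $\Amc$. In the inductive step only the generating case changes the domain: when $e$ is added as an $r$-successor of $d$ owing to $A_1 \sqsubseteq \exists r.A_2$, we have $h_n(d) \in A_1^{\Imc}$ by the induction hypothesis, so, since $\Imc$ is a model of $\Tmc$, there is some $e'$ with $(h_n(d),e') \in r^\Imc$ and $e' \in A_2^{\Imc}$; setting $h_{n+1}(e) = e'$ extends the homomorphism. The non-generating steps leave the domain unchanged and only add concept memberships that $h_n$ continues to respect precisely because $\Imc$ itself satisfies the corresponding CI.

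Property~3 then follows from the first two. For the ``only if'' direction, if $\Amc,\Tmc \models q(\abf)$, then in particular $\Umc_{\Amc,\Tmc} \models q(\abf)$ by property~1. For the ``if'' direction, suppose $\Umc_{\Amc,\Tmc} \models q(\abf)$ via a homomorphism $g$ from $q$ into $\Umc_{\Amc,\Tmc}$ with $g(\xbf) = \abf$, and let $\Imc$ be an arbitrary model of $\Amc$ and $\Tmc$; composing $g$ with the homomorphism $h$ of property~2 yields a homomorphism from $q$ into $\Imc$, and since $h$ is the identity on $\mn{ind}(\Amc)$ and the entries of $\abf$ are ABox individuals, the composite still sends $\xbf$ to $\abf$, so $\Imc \models q(\abf)$; as $\Imc$ was arbitrary, $\Amc,\Tmc \models q(\abf)$. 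The one genuinely delicate point is the fairness argument underlying property~1: I must set up the enumeration so that applications triggered by elements created arbitrarily late are still guaranteed to be handled, and must check that inverse roles in CIs $\exists r.A_1 \sqsubseteq A_2$ cause no difficulty — they do not, since such CIs are non-generating and only add concept memberships, hence cannot spoil tree-shapedness of the anonymous part nor obstruct the argument.
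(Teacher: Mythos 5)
Your proposal is correct and follows essentially the same route as the paper: the paper constructs $\Umc_{\Amc,\Tmc}$ by exactly this chase procedure (its rules (i)--(vi)) and then declares the three properties ``standard to prove,'' deferring to references; your fairness argument for property~1, the inductive extension of the homomorphism for property~2, and the composition argument for property~3 are precisely the standard details being deferred to.
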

$\Umc_{\Amc,\Tmc}$ can be constructed using a standard chase procedure, as follows.  We define a sequence of ABoxes $\Amc_0,\Amc_1,\dots$ by setting $\Amc_0 = \Amc$ and then letting $\Amc_{i+1}$ be $\Amc_i$ extended as follows:
%
%
\begin{itemize}

\item[(i)] If $\exists r.B \sqsubseteq A \in \Tmc$ and $r(a,b), B(b) \in \Amc_i$, then add $A(a)$ to $\Amc_{i+1}$;

\item[(ii)] If $\exists r^-.A \sqsubseteq B \in \Tmc$ and $r(a,b), A(a) \in \Amc_i$, then add $B(b)$ to $\Amc_{i+1}$;

\item[(iii)] if $\top \sqsubseteq A \in \Tmc$ and $a \in \mn{ind}(\Amc_{i})$,
then add $A(a)$ to $\Amc_{i+1}$;

\item[(iv)] if $B_{1}\sqcap B_{2} \sqsubseteq A\in \Tmc$ and $B_{1}(a),B_{2}(a)\in \Amc_{i}$,
then add $A(a)$ to $\Amc_{i+1}$;

\item[(v)] if $A \sqsubseteq \exists r.B \in \Tmc$, $A(a) \in \Amc_i$ and there is no $b \in \mn{ind}(\Amc_i)$ such that $r(a,b)$ and $B(b)$, then take a fresh individual $b$ and add $r(a,b)$ and $B(b)$ to $\Amc_{i+1}$;

\item[(vi)] if $B \sqsubseteq \exists r^-.A \in \Tmc$, $B(b) \in \Amc_i$ and there is no $a \in \mn{ind}(\Amc_i)$ such that $r(a,b)$ and $A(a)$, then take a fresh individual $a$ and add $r(a,b)$ and $A(a)$ to $\Amc_{i+1}$.
\end{itemize}
Let $\Amc_{\omega}=\bigcup_{i\geq 0}\Amc_{i}$.  We define $\Umc_{\Amc,\Tmc}$ to be the interpretation that corresponds to~$\Amc_{\omega}$. This does actually not define
$\Umc_{\Amc,\Tmc}$ in a unique way since the order or applying the above rules may have an impact on the shape of $\Amc_\omega$. However, all resulting $\Amc_\omega$ are homomorphically equivalent and it does not matter for the constructions in this paper which order we use. Slightly sloppily, we thus live with the fact that $\Umc_{\Amc,\Tmc}$ is not uniquely defined.  Note that $\Umc_{\Amc,\Tmc}$ can be infinite and that its shape is basically the shape of~$\Amc$, but with a (potentially infinite) tree attached to every individual in $\Amc$. The domain elements in these trees are introduced by Rules~(v) and~(vi), and we refer to them as \emph{anonymous elements}. The properties in Lemma \ref{lem:unimodelproperties} are standard to prove, see for example \cite{BO15, DL-Textbook} for similar proofs.

The \emph{degree} of an ABox \Amc is the maximum number of successors of any
individual in \Amc. The following lemma often allows us to concentrate on ABoxes
of small degree. We state it only for $(\mathcal{EL}, \text{AQ})$, since we only use it
for these OMQs.
\begin{lemma}
\label{lem:smalldegree}
  Let $Q=(\Tmc,\Sigma, A(x))\in (\mathcal{EL}, \textnormal{AQ})$ be an OMQ and \Amc
  a $\Sigma$-ABox such that $\Amc \models Q(a)$. 
  Then there exists $\Amc' \subseteq \Amc$ of degree at most $|\Tmc|$ such that
  $\Amc' \models Q(a)$. 
\end{lemma}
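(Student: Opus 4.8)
The plan is to exploit the fact that, by Lemma~\ref{lem:unimodelproperties}(3), the answer $\Amc \models Q(a)$ is witnessed by a homomorphism from $q = A(x)$ into the universal model $\Umc_{\Amc,\Tmc}$. Since $q$ is just the single atom $A(x)$, this amounts to $a \in A^{\Umc_{\Amc,\Tmc}}$, i.e. to the assertion $A(a)$ being produced at some finite stage $\Amc_i$ of the chase. The idea is that an \EL-TBox in normal form can only ``see'' the successor structure of an individual through inclusions of the form $\exists r.B \sqsubseteq A'$, and each such inclusion that fires at an individual $b$ needs only a \emph{single} $r$-successor of $b$ satisfying $B$. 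Hence to re-derive $A(a)$ we never need more than $|\Tmc|$ many successors at any individual, and we should be able to discard the rest.

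First I would make this precise by induction on the chase. Define for each individual $b$ the set of concept names derived at $b$ at the end of the (forward) \EL-chase on $\Amc$; note that with no inverse roles only Rules~(i),(iii),(iv),(v) apply, and crucially the anonymous part introduced by Rule~(v) depends only on the concept names already derived, not on siblings in $\Amc$, so removing assertions from $\Amc$ can only remove, never add, derived concepts. Concretely, I would track which assertions of $\Amc$ are actually \emph{used} in a fixed derivation of $A(a)$. Run the chase and fix a derivation tree (or sequence of rule applications) that produces $A(a)$. Whenever Rule~(i) fires using $r(b,c), B(c)$ to derive some $A'(b)$ that lies on the path to the final goal, mark the single role assertion $r(b,c)$ and recurse into the subderivation of $B(c)$. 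Rules~(iii) and~(iv) mark no role assertions, and Rule~(v) introduces anonymous elements and marks nothing in $\Amc$. Let $\Amc'$ consist of all marked role assertions together with all concept assertions of $\Amc$ that were used as leaves of the derivation. By construction the same derivation goes through in $\Amc'$, so $\Amc' \models Q(a)$.

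It then remains to bound the out-degree of $\Amc'$. At each individual $b$, the marked outgoing role assertions are exactly those selected by applications of Rule~(i) that derive a concept name at $b$ needed for the goal. Since each concept name derived at $b$ is derived (at most) once and we mark at most one outgoing role assertion per such firing of Rule~(i), and since there are at most $|\Tmc|$ relevant inclusions $\exists r.B \sqsubseteq A'$ in \Tmc, the number of marked successors of $b$ is at most $|\Tmc|$. Here I would be slightly careful to charge at most one successor per inclusion (or per derived concept name) rather than per derivation step, so that a concept name derived several ways does not inflate the count; choosing, for each derived concept name that is needed, a single witnessing derivation fixes this.

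The main obstacle I expect is bookkeeping rather than a conceptual difficulty: I must ensure that pruning $\Amc$ to the marked assertions does not accidentally destroy a concept assertion needed deeper in the derivation, and that I genuinely select at most one successor per relevant CI. The clean way to handle this is to process concept names in the order in which they are first derived in the chase and, for each $(b, A')$ that is needed, commit to one fixed witnessing rule application; the resulting ``justification'' subtree touches at most $|\Tmc|$ successors per node, giving $\Amc'$ of degree at most $|\Tmc|$, while $\Amc' \subseteq \Amc$ and $\Amc' \models Q(a)$ hold by the faithfulness of the chase under taking sub-ABoxes. \qed
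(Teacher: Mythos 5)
Your proof is correct and follows essentially the same route as the paper: run the chase, keep only the role assertions that actually serve as witnesses for applications of the rule for $\exists r.B \sqsubseteq A'$, and bound the degree by charging at most one retained successor per such concept inclusion, giving the $|\Tmc|$ bound. The only (harmless) difference is that the paper retains every role assertion participating in \emph{any} chase step while you prune down to a single fixed derivation of $A(a)$; both yield a sub-ABox of degree at most $|\Tmc|$ on which the derivation still succeeds.
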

\begin{proof}
  (sketch) Assume $\Amc \models Q(a)$ and let $\Amc_\omega$ be the
  ABox produced by the chase procedure described above.
  Since $\Amc \models Q(a)$, by Lemma~\ref{lem:unimodelproperties},
  $A(a) \in \Amc_\omega$. Let $\Amc'$ be obtained from \Amc by removing
  all assertions $r(a,b)$ that did not participate in any application of rule (i), (ii), (v) or (vi)
  and let $\Amc'_c$ be the result of chasing $\Amc'$.  Clearly, we must have
  $A(a) \in \Amc'_c$. Moreover, it is easy to verify that the degree of $\Amc'$
  is at most $|\Tmc|$.
\end{proof}

\smallskip
\noindent
\textbf{Pathwidth.} A \emph{path decomposition} of a (directed or undirected) graph $G=(V,E)$ is a sequence $V_1,\ldots,V_n$ of subsets of $V$, such that
\begin{itemize}
\item $V_i \cap V_k \subseteq V_j$ for $1 \leq i \leq j \leq k \leq n$ and
\item $\bigcup_{i=1}^n V_i = V$.
\end{itemize}
A path decomposition $V_1,\ldots,V_n$ is an \emph{$(\ell,k)$-path decomposition} if $\ell=\max \{|V_i \cap V_{i+1}| \mid 1 \leq i \leq n-1\}$ and $k=\max \{|V_i| \mid 1 \leq i \leq n\}$. The \emph{pathwidth} of $G$, denoted $\mn{pw}(G)$, is the smallest integer $k$, such that $G$ has a $(\ell,k+1)$-path decomposition for some $\ell$. Note that paths have pathwidth~1. For an ABox \Amc, a sequence $V_1,\ldots,V_n$ of subsets of $\mn{ind}(\Amc)$ is a path decomposition of $\Amc$ if $V_1,\ldots,V_n$ is a path decomposition of~$G_\Amc$. We assign a pathwidth to \Amc by setting $\mn{pw}(\Amc) := \mn{pw}(G_\Amc)$.

\smallskip
\noindent
\textbf{Treeifying CQs.}  A Boolean CQ $q$ is \emph{treeifiable} if there exists a homomorphism from $q$ into a tree-shaped interpretation.  With every treeifiable Boolean CQ $q$, we associate a tree-shaped CQ $q^\mn{tree}$ that is obtained by starting with $q$ and then exhaustively \emph{eliminating forks}, that is, identifying $x_1$ and $x_2$ whenever there are atoms $r(x_1,y)$ and $r(x_2,y)$.  Informally, one should think of $q^\mn{tree}$ as the least constrained treeification of $q$.  It is known that a CQ $q$ is treeifiable if and only if the result of exhaustively eliminating forks is tree-shaped \cite{DBLP:conf/cade/Lutz08}. Consequently, it can be decided in polynomial time whether a Boolean CQ is treeifiable.

One reason for why treeification is useful is that every tree-shaped Boolean CQ $q$ can be viewed as an \EL-concept $C_q$ in a straightforward way.  If, for example,
$$
q = \exists w \exists x \exists y \exists z \, r(x,y) \wedge s(y,z) \wedge r(y,w) \wedge A(y) \wedge B(w),
$$
then $C_q= \exists r.(A \;\sqcap\; \exists s.\top \;\sqcap\; \exists r.B)$.

A pair of variables $(x,y)$ from a CQ $q$ is \emph{guarded} if $q$ contains an atom of the form $r(x,y)$.  For every guarded pair $(x,y)$ and every $i \geq 0$, define $\mn{reach}^i(x,y)$ to be the smallest set such that
\begin{enumerate}

\item $x \in \mn{reach}^0(x,y)$ and $y \in \mn{reach}^1(x,y)$;

\item if $z \in \mn{reach}^i(x,y)$, $i > 0$, and $r(z,u) \in q$, then
  $u \in \mn{reach}^{i+1}(x,y)$;

\item if $u \in \mn{reach}^{i+1}(x,y)$, $i > 0$ and $r(z,u) \in q$, then
  $z \in \mn{reach}^{i}(x,y)$.

\end{enumerate}
Moreover, $\mn{reach}(x,y) = \bigcup_i \mn{reach}^i(x,y)$. We use $\mn{trees}(q)$ to denote the set of all (tree-shaped) CQs $p^{\mn{tree}}$ such that $p=q|_{\mn{reach}(x,y)}$ for some guarded pair $(x,y)$ with $p$ treeifiable.

It is easy to verify that the number of CQs in $\mn{trees}(q)$ is linear in $|q|$.  We briefly argue that $\mn{trees}(q)$ can be computed in polynomial time. The number of guarded pairs is linear in $|q|$. For each guarded pair $(x,y)$, $\mn{reach}(x,y)$ can clearly be computed in polynomial time. Moreover, exhaustively eliminating forks on $p=q|_{\mn{reach}(x,y)}$ takes only polynomial time, which tells us whether $p$ is treeifiable and constructs $p^{\mn{tree}}$ if this is the case.

\smallskip
\noindent
\textbf{Pseudo tree-shaped ABoxes.} Throughout the paper, we often concentrate on ABoxes that take a restricted, almost tree-shaped form. These are called pseudo tree-shaped ABoxes, introduced in \cite{ijcai16}. An ABox $\Amc$ is a \emph{pseudo tree-shaped ABox of core size $n$} if there exist ABoxes $\Cmc, \Amc_1, \ldots, \Amc_k$ such that $\Amc = \Cmc \cup \bigcup_{i=1}^k \Amc_i$, \mbox{$|\mn{ind}(\Cmc)|=n$}, and
all $\Amc_i$ are tree-shaped ABoxes with pairwise disjoint individuals and $\mn{ind}(\Cmc) \cap \mn{ind}(\Amc_i)$ consists precisely of the root of~$\Amc_i$. We call \Cmc the \emph{core} of \Amc. The tree-shaped ABoxes $\Amc_1,\dots,\Amc_k$ that are part of a pseudo tree-shaped ABox should not be confused with the anonymous trees that are added when chasing a pseudo tree-shaped ABox to construct a universal model. Note that every tree-shaped ABox is pseudo tree-shaped with core size~$1$.

The following lemma describes the central property of pseudo tree-shaped ABoxes. It essentially says that if \abf is an answer to an OMQ $Q$ based on a connected CQ $q$ on an ABox~$\Amc$, then one can unravel $\Amc$ into a pseudo tree-shaped ABox $\Amc'$ that homomorphically maps to $\Amc$ and such that \abf is an answer to $Q$ on $\Amc'$, witnessed by a homomorphism from $q$ to $\Umc_{\Amc',\Tmc}$ that satisfies the additional property of being within or at least `close to' the core of $\Amc'$. 
\begin{lemma} \label{lem:pseudo} Let $Q = (\Tmc, \Sigma, q) \in (\EL, \textnormal{conCQ})$, $\Amc$ a $\Sigma$-ABox and $\abf \in \mn{ind}(\Amc)^{\mn{ar}(q)}$ such that $\Amc \models Q(\abf)$. Then there is a pseudo tree-shaped $\Sigma$-ABox $\Amc'$ of core size at most $|q|$ and with \abf in its core that satisfies the following conditions:
\begin{enumerate}

\item there is a homomorphism from $\Amc'$ to $\Amc$ that is the identity on $\abf$;

\item $\Amc' \models Q(\abf)$, witnessed by a homomorphism from $q$ to $\Umc_{\Amc', \Tmc}$ whose range consists solely of core individuals and of anonymous elements in a tree rooted in a core individual. 

\end{enumerate}
 \end{lemma}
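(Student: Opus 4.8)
The plan is to start from the universal model and unravel it into pseudo tree-shaped form. Since $\Amc \models Q(\abf)$, Lemma~\ref{lem:unimodelproperties}(3) gives a homomorphism $h : q \to \Umc_{\Amc,\Tmc}$ with $h(\xbf) = \abf$. Every element of $\Umc_{\Amc,\Tmc}$ is either an individual of $\Amc$ or an anonymous element lying in a unique tree whose root is an individual of $\Amc$; for $e \in \Umc_{\Amc,\Tmc}$ write $\rho(e)$ for that root, with $\rho(e)=e$ when $e$ is an individual. I would define the core individuals as $D = \{\rho(h(v)) \mid v \in \mn{var}(q)\}$ and the core as $\Cmc = \Amc|_D$. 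Since every answer variable maps to an individual of $\Amc$, we have $\abf \subseteq D$, and $|D| \le |\mn{var}(q)| \le |q|$, which already yields the bound on the core size.

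The technical heart is to rebuild, for each $a \in D$, enough of $\Amc$ around $a$ by a tree-shaped $\Sigma$-ABox so that the chase regenerates the concept memberships of $a$. Writing $\tau(a) = \{A \in \NC \mid a \in A^{\Umc_{\Amc,\Tmc}}\}$, I would prove by induction on the chase step at which a concept assertion is produced the following witness claim: for every individual $b$ of $\Amc$ and every $A \in \tau(b)$ there is a finite tree-shaped $\Sigma$-ABox $T_{b,A}$ with root $b$ and a homomorphism $T_{b,A} \to \Amc$ that is the identity on $b$, such that $T_{b,A},\Tmc \models A(b)$. The cases $\top \sqsubseteq A$ and $A_1 \sqcap A_2 \sqsubseteq A$ are immediate (for the latter, glue the two sub-witnesses at the shared root $b$ and take disjoint copies otherwise, which introduces no multi-edge). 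For $\exists r.B \sqsubseteq A$ fired from $r(b,c),B(c)$ there are two subcases: if $c$ is an individual, attach a fresh copy of $T_{c,B}$ below $b$ via a new edge $r(b,\cdot)$; if $c$ is anonymous, then $c$ was created by some $A_0 \sqsubseteq \exists r.B$ with $A_0 \in \tau(b)$ derived earlier, and since the anonymous subtree below $b$ is determined entirely by the concepts at $b$, the witness $T_{b,A_0}$ already forces $A$ at $b$ after chasing. I expect this step — routing the induction through anonymous successors while keeping the result tree-shaped and over $\Sigma$ — to be the main obstacle. For each $a \in D$ I would then let $\Amc_a$ be the tree obtained by gluing all $T_{a,A}$, $A \in \tau(a)$, at the root $a$, and set $\Amc' = \Cmc \cup \bigcup_{a\in D}\Amc_a$.

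Condition~(1) is then routine: the identity on $\Cmc \subseteq \Amc$ together with the witness homomorphisms $\Amc_a \to \Amc$ (all fixing the root $a \in D$) combine into a homomorphism $\Amc' \to \Amc$ that is the identity on $\abf$, since distinct trees share only their roots, which lie in the core. For condition~(2), monotonicity of concept entailment gives $\tau(a) \subseteq \{A \mid a \in A^{\Umc_{\Amc',\Tmc}}\}$ for every $a \in D$, because $\Amc_a \subseteq \Amc'$ and $\Amc_a,\Tmc \models A(a)$ for all $A \in \tau(a)$. Using Lemma~\ref{lem:unimodelproperties}(2), this enlargement of the concept set at $a$ yields a homomorphism $\phi_a$ from the anonymous subtree below $a$ in $\Umc_{\Amc,\Tmc}$ into that below $a$ in $\Umc_{\Amc',\Tmc}$, fixing $a$.

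Finally I would transfer $h$ to a homomorphism $h' : q \to \Umc_{\Amc',\Tmc}$ by setting $h'(v) = h(v)$ when $h(v)$ is an individual (then $h(v) \in D$) and $h'(v) = \phi_{\rho(h(v))}(h(v))$ when $h(v)$ is anonymous. Checking that $h'$ preserves atoms is a short case analysis on the location of the endpoints of an atom $r(u,v)$ or $A(v)$: a role edge between two individuals in the image lies in $\Amc|_D = \Cmc$ (the \EL chase adds no role edges among existing individuals), and an edge incident to an anonymous element stays within a single anonymous subtree and is preserved by the relevant $\phi_a$; concept atoms are handled by the inclusion of concept sets and by $\phi_a$ being a homomorphism. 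Crucially, because \EL has no inverse roles, no edge of $\Umc_{\Amc,\Tmc}$ runs from an anonymous element to an individual, so every atom falls into a handled case and the range of $h'$ is contained in $D$ together with anonymous elements of trees rooted in $D$, exactly as required. Since $h'(\xbf)=\abf$, Lemma~\ref{lem:unimodelproperties}(3) gives $\Amc' \models Q(\abf)$, establishing condition~(2) and completing the proof.
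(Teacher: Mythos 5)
Your proposal is correct but takes a genuinely different route. The paper unravels $\Umc_{\Amc,\Tmc}$ into a (potentially infinite) pseudo tree-shaped ABox with core $I$ equal to your set $D$, deferring the unraveling construction to an earlier paper, and then obtains finiteness by the compactness theorem before restricting to the reachable part. You instead build the trees of $\Amc'$ directly, as finite tree-shaped witnesses $T_{b,A}$ for the individual concept memberships, by induction on the chase step, gluing them at the core individuals. Your approach is more constructive and self-contained: it avoids both compactness and the external reference, and the induction through anonymous successors (reducing $\exists r.B\sqsubseteq A$ fired at an anonymous $c$ to the earlier derivation of the concept $A_0$ that created $c$, using that types of anonymous elements in \EL depend only on the concept that generated them) is exactly the right idea. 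What the paper's route buys is that condition~(2) becomes nearly immediate: in the unraveling, every named successor of a core individual retains its type from $\Amc$, so the anonymous trees below core individuals in the two universal models coincide and $h$ transfers verbatim. In your construction this step needs one more sentence: the chase rule~(v) as defined in the paper suppresses an anonymous witness below $a$ whenever a \emph{named} $r$-successor of $a$ already satisfies the required concept, so you must check that attaching the trees $\Amc_a$ does not block anonymous witnesses that $h$ uses. This is easily repaired---any named witness $r(a,c')$ with $B\in\mn{tp}_{\Amc',\Tmc}(c')$ projects along your homomorphism $\Amc'\to\Amc$ to a named witness in $\Amc$, and one may in any case fix a chase order (or the blocking-free variant of the canonical model, which is homomorphically equivalent) in which all anonymous witnesses at core individuals are created---but as written, the appeal to Lemma~\ref{lem:unimodelproperties}(2) alone does not guarantee that $\phi_a$ lands in the anonymous tree below $a$ rather than in named non-core individuals of $\Amc_a$.
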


\begin{proof} (sketch)  Assume that  $\Amc \models Q(\abf)$ and let 
  $h$ be a homomorphism from $q(\xbf)$ to $\Umc_{\Amc, \Tmc}$. Let $I \subseteq \mn{ind}(\Amc)$ be the set of all individuals $b$ that are either in the range of $h$ or such that an anonymous element in the chase-generated tree below $b$ is in the range of $h$. We can unravel $\Umc_{\Amc, \Tmc}$ into a potentially infinite pseudo tree-shaped ABox $\Amc_0$ with core $I$, see \cite{ijcai16} for details. Then $\Amc_0 \models Q(a)$ and this is witnessed by a homomorphism as required by Condition~(2) of Lemma~\ref{lem:pseudo}. However, $\Amc_0$ need not be finite. By the compactenss theorem of first order logic, there exists a finite subset $\Amc_1 \subseteq \Amc_0$ such that $\Amc_1 \models Q(\abf)$. Let $\Amc'$ be the restriction of $\Amc_1$ to those individuals that are reachable in $G_{\Amc'}$ from an individual in $I$. It can be verified that $\Amc'$ is as required. \end{proof}


We shall often be interested in pseudo tree-shaped ABoxes \Amc that give an answer \abf to an OMQ $Q$ and that are minimal with this property regarding set inclusion,
that is, no strict subset of $\Amc$ supports \abf as an answer to $Q$. We introduce
some convenient notation for this. Let $Q = (\Tmc, \Sigma, q) \in (\EL, \textnormal{CQ})$. We use $\Mmc_Q$ to denote the set of all pseudo tree-shaped $\Sigma$-ABoxes $\Amc$ of core size at most $|q|$ such that for some tuple $\abf$ in the core of $\Amc$, $\Amc \models Q(\abf)$ while $\Amc' \not\models Q(\abf)$ for any $\Amc' \subsetneq \Amc$.

\smallskip
\noindent
\textbf{\Tmc-types and Glueing ABoxes.} We introduce a fundamental construction for
merging ABoxes. Let $\Tmc$ be an $\ELI$-TBox. A \emph{$\Tmc$-type} $t$ is a set of concept names from $\Tmc$ that is
closed under $\Tmc$-consequence, that is, if $\Tmc \models \midsqcap t
\sqsubseteq A$, then $A \in t$. For an ABox $\Amc$ and $a \in
\mn{ind}(\Amc)$, we use $\mn{tp}_{\Amc,\Tmc}(a)$ to denote the set of
concept names $A$ from $\Tmc$ such that $\Amc,\Tmc \models A(a)$, which
is a $\Tmc$-type.
The following lemma allows us to glue together ABoxes under certain conditions.





\begin{lemma}
\label{lem:aboxunion}
Let $\Amc_1, \Amc_2$ be $\Sigma$-ABoxes and $\Tmc$ an $\ELI$-TBox such that $\mn{tp}_{\Amc_1, \Tmc}(a) = \mn{tp}_{\Amc_2, \Tmc}(a)$ for all  $a \in \mn{ind}(\Amc_1) \cap \mn{ind}(\Amc_2)$. Then $\mn{tp}_{\Amc_1 \cup \Amc_2, \Tmc} (a) = \mn{tp}_{\Amc_i, \Tmc} (a)$ for all $a \in \mn{ind}(\Amc_i)$, $i \in \{1,2\}$.
\end{lemma}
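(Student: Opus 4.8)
The plan is to prove the two inclusions separately. The inclusion $\mn{tp}_{\Amc_i,\Tmc}(a) \subseteq \mn{tp}_{\Amc_1 \cup \Amc_2,\Tmc}(a)$ is immediate from monotonicity of entailment: since $\Amc_i \subseteq \Amc_1 \cup \Amc_2$, every model of $\Amc_1 \cup \Amc_2$ and $\Tmc$ is a model of $\Amc_i$ and $\Tmc$, so $\Amc_i,\Tmc \models A(a)$ implies $\Amc_1 \cup \Amc_2,\Tmc \models A(a)$. The real work is the converse inclusion, for which I would exhibit a single model $\Imc$ of $\Amc_1 \cup \Amc_2$ and $\Tmc$ in which every named individual $a \in \mn{ind}(\Amc_i)$ satisfies exactly the concept names in $\mn{tp}_{\Amc_i,\Tmc}(a)$. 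Given such an $\Imc$, if $\Amc_1 \cup \Amc_2,\Tmc \models A(a)$ then $a \in A^{\Imc}$, hence $A \in \mn{tp}_{\Amc_i,\Tmc}(a)$; the hypothesis guarantees that for a shared individual the two candidate types coincide, so there is no conflict in this requirement.

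To build $\Imc$, I would glue the two universal models $\Umc_1 := \Umc_{\Amc_1,\Tmc}$ and $\Umc_2 := \Umc_{\Amc_2,\Tmc}$. Concretely, take the disjoint union of $\Umc_1$ and $\Umc_2$ and identify, for every $a \in \mn{ind}(\Amc_1) \cap \mn{ind}(\Amc_2)$, the two copies of $a$, while keeping the anonymous elements introduced by the chase disjoint; interpretations of concept and role names are the union of those in $\Umc_1$ and $\Umc_2$. Since $\Umc_i \models \Amc_i$ and the canonical embeddings of $\Umc_1$ and $\Umc_2$ into $\Imc$ preserve assertions, $\Imc$ is a model of $\Amc_1 \cup \Amc_2$. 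The decisive observation is that, by Lemma~\ref{lem:unimodelproperties}(3) applied to atomic queries, the concept names holding at a named individual $a$ in $\Umc_i$ are exactly $\mn{tp}_{\Amc_i,\Tmc}(a)$; hence for a shared $a$ the identification creates no new concept memberships, because $\mn{tp}_{\Amc_1,\Tmc}(a) = \mn{tp}_{\Amc_2,\Tmc}(a)$ by assumption. Thus each named $a$ realizes precisely its type from the respective $\Amc_i$, which is what the conclusion demands once we know $\Imc \models \Tmc$.

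The main obstacle, and the core of the argument, is verifying $\Imc \models \Tmc$, i.e.\ checking each normal-form CI at every element, with the merged individuals being the only nontrivial case. For an anonymous or non-shared named element the local neighbourhood and concept memberships in $\Imc$ coincide with those in the $\Umc_i$ it stems from, so every CI that held there still holds. For a shared individual $a$, its concept memberships in $\Imc$ equal the common type and each of its successors and predecessors comes from $\Umc_1$ or from $\Umc_2$; I would check the four (for \ELI, six) forms of CIs in turn: for $A_1 \sqsubseteq \exists r.A_2$, membership $a \in A_1^{\Imc}$ means $A_1$ is in the common type, so $a$ already had a witnessing $r$-successor in $\Umc_1$ that survives in $\Imc$; for $\exists r.A_1 \sqsubseteq A_2$ (and dually $\exists r^-.A \sqsubseteq B$), any successor of $a$ lying in $A_1$ comes from some $\Umc_j$, which therefore already forced $a$ into $A_2$ in $\Umc_j$ and hence into the common type; and $A_1 \sqcap A_2 \sqsubseteq A_3$ together with $\top \sqsubseteq A_1$ hold because $a$'s type is $\Tmc$-closed. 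The key point underlying all cases is that placing the two anonymous trees below $a$ side by side introduces no fresh derivation at $a$: since $a$ already realizes the $\Tmc$-closed common type in both $\Umc_1$ and $\Umc_2$, no successor or predecessor present in $\Imc$ can trigger a CI that was not already satisfied. This establishes $\Imc \models \Tmc$ and completes the converse inclusion.
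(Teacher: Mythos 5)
Your proposal is correct and follows essentially the same route as the paper: both glue the two universal models $\Umc_{\Amc_1,\Tmc}$ and $\Umc_{\Amc_2,\Tmc}$ along the shared named individuals (keeping the anonymous parts apart), use the type-equality hypothesis to ensure the identification adds no concept memberships at shared individuals, and then verify that the resulting interpretation satisfies each normal-form CI, the only delicate cases being those where a role edge and a concept membership originate from different halves. The paper phrases the conclusion as a contrapositive for a single concept name while you read off all types from the one glued model, but this is only a presentational difference.
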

\begin{proof} 
  Let $\Amc_1$, $\Amc_2$, and \Tmc be as in the lemma. It clearly
  suffices to show that $\mn{tp}_{\Amc_1 \cup \Amc_2, \Tmc}(a)
  \subseteq \mn{tp}_{\Amc_i, \Tmc}(a)$ for all $a \in
  \mn{ind}(\Amc_i)$, $i \in \{1,2\}$. We show the
  contrapositive. Thus, assume that $\Amc_i,\Tmc \not\models A(a)$ for some $i \in \{1, 2\}$. We
  have to show that $\Amc_1 \cup \Amc_2,\Tmc \not\models A(a)$.  Let $\Imc$ be the universal model of $\Tmc$ and $\Amc_1 \cup \Amc_2$ and for each $j \in
  \{1,2\}$, let $\Imc_j$ be the a universal model of \Tmc and
  $\Amc_j$. We can assume w.l.o.g.\ that $\Delta^{\Imc_1} \cap
  \Delta^{\Imc_2} = \mn{ind}(\Amc_1) \cap \mn{ind}(\Amc_2)$. By
  assumption and since $\mn{tp}_{\Amc_1, \Tmc}(a) = \mn{tp}_{\Amc_2,
    \Tmc}(a)$, we must have $a \notin A^{\Imc_1}$ and $a \notin
  A^{\Imc_2}$. Consider the (non-disjoint) union \Imc of $\Imc_1$ and
  $\Imc_2$. Clearly, \Imc is a model of $\Amc_1 \cup \Amc_2$ and $a
  \notin A^\Imc$. To show $\Amc_1 \cup \Amc_2,\Tmc \not\models A(a)$,
  it thus remains to prove that \Imc is a model of \Tmc. To do this,
  we argue that all concept inclusions from $\Tmc$ are satisfied:
  \begin{itemize}
  \item Consider $\exists r. A_1 \sqsubseteq A_2 \in \Tmc$ and $a,b \in \Delta^{\Imc}$ such that $(a,b) \in r^{\Imc}$ and $b \in A_1^{\Imc}$. Then there exist $i,j \in \{1, 2\}$ such that $(a,b) \in r^{\Imc_i}$ and $b \in A_1^{\Imc_j}$. If $i=j$, then $a \in A_2^{\Imc}$, since $\Imc_i$ is a model of $\Tmc$. Otherwise $b \in \Delta^{\Imc_1} \cap \Delta^{\Imc_2} = \mn{ind}(\Amc_1) \cap \mn{ind}(\Amc_2)$, so by assumption, $\mn{tp}_{\Amc_1, \Tmc}(b) = \mn{tp}_{\Amc_2, \Tmc}(b)$. It follows that $A_1 \in \mn{tp}_{\Amc_i, \Tmc}(b)$ and thus, $b \in A_1^{\Imc_i}$. Together with $(a,b) \in r^{\Imc_i}$ and because $\Imc_i$ is a model of $\Tmc$, it follows that $a \in A_2^{\Imc_i} \subseteq A_2^{\Imc}$. Thus, the inclusion $\exists r. A_1 \sqsubseteq A_2$ is satisfied in $\Imc$. 
  \item Consider $\top \sqsubseteq A_1 \in \Tmc$ and $a \in \Delta^{\Imc}$. Then $a \in \Delta^{\Imc_i}$ for some $i \in \{1, 2\}$. Since $\Imc_i$ is a model of $\Tmc$, we have $a \in A_1^{\Imc_i}$, so $a \in A_1^{\Imc}$ and the inclusion $\top \sqsubseteq A_1$ is satisfied in $\Imc$.
  \item Consider $A_1 \sqcap A_2 \sqsubseteq A_3 \in \Tmc$ and $a \in A_1^{\Imc} \cap A_2^{\Imc}$. Then there are $i,j \in \{1, 2\}$ such that $a \in A_1^{\Imc_i}$ and $a \in A_2^{\Imc_j}$. If $i=j$, then $a \in A_3^{\Imc}$ follows, since $\Imc_i$ is a model of $\Tmc$. Otherwise $a \in \Delta^{\Imc_1} \cap \Delta^{\Imc_2} = \mn{ind}(\Amc_1) \cap \mn{ind}(\Amc_2)$, so by assumption, $\mn{tp}_{\Amc_1, \Tmc}(a) = \mn{tp}_{\Amc_2, \Tmc}(a)$. For sure we have $A_1, A_2 \in \mn{tp}_{\Amc_1, \Tmc}(a)$, so we have $a \in A_1^{\Imc_1} \cap A_2^{\Imc_1}$ and since $\Imc_1$ is a model of $\Tmc$, we conclude $a \in A_3^{\Imc_1} \subseteq A_3^{\Imc}$, so the inclusion $A_1 \sqcap A_2 \sqsubseteq A_3$ is satisfied in $\Imc$.
  \item Consider $A_1 \sqsubseteq \exists r. A_2 \in \Tmc$ and $a \in A_1^{\Imc}$. Then $a \in A_1^{\Imc_i}$ for some $i \in \{1, 2\}$. Since $\Imc_i$ is a model of $\Tmc$, we have $b \in \Delta^{\Imc_i}$ and $(a,b) \in r^{\Imc_i}$, hence also $b \in \Delta^{\Imc}$ and $(a,b) \in r^{\Imc}$ and thus, $A_1 \sqsubseteq \exists r. A_2$ is satisfied in $\Imc$.
  \end{itemize} 
\end{proof}

\section{$\AC^0$ versus \NL for Connected CQs}
\label{sec:AC0NL}

We prove a dichotomy between $\AC^0$ and $\NL$ for $(\EL, \textnormal{conCQ})$ and show that for OMQs from this language, evaluation in $\AC^0$ coincides with FO rewritability. The dichotomy does not depend on assumptions from complexity theory since it is known that $\AC^0 \neq \NL$ \cite{FurstSS81}. We generalize the results obtained here to potentially disconnected CQs in Section~\ref{sec:disconnected}.

FO-rewritability of OMQs in $(\EL,\textnormal{CQ})$ has been characterized in \cite{ijcai16} by a property called bounded depth. Informally, an OMQ $Q$ has bounded depth if it looks only boundedly far into the ABox. To obtain our results, we show that unbounded depth implies $\NL$-hardness.  Formally, bounded depth is defined as follows. The \emph{depth} of a tree-shaped ABox $\Amc$ is the largest number $k$ such that there exists a directed path of length $k$ starting from the root in $G_\Amc$. The \emph{depth} of a pseudo tree-shaped ABox is the maximum depth of its trees. 
We say that an OMQ $Q \in (\EL, \textnormal{CQ})$ has \emph{bounded depth} if there is a $k$ such that every $\Amc \in \Mmc_Q$ has depth at most $k$. If there is no such $k$, then $Q$ has \emph{unbounded depth}.

\begin{theorem}
\label{thm:AC0NL}
Let $Q \in (\EL, \textnormal{conCQ})$. The following are equivalent:
\begin{enumerate}[(i)]
\item $Q$ has bounded depth.
\item $Q$ is $FO$-rewritable.
\item {\sc eval}$(Q)$ is in $AC^0$.
\end{enumerate}
If these conditions do not hold, then {\sc eval}$(Q)$ is $\NL$-hard under FO reductions.
\end{theorem}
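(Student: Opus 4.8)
The plan is to close a cycle of implications and, in doing so, prove the final hardness claim. The equivalence of (i) and (ii) is already available: bounded depth was introduced in \cite{ijcai16} precisely as a syntactic characterisation of FO-rewritability, so I would simply invoke that result. The implication (ii) $\Rightarrow$ (iii) is the textbook fact that evaluating a \emph{fixed} first-order formula has $\AC^0$ data complexity, and an FO-rewriting of $Q$ turns {\sc eval}$(Q)$ into exactly such an evaluation over the $\Sigma$-structure. All the genuinely new work lies in the final sentence: if (i) fails, i.e.\ $Q$ has unbounded depth, then {\sc eval}$(Q)$ is $\NL$-hard under FO reductions. Since $\AC^0 \neq \NL$ \cite{FurstSS81}, this hardness also yields the contrapositive of (iii) $\Rightarrow$ (i), closing the cycle without any unproven complexity-theoretic assumption.

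To establish $\NL$-hardness I would reduce from directed reachability \REACH, which is $\NL$-complete under FO reductions. The reduction rests on extracting a constant-size, freely composable \emph{edge gadget} from the unbounded-depth witnesses. Concretely, unbounded depth gives, for a large enough bound, a $\subseteq$-minimal $\Amc \in \Mmc_Q$ containing a long directed path $a_0, a_1, \dots, a_m$ inside one of its trees, rooted at a core individual. With each $a_i$ I would associate the $\Tmc$-type $\mn{tp}_{\Amc|_{S_i}, \Tmc}(a_i)$, where $S_i$ consists of $a_i$ together with its descendants; since $\EL$ propagates information upward only through concept inclusions $\exists r.A \sqsubseteq B$, this subtree-induced type is exactly what governs which concept names reach $a_i$ from below. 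There are at most $2^{|\Tmc|}$ such types, so for $m$ large the pigeonhole principle yields $i < j$ with $a_i$ and $a_j$ carrying the same type. The segment between $a_i$ and $a_j$, together with the subtrees hanging off it, is the desired edge gadget, with entry interface $a_i$ and exit interface $a_j$ sharing a $\Tmc$-type; by Lemma~\ref{lem:aboxunion} copies of it may be concatenated and spliced onto other ABoxes without altering the types at their interfaces.

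Given an input graph $G=(V,E)$ with source $s$ and target $t$, I would build a $\Sigma$-ABox $\Amc_G$ by taking one interface individual per vertex, inserting a fresh copy of the edge gadget for every edge of $G$ (identifying its entry and exit interfaces with the endpoint individuals), seeding the critical type at $t$ via a copy of the path suffix below $a_j$, and attaching at $s$ a copy of the core together with the path prefix above $a_i$, so that a query match can be completed there. Because every gadget has size bounded independently of $G$, each output relation is definable by a fixed FO query over $G$ (using the standard multi-copy encoding of the domain), so the map is a legitimate FO reduction. The easy direction, that a directed $s$-$t$ path implies $\Amc_G \models Q(\abf)$ at the designated individual, follows from Lemma~\ref{lem:aboxunion}: concatenating the edge gadgets along the path propagates the seeded type from $t$ back to $s$, where gluing it to the core reproduces a pumped version of $\Amc$, whence the answer holds by Lemma~\ref{lem:unimodelproperties}.

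The hard part is the converse: if $t$ is not reachable from $s$, then $\Amc_G \not\models Q(\abf)$. Here $\subseteq$-minimality of $\Amc$ is essential. I would argue that, because no strict subset of $\Amc$ supports the answer, the critical type is genuinely necessary and the edge gadget transmits it only along actual directed edges, never spuriously. Formally one analyses the universal model $\Umc_{\Amc_G,\Tmc}$ and, decomposing $\Amc_G$ along its gadget structure and applying Lemma~\ref{lem:aboxunion} repeatedly, shows that the critical type is realised at the interface of a vertex $v$ exactly when $t$ is reachable from $v$; if $t$ is unreachable from $s$ the type is absent at $s$, so no homomorphism from $q$ of the shape allowed by Lemma~\ref{lem:pseudo} exists near the core, and $\abf$ is not an answer. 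Making this ``types propagate precisely along reachability'' statement rigorous --- in particular ruling out that the core, the prefix, or the off-path subtrees inside the gadgets furnish an alternative derivation of the critical type --- is the main obstacle, and it is precisely where the minimality built into $\Mmc_Q$ and the locality of $\EL$ reasoning carry the argument.
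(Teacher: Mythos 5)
Your overall strategy matches the paper's: reduce from \REACH, obtain gadgets by a pumping/pigeonhole argument on a deep path in a minimal ABox from $\Mmc_Q$, and compose copies of an edge gadget using the glueing lemma (Lemma~\ref{lem:aboxunion}). But there is a genuine gap exactly at the point you flag as ``the main obstacle,'' and the missing idea is not minimality alone but a \emph{two-type} pigeonhole. You colour each path individual $a_i$ with the single type $\mn{tp}_{\Amc^{a_i},\Tmc}(a_i)$, which only yields two individuals sharing one type $t_1$. That suffices for the easy direction (reachable $\Rightarrow$ answer), but the converse requires knowing what type an interface individual carries when the signal is \emph{absent}, that this ``off'' type is itself preserved by the edge gadget, and that it does not trigger the query at the end of the chain. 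The paper obtains this by first deleting the leaf of the deep path to get $\Amc' \subsetneq \Amc$ (which by minimality satisfies $\Amc' \not\models Q(\abf)$) and then colouring each $b$ on the path with the \emph{pair} $(\mn{tp}_{\Amc',\Tmc}(b), \mn{tp}_{\Amc,\Tmc}(b))$; the pigeonhole on pairs produces $b,c$ with $t_0 \subsetneq t_1$ satisfying all of Conditions~2--4 of Definition~\ref{def:abilitytosimulateREACH}, and Claim~2 in the proof of Lemma~\ref{lem:nlhard} (types propagate exactly along reachability) then follows by iterating Lemma~\ref{lem:aboxunion}. Without $t_0$ your argument that ``the edge gadget transmits the type only along actual directed edges, never spuriously'' has no handle: you cannot even state what the unreachable vertices look like, let alone prove $\Amc_G \not\models Q(\abf)$.

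A second gap concerns Boolean queries. When $q$ has answer variables in the core, every homomorphism witnessing $\Amc_G \models Q(\abf)$ is anchored there, but when $q$ is Boolean it may map entirely into a single copy of a gadget or the anonymous trees below it, far from where your analysis looks. The paper handles this by extending $\Tmc$ with concept names $A_p$ for $p \in \mn{trees}(q)$, by requiring core-closeness of all homomorphisms (Condition~5, justified by Lemma~\ref{lem:core-close} via minimality of $\Amc$), and by a case analysis in Lemma~\ref{lem:nlhard} over where the range of a putative homomorphism lies (no $a_v$ in the range; exactly one $a_v$; $v \in \{s,t\}$ or not). Your proposal says nothing about homomorphisms that avoid the core, so as written the converse direction would fail for Boolean $Q$. (Your mirrored orientation of the reduction --- seeding the type at $t$ and propagating backwards so that the type at $v$ means ``$t$ is reachable from $v$'' --- is fine and not an error; it is the symmetric counterpart of the paper's forward propagation from $s$.)
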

%
%
The equivalence (ii) $\Leftrightarrow$ (iii) is closely related to a result in CSP.  In fact, every OMQ of the form $(\Tmc, \Sigma, \exists x A(x))$ with $A$ a concept name and $\Tmc$ formulated in $\ELI$ is equivalent to the complement of a CSP \cite{BienvenuCLW14} and it is a known result in CSP that FO-rewritability coincides with $\AC^0$ \cite{BulatovKL08}.  Conjunctive queries, however, go beyond the expressive power of (complements of) CSPs and thus we give a direct proof for (ii) $\Leftrightarrow$ (iii).

The equivalence (i) $\Leftrightarrow$ (ii) follows from Theorem 9 in \cite{ijcai16}. Further, the implication (ii) $\Rightarrow$ (iii) is clear because first order formulas can be evaluated in $\AC^0$. What remains to be shown is thus the implication (iii) $\Rightarrow$ (i) and the last sentence of the theorem. We show that unbounded
depth implies $\NL$-hardness, which establishes both since $\AC^0 \neq \NL$.

We first give a rough sketch of how the reduction works. We reduce from \REACH, the reachability problem in directed graphs, which is $\NL$-complete under FO reductions \cite{immerman}. An input for this problem is a tuple $G=(V,E,s,t)$ where $(V,E)$ is a directed graph, $s \in V$ a \emph{source node} and $t \in V$ a \emph{target node}. Such a tuple is a yes-instance if there exists a path from $s$ to $t$ in the graph $(V,E)$. We further assume w.l.o.g. that $s \neq t$ and that the indegree of $s$ and the outdegree of $t$ are both $0$, which simplifies the reduction.

Let $Q=(\Tmc,\Sigma,q) \in (\EL,\textnormal{conCQ})$ be an OMQ of unbounded depth.  The reduction has to translate a tuple $G=(V,E,s,t)$ into a $\Sigma$-ABox $\Amc_G$ and a tuple $\abf$ such that $\Amc_G \models Q(\abf)$ if and only if there is a path from $s$ to~$t$. We show that any ABox from $\Mmc_Q$ of sufficiently large depth can be used to construct ABoxes $\Amc_\mn{source}$, $\Amc_\mn{edge}$ and $\Amc_\mn{target}$ that can serve as gadgets in the reduction. More precisely, the ABox $\Amc_G$ has (among others) one individual $a_v$ for every node $v \in V$, the edges of $(V,E)$ will be represented using copies of $\Amc_\mn{edge}$, and the source and target nodes will be marked using the ABoxes $\Amc_\mn{source}$ and $\Amc_\mn{target}$, respectively.  We identify two $\Tmc$-types $t_0$ and $t_1$ such that $\mn{tp}_{\Amc_G, \Tmc}(a_v) = t_1$ if $v$ is reachable from $s$ via a path in $G$ and $\mn{tp}_{\Amc_G, \Tmc}(a_v) = t_0$ otherwise. The tuple $\abf$ is then connected to $a_t$ in a way such that $\Amc_G, \Tmc \models q(\abf)$ if and only if $\mn{tp}_{\Amc_G, \Tmc}(a_t) = t_1$.

We next define a property of $Q$, called the \emph{ability to simulate \REACH}, that makes the properties of $\Amc_\mn{source}$, $\Amc_\mn{edge}$, and $\Amc_\mn{target}$ precise, as well as those of the $\Tmc$-types $t_0$ and $t_1$. We then show that $Q$ having unbounded depth implies the ability to simulate \REACH and that this, in turn, implies \NL-hardness via a reduction from \REACH.

If $M$ is a set of concept names, then $M(a)$ denotes the ABox $\{A(a) \mid A \in M\}$. We write $\Amc,\Tmc \models M(a)$ to mean that $\Amc,\Tmc \models A(a)$ for all $A \in M$.  For every pseudo tree-shaped ABox $\Amc$ and a non-core individual $a \in \mn{ind}(\Amc)$, we use $\Amc^a$ to denote the tree-shaped ABox rooted at $a$. Note that every tree-shaped ABox is trivially pseudo tree-shaped with only one tree and where the core consists only of the root individual, so this notation can also be used if $\Amc$ is tree-shaped. Moreover, we use $\Amc_a$ to denote the pseudo tree-shaped ABox $\Amc \setminus \Amc^a$, that is, the ABox obtained from $\Amc$ by removing all assertions that involve descendants of $a$ (making $a$ a leaf) and all assertions of the form $A(a)$.  We also combine these notations, writing for example $\Amc^a_{bc}$ for $((\Amc^a)_b)_c$.

Boolean queries require some special attention in the reduction since they can be made true by homomorphisms to anywhere in the universal model of $\Amc_G$ and~\Tmc, rather than to the neighborhood of the answer tuple~$\abf$ (recall that we work with connected CQs). We thus have to build $\Amc_G$ such that the universal model does not admit unintended homomorphisms. Let 
$\Amc$ be a pseudo tree-shaped $\Sigma$-ABox of core size $|q|$ and $\abf$ a tuple from $\mn{ind}(\Amc)$. We call a homomorphism $h$ from $q$ to $\Umc_{\Amc,\Tmc}$ \emph{core close} if there is some variable $x$ in $q$ such that $h(x) \in \mn{ind}(\Amc)$ is in the core of $\Amc$ or $h(x)$ is an anonymous element in a tree below a core individual. If $\mn{ar}(q)>0$ and $\abf$ is from the core of $\Amc$, then every homomorphism is core close, but this is not true if $q$ is Boolean.
\begin{lemma}
\label{lem:core-close}
Let $Q=(\Tmc,\Sigma,q) \in (\EL,\textnormal{conCQ})$ be Boolean and $\Amc \in \Mmc_Q$. Then every homomorphism from $q$ to $\Umc_{\Amc,\Tmc}$ is core close.
\end{lemma}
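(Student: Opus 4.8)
The plan is to argue by contradiction, exploiting the minimality of $\Amc$ as an element of $\Mmc_Q$ together with the fact that in $\EL$ (unlike $\ELI$) concept memberships propagate only \emph{upward} along the anonymous trees. Suppose that some homomorphism $h$ from $q$ to $\Umc_{\Amc,\Tmc}$ is not core close. Then for every variable $x$ of $q$, the element $h(x)$ is either a non-core individual of $\Amc$ or an anonymous element lying in a tree below a non-core individual; call the set of all such elements $N$. First I would describe the shape of $\Umc_{\Amc,\Tmc}$ restricted to $N$: since $\Amc$ is pseudo tree-shaped and $\EL$ has no inverse roles, every edge of $\Umc_{\Amc,\Tmc}$ either stays inside the core, joins the core to the root of one of the tree parts, or leads from a parent to a child (an ABox or anonymous element) strictly below such a root. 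Consequently, deleting the core individuals and the anonymous trees hanging below them disconnects $\Umc_{\Amc,\Tmc}$ into one connected component per individual $a$ that sits at depth $1$ in one of the tree parts, namely the subtree $\Amc^a$ together with the anonymous trees generated below its individuals.

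Since $q$ is connected, its $h$-image is connected in $\Umc_{\Amc,\Tmc}$ and hence contained in a single such component, say the one determined by a non-core individual $a$ at depth $1$ of a tree part. I would then observe that this component is isomorphic to $\Umc_{\Amc^a,\Tmc}$: because $\EL$ concept inclusions never propagate a concept name from a parent to a child, the type $\mn{tp}_{\Amc,\Tmc}(b)$ of any individual $b$ in $\Amc^a$ depends only on the subtree rooted at $b$, which is the same in $\Amc$ and in $\Amc^a$. Hence all concept labels of, and all anonymous trees generated below, the individuals of $\Amc^a$ agree in $\Umc_{\Amc,\Tmc}$ and in $\Umc_{\Amc^a,\Tmc}$. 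Under this identification, $h$ is a homomorphism from $q$ into $\Umc_{\Amc^a,\Tmc}$.

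By Lemma~\ref{lem:unimodelproperties}(3) this yields $\Amc^a,\Tmc \models q$, that is, the proper sub-ABox $\Amc^a$ already entails the Boolean query. Since $a$ is a non-core individual, $\Amc^a$ contains no core individual; as the core of $\Amc$ is non-empty (otherwise $\Amc$ would be empty and could not entail the non-empty $q$), we have $\Amc^a \subsetneq \Amc$, contradicting the minimality built into the definition of $\Mmc_Q$. Hence no such $h$ exists and every homomorphism from $q$ to $\Umc_{\Amc,\Tmc}$ is core close. The main obstacle is the second paragraph: making precise that passing to the subtree $\Amc^a$ leaves the universal model below $a$ unchanged. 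This is exactly where the absence of inverse roles in $\EL$ is essential — for $\ELI$ the type of $a$ could depend on its ancestors, the restriction of $\Umc_{\Amc,\Tmc}$ would differ from $\Umc_{\Amc^a,\Tmc}$, and the minimality argument would break down.
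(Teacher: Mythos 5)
Your proof is correct and follows essentially the same route as the paper's: both exploit minimality of $\Amc$ in $\Mmc_Q$ together with the fact that, absent inverse roles, the universal model below non-core individuals is unaffected by the core, so a non-core-close homomorphism would witness $Q$ on a proper sub-ABox. The only cosmetic difference is that you restrict to a single subtree $\Amc^a$ (using connectedness of $q$) where the paper removes all assertions involving core individuals; the underlying argument is identical.
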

\begin{proof}(sketch) Since $\Amc \in \Mmc_Q$, $\Amc$ is minimal with the property that $\Amc \models Q$. Assume that there is a homomorphism $h$ from $q$ to $\Umc_{\Amc,\Tmc}$ that is not core close. Then there is no path in $\Umc_{\Amc, \Tmc}$ from any element in the range of $h$ to any individual in the core of $\Amc$ (though a path in the converse direction might exist). Thus, we can remove all assertions in \Amc that  involve a core individual and the resulting
ABox $\Amc'$ satisfies $\Amc' \models Q$, contradicting the minimality of~$\Amc$. Formally, this can be proved by using Lemma~\ref{lem:pseudo} and showing that
$\Umc_{\Amc,\Tmc}$ and $\Umc_{\Amc',\Tmc}$ are isomorphic when restricted to
non-core individuals and all elements reachable from them on a path.
%
\end{proof}
%
For the rest of this section, we assume w.l.o.g.\ that in any OMQ $Q=(\Tmc,\Sigma,q) \in (\EL,\textnormal{conCQ})$, the TBox \Tmc has been modified as follows: for every $p \in \mn{trees}(q)$, introduce a fresh concept name $A_p$ and add the concept inclusion $C_p \sqsubseteq A_p$ to~$\Tmc$ where $C_p$ is $p$ viewed as an \EL-concept.  Finally, normalize $\Tmc$ again.  It is easy to see that the OMQ resulting from this modification is equivalent to the original OMQ $Q$. The extension is still useful since its types are more informative, now potentially containing also the freshly introduced concept names.
We are now ready to define the ability to simulate \REACH. 
%
\begin{definition}
\label{def:abilitytosimulateREACH}
An OMQ $Q=(\Tmc,\Sigma, q) \in (\mathcal{EL}, \textnormal{conCQ})$ has {\em
  the ability to simulate \REACH} if there exist
\begin{itemize}
\item a pseudo tree-shaped $\Sigma$-ABox $\Amc$ of core size at most $|q|$,
\item a tuple $\abf$ from the core of $\Amc$ of length $\mn{ar}(q)$,
\item a tree $\Amc_i$ of $\Amc$ with two distinguished non-core individuals $b, c$ from $\mn{ind}(\Amc_i)$, where $b$ has distance more than $|q|$ from the core, 
$c$ is a descendant of $b$, and $c$ 
 has distance more than $|q|$ from $b$ and
\item $\Tmc$-types $t_0 \subsetneq t_1$
\end{itemize}
   such that
\begin{enumerate}
\item $\Amc \models Q(\abf)$,
\item $t_1 = \mn{tp}_{\Amc, \Tmc} (b) = \mn{tp}_{\Amc, \Tmc} (c)$,
\item $\mn{tp}_{\Amc_c \cup t_0(c), \Tmc}(b) = t_0$,
\item $\Amc_b \cup t_0(b) \not \models Q(\abf)$ and
\item if $q$ is Boolean, then every homomorphism $h$ from $q$ to $\Umc_{\Amc,\Tmc}$ is core close.
\end{enumerate}
We define $\Amc_{\mn{target}}=\Amc_b$, $\Amc_{\mn{edge}}=\Amc^b_c$,
and $\Amc_{\mn{source}} = \Amc^c$.
\end{definition}
To understand the essence of Definition~\ref{def:abilitytosimulateREACH}, it is worthwhile to consider the special case where $q$ is an AQ~$A(x)$.  In this case, $Q$ has the ability to simulate \REACH if there is a tree-shaped $\Sigma$-ABox \Amc with root $a=\abf$, two distinguished non-root individuals $b,c \in \mn{ind}(\Amc)$, $c$ a descendant of $b$, and \Tmc-types $t_0 \subsetneq t_1$ such that Conditions~(1)-(4) of Definition~\ref{def:abilitytosimulateREACH} are satisfied. All remaining parts of Definition~\ref{def:abilitytosimulateREACH} should be thought of as technical complications induced by replacing AQs with CQs.

\smallskip

In Lemma \ref{lem:simulatereach} we show that unbounded depth implies the ability to simulate \REACH and in Lemma \ref{lem:nlhard} we show that the ability to simulate \REACH enables a reduction from the reachability problem for directed graphs.

\begin{lemma}
\label{lem:simulatereach}
Let $Q \in (\EL, \textnormal{conCQ})$. If $Q$ has unbounded depth, then $Q$ has the ability to simulate \REACH.
\end{lemma}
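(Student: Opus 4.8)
The plan is to extract from a single sufficiently deep ABox in $\Mmc_Q$ a pair of individuals $b,c$ on one branch whose behaviour under type propagation realises the fixpoint structure demanded by Definition~\ref{def:abilitytosimulateREACH}. The whole argument rests on one structural fact specific to \EL: since \Tmc has no inverse roles, concept-name entailment propagates only upward along a tree, so for every individual $a$ lying in a tree of a pseudo tree-shaped ABox, $\mn{tp}_{\Amc,\Tmc}(a)$ depends only on the subtree $\Amc^a$ (the anonymous successors generated below $a$ are determined by $a$'s own type). Consequently, along any downward path $a_0,a_1,\dots$ in a tree of \Amc, replacing the subtree below some $a_k$ by the bare assertion of a \Tmc-type $t$ at $a_k$ induces a well-defined transfer map $g_{j,k}(t):=\mn{tp}_{\Amc^{a_j}_{a_k}\cup t(a_k),\Tmc}(a_j)$ for $j<k$. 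These maps are monotone in $t$, satisfy $g_{i,k}=g_{i,j}\circ g_{j,k}$ for $i<j<k$, and \emph{summarisation} gives $g_{j,k}(\mn{tp}_{\Amc,\Tmc}(a_k))=\mn{tp}_{\Amc,\Tmc}(a_j)$, i.e.\ asserting the full entailed type at $a_k$ recovers the original type at $a_j$.

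First I would fix, using unbounded depth, an ABox $\Amc\in\Mmc_Q$ with a witnessing answer \abf in its core and a tree $\Amc_i$ whose depth exceeds a bound $N$ depending only on $|q|$ and the number of \Tmc-types. Let $a_0,a_1,\dots,a_m$ be the spine of $\Amc_i$, with $a_0$ the core root. Restricting to the sub-sequence of spine nodes whose indices are multiples of $|q|+1$ and colouring each pair $(j,k)$ by the map $g_{j,k}$ (of which there are only finitely many, being functions on the finite set of \Tmc-types), Ramsey's theorem yields, once $N$ is large enough, three such nodes $a_{j_0},a_{j_1},a_{j_2}$ with $g_{j_0,j_1}=g_{j_1,j_2}=g_{j_0,j_2}=:e$. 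Then $e=g_{j_0,j_2}=g_{j_0,j_1}\circ g_{j_1,j_2}=e\circ e$ is idempotent, and the three nodes are automatically spaced more than $|q|$ apart and at depth $>|q|$. I set $b:=a_{j_0}$, $c:=a_{j_1}$, $t_1:=\mn{tp}_{\Amc,\Tmc}(b)$ and $t_0:=e(t_\emptyset)$, where $t_\emptyset$ is the least \Tmc-type. Summarisation and idempotence give $\mn{tp}_{\Amc,\Tmc}(b)=\mn{tp}_{\Amc,\Tmc}(c)=t_1$ (Condition~2), while $e(t_0)=e(e(t_\emptyset))=e(t_\emptyset)=t_0$ is exactly Condition~3; monotonicity gives $t_0=e(t_\emptyset)\subseteq e(t_1)=t_1$, and Condition~1 holds since $\Amc\in\Mmc_Q$. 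Condition~5, for Boolean $q$, is immediate from Lemma~\ref{lem:core-close}.

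It remains to establish $t_0\subsetneq t_1$ and Condition~4, both of which I would derive from minimality of \Amc through one \emph{transfer principle}: if $\mathcal{B}_1,\mathcal{B}_2$ are pseudo tree-shaped $\Sigma$-ABoxes that agree on their cores, including the real successors and the \Tmc-type of every core individual, then for \abf in the core, $\mathcal{B}_1\models Q(\abf)$ iff $\mathcal{B}_2\models Q(\abf)$. This holds because the restrictions of $\Umc_{\mathcal{B}_1,\Tmc}$ and $\Umc_{\mathcal{B}_2,\Tmc}$ to the core together with the anonymous trees rooted at core individuals are identical, and by Lemma~\ref{lem:pseudo} (and Lemma~\ref{lem:core-close} in the Boolean case) any answer to $Q$ with \abf in the core is witnessed by a homomorphism whose range lies entirely in this common region. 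For strictness, suppose $t_0=t_1$; then $g_{j_0,j_1}(t_\emptyset)=t_1$, so the ABox $\Amc_c$ obtained by cutting the branch to a bare leaf at $c$ already entails $t_1$ at $b$, hence has the same \Tmc-types and successors on all core individuals as \Amc; the transfer principle yields $\Amc_c\models Q(\abf)$ with $\Amc_c\subsetneq\Amc$, contradicting $\Amc\in\Mmc_Q$. For Condition~4, suppose $\Amc_b\cup t_0(b)\models Q(\abf)$; since $\mn{tp}_{\Amc_c,\Tmc}(b)=g_{j_0,j_1}(t_\emptyset)=t_0=\mn{tp}_{\Amc_b\cup t_0(b),\Tmc}(b)$ and the two ABoxes agree above $b$, they have identical core types and successors, so the transfer principle again gives $\Amc_c\models Q(\abf)$ with $\Amc_c\subsetneq\Amc$, a contradiction. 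Setting $\Amc_{\mn{target}}=\Amc_b$, $\Amc_{\mn{edge}}=\Amc^b_c$, and $\Amc_{\mn{source}}=\Amc^c$ then witnesses the ability to simulate \REACH.

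I expect the main obstacle to be the transfer principle and, underlying it, the careful bookkeeping of anonymous elements: one has to verify that cutting the branch deep below the core changes neither the \Tmc-types of core individuals nor their real successors, so that the anonymous trees hanging off core individuals — the only part of the universal model that a core-close homomorphism may use — are preserved \emph{exactly}, including the blocking behaviour of chase rule~(v). Arranging the spacing and depth constraints of Definition~\ref{def:abilitytosimulateREACH} to hold simultaneously with the idempotent obtained from Ramsey's theorem requires only choosing $N$ large enough and is comparatively routine.
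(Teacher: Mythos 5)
Your proof is correct in substance but takes a genuinely different route from the paper's. The paper argues much more directly: it deletes the single leaf at the end of a long path to obtain $\Amc'\subsetneq\Amc$ with $\Amc'\not\models Q(\abf)$ (by minimality), colours each path node $b$ with the \emph{pair} $(\mn{tp}_{\Amc',\Tmc}(b),\mn{tp}_{\Amc,\Tmc}(b))$ --- at most $3^{|\Tmc|}$ colours since the first component is contained in the second --- and pigeonholes to find $b,c$ with the same pair; it then sets $t_0=\mn{tp}_{\Amc',\Tmc}(b)$ and $t_1=\mn{tp}_{\Amc,\Tmc}(b)$ and obtains Conditions~3 and~4 by gluing (Lemma~\ref{lem:aboxunion}) against $\Amc'$. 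You instead build the monotone transfer maps $g_{j,k}$, apply Ramsey to extract an idempotent $e$, and set $t_0=e(t_\emptyset)$. Your version buys a cleaner algebraic picture (Condition~3 is literally idempotence, and you give an explicit argument for $t_0\subsetneq t_1$, a point the paper leaves implicit), at the cost of Ramsey numbers over exponentially many colours where a single pigeonhole over $3^{|\Tmc|}$ colours suffices; both yield the same kind of elementary depth bound. Two small repairs: exclude index $0$ from your subsequence so that $b$ is guaranteed to be a non-core individual at depth exceeding $|q|$, and note that the composition law $g_{i,k}=g_{i,j}\circ g_{j,k}$ is itself an application of Lemma~\ref{lem:aboxunion}.

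The one step you should tighten is the transfer principle. As stated --- agreement of two pseudo tree-shaped ABoxes on the core, its types and its immediate successors implies the same answers --- it is false: a connected CQ with $\abf$ in the core may be witnessed by a homomorphism that hits non-core ABox individuals up to distance $|q|$ from the core, and Lemma~\ref{lem:pseudo} does not say that an arbitrary $\Amc\in\Mmc_Q$ admits a witness confined to core individuals and their anonymous trees (it produces a \emph{new} unravelled ABox with that property). What you actually need, and what your applications do satisfy, is agreement on the radius-$|q|$ neighbourhood of the core together with preservation of the $\Tmc$-types of all individuals in that neighbourhood; since $b$ lies at depth exceeding $|q|$ and $c$ at depth exceeding $2|q|$, cutting at $c$ (or replacing $\Amc^b$ by $t_0(b)$) leaves that neighbourhood and its types untouched, and core-closeness (Lemma~\ref{lem:core-close} in the Boolean case, connectedness plus $\abf$ in the core otherwise) confines the witnessing homomorphism to it. For the non-entailment direction of Condition~4 you can avoid core-closeness entirely by observing that $\Umc_{\Amc_b\cup t_0(b),\Tmc}$ maps homomorphically into $\Umc_{\Amc_c,\Tmc}$ over the identity on $\mn{ind}(\Amc_b)$, so any answer on $\Amc_b\cup t_0(b)$ is an answer on $\Amc_c\subsetneq\Amc$.
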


\begin{proof} We use a pumping argument. Let $Q=(\Tmc,\Sigma,q) \in (\mathcal{EL},\textnormal{conCQ})$ have unbounded depth. There must be a pseudo tree-shaped ABox $\Amc \in \Mmc_Q$ and a tuple $\abf$ from its core such that $\Amc \models Q(\abf)$ and such that one of its trees, say $\Amc_i$, has depth at least $k:= (|q|+2) \cdot 3^{|\Tmc|}+|q|+2$. Consider a path of length at least $k$ from the root of $\Amc_i$ to a leaf. Let $\Amc'$ denote the ABox obtained from $\Amc$ by removing all assertions that involve the leaf in this path. Since $\Amc$ is minimal, $\Amc' \not \models Q(\abf)$. Now, every individual $b$ on the remaining path that has distance more than $|q|$ from the core is colored with the pair $(t'_b,t_b)$ where $t'_b = \mn{tp}_{\Amc', \Tmc}(b)$ and $t_b = \mn{tp}_{\Amc, \Tmc}(b)$. Observing $t'_b \subseteq t_b$, we obtain $3^{|\Tmc|}$ as an upper bound for the number of different colors $(t'_b,t_b)$ that may occur on the path. 
  But the number of individuals on this path with distance more than $|q|$ from the core is $k-|q|-1 = (|q|+2) \cdot 3^{|\Tmc|}+1$, so by the pigeonhole principle there is one color $(t',t)$ that appears $|q|+2$ times on the path. Then there must be distinct individuals $b$ and $c$ that have distance more than $|q|$ from each other and such that $(t'_b,t_b)=(t'_c,t_c)$. W.l.o.g., let $c$ be a descendant of $b$. We set $t_0 = t'_b$ and $t_1 = t_b$.

For this choice of \Amc, $\abf$, $b$, $c$, $t_0$ and $t_1$, Conditions~1 and~2 from Definition~\ref{def:abilitytosimulateREACH} are immediately clear. With Lemma~\ref{lem:aboxunion}, we can replace $\Amc^c$ in $\Amc'$ by $t_0(c)$, so Condition~3 holds. Furthermore, we have $\Amc' \not \models Q(\abf)$ and $\mn{tp}_{\Amc',\Tmc}(b) = t_0$, so again by Lemma~\ref{lem:aboxunion}, if we replace $\Amc^b$ with $t_0(b)$, the types derived in the remaining ABox do not change, thus Condition~4 holds. Condition~5 follows from Lemma~\ref{lem:core-close}.
\end{proof}

Now for the reduction from \REACH to {\sc eval}$(Q)$ when $Q$ has the
ability to simulate \REACH.

\begin{lemma}
\label{lem:nlhard}
Let $Q \in (\EL, \textnormal{conCQ})$. If $Q$ has the ability to simulate \REACH, then $Q$ is $\NL$-hard under FO reductions.
\end{lemma}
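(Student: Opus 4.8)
(proof plan)
The plan is to reduce \REACH\ to {\sc eval}$(Q)$. Fix the objects $\Amc$, $\abf$, $b$, $c$, the types $t_0 \subsetneq t_1$, and the gadgets $\Amc_{\mn{source}} = \Amc^c$, $\Amc_{\mn{edge}} = \Amc^b_c$, $\Amc_{\mn{target}} = \Amc_b$ supplied by Definition~\ref{def:abilitytosimulateREACH}. Given an instance $G = (V,E,s,t)$ with $s \neq t$, indegree of $s$ and outdegree of $t$ both $0$, I build a $\Sigma$-ABox $\Amc_G$ as follows: take one individual $a_v$ for each $v \in V$; glue a fresh copy of $\Amc_{\mn{edge}}$ for every edge $(u,v) \in E$, identifying its root $b$ with $a_v$ and its leaf $c$ with $a_u$; attach a copy of $\Amc_{\mn{source}}$ by identifying its root $c$ with $a_s$; and attach a copy of $\Amc_{\mn{target}}$ by identifying its leaf $b$ with $a_t$, which equips $\Amc_G$ with the core of $\Amc$ and hence with the answer tuple $\abf$. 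The indegree/outdegree assumptions guarantee that $a_s$ is never a gadget root and $a_t$ is never a gadget leaf, so all identifications are consistent. As each piece has fixed size and is attached locally along $E$, the map $G \mapsto (\Amc_G, \abf)$ is first-order definable. I will show $\Amc_G \models Q(\abf)$ iff $t$ is reachable from $s$.

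The heart of the argument is a type-propagation claim: $\mn{tp}_{\Amc_G, \Tmc}(a_v) = t_1$ if $v$ is reachable from $s$, and $t_0$ otherwise. For the lower bound I induct on reachability: $a_s$ receives type $t_1$ because $\Amc_{\mn{source}} = \Amc^c$ determines the type of $c$ in $\Amc$, which is $t_1$ by Condition~(2); and an edge gadget transmits $t_1$ upward, since $\Amc^b$ gives $b$ type $t_1$ (again Condition~(2)). For the upper bound I build one explicit model $\Imc_G$ of $\Amc_G$ and $\Tmc$ realizing the intended type $\tau(a_v)$ at every $a_v$. I feed each shared individual its type $\tau$ inside every gadget copy and take the corresponding universal model of each copy; inactive edge gadgets (type $t_0$ at $c$) then produce type $t_0$ at $b$ by Condition~(3), so all copies agree on their shared individuals. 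Lemma~\ref{lem:aboxunion}, through its union-of-models construction, yields that $\Imc_G$ is a model of $\Tmc$ realizing exactly the types $\tau$. This model both witnesses the upper bound and drives the negative direction below.

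For the positive direction suppose $t$ is reachable, so $a_t = b$ has type $t_1$ in $\Amc_G$. Because \EL-derivations propagate only upward and the structure strictly above $b$ in $\Amc_G$ is exactly $\Amc_{\mn{target}} = \Amc_b$, the region of $\Umc_{\Amc_G, \Tmc}$ consisting of the core, the part of $\Amc_{\mn{target}}$ above $b$, and the anonymous trees hanging off them is isomorphic to the corresponding region of $\Umc_{\Amc, \Tmc}$ (Lemma~\ref{lem:aboxunion}). By Condition~(1) there is a homomorphism $h$ witnessing $\Amc \models Q(\abf)$; for non-Boolean $q$ it is anchored at $\abf$ in the core, and for Boolean $q$ it is core close by Condition~(5). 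Since $q$ is connected with at most $|q|$ variables and $b$ lies at distance more than $|q|$ from the core, the image of $h$ cannot reach $b$ or anything below it, hence stays inside the isomorphic region and transfers to $\Umc_{\Amc_G, \Tmc}$. Thus $\Amc_G \models Q(\abf)$.

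The main obstacle is the negative direction: if $t$ is not reachable I must rule out \emph{every} homomorphism from $q$ into $\Umc_{\Amc_G, \Tmc}$ mapping $\xbf$ to $\abf$ (and, for Boolean $q$, every homomorphism whatsoever). As $\Imc_G$ is a model of $\Amc_G$ and $\Tmc$, it suffices to show $\Imc_G \not\models q(\abf)$. Any homomorphism $h \colon q \to \Imc_G$ has connected image of size at most $|q|$, and the distance-exceeding-$|q|$ requirements of Definition~\ref{def:abilitytosimulateREACH} ensure this image sees at most one feature of $\Imc_G$. If it touches the core region then---because no reachable node has an edge into $t$, so $a_t$ has type $t_0$---that region is locally isomorphic to $\Umc_{\Amc_b \cup t_0(b), \Tmc}$, and $h$ would witness $\Amc_b \cup t_0(b) \models Q(\abf)$, contradicting Condition~(4). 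Otherwise the image lies far from the core, inside a single gadget copy or straddling the junction of two consecutive gadgets; using $t_0 \subsetneq t_1$ (to absorb an inactive gadget into $\Umc_{\Amc, \Tmc}$ via the homomorphism $\Umc_{\Amc^b_c \cup t_0(c), \Tmc} \to \Umc_{\Amc^b, \Tmc}$) and the fact that consecutive edge gadgets reconstruct tree-paths of $\Amc$, I map this neighborhood homomorphically into $\Umc_{\Amc, \Tmc}$ at a node far from the core, producing a non-core-close homomorphism $q \to \Umc_{\Amc, \Tmc}$ and contradicting Condition~(5). I expect the straddling-junction subcase to be the delicate point; the distances exceeding $|q|$ together with Condition~(5) are precisely what make it go through.
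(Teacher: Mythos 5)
Your overall strategy coincides with the paper's: the same gadget construction, the same type-propagation claim (the paper's Claim~2, proved by rebuilding the ABox gadget by gadget and applying Lemma~\ref{lem:aboxunion}), and the same countermodel --- your $\Imc_G$ is exactly the universal model of the paper's augmented ABox $\Amc_G'$, obtained from $\Amc_G$ by adding $t_1(a_v)$ for reachable and $t_0(a_v)$ for unreachable $v$. The positive direction and the negative-direction cases where the match stays near the core or inside a single gadget copy go through essentially as you describe.

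The genuine gap is in the subcase you yourself flag as delicate, and your description of it already mischaracterizes the difficulty. When the image of $h$ contains a junction individual $a_v$, it does not merely ``straddle the junction of two consecutive gadgets'': $a_v$ is the $b$-end of one copy of $\Amc_{\mn{edge}}$ for \emph{every} in-edge of $v$ and the $c$-end of one copy for every out-edge, so the image of the (connected) query can spread into arbitrarily many copies at once, and the neighbourhood of $a_v$ has strictly more branching than $\Amc$ has at $b$ or at $c$. Worse, there is no homomorphism of this neighbourhood into $\Umc_{\Amc,\Tmc}$ at all: the copies hanging below $a_v$ force $a_v \mapsto b$, while the copies above force $a_v \mapsto c$, and $b \neq c$. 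So ``I map this neighbourhood homomorphically into $\Umc_{\Amc,\Tmc}$'' fails as stated. The paper instead transfers the \emph{query match} rather than the neighbourhood: it fork-eliminates the part of $q$ mapped above $a_v$ (all those copies are copies of one and the same ancestor path of $c$), maps that part with $a_v \mapsto c$, and then re-roots at $c$ each tree-shaped fragment of $q$ that $h$ sends below $a_v$. This re-rooting is only possible because, before Definition~\ref{def:abilitytosimulateREACH}, the TBox was extended with $C_p \sqsubseteq A_p$ for every $p \in \mn{trees}(q)$, so that the types record exactly which tree-shaped fragments of $q$ are satisfiable below an element; $\mn{tp}_{\Amc,\Tmc}(b) = \mn{tp}_{\Amc,\Tmc}(c) = t_1$ then guarantees that every fragment satisfied below $a_v$ is also satisfiable below $c$, yielding a non-core-close homomorphism into $\Umc_{\Amc,\Tmc}$ that contradicts Condition~(5). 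Your plan never invokes this TBox extension, and without it the types $t_0, t_1$ are not informative enough to carry out the transfer, so the contradiction cannot be derived.
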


\begin{proof}
  Let $Q=(\Tmc,\Sigma,q) \in (\mathcal{EL},\textnormal{conCQ})$ have the ability to simulate \REACH. Then there is a pseudo tree-shaped ABox \Amc, a tuple $\abf$ in its core, distinguished individuals $b$ and $c$, and types $t_0 \subsetneq t_1$ as in Definition~\ref{def:abilitytosimulateREACH}. We reduce \REACH to {\sc eval}$(Q)$. Let $G=(V,E,s,t)$ be an input tuple for \REACH. We construct a $\Sigma$-ABox $\Amc_G$ that represents~$G$. Reserve an individual $a_v$ for every node $v \in V$. For every $(u,v) \in E$, include in $\Amc_G$ a copy $\Amc_{u,v}$ of $\Amc_{\mn{edge}}$ that uses fresh individuals, identifying (the individual that corresponds to) $c$ with $a_u$ and $b$ with~$a_v$.  Further include in $\Amc_G$ one copy of $\Amc_{\mn{target}}$ that uses fresh individuals, identifying $b$ with~$a_t$, and one copy of $\Amc_{\mn{source}}$ that uses fresh individuals, identifying $c$ with~$a_s$. W.l.o.g., we assume that the individuals in $\abf$, added to $\Amc_G$ as part of the copy of $\Amc_{\mn{target}}$, retain their original name. It can be verified that $\Amc_G$ can be constructed from $G$ using an FO-query, see \cite{immerman} for more information on FO reductions. It thus remains to show the following.
\\[2mm]
\textbf{Claim 1.} $t$ is reachable from $s$ in $G$ if and only if $\Amc_G \models Q(\abf)$.
\\[2mm]
For the more straightforward ``$\Rightarrow$'' direction, let $t$ be reachable from $s$. Then there is a path $s=v_0,\ldots,v_n=t$ in $G$.  By definition of $\Amc_G$, there is a copy of $\Amc_{\mn{source}}$ whose root is $a_s$, so Condition~2 from Definition~\ref{def:abilitytosimulateREACH} yields $t_1 \subseteq \mn{tp}_{\Amc_G, \Tmc}(a_s)$. Between any two $a_{v_{i}},a_{v_{i+1}}$ there is a copy of $\Amc_{\mn{edge}}$, so we inductively obtain $t_1 \subseteq \mn{tp}_{\Amc_G,\Tmc}(a_{v_i})$ for all $i$. In particular, $t_1 \subseteq \mn{tp}_{\Amc_G,\Tmc}(a_t)$.  Finally, there is a copy of $\Amc_\mn{target}$ in which $b$ is identified with $a_t$.  By Condition~1, we have $\Amc_G \models Q(\abf)$.

\smallskip

The ``$\Leftarrow$'' direction is more laborious. Assume that $t$ is not reachable from~$s$. Set $$\begin{array}{rcl}\Amc_G' &:=& \Amc_G \cup \{t_0(a_v) \mid v \in V \text{ is not reachable from } s\}\\[1mm] & &\hspace*{6mm} \cup \, \{t_1(a_v) \mid v \in V \text{ is reachable from }s\}.\end{array}$$
We show that $\Amc'_G \not\models Q(\abf)$, which implies $\Amc_G \not\models Q(\abf)$.

\smallskip

We have defined $\Amc'_G$ as an extension of $\Amc_G$. 
%
%
Alternatively and more suitably for what we aim to prove, we can construct $\Amc'_G$ by starting with an ABox $\Amc_0$ that contains only the assertions $t_0(a_v)$ for all unreachable nodes $v \in V$ as well as $t_1(a_v)$ for all reachable nodes $v \in V$ and then exhaustively applying the following rules in an unspecified order, obtaining a sequence of ABoxes $\Amc_0,\Amc_1,\dots,\Amc_m$ with $\Amc_m = \Amc'_G$:
\begin{enumerate} 

\item Choose an edge $(u,v) \in E$ that has not been chosen before, take a copy $\Amc_{\mn{edge}}^{u,v}$ of $\Amc_{\mn{edge}}$ that uses fresh individuals names, with $c$ renamed to $a_u$ and $b$ to $a_v$, and add the assertions $t_{\mn{reach}(x)}(a_x)$ for $x \in \{u,v\}$ where $\mn{reach}(x) = 1$ if $x$ is reachable from $s$ and $\mn{reach}(x) = 0$ otherwise. Set $\Amc_{i+1}=\Amc_i \cup \Amc_{\mn{edge}}^{u,v}$.

\item Introduce a copy $\Amc_{\mn{source}}^s$ of $\Amc_{\mn{source}}$ that uses fresh individual names, with $b$ renamed to $a_s$, and add the assertions $t_1(a_s)$. Set $\Amc_{i+1}=\Amc_i \cup \Amc_{\mn{source}}^s$.

\item Introduce a copy $\Amc_{\mn{target}}^t$ of $\Amc_{\mn{target}}$ that uses fresh individual names with $b$ renamed to $a_t$, and add the assertions $t_0(a_t)$. Set $\Amc_{i+1}= \Amc_i \cup \Amc_{\mn{target}}^t$.

\end{enumerate}
Clearly, rule application terminates after $|E|+2$ steps and results in the
ABox~$\Amc_G'$. Note that we add assertions $t_i(a)$, $i \in \{0,1\}$ to the
ABoxes constructed in the rules to enable application of the ABox glueing
lemma, Lemma~\ref{lem:aboxunion}.
\\[2mm]
\textbf{Claim 2.} $\mn{tp}_{\Amc_i, \Tmc}(a_u) = t_0$ if $u \in V$ is unreachable and $\mn{tp}_{\Amc_i, \Tmc}(a_u) = t_1$ otherwise, for all $i \geq 0$.
\\[2mm]
The proof is by induction on $i$.  For $i=0$, the statement is clear. Now assume that the statement is true for some $i$ and consider $\Amc_{i+1}$. If $\Amc_{i+1}$ was obtained by Rule~1, it follows from Conditions~2 and~3 of Definition~\ref{def:abilitytosimulateREACH} that $\mn{tp}_{\Amc_{\mn{edge}}^{u,v}, \Tmc}(a_x) = t_{\mn{reach}(x)}$ for all $x \in \{u,v\}$. So with Lemma~\ref{lem:aboxunion} and since $\Amc_i$ and $\Amc_{\mn{edge}}^{u,v}$ share only the individuals $a_u,a_v$, the statement follows. If $\Amc_{i+1}$ was obtained by Rule~2, we can use Condition~2 of Definition~\ref{def:abilitytosimulateREACH} and Lemma~\ref{lem:aboxunion}. In the case of Rule~3, it is clear that $\mn{tp}_{\Amc_{\mn{target}}^t, \Tmc}(a_t) = t_0$ and thus it remains to apply Lemma~\ref{lem:aboxunion}. This finishes the proof of Claim~2.

It remains to show that $\Amc'_G,\Tmc \not\models q(\abf)$. Assume to the contrary that $\Amc'_G,\Tmc \not\models q(\abf)$, that is, there is a homomorphism $h$ from $q(\xbf)$ to $\Umc_{\Amc'_G,\Tmc}$ such that \mbox{$h(\xbf)=\abf$}. There can be at most one individual of the form $a_v$ in the range of $h$ by construction of $\Amc'_G$ since $b$ and $c$ have distance exceeding $|q|$.  

If there is no individual $a_v$ in the range of $h$, then $h$ only hits individuals from a single copy of $\Amc_\mn{source}$, $\Amc_\mn{edge}$, or $\Amc_\mn{target}$ as well as anonymous elements in the trees below them (since $q$ is connected).  First assume that this is $\Amc_\mn{target}$. By Claim~2 and since $\mn{reach}(t)=0$, $\mn{tp}_{\Amc'_G, \Tmc}(a_t) = t_0$. It can be shown that the identity function is a homomorphism from $\Umc_{\Amc'_G,\Tmc}|_\Delta$, $\Delta$ the individuals from $\Amc_\mn{target}$ and anonymous elements below them, to $\Umc_{\Amc_\mn{target} \cup t_0(b),\Tmc}$. By composing homomorphisms, it follows that $\Amc_\mn{target} \cup t_0(b),\Tmc \models q(\abf)$, contradicting Condition~4.

Now assume that $h$ only hits individuals from a copy of $\Amc_\mn{source}$ or $\Amc_\mn{edge}$ as well as anonymous elements in the trees below them. Then the restriction \Umc of $\Umc_{\Amc'_G,\Tmc}$ to the range of $h$ is tree-shaped.  Moreover, $q$ must be Boolean since the distance between \abf and the elements of \Umc exceeds $|q|$ and $q$ is treeifiable because $h$ is a homomorphism to a tree-shaped interpretation. We have $d \in C_q^{\Umc_{\Amc'_G,\Tmc}}$ for the root $d$ of \Umc.  From Claim~2 and Lemma~\ref{lem:aboxunion}, it follows that some element of $\Umc_{\Amc_\mn{source} \cup t_1(c),\Tmc}$ or of $\Umc_{\Amc_\mn{edge} \cup t_1(b)\cup t_1(c)\cup t_1(d),\Tmc}$ also satisfies $C_q$. By Condition~2 and Lemma~\ref{lem:aboxunion}, the same is true for an element from $\Umc_{\Amc,\Tmc}$ that is `below' $b$ (reachable from $b$ by a directed path). Since the distance from the core to $b$ in \Amc exceeds~$|q|$, this homomorphism is not core close, contradicting Condition~5.

Now assume that the range of $h$ contains the individual $a_v$. Let $X_0 = \{ x \in \mn{var}(q) \mid h(x)=a_v \}$ and let $X^\downarrow$ (resp.\ $X^\uparrow$) be the set of $x \in \mn{var}(q)$ such that $h(x)$ is some $a \in \mn{ind}(\Amc_G)$ or in an anonymous tree below such an $a$ such that there exists a path of length at least one from $a$ to $a_v$ in $\Amc_G$ (resp.\ from $a_v$ to~$a$). We distinguish three cases.

\emph{Case~1: $v=t$, the target node}. Since we assume that $t$ has outdegree~$0$ in~$G$, $h(x)$ is from the copy of $\Amc_\mn{target}$ or the attached anonymous trees for all $x \in X^\uparrow$ and $h(x)$ is from (potentially multiple) copies of $\Amc_\mn{edge}$ or the attached anonymous trees for all $x \in X^\downarrow$. It is thus possible to construct a homomorphism $g$ from $q$ to $\Umc_{\Amc,\Tmc}$ such that if $h(x) \in \mn{ind}(\Amc_G)$, then $g(x)$ is the individual $h(x)$ in $\Amc$ that $h(x)$ is a copy of. Then $a_v$ being in the range of $h$ implies that $g$ is not core close.

\emph{Case~2: $v=s$, the source node}. Since we assume that $s$ has indegree~$0$ in~$G$, $h(x)$ is from the copy of $\Amc_\mn{source}$ or the attached anonymous trees for all $x \in X^\downarrow$ and $h(x)$ is from (potentially multiple) copies of $\Amc_\mn{edge}$ or the attached anonymous trees for all $x \in X^\uparrow$. We can proceed as in the previous case.

\emph{Case~3: $v \notin \{s,t\}$}.  Then $h$ hits (potentially multiple) copies of $\Amc_\mn{edge}$ and the attached anonymous trees. 
It is possible to construct a homomorphism $g$ from $q|_{X^\uparrow \cup X_0}$ to $\Umc_{\Amc,\Tmc}$ such that if $h(x) \in \mn{ind}(\Amc_G)$ and $x \in X^\uparrow \cup X_0$, then $g(x)$ is the individual in $\Amc$ that $h(x)$ is a copy of and, in particular, $g(x)=c$ for all $x \in X_0$. It remains to extend $g$ to all of $q$.  Let $q'$ be obtained from $q$ by identifying $x_1,x_2 \in \mn{var}(q)$ whenever $r(x_1,y),r(x_2,y) \in q$ and $x_1,x_2 \in X_0 \cup X^{\downarrow}$. It can be verified that the restriction of $\Amc_G$ to all elements between $a_v$ and $\{ h(x) \mid x \in X^\downarrow\}$ is a directed tree. Consequently, $h$ is also a homomorphism from $q'$ to $\Umc_{\Amc_G,\Tmc}$.  Moreover, $q' \setminus q|_{X^\uparrow \cup X_0}$ is the union of tree-shaped CQs $q_1,\dots,q_n$ that all share the same root $x_0$ and are otherwise variable disjoint. Each $q_i$ can be viewed as an \EL-concept $\exists r . C$ such that $\exists r . C\sqsubseteq A_{\exists r . C}$ is in \Tmc and we must thus have $A_{\exists r . C} \in \mn{tp}_{\Amc_G, \Tmc}(a_v) \subseteq \mn{tp}_{\Amc, \Tmc}(b)$. Since $\mn{tp}_{\Amc, \Tmc}(b) = \mn{tp}_{\Amc, \Tmc}(c)$, we find a homomorphism from $q_i$ to $\Umc_{\Amc,\Tmc}$ that maps $x_0$ to $c$, for $1 \leq i \leq n$. Combining all these homomorphisms allows us to extend $g$ to $q'$, thus to $q$.\end{proof}

This finishes the proof of Theorem~\ref{thm:AC0NL}.

\section{$\NL$ versus $\PTime$ for Connected CQs}
\label{sec:NLPTime}

We prove a dichotomy between $\NL$ and $\PTime$ for $(\EL, \textnormal{conCQ})$ and show that for OMQs from this language, evaluation in \NL coincides with rewritability into linear Datalog. We also show that the latter two properties coincide with the OMQ having unbounded pathwidth, as defined below. We generalize our results to potentially disconnected CQs in Section~\ref{sec:disconnected}.

Let $Q=(\Tmc,\Sigma,q) \in (\EL, \textnormal{CQ})$. We say that $Q$ \emph{has pathwidth at most $k$} if for every $\Sigma$-ABox $\Amc$ and tuple $\abf$ with $\Amc \models Q(\abf)$, there is a $\Sigma$-ABox $\Amc'$ of pathwidth at most $k$ such that $\Amc' \models Q(\abf)$ and a homomorphism from $\Amc'$ to $\Amc$ that is the identity on \abf. Now $Q$ has \emph{bounded pathwidth} if it has pathwidth at most $k$ for some $k$. If this is the case, we use $\mn{pw}(Q)$ to denote the smallest $k$ such that $Q$ has pathwidth at most $k$.

\begin{theorem}[$\NL$/$\PTime$ dichotomy]
\label{thm:NLPTime}
Let $Q \in (\EL, \textnormal{conCQ})$. The following are equivalent (assuming $\NL \neq \PTime$):
\begin{enumerate}[(i)]
\item $Q$ has bounded pathwidth.
\item $Q$ is rewritable into linear Datalog.
\item {\sc eval}$(Q)$ is in $\NL$.
\end{enumerate}
If these conditions do not hold, then {\sc eval}$(Q)$ is $\PTime$-hard under FO reductions.
\end{theorem}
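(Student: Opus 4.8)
The plan is to prove Theorem~\ref{thm:NLPTime} by establishing the cycle of implications and the final hardness claim in the same spirit as the $\AC^0$/$\NL$ dichotomy of Section~\ref{sec:AC0NL}, but with pathwidth replacing depth throughout. The implication (ii) $\Rightarrow$ (iii) is immediate: every linear Datalog program can be evaluated in $\NL$ by a nondeterministic logspace algorithm that guesses a single derivation path, using linearity to keep only one IDB fact in memory at a time. The remaining content splits into the two genuinely hard directions, namely (i) $\Rightarrow$ (ii) (bounded pathwidth yields a linear Datalog rewriting) and (iii) $\Rightarrow$ (i), the latter proved in contrapositive form via the final sentence: if $Q$ has unbounded pathwidth, then $\mn{eval}(Q)$ is $\PTime$-hard under FO reductions, which (assuming $\NL \neq \PTime$) rules out containment in $\NL$.

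For the hardness direction I would mirror the structure of Lemmas~\ref{lem:simulatereach} and~\ref{lem:nlhard}: first isolate a combinatorial property --- call it the \emph{ability to simulate \PSA} (path systems accessibility, the canonical $\PTime$-complete problem under logspace/FO reductions) --- and then prove two lemmas, one showing that unbounded pathwidth of $\Mmc_Q$ implies this ability, and another showing that the ability enables an FO reduction from \PSA to $\mn{eval}(Q)$. The abstract already signals that \PSA is the target and that this step is ``substantially more difficult'' than the \REACH case. A path system has axioms and rules of the form $(x,y)\to z$; accessibility propagates like the recursive inheritance in the motivating genetic-disease example, so the gadget ABoxes must realize a Boolean combination where a node becomes accessible exactly when it is derivable, and types $t_0 \subsetneq t_1$ again encode the two-valued accessibility status. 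The delicate part is that \PSA rules have two premises, so instead of a single $\Amc_{\mn{edge}}$ gadget joining two individuals we need a ternary ``rule gadget'' that forces type $t_1$ at its conclusion individual precisely when both premise individuals carry $t_1$; constructing such a gadget inside a pseudo tree-shaped ABox and verifying the type-propagation via Lemma~\ref{lem:aboxunion} is where the real work lies.

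The extraction of the gadgets from unbounded pathwidth is the main obstacle and the point of deepest divergence from Section~\ref{sec:AC0NL}. In the depth case a simple pigeonhole pumping argument along a long root-to-leaf \emph{path} sufficed, because bounded depth is a one-dimensional obstruction. Bounded pathwidth is intrinsically harder: one cannot pump a single path. I expect the argument to invoke the connection to bounded-pathwidth obstructions of CSPs (the references to \cite{Dalmau05,DalmauK08,CarvalhoDK10} are the signpost here), using an excluded-minor / Ramsey-type or tree-of-large-width argument to locate, inside ABoxes of $\Mmc_Q$ of growing pathwidth, a branching configuration that cannot be captured by any fixed-width path decomposition, and from that configuration to read off the branching rule gadget with the required type behavior. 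Making the types $t_0 \subsetneq t_1$ coherent across all three ports of the branching gadget simultaneously, while keeping minimality (so that removing the gadget destroys the answer, giving the analogue of Condition~(4)), is the crux.

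Finally, for (i) $\Rightarrow$ (ii) --- the constructive upper bound --- I would follow the automata-theoretic route the introduction describes. Given that $Q$ has pathwidth at most $k$, encode $\Sigma$-ABoxes of pathwidth $\leq k$ together with a candidate answer tuple as words over a finite alphabet of bags (each bag carrying at most $k+1$ individuals and their interface), and build a two-way alternating word automaton $\Amc$ that accepts exactly the encodings of pairs $(\Amc,\abf)$ with $\Amc \models Q(\abf)$; the automaton checks, along the path decomposition, the existence of a homomorphism from $q$ into $\Umc_{\Amc,\Tmc}$ using alternation to handle the branching of $q$ and the chase, with correctness grounded in Lemmas~\ref{lem:unimodelproperties} and~\ref{lem:pseudo}. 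A two-way alternating word automaton can be converted into a nondeterministic one and thence into a linear Datalog program whose IDB relations record the bounded interface state; the width of these IDBs is governed by $k$, matching the later sections' claim that arities grow with pathwidth. The soundness of this translation rests on the fact that bounded pathwidth lets us restrict attention to the witnessing ABoxes $\Amc'$ supplied by the definition of ``pathwidth at most $k$,'' so that the automaton/program need only be correct on low-pathwidth inputs yet, by that very definition, computes the right answer on all $\Sigma$-ABoxes.
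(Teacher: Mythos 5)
Your plan follows the paper's proof essentially step for step: (ii)\,$\Rightarrow$\,(iii) is immediate, (i)\,$\Leftrightarrow$\,(ii) is proved by encoding path decompositions as words and converting a two-way alternating word automaton into a linear Datalog program of width governed by $\mn{pw}(Q)$, and unbounded pathwidth yields $\PTime$-hardness via an ``ability to simulate \PSA'' with a ternary AND-gadget and types $t_0 \subsetneq t_1$, exactly as you describe. The one step you leave open --- extracting the branching gadget with coherent types --- is resolved in the paper by first showing that, because the ABoxes in $\Mmc_Q$ are pseudo tree-shaped, unbounded pathwidth is equivalent to having arbitrarily deep full-binary-tree minors (Lemma~\ref{lem:pathwidthbranching}), and then applying a tree-coloring (Ramsey-type) lemma twice, once to fix the types and ancestor paths of $b,c,d$ and once to equalize the type-transfer functions $f_c=f_d$, with $t_0$ obtained as the limit of iterating that function on the empty type; this is precisely the ``tree-of-large-width'' argument you anticipated.
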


\begin{remark}
Without the assumption $\NL \neq \PTime$, Conditions (i) and (ii) are still equivalent to each other and they still imply (iii).
\end{remark}

The equivalence (i) $\Leftrightarrow$ (ii) is closely related to a result in CSP. In fact, it is proved in \cite{Dalmau05} that a CSP has an obstruction set of bounded pathwidth if and only if its complement is expressible in linear Datalog. From the viewpoint
of the connection between OMQs and CSPs \cite{BienvenuCLW14}, obstructions
correspond to homomorphic preimages of ABoxes and thus the result in
\cite{Dalmau05} implies (i) $\Leftrightarrow$ (ii) for OMQs of the form $(\Tmc, \Sigma, \exists x A(x))$, \Tmc formulated in \ELI. We give a direct proof
of (i) $\Leftrightarrow$ (ii) in Section~\ref{sec:boundedpathwidth} to capture
also CQs. 

The implication (ii) $\Rightarrow$ (iii) is clear since every linear Datalog program can be evaluated in $\NL$. It thus remains to prove the converse and the last sentence of the theorem. To achieve both and since we assume $\NL \neq \PTime$, it suffices to show that unbounded pathwidth implies \PTime-hardness.  The structure of the proof is similar to the one for the dichotomy between {\sc AC}$^0$ and \NL in Section~\ref{sec:AC0NL}, but more sophisticated.

\subsection{Unbounded Pathwidth Implies \PTime-hardness}
\label{sec:ptimehardness}

We reduce from the well-known {\sc PTime}-complete problem \emph{path systems accessibility} (\PSA) \cite{immerman}, closely related to alternating reachability on directed graphs and to the evaluation of Boolean circuits.  An instance of \PSA takes the form $G=(V,E,S,t)$ where $V$ is a finite set of nodes, $E$ is a ternary relation on $V$, $S \subseteq V$ is a set of \emph{source nodes}, and $t \in V$ is a \emph{target node}. A node $v \in V$ is \emph{accessible} if $v \in S$ or there are accessible nodes $u,w$ with $(u,w,v) \in E$. $G$ is a yes-instance if the target node $t$ is accessible.  We assume w.l.o.g.\ that $t$ does not appear in the first and second component of a triple in $E$, that no $s \in S$ appears in the third component of a triple in $E$, and that $t \notin S$.

The main difference to the $\NL$-hardness proof in Section~\ref{sec:AC0NL} is that instead of a gadget $\Amc_\mn{edge}$ that transports a selected type $t_1$ from its input individual to its output individual, we now need a gadget $\Amc_\wedge$ with two input individuals and one output individual that behaves like a logical AND-gate.  We formalize this as the ability to simulate \PSA. Instead of proving directly that unbounded pathwidth of an OMQ $Q$ implies that $Q$ has the ability to simulate \PSA, we first prove that unbounded pathwidth of $Q$ implies unbounded branching of $Q$, that is, for any depth bound $n$, there is a pseudo tree-shaped ABox in $\Mmc_Q$ that contains the full binary tree of depth $n$ as a minor. In a second step, we then show that unbounded branching of $Q$ implies that $Q$ has the ability to simulate \PSA. In fact, the ability to simulate \PSA is actually equivalent to unbounded pathwidth and this is useful for the complexity analysis of the meta problems carried out in Section~\ref{sec:decidability}. We thus prove the converse directions as well. Finally, we show that the ability to simulate \PSA implies \PTime-hardness.

We next define unbounded branching more formally. Let $\Amc$ be a tree-shaped ABox. The \emph{full binary tree of depth $k$} is the directed graph $G=(V,E)$ with $V=\{w \in \{1,2\}^* \mid 0 \leq |w| \leq k\}$ and $(v,w) \in E$ if $w=v1$ or $w=v2$. \Amc \emph{has the full binary tree of depth $k$ as a minor} if there is a mapping $f$ from the nodes of the full binary tree of depth $k$ to $\mn{ind}(\Amc)$ such that if $(v,w) \in E$, then $f(w)$ is a descendant of $f(v)$. We do usually not make the mapping $f$ explicit but only say which individuals lie in the range of $f$. We are mostly interested in the largest $k$ such that $\Amc$ has the full binary tree of depth $k$ as a minor. This number, which we call the \emph{branching number} of $\Amc$, denoted by $\mn{br}(\Amc)$, can be easily computed by the following algorithm. Label every leaf of $\Amc$ with $0$ and then inductively label the inner nodes as follows: If $a$ is an inner node whose children have already been labeled and $m$ is the maximum label of its children, label $a$ with $m$ if at most one child of $a$ is labeled with $m$ and label $a$ with $m+1$ if at least two children of $a$ are labeled with $m$. It can be easily proved by induction on the co-depth of an individual that the label of $a$ is equal to $\mn{br}(\Amc^a)$. 
In particular, $\mn{br}(\Amc)$ is the label of the root of $\Amc$. We say that $Q \in (\EL,\textnormal{CQ})$ is \emph{boundedly branching} if there exists a $k$ such that for every pseudo tree-shaped ABox $\Amc \in \Mmc_Q$ and every tree $\Amc_i$ in~\Amc, we have $\mn{br}(\Amc_i) \leq k$. In that case, we define $\mn{br}(Q)$ to be the smallest such~$k$. Otherwise, we call $Q$ \emph{unboundedly branching}.
%
\begin{lemma}
\label{lem:pathwidthbranching}
Let $Q \in (\EL, \textnormal{CQ})$. Then $Q$ has unbounded pathwidth iff $Q$ is unboundedly branching.
\end{lemma}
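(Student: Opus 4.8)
The plan is to prove both directions of the equivalence between unbounded pathwidth and unbounded branching. The easier direction is that unbounded branching implies unbounded pathwidth. Here I would invoke the classical graph-theoretic fact that the full binary tree of depth $k$ has pathwidth that grows with $k$ (roughly $\lceil k/2 \rceil$); more precisely, a graph containing the full binary tree of depth $k$ as a minor has pathwidth at least some unbounded function $g(k)$. Since minors do not increase pathwidth, any ABox $\Amc_i$ with $\mn{br}(\Amc_i) = k$ has $\mn{pw}(\Amc_i) \geq g(k)$, hence $\mn{pw}(\Amc) \geq g(k)$ for the whole pseudo tree-shaped ABox $\Amc \in \Mmc_Q$. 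If $Q$ is unboundedly branching, we thus obtain ABoxes in $\Mmc_Q$ of arbitrarily large pathwidth. The point that requires care is that $Q$ having bounded pathwidth is a statement about \emph{homomorphic preimages}: for every $\Amc \models Q(\abf)$ there must exist a low-pathwidth $\Amc'$ that also gives the answer and maps homomorphically into $\Amc$. So I would argue that the members of $\Mmc_Q$, being minimal supporting ABoxes, cannot be replaced by anything of strictly smaller structure without losing the answer, and therefore their own pathwidth is a genuine lower bound on $\mn{pw}(Q)$.

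For the converse, that unbounded pathwidth implies unbounded branching, I would argue the contrapositive: if $Q$ is boundedly branching, say with $\mn{br}(Q) \leq k$, then $Q$ has bounded pathwidth. The crux is the structural fact that a tree of branching number at most $k$ has pathwidth bounded by a function of $k$ (independently of its depth or degree). Intuitively, a tree of small branching number can be swept by a path decomposition of small width because at every point only boundedly many ``active'' branches need to be remembered; the branching number precisely controls how many nested branches can be simultaneously pending. I would make this precise by an induction on the tree structure using the labelling algorithm from the branching-number definition: from a path decomposition strategy for each subtree, I assemble one for the whole tree, and the width increases only when two children share the maximal label, which is exactly when $\mn{br}$ increases. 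This yields $\mn{pw}(\Amc_i) \leq f(k)$ for every tree $\Amc_i$ of every $\Amc \in \Mmc_Q$.

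It then remains to lift the bound from the individual trees $\Amc_i$ to the whole pseudo tree-shaped ABoxes in $\Mmc_Q$, and finally from $\Mmc_Q$ to all answer-supporting ABoxes (so as to conclude bounded pathwidth of $Q$ in the sense of the definition). For the first step, the core of a pseudo tree-shaped $\Amc \in \Mmc_Q$ has size at most $|q|$, a constant, and the trees are attached at single core individuals; concatenating the path decompositions of the trees and inserting the whole core into every bag raises the pathwidth by at most $|q|$, so $\mn{pw}(\Amc) \leq f(k) + |q|$. For the second step I would invoke Lemma~\ref{lem:pseudo}: any ABox $\Amc$ with $\Amc \models Q(\abf)$ admits a pseudo tree-shaped $\Amc' \in \Mmc_Q$ (or a subset thereof) with $\Amc' \models Q(\abf)$ and a homomorphism $\Amc' \to \Amc$ identical on $\abf$, and $\Amc'$ has the bounded pathwidth just established. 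This exhibits the low-pathwidth homomorphic preimage required by the definition of $Q$ having bounded pathwidth.

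I expect the main obstacle to be the precise relationship between the branching number and pathwidth of a tree, in the direction ``bounded branching implies bounded pathwidth''. The standard results relate pathwidth of trees to a notion sometimes called the ``Strahler number'' or to the depth of the largest complete-binary-tree minor, and getting a clean, self-contained inductive bound $\mn{pw}(\Amc_i) \leq f(\mn{br}(\Amc_i))$ that survives arbitrary depth and degree is the technical heart of the argument; the two lifting steps afterwards are routine bookkeeping with path decompositions.
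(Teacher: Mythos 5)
Your proposal follows the paper's proof essentially step for step: the easy direction via the pathwidth $\lceil k/2\rceil$ of the full binary tree, and the converse (in contrapositive form) via an induction on tree depth showing that a tree-shaped ABox admits a path decomposition with bags of size $\mn{br}(\Amc_i)+2$ — exactly the paper's Claim, proved by the same case split on whether one or two children attain the maximal label — followed by inserting the core into every bag (costing $+|q|$) and invoking Lemma~\ref{lem:pseudo} to pass from $\Mmc_Q$ to arbitrary answer-supporting ABoxes. The homomorphic-preimage subtlety you rightly flag in the first direction is dispatched even more tersely in the paper, which simply declares that direction clear from the pathwidth of the full binary tree.
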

\begin{proof}
The ``$\Leftarrow$'' direction is clear since the full binary tree of depth $k$ has
pathwith $\lceil \frac{k}{2} \rceil$. For the ``$\Rightarrow$'' direction, we start by showing that for tree-shaped ABoxes, the branching number gives an upper bound on the pathwidth.

\smallskip
\noindent
\textbf{Claim.} Let $\Amc$ be a tree-shaped ABox. Then there exists a $(j,k)$-path decomposition $V_1,\ldots,V_n$ of $\Amc$ with $k \leq \mn{br}(\Amc)+2$ and $j \leq k-1$ such that the root of $\Amc$ is an element of $V_n$.

\smallskip
\noindent
We prove the claim by induction on the depth of $\Amc$. If $\Amc$ has depth $0$, then $\Amc$ has only one individual, $\mn{br}(\Amc) = 0$, and there is a trivial $(0,1)$-path decomposition. If $\Amc$ has depth $1$, then the root $a$ of $\Amc$ has children $a_1,\ldots,a_n$ with $n \geq 1$. We have $\mn{br}(\Amc) \leq 1$ and there is a $(1,2)$-path decomposition $V_1,\ldots,V_n$, where $V_i = \{a,a_i\}$.

If $\Amc$ has depth at least $2$, let the root of $\Amc$ be called $a$ and its children $a_1,\ldots,a_m$. Let $V_1^i,\ldots,V^i_{n_i}$ be the path decomposition of $\Amc^{a_i}$ that exists by induction hypothesis, for $1 \leq i \leq m$. We distinguish two cases:
\begin{itemize}

\item If $\mn{br}(\Amc) = \mn{max}\{\mn{br}(\Amc^{a_i}) \mid 1 \leq i \leq m\}$, then by definition of $\mn{br}$, there is precisely one child $a_i$ of $a$ with $\mn{br}(\Amc^{a_i}) = \mn{br}(\Amc)$. W.l.o.g.\ assume that \mbox{$a_i = a_1$}. Then $V_1^1,\ldots,V^1_{n_1},\{a,a_1\},\{a\} \cup V_1^2,\ldots,\{a\} \cup V_{n_2}^2, \{a\} \cup V_1^3,\ldots,\{a\} \cup V_{n_3}^3,\ldots,\{a\} \cup V_1^m,\ldots,\{a\} \cup V_{n_m}^m$ is a path decomposition of $\Amc$ that fulfils the condition from the claim.  

\item If $\mn{br}(\Amc) = 1+\mn{max}\{\mn{br}(\Amc^{a_i}) \mid 1 \leq i \leq m\}$, then $\{a\} \cup V_1^1,\ldots,\{a\} \cup V^1_{n_1},\{a\} \cup V_1^2,\ldots,\{a\} \cup V_{n_2}^2, \ldots,\{a\} \cup V_1^m,\ldots,\{a\} \cup V_{n_m}^m$ is a path decomposition of $\Amc$ that fulfils the condition from the claim.

 \end{itemize}
This finishes the proof of the claim.

We next show that for every OMQ $Q=(\Tmc,\Sigma,q) \in (\EL, \textnormal{CQ})$, $\mn{br}(Q) =k$ implies $\mn{pw}(Q) \leq k+2+|q|$. Let $Q$ be such an OMQ. Take a $\Sigma$-ABox \Amc and $\abf \in \mn{ind}(\Amc)$ with $\Amc \models Q(\abf)$.  We have to show that there is a $\Sigma$-ABox $\Amc'$ of pathwidth at most $k$ such that $\Amc' \models Q(\abf)$ and there is a homomorphism from $\Amc'$ to \Amc that is the identity on \abf.  By Lemma~\ref{lem:pseudo}, we obtain from \Amc a pseudo tree-shaped $\Sigma$-ABox $\Amc'$ such that there is a homomorphism from $\Amc'$ to \Amc that is the identity on \abf. Clearly, $\Mmc_Q$ contains a subset $\Amc''$ of~$\Amc'$. We show that $\Amc''$ is as required, that is, the pathwidth of $\Amc''$ is at most~$k$. From $\mn{br}(Q) = k$, $\Amc'' \models Q(\abf)$, and $\Amc'' \in \Mmc_Q$, it follows that $\mn{br}(\Amc'') \leq k$. Let $\Amc'$ have core \Cmc and trees $\Amc_1,\ldots,\Amc_m$.
%
By the claim, every $\Amc_i$ has a $(j,k+2)$-path decomposition $V_1^i,\ldots,V_{n_i}^i$. Then we find a $(j+|q|,k+2+|q|)$-path decomposition of~$\Amc'$:
$
\mn{ind}(\Cmc) \cup V_1^1, \ldots, \mn{ind}(\Cmc) \cup V_{n_1}^1, \ldots,\mn{ind}(\Cmc) \cup V_1^m, \ldots, \mn{ind}(\Cmc) \cup V_{n_m}^m.
$
\end{proof}

%
%

Our next goal is to identify suitable gadgets for the reduction from \PSA. To achieve this, it is convenient to extend the TBox of the OMQ $Q=(\Tmc,\Sigma,q)$ involved in the reduction. Recall that we have also used such an extension in the $\NL$-hardness proof in Section~\ref{sec:AC0NL} and that it has helped us to avoid unintended homomorphisms from the CQ to the (universal model of the) reduction ABox~$\Amc_G$, in case the CQ is Boolean. Avoiding such homomorphisms is more complicated in the reduction of \PSA which leads us to a different TBox extension that introduces \ELI-concepts. This is unproblematic since, as in Section~\ref{sec:AC0NL}, the OMQ based on the extended TBox is equivalent to the original one. 

First assume that \Tmc is Boolean and treeifiable. A \emph{role path between variables $x$ and $y$ in $q$} is a sequence of role names $r_1 \cdots r_n$ such that for distinct variables $x=x_0,\dots,x_n=y$, $q$ contains the atoms $r_1(x_1,x_2),\dots,r_n(x_{n-1},x_n)$. If $q$ is treeifiable, then there are only polynomially many role paths in $q^\mn{tree}$: the paths that occur in $q^\mn{tree}$, the least constrained treeification of $q$ defined in Section~\ref{sec:prelims}.  Let $\Cmc_q$ denote the set of \ELI-concepts of the form $\exists r_n^- . \cdots . \exists r_1^- . C$ where $r_1 \cdots r_n$ is a (potentially empty) role path in $q^\mn{tree}$ and $C$ is $\top$ or a concept name from $q$ or a CQ from $\mn{trees}(q)$ viewed as an \EL-concept. Extend \Tmc with $C \sqsubseteq A_C$, $A_C$ a fresh concept name, for all $C \in \Cmc_q$. Finally, normalize again. Clearly, the number of concept inclusions added to $\Tmc$ is polynomial in $|q|$ and the resulting OMQ is equivalent to the original one.

Now assume that \Tmc is not Boolean and treeifiable. Then unintended homomorphisms are ruled out automatically. To prepare for the complexity analysis of the meta problems carried out in Section~\ref{sec:decidability}, however, we still carry out the same modification that we have also used in Section~\ref{sec:AC0NL}: For every $p \in \mn{trees}(q)$, view $p$ as an $\EL$-concept $C$ and extend $\Tmc$ with $C \sqsubseteq A_C$, $A_C$ a fresh concept name.

The following lemma captures the use of the concepts $\Cmc_q$ later on and
gives an idea of why we use this particular set of concepts.
%
\begin{lemma} \label{lem:homtransfer} Let $q \in \text{conCQ}$ be Boolean and treeifiable, $\Imc_1,\Imc_2$ tree-shaped interpretations, and $d_i \in \Delta^{\Imc_i}$ for $i \in \{1,2\}$ such that $d_1 \in C^{\Imc_1}$ implies $d_2 \in C^{\Imc_2}$ for all $C \in \Cmc_q$. If there is a homomorphism from $q$ to 
$\Imc_1$ with $d_1$ in its range, then there is a homomorphism from $q$ to
$\Imc_2$ with $d_2$ in its range.
\end{lemma}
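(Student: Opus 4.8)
The plan is to first reduce the claim to the treeification $q^\mn{tree}$ and then transfer the homomorphism piece by piece, using the concepts in $\Cmc_q$ together with the fact that in a tree-shaped interpretation every element has a unique predecessor.

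First I would observe that, since $\Imc_1$ is tree-shaped, any homomorphism $h$ from $q$ to $\Imc_1$ must map $x_1$ and $x_2$ to the same element whenever $r(x_1,y),r(x_2,y) \in q$, because $h(y)$ has a unique $r$-predecessor. Hence $h$ factors through the fork elimination, yielding a homomorphism $h^\mn{tree}$ from $q^\mn{tree}$ to $\Imc_1$; conversely, composing any homomorphism from $q^\mn{tree}$ to $\Imc_2$ with the fork-elimination surjection $q \to q^\mn{tree}$ gives a homomorphism from $q$ to $\Imc_2$ with the same elements in its range. So it suffices to work with $q^\mn{tree}$. Fix a variable $x_0$ of $q^\mn{tree}$ with $h^\mn{tree}(x_0)=d_1$, and let $r_1 \cdots r_n$ be the (simple) role path from the root of $q^\mn{tree}$ down to $x_0$, passing through $y_0,\dots,y_n=x_0$. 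For $0 \le i \le n-1$, the suffix $r_{i+1}\cdots r_n$ is a role path in $q^\mn{tree}$ between $y_i$ and $x_0$, so all concepts $\exists r_n^-.\cdots.\exists r_{i+1}^-.C$ with $C$ an admissible choice lie in $\Cmc_q$.

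Next I would transfer each structural piece of $h^\mn{tree}$ separately. Every rooted downward subtree of $q^\mn{tree}$ is, by the construction of $\mn{reach}$ and treeification, captured by an \EL-concept $C \in \mn{trees}(q)$; in particular the subtree below $x_0$ yields a concept $C_{\mn{down}}$ with $d_1 \in C_{\mn{down}}^{\Imc_1}$ (witnessed by $h^\mn{tree}$ restricted below $x_0$), and, for each side child of a path node $y_i$, the subtree through that child yields a concept $C_S$ with $d_1 \in (\exists r_n^-.\cdots.\exists r_{i+1}^-.C_S)^{\Imc_1}$. Likewise $d_1 \in (\exists r_n^-.\cdots.\exists r_1^-.\top)^{\Imc_1}$ certifies the full upward path, and $d_1 \in (\exists r_n^-.\cdots.\exists r_{i+1}^-.A)^{\Imc_1}$ for every concept name $A$ asserted at $y_i$. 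By the hypothesis, all of these concepts hold at $d_2$ in $\Imc_2$. Since $\Imc_2$ is tree-shaped, the chain of predecessors $e_{n-1},\dots,e_0$ of $d_2$ reached along $r_n^-,\dots,r_1^-$ is uniquely determined, so every concept $\exists r_n^-.\cdots.\exists r_{i+1}^-.C$ holding at $d_2$ is necessarily witnessed at this same ancestor $e_i$. Setting $g(y_i)=e_i$ (with $e_n=d_2$) and using the transferred witnesses to map the side subtrees at each $y_i$ and the downward subtree at $x_0$, I obtain a homomorphism $g$ from $q^\mn{tree}$ to $\Imc_2$: the path edges $r_{i+1}(e_i,e_{i+1})$ and the concept names at the $y_i$ hold by construction, and the remaining subtrees attach consistently precisely because the $e_i$ are forced. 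Then $g(x_0)=d_2$, as required.

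The step I expect to be the main obstacle is verifying that the downward subtrees I want to transfer---the one below $x_0$ and those hanging off the path nodes $y_i$---are indeed realized by concepts in $\mn{trees}(q)$, i.e.\ relating rooted subtrees of $q^\mn{tree}$ to the sets $\mn{reach}(x,y)$ on the original (non-tree-shaped) $q$ and their treeifications. The gluing in the second paragraph is then routine given tree-shapedness of $\Imc_2$, since uniqueness of predecessors removes any ambiguity about where the independently transferred pieces attach; the degenerate cases (the empty role path for the subtree below $x_0$, and $C=\top$ for the bare upward path) are handled by the explicit provisions in the definition of $\Cmc_q$.
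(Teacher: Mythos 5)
Your proposal is correct and follows essentially the same route as the paper's proof: factor the given homomorphism through $q^\mn{tree}$, use the inverse-role path concepts in $\Cmc_q$ together with uniqueness of predecessors in the tree-shaped $\Imc_2$ to pin down the ancestor chain above $d_2$, and transfer the hanging subtrees via the $\mn{trees}(q)$-concepts. The one verification you flag as the main obstacle (that the relevant subtrees of $q^\mn{tree}$ correspond to concepts in $\mn{trees}(q)$, i.e.\ to treeifications of $q|_{\mn{reach}(x,y)}$ for guarded pairs of $q$) is exactly the point the paper also leaves implicit, so nothing essential is missing.
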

\begin{proof}
Assume that there is a homomorphism $h_1$ from $q$ to $\Imc_1$ with $d_1$ in its range. Since $\Imc_1$ is tree-shaped, this homomorphism factors into $h_1 = g_1 \circ h_q$, where $h_q$ is the obvious homomorphism from $q$ to $q^\mn{tree}$ and $g_1$ is a homomorphism from $q^\mn{tree}$ to $\Imc_1$. It clearly suffices to
show that there is a homomorphism $g_2$ from $q^\mn{tree}$ to $\Imc_2$ with $d_2$ in its range. 

Let $X_0 = g_1^{-1}(d_1)$. In a first step, we set $g_2(x) = d_2$ for all $x \in X_0$ and extend $g_2$ upwards as follows. Whenever $g_2(y)$ is already defined and there is an atom $r(x,y)$ in $q^\mn{tree}$, define $g_2(x)$ to be the (unique) predecessor of $g_2(y)$ in~$\Imc_2$.  We show that $g_2$ is a homomorphism from $q^{\mn{tree}}|_{\mn{dom}(g_2)}$ to $\Imc_2$, $\mn{dom}(g_2)$ the domain of $g_2$. If $r(x,y) \in q^\mn{tree}$ with $g_2(x),g_2(y)$ defined, then $q^{\mn{tree}}$ contains a role path $r_1 \cdots r_n$ from $x_1$ to $x_n \in X_0$ with $r_1=r$. Thus, there is a concept $C=\exists r_n^- \ldots \exists r_1^-. \top \in \Cmc_q$ such that $d_1 \in C^{\Imc_1}$. It follows that $d_2 \in C^{\Imc_2}$ and since $\Imc_2$ is tree-shaped and by construction of $g_2$, this yields $(g_2(x),g_2(y)) \in r^{\Imc_2}$. The argument for atoms $A(x) \in q^\mn{tree}$ where $g_2$ has been defined on $x$ is similar, using concepts of the form $C=\exists r_n^- \ldots \exists r_1^-. A \in \Cmc_q$.

In a second step, we define $g_2$ on all the remaining variables. Whenever $g_2(z)$ is still undefined for some $z \in \mn{var}(q^\mn{tree})$, there must be some $r(x,y) \in q^\mn{tree}$ such that $g_2(x)$ is already defined, $g_2(y)$ is not yet defined, and $z$ is in $q^{\mn{tree}}|_{\mn{reach}(x,y)}$. Since $q$ is connected, there must be a (potentially empty) role path $r_1 \cdots r_n$ in $q^\mn{tree}$ from $x$ to a variable $x_0 \in X_0$. Thus, $\Cmc_q$ contains $C = \exists r_n^-. \cdots \exists r_1^-. D \in \Cmc_q$ where $D$ is the \EL concept that corresponds to $q^\mn{tree}|_{\mn{reach}(x,y)}$ and since $g_1(x_0) = d_1$, we have $d_1 \in C^{\Imc_1}$. Consequently, $d_2 \in C^{\Imc_2}$. Since $\Imc_2$ is tree-shaped, this implies $g_2(x) \in D^{\Imc_2}$ and thus there is a homomorphism from $q|_{\mn{reach}(x,y)}$ to $\Imc_2$ that maps $x$ to $g_2(x)$. We use this homomorphism to extend $g_2$ to all variables in $\mn{reach}(x,y)$. 
\end{proof}

If $\Amc$ is a pseudo tree-shaped ABox and $b \in \mn{ind}(\Amc)$ has distance at least $n$ from the core, we define the \emph{ancestor path of $b$ up to length $n$} to be the unique sequence $r_1 r_2 \ldots r_n$ of role names such that $r_1(b_1,b_2),r_2(b_2,b_3),\ldots r_n(b_n,b) \in \Amc$.

\begin{definition} \label{def:psa}
Let $Q = (\Tmc,\Sigma, q) \in (\mathcal{EL}, \textnormal{conCQ})$. We say that $Q$ has {\em
  the ability to simulate \PSA} if there exist
\begin{itemize}
\item $\Tmc$-types $t_0 \subsetneq t_1$,
\item a pseudo tree-shaped $\Sigma$-ABox $\Amc$ of core size $|q|$,
\item a tuple $\abf$ from the core of $\Amc$ of length $\mn{ar}(q)$,
\item a tree $\Amc_i$ in $\Amc$ with three distinguished non-core individuals $b, c$ and $d$ from $\mn{ind}(\Amc_i)$ where $c$ and $d$ are incomparable descendants of $b$ and such that $b$ has distance more than $|q|$ from the core and the individuals $b$, $c$ and $d$ have pairwise distance more than $|q|$ from each other
\end{itemize} such that
\begin{enumerate} 
\item $\Amc, \Tmc \models q(\abf)$;
\item $t_1 = \mn{tp}_{\Amc, \Tmc} (b) = \mn{tp}_{\Amc, \Tmc} (c) = \mn{tp}_{\Amc, \Tmc} (d)$;
\item $\Amc_b \cup t_0(b), \Tmc \not\models q(\abf)$;
\item $\mn{tp}_{\Amc_c \cup t_0(c), \Tmc}(b) = \mn{tp}_{\Amc_d \cup t_0(d), \Tmc}(b) = t_0$,
\item if $q$ is Boolean, then every homomorphism from $q$ to $\Umc_{\Amc, \Tmc}$ is core close and
\item if $q$ is Boolean, then $b$, $c$ and $d$ have the same ancestor path up to length~$|q|$.
\end{enumerate}
We define $\Amc_\mn{target} := \Amc_b$, $\Amc_\wedge := \Amc^b_{cd}$ and $\Amc_\mn{source} := \Amc^c$.
\end{definition}
With $c$ and $d$ being `incomparable' descendants of $b$, we mean that
neither $d$ is a descendant of $c$ nor vice versa.

To understand the essence of Definition~\ref{def:psa}, it is worthwhile to consider the special case where $q$ is an AQ~$A(x)$.  In this case, $Q$ has the ability to simulate \PSA if there is a tree-shaped $\Sigma$-ABox \Amc with root $a=\abf$, three distinguished non-root individuals $b,c,d \in \mn{ind}(\Amc)$, $c$ and $d$ incomparable descendants of $b$, and \Tmc-types $t_0 \subsetneq t_1$ such that Conditions~(1)-(4) of Definition~\ref{def:psa} are satisfied. Figure~\ref{fig:psa-example} shows an ABox witnessing the ability to simulate PSA for such an OMQ. All remaining parts of Definition~\ref{def:psa} should be thought of as technical complications induced by replacing AQs with CQs.
\begin{figure}[t]
\begin{boxedminipage}{\columnwidth}
\centering
\begin{tikzpicture}[->,>=stealth',level/.style={sibling distance = 12cm/#1,
  level distance = 1.3cm}, scale=0.5]
\tikzstyle{node}=[shape=circle, draw,inner sep=2.0pt, fill=black]
\node [node] [label=above:$a$] {}
    child{ node [node] [label=above:$b$] {}
        child{ node [node] {}
            child{ node [node] [label=above left:$c$] {} 
                child{ node [node] [label=left:$A$] {} edge from parent node[above left] {$r$}}
                child{ node [node] [label=left:$A$] {} edge from parent node[above right] {$s$}}
                edge from parent node[above left] {$r$}}
            child{ node [node] [label=above right:$d$] {}
                child{ node [node] [label=right:$A$] {} edge from parent node[above left] {$r$}}
                child{ node [node] [label=right:$A$] {} edge from parent node[above right] {$s$}}
                edge from parent node[above right] {$s$}}
            edge from parent node[above left] {$r$}}
        child{ node [node] [label=right:$A$] {} 
            edge from parent node[above right] {$s$}}          
        edge from parent node[above left] {$r$}}
    child{ node [node] [label=right:$A$] {} 
        edge from parent node[above right] {$s$}
    }
; 
\end{tikzpicture}
\end{boxedminipage}
\caption{A witness ABox for the abilty to simulate PSA for the OMQ $Q = (\Tmc, \Sigma, A(x))$, where $\Tmc = \{\exists r.A \sqsubseteq B, \exists s.A \sqsubseteq C, B \sqcap C \sqsubseteq A\}$ and $\Sigma = \{r, s, A\}$.}
\label{fig:psa-example}
\vspace*{-4mm}
\end{figure}
As a preliminary for showing that unbounded branching implies the ability to simulate \PSA, we give the following combinatorial lemma.
\begin{lemma} \label{lem:treecol} Let $T$ be a full binary tree of depth $n \cdot k \cdot d$ whose nodes are colored with $n$ colors, $k \geq 0$ and $n,d \geq 1$. Then $T$ has as a minor a monochromatic full binary tree of depth $k$ such that any two distinct nodes of the minor have distance at least $d$ from each other in $T$.\end{lemma}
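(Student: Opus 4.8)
The plan is to prove the statement by induction on $k$, the depth of the monochromatic minor we want to extract. The base case $k=0$ asks only for a single monochromatic node at distance constraints that are vacuous, so any node of $T$ works. The real content is the inductive step, where I want to build a monochromatic full binary tree of depth $k$ out of two monochromatic full binary trees of depth $k-1$, all in the same color, joined below a common root of that color, while keeping every pair of distinct chosen nodes at distance at least $d$ in~$T$.

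\medskip
\noindent\textbf{Setting up the recursion.} First I would observe that a full binary tree of depth $n\cdot k\cdot d$ contains, near its root, a large initial segment within which I can look for a well-colored root, and below it two disjoint subtrees each still of depth roughly $n\cdot(k-1)\cdot d$ to which I can apply the induction hypothesis. Concretely, I would descend $d$ levels from the root of $T$ to separate the chosen root from the recursively chosen nodes (this enforces the distance-$\geq d$ requirement across the two recursive halves and with the root); this consumes one factor of~$d$ per level of the minor, which is exactly why the depth budget is $n\cdot k\cdot d$ and why the factor $n$ appears. The factor~$n$ is the crucial one: it is there to absorb a pigeonhole argument on the $n$ colors, which is where I expect the main difficulty to lie.

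\medskip
\noindent\textbf{The pigeonhole on colors (the main obstacle).} The hard part will be guaranteeing that the two depth-$(k-1)$ minors I extract recursively, \emph{together with} the new root, can all be forced into \emph{one} common color. A naive induction gives me two monochromatic minors, but a priori in two different colors, and a root in a third. To fix the color I would strengthen the induction hypothesis: rather than merely extracting one monochromatic minor, I would aim to extract, for \emph{each} of the $n$ colors simultaneously (or for sufficiently many colors), a monochromatic minor of the appropriate depth rooted suitably, so that when combining I can select a color that is realizable in both halves and at a correctly colored root. The factor $n$ in the depth bound $n\cdot k\cdot d$ is precisely what pays for this: along a path of length $n\cdot d$ within one ``level block'' of the tree, some color must repeat, by pigeonhole among the $n$ colors, and the repeated color can serve as the common color for the root and be propagated into the subtrees. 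I would phrase the strengthened claim as: a full binary tree of depth $n\cdot k\cdot d$ contains, in \emph{every} color that occurs densely enough, or alternatively in \emph{at least one} color chosen by pigeonhole, a monochromatic depth-$k$ minor with the distance property; then the inductive combination becomes routine once the common color is fixed.

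\medskip
\noindent\textbf{Assembling the minor and the distance bound.} Once the common color $c$ is fixed, I would pick a node $v$ of color $c$ high in the tree, descend at least $d$ levels along two distinct branches to reach the roots of two subtrees each of depth at least $n\cdot(k-1)\cdot d$, apply the (strengthened) induction hypothesis in color $c$ to each subtree, and let the minor map the root of the depth-$k$ target tree to $v$ and its two principal subtrees to the two recursively obtained minors. The descendant relation required by the definition of ``minor'' (Section~\ref{sec:NLPTime}) holds because each recursively chosen node lies strictly below $v$ in one of the two branches, and the distance-$\geq d$ guarantee holds by construction: nodes within one recursive half inherit it from the induction hypothesis, nodes in opposite halves are separated by the $\geq d$-level descent from $v$ through two distinct children, and each is at distance $\geq d$ from $v$ itself. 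The routine bookkeeping is checking that the depth budgets add up, namely that $d$ levels of descent plus $n\cdot(k-1)\cdot d$ remaining depth fits inside $n\cdot k\cdot d$, which I expect to hold with the factor $n$ providing exactly the slack needed for the pigeonhole step.
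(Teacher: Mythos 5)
There is a genuine gap, and it sits exactly where you predicted: the step that forces the root and both recursively obtained minors into a single common color is not actually carried out, and the two fixes you sketch do not close it. The strengthened induction hypothesis ``extract, for each of the $n$ colors simultaneously, a monochromatic minor of the appropriate depth'' is false (a tree colored entirely with one color has no nontrivial minor in any other color), and the weaker variants you offer (``for sufficiently many colors,'' ``at least one color chosen by pigeonhole'') are not concrete statements that an induction could carry. The pigeonhole you describe --- a repeated color on a path of length $n\cdot d$ --- only produces two same-colored nodes on a root-to-leaf path, i.e.\ a monochromatic \emph{path}; it gives no control over whether the two subtrees hanging off the repeated node each contain a depth-$(k-1)$ minor in that same color, which is precisely what your inductive assembly needs. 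So the plan, as written, does not yield a proof.

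The paper closes this gap with a different kind of strengthening: a potential function summed over \emph{all} colors at once. For a colored full binary tree $T$ of depth $k$, let $m_i$ be the least integer such that $T$ has no color-$i$ monochromatic full binary tree of depth $m_i$ as a minor; the key claim is $\sum_{i=1}^n m_i \geq k+1$, proved by induction on the depth of $T$ (not on the depth of the target minor). The induction step distinguishes two cases at the root: either the two subtrees disagree on some $m_j$, in which case taking the maximum already gains $1$ in coordinate $j$; or they agree on every coordinate, in which case the root's own color $i_0$ gains $1$, because the root together with a depth-$(m_{i_0}^1-1)$ minor in each subtree assembles a deeper color-$i_0$ minor. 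From $\sum_i m_i \geq nk+1$ one color must have $m_i > k$, giving the $d=1$ case; the distance requirement is then handled separately and cleanly by first extracting a monochromatic minor of depth $d\cdot k$ and thinning it to the nodes in $\{1^d,2^d\}^*$. Your treatment of the distance constraint (interleaving a $d$-level descent with the recursion) would be fine if the color problem were solved, but the additive-over-colors potential is the missing idea, and without it or an equivalent the induction on $k$ does not go through.
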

\begin{proof}
  Let $T$ be a full binary tree of depth $k$ whose nodes are colored with $n$ colors.
  We assoicate $T$ with a tuple $(m_1,\dots,m_n)$ by letting, for $1 \leq i \leq m$, $m_i$ be the minimum integer such that $T$ does not have the color $i$ monochromatic full binary tree of depth $m_i$ as a minor.
 We prove the following.
\\[2mm]
\textbf{Claim.} 
$\sum_{i=1}^n m_i \geq k + 1$.
\\[2mm]
We proof the claim by induction on $k$. For $k=0$, there is only one node, say of color $i$. Then clearly $\sum_{i=1}^n m_i = 1 \geq 1 = k + 1$.

Now assume that the claim holds for $k$ and consider a tree $T$ of depth $k+1$, with associated tuple $(m_1, \ldots, m_n)$.  Let $a$ be the root of $T$ and let the children of $a$ root the subtrees $T_1$ and $T_2$, $(m_1^j, \ldots, m_n^j)$ the tuple associated with $T_j$ for $j \in \{1, 2\}$. We distinguish two cases.

First assume that there exists a color $j$ such that $m_j^1 \neq m_j^2$. W.l.o.g.\ let $m_j^1 < m_j^2$. Then $m_j = \max\{ m_j^1,m_j^2 \} > m_j^1$ and $m_i \geq m_i^1$ for all $i \neq j$. By the claim, $\sum_{i=1}^n m^1_i \geq k +1$. It
follows that $\sum_{i=1}^n m_i \geq k +2$, as required.

Now assume that there is no such color $j$. Let $i_0$ be the color of $a$. From
$m_{i_0}^1 = m_{i_0}^2$, it follows that $m_{i_0} > m^1_{i_0}$ and thus we can 
proceed as before with $i_0$ in place of $j$. This finishes the proof of the claim.

\smallskip
The statement of the lemma now follows easily for $d=1$: Let $T$ be a full binary tree of depth $n \cdot k$ whose nodes are colored with $n$ different colors. 
If there is no full monochromatic binary tree of depth $k$ as a minor in~$T$, then $m_i \leq k$ for all colors~$i$, in contradiction to $\sum_{i=1}^n m_i \geq n \cdot k + 1$. 

Now consider the case where $d>1$. From the case $d=1$, $T$ contains as a minor a full monochromatic binary tree $T'$ of depth $d \cdot k$. To obtain the desired full monochromatic binary tree $T''$ of depth $k$ whose nodes have distance at least $d$ from each other, we choose appropriate nodes from $T'$. Recall that the nodes of $T'$ are $V= \{1,2\}^k$. Then $T''$ can be constructed by choosing the nodes $V \cap \{1^d, 2^d\}^*$. Clearly, $T''$ is as required.
\end{proof}

\begin{lemma}
\label{lem:bintree-psa}
Let $Q \in (\mathcal{EL}, \textnormal{conCQ})$. Then $Q$ has the ability to simulate
\PSA
iff $Q$ is unboundedly branching.
\end{lemma}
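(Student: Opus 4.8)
The plan is to prove the two implications of the biconditional separately, using composition of the AND-gadget $\Amc_\wedge$ for one direction and the combinatorial extraction of Lemma~\ref{lem:treecol} for the other.

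\emph{Ability to simulate \PSA implies unbounded branching.} Given the witnessing data of Definition~\ref{def:psa}, I would, for each $n$, assemble a $\Sigma$-ABox $\Amc_n$ by plugging copies of $\Amc_\wedge$ into a balanced binary tree of gates of depth $n$: the output individual $b$ of each copy is identified with an input individual ($c$ or $d$) of the copy one level higher, the $2^n$ leaf inputs are capped with copies of $\Amc_\mn{source}$, and the topmost output is capped with $\Amc_\mn{target}$, so that $\abf$ lands in the core. Repeatedly applying the gluing Lemma~\ref{lem:aboxunion} with the auxiliary assertions $t_1(\cdot)$ at the identified individuals, exactly as in the proof of Lemma~\ref{lem:nlhard}, shows that every gate output carries type $t_1$, whence $\Amc_n \models Q(\abf)$ by Condition~(1) via a final application of Lemma~\ref{lem:aboxunion} at the top. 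Picking a minimal $\Amc^\ast \subseteq \Amc_n$ with $\Amc^\ast \models Q(\abf)$, which is pseudo tree-shaped with core inherited from $\Amc_\mn{target}$ of size at most $|q|$ and hence lies in $\Mmc_Q$, every leaf is then essential: deleting a single $\Amc_\mn{source}$-subtree would by Condition~(4) force a $t_0$ output at every gate on the path to the root (one $t_0$ input suffices), and Condition~(3) would then destroy the answer. Thus the input subtrees at all gates survive, so the full binary tree of depth $n$ is a minor of $\Amc^\ast$ and $\mn{br}(\Amc^\ast) \geq n$; as $n$ is arbitrary, $Q$ is unboundedly branching.

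\emph{Unbounded branching implies the ability to simulate \PSA.} This is the main direction. Since $Q$ is unboundedly branching, for a bound $N$ fixed below there is a pseudo tree-shaped $\Amc \in \Mmc_Q$ with answer $\abf$ and a tree $\Amc_i$ containing the full binary tree of depth $N$ as a minor; discarding the top $|q|$ levels we may assume all its minor nodes lie at distance more than $|q|$ from the core. I would colour each minor node $a$ by its $\Tmc$-type $\mn{tp}_{\Amc,\Tmc}(a)$, augmented in the Boolean case by the ancestor path of $a$ up to length $|q|$; the number $n$ of colours is bounded in terms of $|\Tmc|$ and $|q|$ only. Setting $N = n\cdot 2\cdot(|q|+1)$, Lemma~\ref{lem:treecol} (with depth parameter $2$ and separation $|q|+1$) yields a monochromatic full binary tree minor of depth $2$ whose nodes are pairwise at distance more than $|q|$ in $\Amc$. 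From this I read off the data of Definition~\ref{def:psa}: let $t_1$ be the common type, choose a fork $e$ of the minor with two minor-descendants $c,d$ in its two distinct minor-subtrees (so that $e$ is their least common ancestor) and a minor-ancestor $b$ strictly above $e$; depth $2$ provides the room and the distance conditions follow from the separation. Put $t_0 := \mn{tp}_{\Amc_{cd}, \Tmc}(b)$, the type of $b$ once both $\Amc^c$ and $\Amc^d$ are deleted. Conditions~(1) and~(2) are immediate from $\Amc \in \Mmc_Q$ and monochromaticity, Condition~(5) is Lemma~\ref{lem:core-close}, and Condition~(6) holds because monochromaticity fixes the length-$|q|$ ancestor paths of $b$, $c$ and $d$. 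For Condition~(3), minimality gives $\Amc_{cd} \not\models Q(\abf)$ (a proper subset of $\Amc$); since the witnessing homomorphism is core close and never reaches the deep individuals below $b$, only the type of $b$ is relevant above $b$, so by Lemma~\ref{lem:aboxunion} the ABoxes $\Amc_b \cup t_0(b)$ and $\Amc_{cd}$ induce the same types on the core, giving $\Amc_b \cup t_0(b) \not\models Q(\abf)$. The same core-close argument yields strictness $t_0 \subsetneq t_1$: were $t_0 = t_1$, deleting $\Amc^c$ and $\Amc^d$ would leave $b$'s type, hence the answer, unchanged, contradicting minimality.

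The main obstacle is Condition~(4), the genuine AND behaviour $\mn{tp}_{\Amc_c \cup t_0(c), \Tmc}(b) = t_0$ (and symmetrically for $d$). The inclusion $\supseteq t_0$ is immediate by monotonicity, so the work is to show that the still-present subtree $\Amc^d$ cannot raise the type of $b$ above $t_0$ once $c$ carries only $t_0$. This is exactly where $b$ being a \emph{strict} ancestor of the fork $e$ matters: with $c$ switched off, $e$ fails to reach the excited type $t_1$, so the high contribution of $\Amc^d$ is absorbed at $e$ and only the resting type propagates upward from $e$ to $b$. To make ``only the resting type propagates'' precise I would keep the path from $e$ to $b$ long (at least $|q|$, guaranteed by the separation) and run a secondary \REACH-style pigeonhole on the pairs of types occurring along it, combined again with Lemma~\ref{lem:aboxunion}, in direct analogy with the way $t_0$ and $t_1$ were pumped in Lemma~\ref{lem:simulatereach}. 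Establishing this stability is the technical heart of the proof; once it is in place, all six conditions of Definition~\ref{def:psa} hold and, together with the first direction, the lemma follows.
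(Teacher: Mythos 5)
Your ``$\Rightarrow$'' direction follows the paper's proof in all essentials: the same composition of $\Amc_\wedge$-copies into a balanced tree of gates, the same use of Lemma~\ref{lem:aboxunion} to propagate $t_1$ upward, and the same minimality argument (the paper phrases it in terms of which individual role assertion is removed rather than whole source subtrees, but the monotonicity reasoning via Conditions~(3) and~(4) is identical). The problem lies in the ``$\Leftarrow$'' direction, and it sits exactly where you place it yourself: Condition~(4). Your choice $t_0 := \mn{tp}_{\Amc_{cd},\Tmc}(b)$ is in general not a fixed point of the map $t \mapsto \mn{tp}_{\Amc_c \cup t(c),\Tmc}(b)$, because in that map the subtree $\Amc^d$ is still fully present and, by monotonicity, can push the type of $b$ strictly above the type obtained with \emph{both} subtrees deleted; so Condition~(4) can simply be false for your $t_0$. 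The proposed repair --- a pigeonhole on pairs of types along the path from the fork $e$ to $b$ --- attacks the wrong object: the difficulty is not how types propagate along that path but what happens where the switched-off $c$-branch meets the intact $d$-branch, and no amount of pumping along the $e$-to-$b$ segment controls that interaction.

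The paper resolves this with two devices your proposal lacks. First, after the type-and-ancestor-path coloring it performs a \emph{second} coloring of the nodes below $b$ by the transfer functions $f_c : \mn{TP} \rightarrow \mn{TP}$ with $f_c(t) = \mn{tp}_{\Amc_c \cup t(c),\Tmc}(b)$ (there are at most $m^m$ of them, which is why the depth bound is $m \cdot 2^m \cdot |\Tmc|^{|q|} \cdot (2m^m{+}1)\cdot|q|$ rather than your $n\cdot 2\cdot(|q|{+}1)$), and chooses $c,d$ incomparable with $f_c = f_d$; it then defines $t_0$ as the limit of the increasing sequence $t'_0=\emptyset$, $t'_{i+1}=f_c(t'_i)$, so that $f_c(t_0)=f_d(t_0)=t_0$ and Condition~(4) holds \emph{by construction}. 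Second, the first coloring also records the set $S_b = \{t \mid \Amc_b \cup t(b),\Tmc\models q(\abf)\}$; the resulting equality $S_b=S_c$ is what allows the induction establishing Condition~(3) to transfer ``no answer with $t'_i$ injected at $c$'' to ``no answer with $t'_{i+1}$ injected at $c$'', and your coloring omits this component. Without the function coloring and the fixed-point definition of $t_0$, Conditions~(3) and~(4) of Definition~\ref{def:psa} cannot be verified, so the proof as proposed does not go through.
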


\begin{proof}
  ``$\Rightarrow$''. Assume that $Q=(\Tmc,\Sigma,q) \in (\mathcal{EL}, \textnormal{conCQ})$ has the ability to simulate \PSA. Then there are
  $\Amc, \abf, b, c, d, t_0$, and $t_1$ as in Definition \ref{def:psa}.
  Let $k \geq 1$.
  We have to show that there is a pseudo tree-shaped $\Sigma$-ABox $\Amc \in \Mmc_Q$ that has a tree $\Amc_i$ that has the full binary tree of depth $k$ as
  a minor. We start with constructing an ABox $\Amc_0$ built up from the following set of ABoxes:
\begin{itemize}
\item one copy of $\Amc_{\mn{target}}$;
\item for every $w \in S$, one copy $\Amc_{\wedge, w}$ of $\Amc_{\wedge}$;
\item for every $w \in \{0,1\}^k$, one copy $\Amc_{\mn{source},w}$ of $\Amc_{\mn{source}}$.
\end{itemize}
We identify the individual $b$ of $\Amc_{\mn{target}}$ with the individual $b$ of $\Amc_{\wedge, \varepsilon}$. For every $w0 \in \{0,1\}^{k-1}$, we identify the individual $b$ of $\Amc_{\wedge,w0}$ with the individual $c$ of $\Amc_{\wedge,w}$ and for every $w1 \in \{0,1\}^{k-1}$, we identify the individual $b$ of $\Amc_{\wedge,w1}$ with the individual $d$ of $\Amc_{\wedge,w}$. Finally, for every $w0 \in \{0,1\}^{k}$, we identify the individual $b$ of $\Amc_{\mn{source},w0}$ with the individual $c$ of $\Amc_{\wedge,w}$ and for every $w1 \in \{0,1\}^{k}$, we identify the individual $b$ of $\Amc_{\mn{source},w1}$ with the individual $d$ of $\Amc_{\wedge,w}$. Since all $\Amc_\mn{\wedge}$ and $\Amc_\mn{source}$ are tree-shaped, the resulting ABox is $\Amc_0$ pseudo tree-shaped with the same core as $\Amc_\mn{target}$.

It is clear that $\Amc_0$ has the full binary tree of depth $k$ as a minor, formed by the set of roots of all $\Amc_{\wedge,w}$ and $\Amc_{\mn{source},w}$. From Conditions~1 and~2 from Definition~\ref{def:psa}, it follows that $\Amc_0 \models Q(\abf)$.  But there is no guarantee that $\Amc_0$ is minimal with this property, thus $\Amc_0$ need not be from $\Mmc_Q$. Let $\Amc_0,\dots\Amc_\ell$ be the sequence of ABoxes obtained by starting with $\Amc_0$ and exhaustively removing assertions such that $\Amc_i \models Q(\abf)$ still holds. We argue that the resulting ABox still has the full binary tree of depth $k$ as a minor.

It suffices to show that role assertions connecting two individuals that lie on the same path from the core to a root of a $\Amc_{\mn{source},w}$ are never removed. Assume
to the contrary that such a role assertion is removed when transitioning from $\Amc_i$ to $\Amc_{i+1}$. We distinguish three cases:
\begin{itemize}

\item
  The removed role assertion lies in $\Amc_{\wedge,w}$ on the path from $b$ to $c$. Then $\mn{tp}_{\Amc_{i+1},\Tmc}(b) \subseteq \mn{tp}_{\Amc^b_c \cup t_0(c),\Tmc}(b)$.
By Condition~4 from Definition~\ref{def:psa}, the latter type is $t_0$. By iteratively using Conditions~3 and~4, it follows that $\mn{tp}_{\Amc_{i+1},\Tmc}(b) = t_0$, with $b$ the individual from the copy of $\Amc_\mn{target}$. With Conditions~2 and~5, it follows that $\Amc_k \not \models Q(\abf)$. Contradiction.

\item The removed role assertion lies in $\Amc_{\wedge,w}$ on the path from its $b$ to its $d$. The proof is analogous.

\item The removed role assertion lies in $\Amc_{\mn{target}}$ on the path from the core to $b$. It follows that $\mn{tp}_{\Amc_{i+1}, \Tmc}(a) \subseteq \mn{tp}_{\Amc_b \cup t_0(b), \Tmc}(a)$ for every individual $a$ in the copy of $\Amc_{\mn{target}}$. With
Condition~3, it again follows that $\Amc_k \not \models Q(\abf)$.

\end{itemize}
%

\smallskip

``$\Leftarrow$''. Assume that $Q=(\Tmc,\Sigma,q) \in (\mathcal{EL}, \textnormal{conCQ})$ is not boundedly branching. Let $\mn{TP}$ denote the set of all $\Tmc$-types and set $m = 2^{|\Tmc|}$. Clearly, $|\mn{TP}| \leq m$. Set $k = m \cdot 2^m \cdot |\Tmc|^{|q|} \cdot (2m^m+1) \cdot |q|$. Since $Q$ is not boundedly branching, we find a $\Sigma$-ABox $\Amc \in \Mmc_Q$ and a tuple $\abf$ from its core such that $\Amc, \Tmc \models q(\abf)$ and one the trees $\Amc_i$ of \Amc has the full binary tree of depth $k$ as a minor. We show that \Amc and \abf can serve as the ABox and tuple in Definition~\ref{def:psa}, that is, as a witness for $Q$ having the ability to simulate \PSA.

To identify the distinguished individuals $b,c,d$, we use a suitable coloring of the individuals of $\Amc_i$ and Lemma~\ref{lem:treecol}. In fact, we color every $b \in \mn{ind}(\Amc_i)$ with the color $(\mn{tp}_{\Amc, \Tmc}(b),S_b,r_1^b r_2^b \ldots r_{|q|}^b)$ where $\mn{TP} \supseteq S_b = \{ t \in S_b \mid \Amc_b \cup t(b),\Tmc \models q(\abf)\}$ and where $r_1^b r_2^b \ldots r_{|q|}^b$ is the ancestor path of $b$ up to length $|q|$. There are no more than $m \cdot 2^m \cdot |\Tmc|^{|q|}$ colors, so from Lemma~\ref{lem:treecol} we know that \Amc has as a minor a monochromatic full binary tree $T$ of depth $2m^m+1$ whose nodes have distance at least $|q|$ from each other. Let $b$ be a child of the root of $T$ (to make sure that $b$ has depth at least $|q|$ from the core) and $T' \subseteq T$ the subtree of $T$ rooted at $b$, so $T'$ is a full binary tree of depth $2m^m$.  We color every $c \in T'$ with the function $f_c : \mn{TP} \rightarrow \mn{TP}$ that is defined by $f_c(t) = \mn{tp}_{\Amc_c \cup t(c), \Tmc}(b)$. There are at most $m^m$ such functions, so again by Lemma~\ref{lem:treecol}, there will be the monochromatic binary tree of depth $2$ as a minor. In particular, we find two incomparable individuals $c$ and $d$ in $T'$ that are colored with the same function. We show that $\Amc$ we can find types $t_1$ and $t_0$ such that with the distinguished nodes $b, c, d$, \Amc and \abf satisfy Conditions~1-6 from Definition~\ref{def:psa}.

Condition~1 is true by choice of \Amc.  Set $t_1 := \mn{tp}_{\Amc, \Tmc}(b)$.  Then Condition~2 is satisfied because $b$, $c$ and $d$ were colored with the same color by the first coloring. For the same reason, Conditon~6 is fulfilled. Condition~5 follows from Lemma~\ref{lem:core-close}.

To find $t_0$, we define a sequence $t'_0,t'_1,\dots$ of $\Tmc$-types where $t'_0 = \emptyset$ and $t'_{i+1} = \mn{tp}_{\Amc_c \cup t'_i(c), \Tmc}(b)$. It is clear that $t'_i \subseteq t'_{i+1}$ for all $i$. Let $t_0$ be the limit of the sequence.  Since $c$ and $d$ were colored with the same function $f_c=f_d$, Condition~4 holds. It thus remains to argue that Condition~3 holds. 

We show by induction on $i$ that $\Amc_c \cup t'_i(c), \Tmc \not\models q(\abf)$ for all $i \geq 0$. It is clear that $\Amc_c \cup t'_0(c), \Tmc \not\models q(\abf)$ since \Amc is minimal with $\Amc,\Tmc \models q(\abf)$. Now assume that $\Amc_c \cup t'_i(c), \Tmc \not\models q(\abf)$ for some $i$. Then $\Amc_c \cup t'_{i+1}(b), \Tmc \not\models q(\abf)$. Since $S_b = S_c$ has been assured by the
first coloring, we obtain $\Amc_c \cup t'_{i+1}(c), \Tmc \not\models q(\abf)$ which completes the induction. 

Thus $\Amc_c \cup t_0(c), \Tmc \not\models q(\abf)$ and using again that $S_b = S_c$, we obtain Condition~3.
\end{proof}
It remains to show that the ability to simulate \PSA implies \PTime-hardness.

\begin{lemma}
\label{lem:psa-ptimehard}
If $Q \in (\EL,\textnormal{conCQ})$ has the ability to simulate \PSA, then {\sc eval}$(Q)$ is $\PTime$-hard under FO reductions.
\end{lemma}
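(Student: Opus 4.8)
The plan is to mirror the \NL-hardness reduction of Lemma~\ref{lem:nlhard}, replacing the transport gadget $\Amc_\mn{edge}$ by the AND-gate gadget $\Amc_\wedge$ and reducing from \PSA instead of \REACH. Fix $Q=(\Tmc,\Sigma,q)$ with the ability to simulate \PSA and let $\Amc,\abf,b,c,d$, the types $t_0 \subsetneq t_1$, and the gadgets $\Amc_\mn{target}=\Amc_b$, $\Amc_\wedge=\Amc^b_{cd}$, $\Amc_\mn{source}=\Amc^c$ be as in Definition~\ref{def:psa}. Given a \PSA instance $G=(V,E,S,t)$, I would reserve an individual $a_v$ for each $v\in V$ and build a $\Sigma$-ABox $\Amc_G$ as follows: for every triple $(u,w,v)\in E$ glue in a fresh copy of $\Amc_\wedge$ identifying its $c$ with $a_u$, its $d$ with $a_w$, and its $b$ with $a_v$; for every $s\in S$ glue in a fresh copy of $\Amc_\mn{source}$ identifying its root $c$ with $a_s$; and glue in one copy of $\Amc_\mn{target}$ identifying its $b$ with $a_t$, retaining the names in $\abf$. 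As in Lemma~\ref{lem:nlhard}, $\Amc_G$ is constructible from $G$ by an FO query. The claim to prove is that $t$ is accessible in $G$ iff $\Amc_G \models Q(\abf)$.

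For the ``$\Rightarrow$'' direction I would show by induction on the accessibility derivation that $t_1 \subseteq \mn{tp}_{\Amc_G,\Tmc}(a_v)$ for every accessible $v$: the base case $v\in S$ uses the copy of $\Amc_\mn{source}$ together with Condition~2 (which gives $\mn{tp}_{\Amc_\mn{source},\Tmc}(c)=t_1$), and the inductive step for $(u,w,v)\in E$ with $u,w$ accessible uses the copy of $\Amc_\wedge$ together with the glueing Lemma~\ref{lem:aboxunion} and Condition~2 to propagate $t_1$ from $c=a_u$ and $d=a_w$ up to $b=a_v$. Taking $v=t$ and invoking Condition~1 on the copy of $\Amc_\mn{target}$ then yields $\Amc_G\models Q(\abf)$.

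The ``$\Leftarrow$'' direction is the laborious part, paralleling the corresponding direction of Lemma~\ref{lem:nlhard}. Assuming $t$ is not accessible, I would set $\Amc'_G := \Amc_G \cup \{t_0(a_v) \mid v \text{ inaccessible}\} \cup \{t_1(a_v) \mid v \text{ accessible}\}$ and reconstruct $\Amc'_G$ by a rule-based process (one rule per triple, per source, and one for the target, each also asserting the appropriate boundary types), so that repeated application of Lemma~\ref{lem:aboxunion} yields the analogue of Claim~2 of Lemma~\ref{lem:nlhard}: $\mn{tp}_{\Amc'_G,\Tmc}(a_v)=t_0$ for inaccessible $v$ and $=t_1$ for accessible $v$. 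Here Condition~4 shows that any gadget $\Amc_\wedge$ for a triple $(u,w,v)$ with an inaccessible input returns $t_0$ at its output, which is consistent with the asserted type since an inaccessible $v$ possesses no triple with two accessible inputs. It then suffices to show $\Amc'_G\not\models Q(\abf)$, so I assume a homomorphism $h$ from $q$ to $\Umc_{\Amc'_G,\Tmc}$ with $h(\xbf)=\abf$ and derive a contradiction. Since $b,c,d$ have pairwise distance exceeding $|q|$, the range of $h$ meets at most one individual $a_v$. I would split into the case where $h$ meets no $a_v$ (so, by connectedness, $h$ stays inside one gadget copy: if it is $\Amc_\mn{target}$ then $\mn{tp}_{\Amc'_G,\Tmc}(a_t)=t_0$ gives $\Amc_\mn{target}\cup t_0(b),\Tmc\models q(\abf)$, contradicting Condition~3; if it is $\Amc_\wedge$ or $\Amc_\mn{source}$ then $q$ must be Boolean and I use Condition~5 together with Lemma~\ref{lem:homtransfer} and the $\Cmc_q$-concepts to push the homomorphism into $\Umc_{\Amc,\Tmc}$ below $b$, contradicting core-closeness) and the case where $h$ meets exactly one $a_v$, split further into $v=t$, $v\in S$, and $v$ internal.

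The main obstacle will be the last case, $v$ internal, which is genuinely harder than its \REACH counterpart because $a_v$ now branches \emph{downward} into two descendant gadgets: from being the output $b$ of a triple it splits into the $c$- and $d$-subtrees of the corresponding copy of $\Amc_\wedge$. The image of $q$ can therefore spread from $a_v$ into several gadgets simultaneously, and folding it back to $\Umc_{\Amc,\Tmc}$ requires (i) the fork-elimination trick from Case~3 of Lemma~\ref{lem:nlhard}, rewriting the downward part of $q$ into variable-disjoint tree-shaped pieces $q_1,\dots,q_n$ sharing a root and reading each off as an \EL-concept $\exists r.C$ whose marker satisfies $A_{\exists r.C}\in\mn{tp}_{\Amc_G,\Tmc}(a_v)\subseteq\mn{tp}_{\Amc,\Tmc}(b)$; (ii) Condition~2, $\mn{tp}_{\Amc,\Tmc}(b)=\mn{tp}_{\Amc,\Tmc}(c)=\mn{tp}_{\Amc,\Tmc}(d)$, to re-root all these pieces at a single individual of $\Umc_{\Amc,\Tmc}$; and (iii), when $q$ is Boolean, the ancestor-path Condition~6 together with Lemma~\ref{lem:homtransfer} to transfer the upward part of $h$ consistently. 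In each case the reconstructed map $g$ from $q$ to $\Umc_{\Amc,\Tmc}$ places $b$, which lies at distance exceeding $|q|$ from the core, in its range; this contradicts core-closeness, either by Condition~5 when $q$ is Boolean or, when $q$ is non-Boolean, by the fact that a homomorphism pinning $\abf$ to the core cannot reach an element so far from the core. Hence $\Amc'_G\not\models Q(\abf)$ and therefore $\Amc_G\not\models Q(\abf)$, which completes the reduction and establishes \PTime-hardness under FO reductions.
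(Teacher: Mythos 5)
Your proposal follows the paper's proof essentially step for step: the same gadget construction of $\Amc_G$, the same inductive argument for the forward direction, the same device of passing to $\Amc'_G$ with pre-asserted types so that Lemma~\ref{lem:aboxunion} yields the analogue of Claim~2, and the same case analysis on where the homomorphism $h$ lands, with Conditions~3 and~5 supplying the contradictions in the first two cases. The outline is correct.

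The one place where you are thinner than the actual proof is the case where $a_v$ lies in the range of $h$ (your ``internal'' case; note the paper treats $v=t$, $v\in S$ and internal $v$ uniformly here, since in all of them $q$ is forced to be Boolean by connectedness and the distance to the core). You propose to handle the downward part of $q$ with the fork-elimination and marker-concept trick from Case~3 of Lemma~\ref{lem:nlhard} and the upward part separately via Condition~6 and Lemma~\ref{lem:homtransfer}; the paper instead does both at once, which is exactly what the \ELI-concepts in $\Cmc_q$ (inverse-role prefix for the ancestor path, \EL-subconcept for the subtree hanging below) are designed for. Either way, the step you gloss over is the following: after using Condition~6 to fold the several predecessors of $a_v$ into a single tree-shaped interpretation $\Umc'$, one must show that this identification does not create \emph{new} memberships $a_v \in C^{\Umc'}$ for $C \in \Cmc_q$ that did not already hold before folding --- merging subtrees can in principle satisfy an existential concept that no single branch satisfied. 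This is the paper's Claim~3, proved by a minimality argument on the length of the existential prefix together with Claim~2 and the equality of types at $a_v$ and $b$, and it is the prerequisite for transferring the $\Cmc_q$-memberships to $b$ in $\Umc_{\Amc,\Tmc}$ and invoking Lemma~\ref{lem:homtransfer}. Without some version of this argument your reconstruction of a non-core-close homomorphism into $\Umc_{\Amc,\Tmc}$ does not go through, so you should add it before declaring the contradiction with Condition~5.
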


\begin{proof}
Let $Q=(\Tmc,\Sigma,q) \in (\mathcal{EL},\textnormal{conCQ})$ have the ability to simulate \PSA. Then there is a pseudo tree-shaped $\Sigma$-ABox \Amc, a tuple $\abf$ in its core, distinguished individuals $b$, $c$ and $d$, and types $t_0 \subsetneq t_1$ as in Definition~\ref{def:psa}. We reduce \PSA to {\sc eval}$(Q)$. 

Let $G=(V,E,S,t)$ be an input for \PSA. We construct a $\Sigma$-ABox $\Amc_G$ that represents~$G$. Reserve an individual $a_v$ for every node $v \in V$. For every $(u,v,w) \in E$, include in $\Amc_G$ a copy $\Amc_{u,v,w}$ of $\Amc_\wedge$ that uses fresh individuals, identifying (the individual that corresponds to) $c$ with $a_u$, $d$ with $a_v$ and $b$ with $a_w$. For every $s \in S$ include in $\Amc_G$ one copy of $\Amc_{\mn{source}}$ that uses fresh individuals, identifying $c$ with $a_s$.
Finally, include in $\Amc_G$ the ABox $\Amc_{\mn{target}}$ identify $b$ with $a_t$. (Note that we do not use a `copy' of $\Amc_\mn{target}$, so individuals from $\Amc_\mn{target}$ except for $b$ retain their name.) It can be verified that $\Amc_G$ can be constructed from $G$ using an FO-query. 
It thus remains to show the following.
\\[2mm]
\textbf{Claim 1.} $t$ is accessible in $G$ iff $\Amc_G \models Q(\abf)$.
\\[2mm]
For the ``$\Rightarrow$'' direction, assume that $t$ is accessible in $G$. Define a sequence $S=S_0 \subseteq S_1 \subseteq \cdots
\subseteq V$ by setting 
\[
S_{i+1}= S_i \cup \{ w \in V \mid \text{there is a $(u,v,w) \in E$
such that } u,v \in S_i \}
\]
and let the sequence stabilize at $S_n$. Clearly, the elements of $S_n$ are exactly the accessible nodes. It can be shown by induction on $i$ that whenever $v \in S_i$, then $t_1 \subseteq \mn{tp}_{\Amc_G,\Tmc}(a_v)$. In fact, the induction start follows from $t_1 = \mn{tp}_{\Amc, \Tmc} (b)$ and the induction step from Condition~2 of Definition~\ref{def:psa}. It follows from Conditions~1 and~2 that $\Amc_\mn{target} \cup t_1(b) \models Q(\abf)$, thus $\Amc_G \models Q(\abf)$ as required.

\smallskip

The ``$\Leftarrow$'' direction is more laborious. Assume that $t$ is not accessible in $G$. Set
\[
\begin{array}{rcl}\Amc_G' &:=& \Amc_G \cup \{t_0(a_v) \mid v \in V \text{ is not accessible}\} \\ & & \cup \, \{t_1(a_v) \mid v \in V \text{ is accessible}\}.\end{array}
\]
We show that $\Amc'_G \not\models Q(\abf)$, which implies $\Amc_G \not\models Q(\abf)$.

\smallskip

We have defined $\Amc'_G$ as an extension of $\Amc_G$. Alternatively and more suitable for what we want to prove, $\Amc'_G$ can be obtained by starting with
an ABox $\Amc_0$ that contains only the assertions $t_0(a_v)$ for all inaccessible nodes $v \in V$ as well as $t_1(a_v)$ for all accessible nodes $v \in V$, and then
exhaustively applying the following rules in an unspecified order, obtaining a sequence of ABoxes $\Amc_0,\Amc_1,\dots,\Amc_m$ with $\Amc_m = \Amc'_G$:
\begin{enumerate} 

\item Choose a triple $(u,v,w) \in E$ that has not been chosen before, take a copy $\Amc_\wedge^{u,v,w}$ of $\Amc_\wedge$ using fresh individual names, with $c$ renamed to $a_u$, $d$ to $a_v$, and $b$ to $a_w$, and add the assertions $t_{\mn{acc}(x)}(a_x)$ for $x \in \{u,v,w\}$ where $\mn{acc}(x) = 1$ if $x$ is accessible and $\mn{acc}(x) = 0$ otherwise. Set $\Amc{i+1}=\Amc_i \cup \Amc_\wedge^{u,v,w}$.

\item Choose a node $s \in S$ that has not been chosen before, introduce a copy $\Amc_{\mn{source}}^s$ of $\Amc_{\mn{source}}$ that uses fresh individuals, with $c$ renamed to $a_s$, and add the assertions $t_1(a_s)$. Let the resulting ABox be called $\Amc_{\mn{source}}^s$. Set $\Amc_{i+1}=\Amc_i \cup \Amc_{\mn{source}}^s$.  

\item Set $\Amc_{i+1} = \Amc_i \cup \Amc'_{\mn{target}}$, where $\Amc'_{\mn{target}}$ is obtained from $\Amc_\mn{target}$ by renaming $b$ to $a_t$ and adding the assertions $t_0(a_t)$.

\end{enumerate}
Clearly, rule application terminates after finitely many steps and results in the
ABox $\Amc_G'$. Note that we add assertions $t_i(a)$, $i \in \{0,1\}$ to the ABoxes constructed in the rules to enable application of Lemma~\ref{lem:aboxunion}.
\\[2mm]
{\bf Claim 2.} $\mn{tp}_{\Amc_i, \Tmc}(a_u) = t_0$ if $u \in V$ is inaccessible and $\mn{tp}_{\Amc_i, \Tmc}(a_u) = t_1$ otherwise, for all $i \geq 0$.
\\[2mm]
The proof is by induction on $i$.  For $i=0$, the statement is clear since $t_0$ and $t_1$ are $\Tmc$-types. Now assume the statement is true for some $i$ and consider $\Amc_{i+1}$. If $\Amc_{i+1}$ was obtained by Rule~1, it can be verified using Conditions~2 and~4 from Definition~\ref{def:psa} that $\mn{tp}_{\Amc_\wedge^{u,v,w}, \Tmc}(a_x) = t_{\mn{acc}(x)}$ for all $x \in \{u,v,w\}$. So with Lemma~\ref{lem:aboxunion} and since $\Amc_i$ and $\Amc_\wedge^{u,v,w}$ share only the individuals $u,v,w$, the statement follows. If $\Amc_{i+1}$ was obtained by Rule~2, we can use Condition~2 and Lemma~\ref{lem:aboxunion}. If $\Amc_{i+1}$ was obtained by Rule~3, using $\mn{acc}(t)=0$ it can be verified that $\mn{tp}_{\Amc'_{\mn{target}}, \Tmc}(a_t) = t_0$ and with Lemma~\ref{lem:aboxunion}, the statement follows. This finishes the proof of the Claim~2.

\smallskip

It remains to show that $\Amc'_G,\Tmc \not\models q(\abf)$. Assume to the contrary that $\Amc'_G,\Tmc \models q(\abf)$, that is, there is a homomorphism $h$ from $q(\xbf)$ to $\Umc_{\Amc'_G,\Tmc}$ such that \mbox{$h(\xbf)=\abf$}. There can be at most one individual of the form $a_v$ in the range of $h$ by construction of $\Amc'_G$ and since $b, c, d$ have distance more than $|q|$ from each other in \Amc.  

If there is no individual $a_v$ in the range of $h$, then $h$ only hits individuals from a single copy of $\Amc_\mn{source}$, $\Amc_\wedge$, or $\Amc'_\mn{target}$ as well as anonymous elements in the trees below them (since $q$ is connected).  First assume that this is $\Amc'_\mn{target}$. By Claim~2 and since $\mn{acc}(t)=0$, $\mn{tp}_{\Amc'_G, \Tmc}(a_t) = t_0$. It can be shown that the identity function is a homomorphism from $\Umc_{\Amc'_G,\Tmc}|_\Delta$, $\Delta$ the individuals from $\Amc'_\mn{target}$ and anonymous elements below them, to $\Umc_{\Amc_\mn{target} \cup t_0(b),\Tmc}$. By composing homomorphisms, it follows that $\Amc_\mn{target} \cup t_0(b),\Tmc \models q(\abf)$, contradicting Condition~3.

Now, assume that $h$ only hits individuals from a copy of $\Amc_\mn{source}$ or $\Amc_\wedge$ as well as anonymous elements in the trees below them. Then the restriction \Umc of $\Umc_{\Amc'_G,\Tmc}$ to the range of $h$ is tree-shaped.  Moreover, $q$ must be Boolean since the distance between the core and the elements of \Umc exceeds $|q|$ and $q$ is treeifiable because $h$ is a homomorphism to a tree-shaped interpretation. Since $\Umc_{\Amc'_G,\Tmc}$ is a model of \Tmc, we have $d \in C_q^{\Umc_{\Amc'_G,\Tmc}}$ for the root $d$ of \Umc, where $C_q$ is $q^\mn{tree}$ seen as an $\EL$-concept.  From Claim~2 and Lemma~\ref{lem:aboxunion}, it follows that some element of $\Umc_{\Amc_\mn{source} \cup t_1(c),\Tmc}$ or of $\Umc_{\Amc_\wedge \cup t_1(b)\cup t_1(c)\cup t_1(d),\Tmc}$ also satisfies $C_q$. By Condition~2 and Lemma~\ref{lem:aboxunion}, the same is true for an element from $\Umc_{\Amc,\Tmc}$ that is `below' $b$ (reachable from $b$ by a directed path). Since the distance from the core to $b$ in \Amc exceeds $|q|$, this homomorphism is not core close, contradicting Condition~5.

Next, assume that the range of $h$ contains $a_v$. Then $q$ is Boolean since it is connected and the distance between the core in $\Amc'_\mn{target}$ and $a_v$ exceeds $|q|$. Let \Umc be the restriction of $\Umc_{\Amc'_G,\Tmc}$ to all elements within distance at most $|q|$ from~$a_v$. Then \Umc is almost tree-shaped except that $a_v$ can have multiple predecessors.  By Condition~6, however, there is a unique sequence of roles $r_n \cdots r_1$ such that in each path $d_m s_m \cdots d_2 s_2 d_1 s_1 a_v$ in \Umc, $s_m \cdots s_1$ is a postfix of $r_n \cdots r_1$. We can thus obtain a tree-shaped interpretation $\Umc'$ from \Umc by exhaustively identifying elements $d_1,d_2$ whenever $(d_1,e),(d_2,e) \in r^\Imc$ for some $e$ and $r$. 
Clearly, we can find a homomorphism $h'$ from $q$ to $\Umc'$. Consequently, $q$ is treeifiable and the TBox has been extended with $C \sqsubseteq A_C$ for all $C \in \Cmc_q$.
%
\\[1mm]
{\bf Claim 3}. 
  $a_v \in C^{\Umc'}$ iff $a_v \in C^\Umc$ for all $C \in \Cmc_q$.
\\[1mm]
The ``$\Leftarrow$'' direction is immediate. For ``$\Rightarrow$'', assume to the contrary of what we aim to show that that $a_v \in C^{\Umc'}$ but $a_v \notin C^\Umc$ for some $C \in \Cmc_q$. Then $C$ has the form $\exists r^-_n . \cdots \exists r^-_1 . C$ where $r_1 \cdots r_n$ is a (potentially empty) role path in $q^{\mn{tree}}$ and $C$ is $\top$ or a concept name from $q$ or a CQ from $\mn{trees}(q)$ viewed as an \EL-concept. In the former two cases, we clearly have $a_v \in C^\Umc$ by construction of $\Umc'$.  In the latter case, $C$ is of the form $\exists r . D$. Since $a_v \in C^{\Umc'}$, there is a path $d_1 r_1 d_2 \cdots r_{n-1} d_{n-1} r_n a_v$ in $\Umc'$ and $d_1 \in (\exists r .D)^{\Umc'}$. If there is an $e \neq d_2$ with $(d_1,e) \in r^{\Umc'}$ and $e \in D^{\Umc'}$, then again $a_v \in C^\Umc$ by construction of~$\Umc'$. Assume that this is not the case, that is, $d_1 \in (\exists r .D)^{\Umc'}$ is true only because $r_1=r$ and $d_2 \in D^{\Umc'}$. We may assume w.l.o.g.\ that $C$ was chosen so that $n$ is minimal, that is, there is no concept $C' \in \Cmc_q$ with a shorter existential prefix than $C$ such that $a_v \in {C'}^{\Umc'}$ but $a_v \notin {C'}^\Umc$. Let $D=A_1 \sqcap \cdots \sqcap A_{n_1} \sqcap \exists s_1 . E_1 \cdots \sqcap \exists s_{n_2} . E_{n_2}$.  Then $\exists r^-_n . \cdots \exists r^-_2 . A_i$ and $\exists r^-_n . \cdots \exists r^-_2 . \exists s_j .E_j$ are also in $\Cmc_q$ for all relevant $i$ and $j$. Let $\Gamma$ be the set of all these concepts. We have $a_v \in G^{\Umc'}$ for all $G \in \Gamma$ and, since $n$ is minimal, $a_v \in G^\Umc$ for all $G \in \Gamma$. By Claim~2, we have $\mn{tp}_{\Amc'_G, \Tmc}(a_v) \in \{t_0, t_1\}$. Let $\Bmc = \Amc$ if $\mn{tp}_{\Amc'_G, \Tmc}(a_v) =t_1$ and $\Bmc = \Amc_c$ if $\mn{tp}_{\Amc'_G, \Tmc}(a_v) = t_0$. Then it follows from $\mn{tp}_{\Bmc, \Tmc}(b) =\mn{tp}_{\Amc'_G, \Tmc}(a_v)$ that $a_v \in A_C^{\Umc_{\Amc'_G,\Tmc}}$ iff $b \in A_C^{\Umc_{\Bmc,\Tmc}}$ for all $C \in \Cmc_q$. Since both $\Umc_{\Amc'_G,\Tmc}$ and $\Umc_{\Bmc,\Tmc}$ are universal models and by construction of \Tmc, $a_v \in C^{\Umc_{\Amc'_G,\Tmc}}$ iff $b \in C^{\Umc_{\Bmc,\Tmc}}$ for all $C \in \Cmc_q$. By choice of \Umc, the same is true when $\Umc_{\Amc'_G,\Tmc}$ is replaced with \Umc.
 Thus, $b$ satisfies all concepts from $G$ as well as $\exists r^-_n . \cdots \exists r^-_1 . \top$ in $\Umc_{\Bmc,\Tmc}$. Since $\Umc_{\Bmc,\Tmc}$ is tree-shaped, $b \in C^{\Umc_{\Bmc,\Tmc}}$ and, consequently, $a_v \in C^\Umc$ as desired. This finishes the proof of Claim~3.

\smallskip

By Claims~2 and~3 and since both $\Umc_{\Amc'_G,\Tmc}$ and $\Umc_{\Amc,\Tmc}$ are universal models and by construction of \Tmc, $a_v \in C^{\Umc'}$ implies $b \in C^{\Umc_{\Amc,\Tmc}}$ for all $C \in \Cmc_q$. The same is true if we replace $\Umc_{\Amc,\Tmc}$ with its restriction $\Umc''$ to all elements that have distance at most $|q|$ from $b$. Note that $\Umc''$ is tree-shaped. We can apply Lemma~\ref{lem:homtransfer} with $\Umc',a_v$ in place of $\Imc_1,d_1$ and $\Umc'',b$ in place of~$\Imc_2,d_2$, obtaining a homomorphism from $q$ to $\Umc_{\Amc,\Tmc}$ with $b$ in its range. Since the distance from the core to $b$ in \Amc exceeds $|q|$, this homomorphism is not core close, contradicting Condition~5.
\end{proof}

\subsection{Bounded Pathwidth Implies Linear Datalog Rewritability}
\label{sec:boundedpathwidth}

We prove the equivalence (i) $\Leftrightarrow$ (ii) from Theorem \ref{thm:NLPTime}.
Our proof works even for OMQs from $(\ELI, \textnormal{CQ})$, that is, when inverse role are admitted in the TBox and when the conjunctive queries are not necessarily connected.  We thus establish our result for this more general class of OMQs right away.

\begin{lemma}
\label{lem:LDLog}
Let $Q=(\Tmc, \Sigma, q) \in (\ELI,\textnormal{CQ})$. Then $Q$ has bounded pathwidth if and only if $Q$ is rewritable into linear Datalog. In the positive case, there exists a linear Datalog program of width $\mn{pw}(Q)+\mn{ar}(q)$.
\end{lemma}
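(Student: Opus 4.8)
The plan is to prove the two implications separately. Before starting, it is worth predicting where the stated width bound $\mn{pw}(Q)+\mn{ar}(q)$ comes from: in a path decomposition of a witnessing ABox, the only information that must be transported from one bag to the next is the \emph{separator} $V_i\cap V_{i+1}$, which for a \emph{nice} path decomposition (each step introduces or forgets exactly one individual) has size at most $\mn{pw}(Q)$; in addition the answer tuple $\abf$ must be remembered throughout, contributing $\mn{ar}(q)$ positions. So the non-$\mn{goal}$ IDB relations will carry a tuple of at most $\mn{pw}(Q)$ separator individuals together with $\abf$, giving width $\mn{pw}(Q)+\mn{ar}(q)$.

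For the direction ``rewritable into linear Datalog $\Rightarrow$ bounded pathwidth'' I would argue via unfoldings of a linear Datalog rewriting $\Pi$ of $Q$. Because each rule body of $\Pi$ contains at most one IDB atom, every unfolding (proof tree) of $\mn{goal}(\xbf)$ is a \emph{path}: one starts with the rule for $\mn{goal}$, and at each step expands the single open IDB atom using a rule, until one reaches a rule with purely EDB body. Writing $F_0=\xbf,F_1,F_2,\dots$ for the successive frontiers (the variable tuples of the expanded IDB atoms) and $B_j$ for the set of variables of the rule applied at step $j$, the sequence $B_0,B_1,\dots,B_m$ is a path decomposition of the conjunction of all collected EDB atoms: a ``private'' variable of step $j$ occurs only in $B_j$, while a frontier variable in $F_j$ occurs exactly in the two consecutive bags $B_{j-1},B_j$, so the contiguity condition holds. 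Each bag has size at most the diameter of $\Pi$ and each separator size at most the width of $\Pi$, hence every unfolding CQ has pathwidth bounded by a constant depending only on $\Pi$. Given $\Amc\models Q(\abf)$, i.e.\ $\Amc\models\Pi(\abf)$, some unfolding $c$ maps homomorphically into $\Amc$ with $\xbf\mapsto\abf$; taking $\Amc'=\Amc_c$ (the canonical ABox of $c$, with answer variables named $\abf$) yields a $\Sigma$-ABox of bounded pathwidth with a homomorphism to $\Amc$ that is the identity on $\abf$ and with $\Amc'\models\Pi(\abf)$, hence $\Amc'\models Q(\abf)$. This is exactly the definition of bounded pathwidth.

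For the converse ``bounded pathwidth $\Rightarrow$ linear Datalog rewritability'' I would build the \emph{canonical} linear Datalog program $\Pi$ of width $k+\mn{ar}(q)$ with $k=\mn{pw}(Q)$, and show $\Amc\models\Pi(\abf)$ iff there is a $\Sigma$-ABox $\Amc'$ of pathwidth at most $k$ with a homomorphism to $\Amc$ identical on $\abf$ and $\Amc'\models Q(\abf)$; since $Q$ has pathwidth $\le k$, the latter is equivalent to $\Amc\models Q(\abf)$. The program simulates, running over $\Amc$, the left-to-right processing of a nice path decomposition of such an $\Amc'$ together with the induced homomorphism into $\Amc$ (so that the guessed assertions of $\Amc'$ inside a bag are always realized as genuine $\Sigma$-assertions of $\Amc$). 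Each IDB relation holds the images in $\mn{ind}(\Amc)$ of the current separator (at most $k$ individuals) and of $\abf$, and is additionally indexed by a finite \emph{profile} recording (a) the $\Tmc$-type $\mn{tp}_{\Amc',\Tmc}$ committed to each separator individual, and (b) a partial match of $q$ to the part of $\Amc'$ processed so far, i.e.\ a partial map from $\mn{var}(q)$ to separator positions (with $\xbf\mapsto\abf$) plus the set of $q$-atoms already discharged. There are ``introduce'' and ``forget'' rules; each has a single IDB atom (ensuring linearity), and when an individual $a$ is forgotten one checks the local type-closure conditions coming from the chase rules (i)--(vi) and discharges all $q$-obligations at $a$: role atoms against the $\Sigma$-edges present, and any tree-shaped part of $q$ hanging below $a$ against the concepts in $\mn{trees}(q)$ being in $\mn{tp}_{\Amc',\Tmc}(a)$. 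At the $\mn{goal}$ rule the partial match must be total, witnessing $q\to\Umc_{\Amc',\Tmc}$.

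The main obstacle, and the technical heart of the proof, is the faithful tracking of the \emph{universal-model} types of the separator individuals across the decomposition, since CQ-entailment asks for a homomorphism into the universal (least) model, not into an arbitrary model. The key tool here is the glueing Lemma~\ref{lem:aboxunion}: it guarantees that the processed part of $\Amc'$ interacts with the still-to-be-built part only through the $\Tmc$-types it induces on the shared separator individuals, so that these types are a \emph{sound and sufficient} summary of everything to the left. This is what makes the local certification of guessed types and of discharged query atoms correct, and it is exactly the mechanism already exploited in the hardness reductions of Sections~\ref{sec:AC0NL} and~\ref{sec:NLPTime}. Soundness of the program then follows because any derived profile is realized by a concrete $\Amc'$ with $\Amc'\models Q(\abf)$; completeness follows because, $Q$ having pathwidth $\le k$, a witnessing $\Amc'$ exists (via Lemma~\ref{lem:pseudo}), and a nice path decomposition of it together with its true types and a homomorphism $q\to\Umc_{\Amc',\Tmc}$ yields an accepting run. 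Care must additionally be taken with the inverse roles of \ELI\ (upward type propagation) and with disconnected CQs, but both are absorbed into the local forget-time checks, which by bounded pathwidth only ever involve a window of at most $k+1$ individuals.
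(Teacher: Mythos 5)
Your first direction (linear Datalog rewritability implies bounded pathwidth) is essentially the paper's argument: your ``unfolding is a path, rule bodies are bags, IDB frontiers are separators'' decomposition is exactly what Lemma~\ref{lem:nicestructure} packages as the ABox $\Amc_D$ of a derivation, and both arguments bound the pathwidth by the diameter of the program. Your converse direction, however, takes a genuinely different route from the paper, and as sketched it has a real gap. The paper does \emph{not} build the program directly from a left-to-right sweep over a path decomposition; it encodes pairs $(\Amc,\abf)$ of bounded pathwidth as words, builds a \emph{two-way} alternating word automaton whose main branch guesses the homomorphism $q\to\Umc_{\Amc,\Tmc}$ bag by bag and whose side branches verify each required concept membership $A(a)$ by simulating a derivation tree (Lemma~\ref{lem:AQderivation}), moving both left and right along the word, and only then determinizes this automaton and compiles the resulting DFA into a linear Datalog program.

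The gap in your construction is the certification of the committed $\Tmc$-types. Certain-answer semantics refers to the \emph{universal} (least) model, so a concept name may be placed into $\mn{tp}_{\Amc',\Tmc}(a)$ only if it has a well-founded derivation. Your ``local type-closure conditions at forget time'' can only check that the guessed type assignment is \emph{closed} under the CIs of $\Tmc$, i.e.\ that it induces \emph{some} model; it cannot exclude circularly justified overapproximations (e.g.\ $A$ at $a$ justified by $B$ at $b$, which is justified by $A$ at $a$, with the cycle possibly spanning many bags and running in both directions once inverse roles are present). With an inflated type assignment your program would report non-certain answers, and your soundness step ``any derived profile is realized by a concrete $\Amc'$'' fails precisely there, because no ABox realizes an inflated type as its actual type. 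Lemma~\ref{lem:aboxunion} does not help: it lets you \emph{combine} two ABoxes whose types on shared individuals are already known to coincide; it is not a mechanism for \emph{verifying} guessed types incrementally. To repair this you would have to track derivation ranks or otherwise enforce well-foundedness across the whole decomposition --- which is exactly the work that the paper's two-way side branches (whose accepting runs are finite, hence acyclic) and the subsequent 2AFA-to-DFA determinization perform. The width bound $\mn{pw}(Q)+\mn{ar}(q)$ you predict is nonetheless the right one and arises in the paper for the reason you give: the IDBs of the final program store the answer tuple together with the current separator.
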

Before proving Lemma~\ref{lem:LDLog}, we give some additional preliminaries about Datalog. Let $\Pi$ be a Datalog program, $\Amc$ an ABox and $\abf$ a tuple from $\mn{ind}(\Amc)$. A \emph{derivation of $\Pi(\abf)$ in \Amc} is a labelled directed tree $(V,E,\ell)$ where
\begin{enumerate}

\item $\ell(x_0)=\mn{goal}(\abf)$ for $x_0$ the root node;

\item for each $x \in V$ with children $y_1,\dots,y_k$, $k >0$, there
  is a rule $S(\ybf) \leftarrow p(\xbf)$ in $\Pi$ and a substitution $\sigma$
  of variables by individuals from \Amc such that $\ell(x)=S(\sigma\ybf)$ and
  $\ell(y_1),\dots,\ell(y_k)$ are exactly the facts in $p(\sigma\xbf)$;

\item if $x$ is a leaf, then $\ell(x) \in \Amc$.

\end{enumerate}
%
It is well known that $\Amc \models
\Pi(\abf)$ iff there is a derivation of $\Pi(\abf)$ in $\Amc$.

We associate with each derivation $D=(V,E,\ell)$ of $\Pi(\abf)$ in \Amc an ABox~$\Amc_D$.
In fact, we first associate an instance $\Amc_x$ with every $x \in V$
and then set $\Amc_D := \Amc_{x_0}$ for $x_0$ the root of~$D$.  If $x
\in V$ is a leaf, then $\ell(x) \in \Amc$ and we set $\Amc_x=\{
\ell(x) \}$. If $x \in V$ has children $y_1,\dots,y_k$, $k>0$, such
that $y_1,\dots,y_\ell$ are non-leafs and
$y_{\ell+1},\dots,y_k$ are leafs, then $\Amc_x$ is
obtained by starting with the assertions from
$\ell(y_{\ell+1}),\dots,\ell(y_k)$ and then adding a copy of
$\Amc_{y_i}$, for $1 \leq i \leq \ell$, in which all individuals
except those in $\ell(x)$ are substituted with fresh individuals.

The following lemma is well known \cite{AbiteboulHV95} and easy to verify. 

\begin{lemma}
\label{lem:nicestructure}
 Let $\Pi$ be a linear Datalog program and let $D$ be a derivation of $\Pi(\abf)$ in \Amc and $\Pi$ of diameter $d$. Then
  \begin{enumerate}

  \item $\Amc_D \models \Pi(\abf)$;

  \item there is a homomorphism $h$ from $\Amc_D$ to $\Amc$ with $h(\abf)=\abf$;

  \item $\Amc_D$ has pathwidth at most $d$.

  \end{enumerate}
\end{lemma}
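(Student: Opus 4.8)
The plan is to prove the three items by a common induction on the structure of the derivation $D$, exploiting the fact that linearity of $\Pi$ forces $\Amc_D$ to be essentially one-dimensional. The key preliminary observation is that, since each rule body of a linear program contains at most one IDB atom and a non-leaf of $D$ corresponds precisely to a body atom carrying an IDB relation, every node of $D$ has at most one non-leaf child. Hence the non-leaf nodes form a single path $x_0,x_1,\dots,x_p$ (the \emph{spine}), starting at the root $x_0$, and $\Amc_D$ is obtained by gluing, for each spine node $x_j$, the finitely many EDB facts of the rule applied at $x_j$ onto a fresh copy of $\Amc_{x_{j+1}}$ along a bounded interface. Note also that all individuals occurring in any label $\ell(x)$ are individuals of $\Amc$, since substitutions map variables to $\mn{ind}(\Amc)$ and leaf labels are assertions of $\Amc$.

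For item~1 I would show, by induction on the height of $x$ in $D$, that there is a derivation of the fact $\ell(x)$ in $\Amc_x$; applying this to the root gives a derivation of $\mn{goal}(\abf)$ in $\Amc_D=\Amc_{x_0}$, so $\Amc_D\models\Pi(\abf)$ by the quoted equivalence between derivations and program answers. For a leaf this is immediate since $\ell(x)\in\Amc_x$. For an internal node one reuses the rule and substitution witnessing $\ell(x)$ in $D$: the EDB body atoms are present verbatim in $\Amc_x$, and for the unique non-leaf child $y$ the induction hypothesis supplies a derivation of $\ell(y)$ inside the fresh copy of $\Amc_y\subseteq\Amc_x$, which only requires the interface individuals of $\ell(y)$ to be preserved by the renaming. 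For item~2 I would, again by induction, build a homomorphism $h_x$ from $\Amc_x$ to $\Amc$ that is the identity on the individuals of $\ell(x)$: on leaf facts the identity works because they already lie in $\Amc$, and each fresh individual of $\Amc_x$ is a renamed copy of an individual of some $\Amc_y$ and is sent to its $h_y$-image. Consistency at the shared interface individuals is exactly where one uses that only the interface is preserved under renaming. Taking $x=x_0$ yields $h$ with $h(\abf)=\abf$, as $\abf$ are the individuals of $\mn{goal}(\abf)=\ell(x_0)$.

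For item~3 I would construct a path decomposition of $\Amc_D$ along the spine. Concretely, by induction on the spine I produce a path decomposition $V_1,\dots,V_n$ of $\Amc_{x_j}$ whose last bag $V_n$ contains all individuals of $\ell(x_j)$ and in which every bag has at most $d$ elements: the base case $x_p$ is a single bag consisting of the at most $d$ individuals occurring in the rule applied at $x_p$, and the inductive step appends, to the decomposition of the copy of $\Amc_{x_{j+1}}$, one new bag holding exactly the (at most $d$) individuals of the rule applied at $x_j$. The new bag contains the interface individuals of $\ell(x_{j+1})$, which by the induction hypothesis sit in the previous last bag, so the interpolation property $V_i\cap V_k\subseteq V_j$ is maintained; the fresh renaming guarantees that no other individual of $\Amc_{x_{j+1}}$ reappears in the new bag, so every individual occupies a contiguous range of bags. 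Since each bag has size at most $d$, this is a path decomposition of width at most $d-1$, whence $\mn{pw}(\Amc_D)\le d-1\le d$. The main obstacle, and the only genuinely delicate point throughout, is the fresh-renaming bookkeeping: one must verify that the individuals shared between a parent rule and its single IDB subderivation are precisely the preserved ones, so that the gluing is tight enough to keep bags small and contiguous (item~3) and the inductive homomorphism well defined (item~2), yet loose enough that no spurious identifications destroy the derivation (item~1).
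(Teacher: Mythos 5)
The paper does not actually prove this lemma---it is dismissed as ``well known \cite{AbiteboulHV95} and easy to verify''---so there is no proof of record to compare against. Your argument is the standard one and, as far as I can check it, correct: linearity forces the non-leaf nodes of $D$ to form a single spine, so $\Amc_D$ is a chain of rule-instance ``blocks'' glued along small interfaces; item~1 follows by re-running the same rules and substitutions on $\Amc_D$, item~2 by sending every fresh copy back to the $\Amc$-individual it renames, and item~3 by taking one bag per spine node consisting of the at most $d$ individuals of the rule instance applied there (which in fact gives pathwidth $d-1$). The interpolation condition holds because an individual that survives unrenamed from level $k$ up to level $i$ must lie in every intermediate interface, hence in every intermediate bag, while renamed individuals never reappear.

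The one point worth making explicit: the paper's definition of $\Amc_x$ says the copy of $\Amc_{y_i}$ preserves ``all individuals except those in $\ell(x)$,'' i.e.\ the individuals of the \emph{parent's head}. Read literally this would detach the subderivation from the EDB atoms of the rule applied at $x$ (e.g.\ for $S(x) \leftarrow R(y) \wedge r(x,y)$ the individual instantiating $y$ would be renamed away) and item~1 would fail. The intended interface is the set of individuals of $\ell(y_i)$, the IDB body fact, which is exactly the reading you adopt; your ``delicate bookkeeping'' remark is precisely the right thing to verify, and under that reading all three inductions go through.
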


With this lemma, the ``$\Leftarrow$'' direction of Lemma \ref{lem:LDLog} is easy to prove.
%
%
Assume that $Q \in (\ELI,\textnormal{CQ})$ is rewritable into a linear Datalog program $\Pi$. We show that $\mn{pw}(Q)$ is at most the diameter $d$ of $\Pi$. Take any pair $(\Amc, \abf)$ such that $\Amc \models Q(\abf)$. Since $\Pi$ is a linear Datalog rewriting of $Q$, there exists a derivation of $\Pi(\abf)$ in $\Amc$. By Lemma~\ref{lem:nicestructure}, there exists an ABox $\Amc_D$ of pathwidth at most $d$ such that $\Amc_D \models Q(\abf)$ and a homomorphism from $\Amc_D$ to \Amc that is the identity on \abf. Hence, $Q$ has pathwidth at most $d$. 

\smallskip

The rest of this section takes care of the ``$\Rightarrow$'' direction of Lemma \ref{lem:LDLog}. Assume that $Q$ has bounded pathwidth, say $\mn{pw}(Q)=k$. We obtain a linear Datalog program in the following way: We encode pairs $(\Amc,\abf)$ of an ABox $\Amc$ of pathwidth at most $k$ and a tuple $\abf$ from $\Amc$ as words over a finite alphabet, where one symbol of the word encodes one bag of the path decomposition of $\Amc$. We then construct an alternating two-way automaton on finite words that accepts precisely those words that encode a pair such that $\Amc \models Q(\abf)$. Such an automaton can always be transformed into a deterministic one-way automaton that accepts the same language \cite{GeffertO14}. From the latter automaton, we then construct the linear Datalog program that is equivalent to $Q$.

\smallskip
\noindent
\textbf{Derivation trees for AQs.} It is well known that in $\ELI$, entailment of AQs can be characterized in terms of derivation trees. \cite{BieLuWo-IJCAI13} Let \Tmc be an \ELI-TBox in normal form, \Amc an ABox, $a_0 \in \mn{ind}(\Amc)$ and $A_0 \in \NC$. A \emph{derivation tree} for $A_0(a_0)$ is a finite $\mn{ind}(\Amc) \times \NC$-labeled tree $(T,\ell)$ such that
\begin{itemize}
\item $\ell(\varepsilon) = (a_0,A_0)$;
\item if $\ell(x) = (a,A)$ and either $A(a) \in \Amc$ or $\top \sqsubseteq A \in \Tmc$, then $x$ is a leaf;
\item if $\ell(x) = (a,A)$ and neither $A(a) \in \Amc$ or $\top \sqsubseteq A \in \Tmc$, then one of the following holds:
\begin{itemize}
\item $x$ has successors $y_1,\ldots,y_n$ with $n \geq 1$ and $\ell(y_i) = (a,A_i)$ such that $\Tmc \models A_1 \sqcap \ldots \sqcap A_n \sqsubseteq A$;
\item $x$ has a single successor $y$ with $\ell(y) = (b,B)$ and there is $\exists r.B \sqsubseteq A \in \Tmc$ and $r(a,b) \in \Amc$, where $r$ is a (possibly inverse) role.
\end{itemize}
\end{itemize}

For proving the correctness of the constructed automaton later on, we need the following lemma, which is a special case of Lemma~29 in \cite{BieLuWo-IJCAI13}.

\begin{lemma}
\label{lem:AQderivation}
Let \Tmc be an \ELI-TBox in normal form, \Amc an ABox, $a \in \mn{ind}(\Amc)$ and $A \in \NC$. Then $\Tmc, \Amc \models A(a)$ if and only if there exists a derivation tree for $A(a)$.
\end{lemma}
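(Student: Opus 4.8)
The plan is to prove the two implications separately, with soundness (a derivation tree implies entailment) being routine and completeness (entailment implies a derivation tree) carrying all of the difficulty.

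For the ``if'' direction I would induct on the height of the derivation tree $(T,\ell)$ for $A(a)$ and show $\Tmc,\Amc\models A(a)$. At a leaf, either $A(a)\in\Amc$ (immediate) or $\top\sqsubseteq A\in\Tmc$ (so $a\in A^\Imc$ in every model $\Imc$ of $\Tmc$). At an inner node handled by the conjunction rule with children $(a,A_1),\dots,(a,A_n)$, the induction hypothesis gives $\Tmc,\Amc\models A_i(a)$ for all $i$, and since $\Tmc\models A_1\sqcap\cdots\sqcap A_n\sqsubseteq A$ we get $a\in A^\Imc$ in every common model. At an inner node handled by the existential rule, with child $(b,B)$, $r(a,b)\in\Amc$ and $\exists r.B\sqsubseteq A\in\Tmc$, the hypothesis gives $b\in B^\Imc$; reading $r(a,b)\in\Amc$ as $s(b,a)\in\Amc$ when $r=s^-$, the assertion forces $(a,b)\in r^\Imc$, whence $a\in(\exists r.B)^\Imc\subseteq A^\Imc$.

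For the ``only if'' direction I would pass to the universal model. By Lemma~\ref{lem:unimodelproperties}(3) applied to the CQ $A(x)$, $\Tmc,\Amc\models A(a)$ iff $a\in A^{\Umc_{\Amc,\Tmc}}$, i.e.\ iff $A(a)$ is produced by the chase. Writing $\tau_{\mathrm{der}}(a)=\{C\in\NC\mid C(a)$ has a derivation tree$\}$, the ``if'' direction already gives $\tau_{\mathrm{der}}(a)\subseteq\mn{tp}_{\Amc,\Tmc}(a)$, so it suffices to prove the reverse inclusion $\mn{tp}_{\Amc,\Tmc}(a)\subseteq\tau_{\mathrm{der}}(a)$ for every $a\in\mn{ind}(\Amc)$. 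The cleanest route is to build an interpretation $\Imc$ whose ABox part assigns to each $a\in\mn{ind}(\Amc)$ exactly the concept names in $\tau_{\mathrm{der}}(a)$ and keeps the role edges of $\Amc$, and whose anonymous part is the standard chase continuation, so that the subtree hanging below each $a$ coincides with the anonymous part of $\Umc_{\{C(a)\mid C\in\tau_{\mathrm{der}}(a)\},\Tmc}$. One checks $\Imc\models\Amc$ immediately (each $C(a)\in\Amc$ yields the one-node derivation tree $(a,C)$, a leaf), and then $\Imc\models\Tmc$; the latter gives $a\in A^\Imc$ whenever $\Tmc,\Amc\models A(a)$, which by construction means $A\in\tau_{\mathrm{der}}(a)$.

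The main obstacle is verifying $\Imc\models\Tmc$ at the ABox individuals. Closure of $\tau_{\mathrm{der}}$ under $\top$-inclusions and under $A_1\sqcap A_2\sqsubseteq A_3$ follows at once from the leaf and conjunction rules, and an inclusion $\exists r.A_1\sqsubseteq A_2$ fired by a witness $e\in A_1^\Imc$ that is itself an ABox individual is handled directly by the existential rule. The hard case is an \emph{anonymous} witness $e$ in the subtree below $a$: here one must show that the presence of $e$ forces the pure subsumption $\Tmc\models\bigsqcap\tau_{\mathrm{der}}(a)\sqsubseteq A_2$, so that the conjunction rule with the finitely many derivation trees for the $C\in\tau_{\mathrm{der}}(a)$ as children produces a derivation tree for $A_2(a)$. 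This is exactly where inverse roles bite: via rule~(ii) an anonymous element can acquire concept names from the type of its predecessor $a$, so the needed subsumption genuinely ranges over a conjunction of several concept names of $a$, which is the reason the conjunction rule is stated in the general form $\Tmc\models A_1\sqcap\cdots\sqcap A_n\sqsubseteq A$. The structural fact making it work is that every anonymous subtree hangs off a single ABox individual and, through inverse roles, interacts only with that individual; hence the subtree below $a$ in $\Imc$ is a sub-interpretation of the universal model $\Umc_{\{C(a)\mid C\in\tau_{\mathrm{der}}(a)\},\Tmc}$, and universality turns $a\in(\exists r.A_1)^\Imc$ into the desired subsumption. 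Since the statement is a special case of Lemma~29 of \cite{BieLuWo-IJCAI13}, I would either invoke that lemma or spell out this chase-based argument, sidestepping the apparent circularity in the definition of $\Imc$ by instead proving $\mn{tp}_{\Amc,\Tmc}(a)\subseteq\tau_{\mathrm{der}}(a)$ directly by induction on the chase stage at which $C(a)$ first appears.
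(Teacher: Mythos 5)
Your proposal is correct, but note that the paper itself gives no proof of this lemma at all: it is stated as ``a special case of Lemma~29 in \cite{BieLuWo-IJCAI13}'' and left at that. So the honest comparison is between a citation and an actual argument. Your soundness direction (induction on the height of the derivation tree, with the inverse-role reading of $r(a,b)$ as $s(b,a)$ when $r=s^-$) is routine and fine. For completeness, your chase-based argument correctly isolates the one non-routine step: a CI $\exists r.A_1\sqsubseteq A_2$ whose witness is an \emph{anonymous} element below $a$, where inverse roles let the anonymous subtree feed concept names back to $a$; reducing this to the pure subsumption $\Tmc\models\bigsqcap\tau_{\mathrm{der}}(a)\sqsubseteq A_2$ and discharging it with the general conjunction rule is exactly why that rule is stated with $\Tmc\models A_1\sqcap\cdots\sqcap A_n\sqsubseteq A$ rather than with a single CI. One point of care if you spell out the induction on chase stages: derivation trees are defined only for ABox individuals, so the induction needs a strengthened invariant for anonymous elements (e.g.\ that every concept name a chase-generated element carries at position $w$ below $a$ is a $\Tmc$-consequence of $\bigsqcap\tau_{\mathrm{der}}(a)$ at that position in the canonical tree over the singleton ABox $\{C(a)\mid C\in\tau_{\mathrm{der}}(a)\}$); you gesture at this but it is the place where the bookkeeping actually lives. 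What the citation buys the paper is brevity; what your argument buys is a self-contained proof that makes visible why the lemma survives the addition of inverse roles, which is not obvious from the $\EL$ case.
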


\smallskip
\noindent
\textbf{Two way alternating finite state automata.} We introduce \emph{two way alternating finite state automata (2AFAs)}. For any set $X$, let $\Bmc^+(X)$ denote the set of all positive Boolean formulas over $X$, i.e., formulas built using conjunction and disjunction over the elements of $X$ used as propositional variables, and where the special formulas $\mn{true}$ and $\mn{false}$ are admitted
as well. A 2AFA is a tuple $\Amf = (S, \Gamma, \delta, s_0)$, where $S$ is a finite set of \emph{states}, $\Gamma$ a finite alphabet, $\delta : S \times (\Gamma \cup \{\vdash, \dashv\} \rightarrow \Bmc^+(\{\mn{left},\mn{right},\mn{stay}\} \times S)$ the \emph{transition function} and $s_0 \in S$ the \emph{initial state}. The two symbols $\vdash$ and $\dashv$ are used as the left end marker and right end marker, respectively, and it is required that $\delta(s,\vdash) \in \Bmc^+(\{\mn{right}\}\times S)$ and $\delta(s,\dashv) \in \Bmc^+(\{\mn{left}\}\times S)$ for all $s \in S$ so that the 2AFA can never leave the space of the input word.

For an input word $w=w_1\ldots w_n \in \Gamma^n$, define $w_0 = \; \vdash$ and $w_{n+1} = \; \dashv$. A \emph{configuration} is a pair $(i,s) \in \{0,\ldots,n+1\} \times S$. An \emph{accepting run} of a 2AFA $\Amf = (S, \Gamma, \delta, s_0)$ on $w$ is a pair $(T,r)$ that consists of a finite tree $T$ and a labeling $r$ that assigns a configuration to every node in $T$ such that
\begin{enumerate}
\item $r(\varepsilon) = (1,s_0)$, where $\varepsilon$ is the root of $T$ and
\item if $m \in T$, $r(m)=(i,s)$, and $\delta(s,w_i)=\vp$, then
    there is a (possibly empty) set $V \subseteq \{\mn{left},\mn{right},\mn{stay}\} \times S$ such
    that $V$ (viewed as a propositional valuation) satisfies $\vp$ and for
    every $(\mn{left},s') \in V$ there is a successor of $m$ in $T$ labeled with $(i-1,s')$, for every $(\mn{right},s') \in V$ there is a successor of $m$ in $T$ labeled with $(i+1,s')$ and for every $(\mn{stay},s') \in V$ there is a successor of $m$ in $T$ labeled with $(i,s')$.
\end{enumerate}
The \emph{language accepted by a 2AFA} $\Amf$, denoted by $L(\Amf)$, is the set of all words $w \in \Gamma^*$ such that there is an accepting run of $\Amf$ on $w$. Note that there is no set of final states, acceptance is implicit via the transition function $\delta$ by using the formulas $\mn{true}$ and $\mn{false}$. In particular, if there is a leaf labeled $(i,s)$ in an accepting run, then $\delta(s,w_i) = \mn{true}$.

\smallskip
\noindent
\textbf{Construction of the 2AFA.} Let $Q = (\Tmc, \Sigma, q) \in (\ELI,\textnormal{CQ})$ with \mbox{$\mn{pw}(Q)=k$}, and let $\xbf=x_1 \cdots x_{\mn{ar}(q)}$ be the answer variables in $q$. We encode pairs $(\Amc,\abf)$ with \Amc a $\Sigma$-ABox of pathwidth at most $k$ and $\abf \in \mn{ind}(\Amc)^{\mn{ar}(q)}$ as words over a suitable finite alphabet $\Gamma$. Reserve a set $\mn{N} \subseteq \NI$ of $2k+2$ individual names. Then $\Gamma$ consists of all tuples $(\bbf,\Bmc,\cbf, f)$, where
\begin{itemize}

\item $\Bmc$ is a $\Sigma$-ABox with $|\mn{ind}(\Bmc)| \leq k$ that uses only individual names from $\mn{N}$,

\item $\bbf$ and $\cbf$ are tuples over $\mn{ind}(\Bmc)$ of arity at most $k$, and

\item  $f$ is a partial function from $\xbf$ to $\mn{ind}(\Bmc)$. 

\end{itemize}
Let $(\Amc,\abf)$ be a pair as described above with $\abf=a_1 \cdots a_{\mn{ar}(q)}$ and let $V_1,\ldots,V_n$ be a $(j,k+1)$ path decomposition of $\Amc$. We encode $(\Amc, \abf)$ by a word $(\bbf_1,\Bmc_1,\cbf_1, f_1)$ $\cdots(\bbf_n,\Bmc_n,\cbf_n,f_n)$ from $\Gamma^n$, as follows:
\begin{itemize}

\item As $\Bmc_1$, we use a copy of $\Amc |_{V_1}$ that uses only individual names from $\mn{N}$.

\item For $1 < i \leq n$, $\Bmc_i$ is a copy of $\Amc |_{V_i}$ that uses the same individual names as $\Bmc_{i-1}$ on $\mn{ind}(\Amc |_{V_{i-1}}) \cap \mn{ind}(\Amc |_{V_i})$ and otherwise only individual names from $\mn{N} \setminus \mn{ind}(\Bmc_{i-1})$.  Since bags have size at most $k+1$, $|\mn{N}|=2k+2$ individual names suffice.
%

\item  $\bbf_1 = \cbf_n$ is the empty tuple.
  
\item For $1 < i \leq n$, $\bbf_{i-1} = \cbf_i$ is the tuple that contains every individual from $\mn{ind}(\Bmc_{i-1}) \cap \mn{ind}(\Bmc_i)$ exactly once, ascending in some fixed order on $\mn{N}$. 

\item For $1 \leq i \leq n$, $f_i$ is defined as follows.  If $V_i$ contains a copy $a'_i$ of $a_i$, then $f_i(x_i)=a'_i$; otherwise $f_i(x_i)$ is undefined.
  
\end{itemize}
It is easy to see that $(\Abf,\abf)$ can be recovered from $w$, and in particular \abf from $f$. Note that different words over $\Gamma$ might encode the same pair $(\Amc,\abf)$, for example because we can choose different path decompositions, and there are words over $\Gamma$ that do not properly encode a pair $(\Amc,\abf)$.
Neither of this is problematic for the remaining proof.

We now construct a 2AFA $\Amf$ that accepts a word that encode a pair $(\Amc, \abf)$ if and only if $\Amc \models Q(\abf)$. The idea is that an accepting run of the automaton has one main path on which it traverses the word from left to right, while guessing a homomorphism $h$ from $q$ to $\Umc_{\Amc, \Tmc}$ with $h(\xbf)=\abf$ in a stepwise fashion. The truth of all concept memberships in $\Umc_{\Amc, \Tmc}$ that are necessary to realize this homomorphism is then checked by partial runs that branch off from the main path.

We now describe the set $S$ of states of $\Amf$. For the main path, we use states $s_{V,W}^g$ where $V \subseteq \mn{var}(q)$, $g:V \rightarrow \mn{N}$ is a partial function, and $W$ is a subset of the binary atoms in $q$. Informally, the meaning of the state $s_{V,W}^g$ is that the variables from $V$ have already been mapped to individuals in bags seen before, the binary atoms from $W$ are already satisfied via this mapping, and $g$ describes how variables are mapped to individuals that are in the intersection of the previous and the current bag. The initial state of $\Amf$ is $s_{\emptyset,\emptyset}^g$ with $g$ the empty map. We also use states $s_A^a$ that make sure that the concept name $A$ can be derived at $a \in \mn{N}$.

We have to take care of the fact that a homomorphism from $q$ to $\Umc_{\Amc, \Tmc}$ can map existentially quantified variables to anonymous individuals, which are not explicitly represented in the input word. Let \Bmc be a $\Sigma$-ABox. A \emph{partial $q$-match} in $\Bmc$ is a partial function $h : \mn{var}(q) \rightarrow \mn{ind}(\Bmc) \times \{\mn{named},\mn{anon}\}$ such that if $x,y \in \mn{dom}(h)$, $r(x,y) \in q$ and $h_2(x) = h_2(y) = \mn{named}$, then $r(h_1(x),h_1(y)) \in \Bmc$, where $h_1$ and $h_2$ are the projections of $h$ to the first and second component, respectively.  Informally, a partial $q$-match $h$ partially describes a homomorphism $g$ from $q$ to $\Umc_{\Bmc,\Tmc}$ where $h(x)=(a,\mn{named})$ means that $g(x)=a$ and $h(x)=(a,\mn{anon})$ means that $g(x)$ is some element in the subtree below $a$ generated by the chase.  Whether a part of the query can map into the anonymous part below some individual $a$ only depends on the type realized at~$a$.  Define a relation $\Rmc \subseteq \mn{TP} \times 2^{\mn{var}(q)} \times 2^{\mn{var}(q)}$ by putting $(t,V_1,V_2) \in \Rmc$ if and only if $V_1 \subseteq V_2$, $V_2 \cap \xbf \subseteq V_1$ and there is a homomorphism from $q |_{V_2}$ to the universal model of the ABox $\{A(a) \mid A \in t\}$ and \Tmc that maps precisely the variables from $V_1$ to the root of the (tree-shaped) universal model.

An \emph{explanation set} for a partial $q$-match $h : \mn{var}(q) \rightarrow \mn{ind}(\Bmc) \times \{\mn{named},\mn{anon}\}$ is a set $Z$ of concept assertions that uses only individuals from $\mn{ind}(\Bmc)$ and satisfies the following conditions:
\begin{enumerate}

\item if $h(x) = (a, \mn{named})$, then $\Bmc \cup Z, \Tmc \models A(h(x))$ for all $A(x) \in q$ and

\item if $h(x) = (a,\mn{anon})$, then 
  $(\{A \mid A(a) \in Z\}, h^{-1}(a,\mn{named}), h_1^{-1}(a)) \in \Rmc.$
  
\end{enumerate}
Next, we describe the transition function $\delta$. The following transitions are used for the main branch of automata runs:
\[\delta(s_{V,W}^g,(\bbf,\Bmc,\cbf,f)) = \bigvee_{h \in H} \left( (\mn{right},s_{V_h,W_h}^{g_h}) \wedge \bigvee_{Z \in \Zmc_h} \left( \bigwedge_{A(a) \in Z} (\mn{stay},s_A^a)  \right) \right) \]
where $H$ is the set of all partial $q$-matches $h$ for $\Bmc$ such that $\mn{dom}(g) \subseteq \mn{dom}(h)$, $h_1$ and $g$ agree on the intersection of their domains,
and so do $h_1$ and $f$, and where
\begin{itemize}
\item $g_h$ is $h_1$ restricted to answer variables $x_i$ with $h_1(x_i)$ in~$\cbf$,
\item $V_h = V \cup \mn{dom}(h)$,
\item $V \cap \mn{dom}(h) = \mn{dom}(g)$,
\item $W_h$ is the union of $W$ and all binary atoms from $q$ that only use variables from $h$, and
\item $\Zmc_h$ is the set of all explanation sets for $h$.
\end{itemize}
When the automaton reads the right end marker $\dashv$ and is in a state signifying that a complete homomorphism from $q$ to $\Umc_{\Amc,\Tmc}$ has been found, then we accept the input using the transition $\delta(s_{V,W}^g,\dashv) = \mn{true}$
where $V = \mn{var}(q)$, $W$ is the set of all binary atoms of $q$, and $g$ is
the empty map.
 
The following transitions are used to verify the required concept memberships by checking for the existence of a suitable derivation tree (Lemma~\ref{lem:AQderivation}). Consider a state $s_A^a$ and a symbol $(\bbf, \Bmc, \cbf, f)$ such that $a \in \mn{ind}(\Bmc)$. If $A(a) \in \Bmc$, we set $\delta(s_A^a,(\bbf,\Bmc,\cbf,f)) = \mn{true}$. If $a$ appears neither in $\bbf$ nor in $\cbf$:
\[\delta(s_A^a,(\bbf,\Bmc,\cbf,f)) = \bigvee_{\substack{Z \\ \Bmc \cup Z, \Tmc \models A(a)}} \bigwedge_{B(b) \in Z}(\mn{stay},s_B^b) \]
If $a$ appears in $\bbf$ but not in $\cbf$:
\[\delta(s_A^a,(\bbf,\Bmc,\cbf,f)) = (\mn{left},s_A^a) \vee \bigvee_{\substack{Z \\ \Bmc \cup Z, \Tmc \models A(a)}} \bigwedge_{B(b) \in Z}(\mn{stay},s_B^b) \]
If $a$ appears in $\cbf$ but not in $\bbf$:
\[\delta(s_A^a,(\bbf,\Bmc,\cbf,f)) = (\mn{right},s_A^a) \vee \bigvee_{\substack{Z \\ \Bmc \cup Z, \Tmc \models A(a)}} \bigwedge_{B(b) \in Z}(\mn{stay},s_B^b) \]
If $i$ appears in both $\bbf$ and $\cbf$:
\[\delta(s_A^a,(\bbf,\Bmc,\cbf,f)) = (\mn{left},s_A^a) \vee (\mn{right},s_A^a) \vee \bigvee_{\substack{Z \\ \Bmc \cup Z, \Tmc \models A(a)}} \bigwedge_{B(b) \in Z}(\mn{stay},s_B^b) \]
Set $\delta(\cdot,\cdot) = \mn{false}$ for all pairs from $S \times \Gamma$ that were not mentioned.

\smallskip
\noindent
The automaton is now defined as $\Amf = (S, \Gamma, \delta, s_0)$.

\begin{lemma} \label{lem:2afacorrect} Let 
  \Amc be an ABox of pathwidth at most $k$, $\abf \in \mn{ind}(\Amc)^{\mn{ar}(q)}$, and $w \in \Gamma^*$ a word that encodes $(\Amc, \abf)$.
Then $\Amc \models Q(\abf)$ if and only if $w \in L(\Amf)$.  \end{lemma}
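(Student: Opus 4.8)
The plan is to reduce the statement to the existence of a homomorphism and then to verify separately that the two layers of $\Amf$ do what they are meant to: the main left-to-right branch assembles a homomorphism from $q$ into the named part of $\Umc_{\Amc,\Tmc}$ together with the anonymous trees hanging off it, while the branches entering states $s_A^a$ certify the concept memberships needed to realize that homomorphism. By Point~3 of Lemma~\ref{lem:unimodelproperties}, $\Amc \models Q(\abf)$ holds iff there is a homomorphism $h$ from $q$ to $\Umc_{\Amc,\Tmc}$ with $h(\xbf)=\abf$, so it suffices to put such homomorphisms in correspondence with accepting runs of $\Amf$ on $w$.

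First I would isolate and prove a self-contained correctness statement for the verification states: for every bag position $i$ and every individual $a$ occurring in $\Bmc_i$, there is an accepting sub-run of $\Amf$ from configuration $(i,s_A^a)$ if and only if $\Tmc,\Amc \models A(a)$. The interesting direction is proved by induction on a derivation tree for $A(a)$, which exists by Lemma~\ref{lem:AQderivation}. A node applying a role rule $\exists r.B \sqsubseteq A$ uses a role assertion $r(a,b)\in\Amc$; by the path-decomposition property this assertion lies in some bag containing both $a$ and $b$, and the $\mn{left}$/$\mn{right}$ moves let the automaton walk to that bag, while the $\mn{stay}$ branching into an explanation set $Z$ with $\Bmc_i \cup Z,\Tmc \models A(a)$ realizes the TBox steps that are local to the current bag. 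Conversely, an accepting sub-run is read bottom-up into a derivation tree, using that each $\mn{stay}$-branch is a sound local entailment and each horizontal move merely relocates the same individual.

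With this lemma in hand, the ``$\Rightarrow$'' direction turns a homomorphism $h$ into an accepting run. I would classify each variable as \emph{named} if $h(x)\in\mn{ind}(\Amc)$ and \emph{anon} otherwise, recording for each anon variable the individual $a$ at the root of the anonymous tree containing $h(x)$. On the main path, a variable is committed at the bags whose individual set contains its associated individual; the path-decomposition property guarantees these bags form a contiguous block, so the carry-over function $g$ threads the assignment consistently, and for every binary atom $r(x,y)\in q$ with both endpoints named the two images occur together in some bag, so both variables can be placed by one partial $q$-match with $r(h_1(x),h_1(y))\in\Bmc_i$. For an anon variable below $a$, the restriction of $h$ to the variables sent into that tree witnesses membership of the relevant triple in $\Rmc$, yielding a legal explanation set; the concept assertions in the chosen explanation sets and the unary atoms $A(x)\in q$ on named variables hold in $\Umc_{\Amc,\Tmc}$ and are discharged by the verification sub-runs. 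At the right end marker one reaches $V=\mn{var}(q)$, $W$ the full set of binary atoms, and $g$ empty, so the accepting transition fires. The ``$\Leftarrow$'' direction reverses this: the main path supplies partial $q$-matches whose union defines $h$, with named variables inheriting their individual and each anon variable mapped via the homomorphism of a query fragment into the anonymous subtree supplied by its $\Rmc$-triple; binary atoms with two named endpoints are satisfied because the partial $q$-match forces the corresponding role assertion into a bag and hence into $\Amc$, mixed and anon-anon atoms are absorbed into the $\Rmc$-fragments, and unary atoms on named variables hold by the verification lemma.

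The main obstacle I anticipate is exactly the verification lemma for the $s_A^a$ states, where one must show that the automaton's horizontal walking plus local explanation sets captures arbitrary derivation trees over a path decomposition and, conversely, admits no spurious acceptance. The remaining bookkeeping---which variables are committed in which bag and how $g$ is threaded---is essentially forced by the path-decomposition axioms together with the constraint $V\cap\mn{dom}(h)=\mn{dom}(g)$, but it still requires care to ensure that the piecewise partial $q$-matches glue into a single global homomorphism rather than several mutually incompatible ones.
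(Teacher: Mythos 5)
Your proposal follows essentially the same route as the paper's proof: it isolates the same correctness claim for the verification states $s_A^a$ (proved by induction on derivation trees in one direction and on the run in the other, using the path-decomposition connectivity property to walk to the bag containing the relevant role assertion), and then converts a homomorphism into an accepting run and back via the same named/anon classification of variables, explanation sets, and the relation $\Rmc$. The obstacles you flag are precisely the points the paper's proof spends its effort on, so the plan is sound and matches the intended argument.
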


\begin{proof}
We start by proving that the states of the form $s_A^a$, used for checking the existence of a derivation for $A(a)$, work as in intended.
\\[2mm]
\textbf{Claim.} Let $a \in V_i$ and $a'$ the name of its copy in the $\Bmc_i$. Then there exists a successful run starting from the configuration $(s_A^{a'},i)$ if and only if $\Amc, \Tmc \models A(a)$.
\\[2mm]
First, assume that $\Amc, \Tmc \models A(a)$. We have to construct a successful run starting from the configuration $(i, s_A^{a'})$. By Lemma~\ref{lem:AQderivation}, there exists a derivation tree for $A(a)$. The statement can be proved by induction on the minimal number $k$ such that $A(a)$ has a derivation tree of depth $k$. If $k=0$, then $A(a) \in \Amc$, so $A(a') \in \Bmc_i$, and in this case we have $\delta(s_A^a, (\bbf_i, \Bmc_i, \cbf_i, f_i)) = \mn{true}$, which means the run is successful. Now let $k>0$ and consider a derivation tree for $A(a)$ of depth $k$.
\begin{itemize}
\item If the children of the root are of the form $B_1(a),\ldots,B_n(a)$ such that $\Tmc \models B_1 \sqcap \ldots \sqcap B_n \sqsubseteq A$, then choose the set $Z = \{B_1(a'),\ldots,B_n(a')\}$ in the transition, so in the run, we add the children labeled with $(i,s_{B_j}^{a'})$ for all $1 \leq j \leq n$. By induction hypothesis, from all these configurations there exists a successful run, so these runs can be combined to obtain a successful run for $(i,s_A^{a'})$.
\item If the root has one child labeled $B(b)$ and we have $\Tmc \models \exists r.B \sqsubseteq A$, then there exist $b \in \mn{ind}(\Amc)$ and $r(a,b) \in \Amc$. This individual $b$ does not necessarily lie in $\Bmc_i$, but by the properties of a path decomposition, there exists a bag $V_j$ such that $a,b \in V_j$ and, since $a \in V_i$, we also have $a \in V_k$ for all $k$ between $i$ and $j$. We extend the run as follows: If $j<i$, then use the transition $(\mn{left},s_A^{a'})$ for $i-j$ times. If $j>i$, then use the transition $(\mn{right},s_A^{a'})$ for $j-i$ times. Then we are in the configuration $(j,s_A^{a'})$ and if we choose $Z = \{B(b')\}$, we can extend the run successfully by the induction hypothesis.
\end{itemize}

For the other direction, assume that there is a successful run starting from the configuration $(i,s_A^{a'})$. We have to argue that $\Amc, \Tmc \models A(a)$. The proof is by induction on the depth of the run. If the run has depth $0$, i.e.\ the configuration $(i,s_A^{a'})$ does not have any successors, then we must have $\delta(s_A^{a'}, (\bbf_i, \Bmc_i, \cbf_i, f_i)) = \mn{true}$. This is only the case if $A(a') \in \Bmc_i$, so $A(a) \in \Amc$ and clearly, $\Tmc, \Amc \models A(a)$. Now assume the run has depth $k>0$. If the root node has a successor labeled $(i-1,s_A^{a'})$ or $(i+1,s_A^{a'})$, by induction hypothesis we have $\Amc, \Tmc \models A(a)$. If the root node does not have a successor of this kind, then there exists a set $Z$ and successors $(i,s_B^{b'})$ for all $B(b') \in Z$ such that $\Bmc \cup Z \models A(a')$. By induction hypothesis, we have $\Amc, \Tmc \models B(b)$ for all $B(b') \in Z$. Together, this gives $\Amc, \Tmc \models A(a)$. This finishes the proof of the claim.

Now we are ready to prove the lemma.

``$\Rightarrow$''. Let $\Amc \models Q(\abf)$ and let $(\bbf_1,\Bmc_1,\cbf_1,f_1) \ldots (\bbf_n,\Bmc_n,\cbf_n,f_n)$ be an encoding of $(\Amc, \abf)$ based on some $(j,k+1)$ path decomposition $V_1,\ldots,V_n$ of $\Amc$. There exists a homomorphism $h_0$ from $q$ to $\Umc_{\Amc,\Tmc}$ that maps the answer variables to $\abf$. We use $h_0$ to guide the accepting run of $\Amf$ on the word encoding $(\Amc, \abf)$. In the $i$-th step of the main branch of the run, always choose the partial $q$-match $h \in H$ according to $h_0$, i.e.\ if $h_0(x) = a \in \mn{ind}(\Amc) \cap V_i$ then $h(x) = (a',\mn{named})$, and if $h_0(x) = b$ for some anonymous individual $b$ that lies in the subtree below some $c \in \mn{ind}(\Amc)$ then $h(x) = (c',\mn{anon})$. As the explanation set for $h$ we can just choose $Z_h = \{A(a') \mid a \in V_i \text{ and } \Tmc, \Amc \models A(a)\}$.

We argue that following these choices, the main path will be successful, i.e.\ the leaf of the main branch is labeled with $(s_{V,W}^g, \dashv)$ such that $V = \mn{var}(q)$, $W$ is the set of all binary atoms of $q$ and $g$ the empty map. Let $x \in \mn{var}(q)$. Then either $h_0(x) \in \mn{ind}(\Amc)$ or $h_0(x)$ is an anonymous individual below some $b \in \mn{ind}(\Amc)$. If $h_0(x) \in \mn{ind}(\Amc)$, then let $V_i$ be the first bag such that $h_0(x) \in V_i$ and thus there is a copy of $h_0(x)$ in $\mn{ind}(\Bmc_i)$. Thus, in the $i$-th step of the main branch, $x$ is added to $V$. Similarly, if $h_0(x)$ is an anonymous individual below some $b \in \mn{ind}(\Amc)$, then let $V_i$ be the first bag such that $b \in V_i$. Again, one can conclude that $x$ is added to $V$ in the $i$-th step of the main branch. Overall, it follows that $V = \mn{var}(q)$. Now, let $r(x,y)$ be a binary atom from $q$. If both $h_0(x)$ and $h_0(y)$ are in $\mn{ind}(\Amc)$, then, since $V_1,\ldots,V_n$ is a path decomposition of $\Amc$, there exists a bag $V_i$ such that $h_0(x),h_0(y) \in V_i$, so there exists a copy of $r(h_0(x),h_0(y))$ in $\Bmc_i$ and in the $i$-th step of the main branch, $r(x,y)$ is added to $W$. If at least one of $h_0(x)$ and $h_0(y)$ is not in $\mn{ind}(\Amc)$, but is an anonymous individual below some $b \in \mn{ind}(\Amc)$, then either both $h_0(x)$ and $h_0(y)$ are mapped to anonymous individuals below $b$ or one of them is mapped to $b$. In any case, $r(x,y)$ is added to $W$ in the $i$-th step of the main branch. Overall, it follows that $W$ is the set of all binary atoms of $q$. Finally, $g$ must be the empty map, since $\cbf_n = \emptyset$.

It follows immediately from the claim above that the other paths will be successful as well, i.e.\ whenever $\Amc, \Tmc \models A(a)$ for some $a \in V_i$, then there is a successful run that starts at $(s_A^{a'}, i)$. This concludes the proof of the first direction.

``$\Leftarrow$''. Assume that there is a successful run of $\Amf$ on the input word $w=(\bbf_1,\Bmc_1,\cbf_1,f_1) \ldots (\bbf_n,\Bmc_n,\cbf_n,f_n)$. The run must have one main path with states of the form $s_{V,W}^g$. In every step of the main path, one partial $q$-match $h$ together with an explanation set $Z_h$ is chosen. From these partial $q$-matches we can construct a map $h_0$ from $\mn{var}(q)$ to the universal model of $\Amc$ and $\Tmc$ in the following way: Whenever a partial $q$-match $h$ maps a variable $x$ to $(a,\mn{named})$, we set $h_0(x) = a$. Whenever a partial $q$-match $h$ maps a variable $x$ to $(a,\mn{anon})$, then consider the explanation set $Z_h$. By the definition of the explanation set, we have $(\{B \mid B(a) \in Z_h\}, h^{-1}(a,\mn{named}), h_1^{-1}(a)) \in \Rmc$, so there exists a homomorphism from $h_1^{-1}(a)$ to the canonical model of $\{B \mid B(a) \in Z_h\}$ that maps precisely the variables from $h^{-1}(a,\mn{named})$ to the root, which is the homomorphism we use to build $h_0$. From the condition $V \cap \mn{dom}(h) = \mn{dom}(g)$ it follows that a partial $q$-match chosen later in the run will not assign a different image to a variable that has appeared earlier in the domain of a partial $q$-match, so $h_0$ is well defined.

We show that $h_0$ is indeed a homomorphism from $q$ to $\Umc_{\Amc, \Tmc}$ with $q(\xbf) = \abf$. Since the main branch ends in a configuration $(s_{V,W}^g, \dashv)$, where $V = \mn{var}(q)$, we know that $\mn{dom}(h_0) = \mn{var}(q)$. We argue that every atom of $q$ is satisfied by $h_0$.
\begin{itemize}
\item Let $A(x)$ be a unary atom from $q$ such that $h_0(x) \in \mn{ind}(\Amc)$. Let $h$ be the partial $q$-match that determined $h_0(x)$, so $h(x) = (a', \mn{named})$ for some $a' \in \mn{ind}(\Bmc)$, and let $Z_h$ be the explanation set chosen in the run, so the configuration has children labeled $(i,s_B^b)$ for every $B(b) \in Z_h$. Since the run is successful and we know from the claim above that a partial run starting from the configuration $(i,s_B^{b'})$ is successful if and only if $\Tmc, \Amc \models B(b)$, we have $\Tmc, \Amc \models B(b)$ for all $B(b') \in Z_h$ and thus, $\Tmc, \Amc \models A(a)$.
\item Let $r(x,y)$ be a binary atom from $q$ such that both $h_0(x)$ and $h_0(y)$ are in $\mn{ind}(\Amc)$. Since the main branch ends in the state $s_{V,W}^g$, where $W$ is the set of all binary atoms from $q$, there must be one step in the main branch, where $r(x,y)$ has been added to $W$, say the $i$-th step. This means that both $x$ and $y$ lie in $\mn{dom}(h)$, where $h$ is the partial $q$-match chosen in the $i$-th step. Since $h$ respects role atoms, we have $r(h_0(x),h_0(y)) \in \Amc$.
\item Let $A(x)$ be a unary atom from $q$ such that $h_0(x)$ is an anonymous individual below some $a \in \mn{ind}(\Amc)$. Let $h$ be the partial $q$-match that determined $h_0(x)$, so $h(x) = (a', \mn{anon})$, and let $Z_h$ be the explanation set chosen in the run. By definition of an explanation set, there is a partial homomorphism from $q$ with $x$ in its domain to the universal model of $\{B(a) \mid B(a) \in Z_h\}$, which was used to define $h_0$ on $x$, so we have $\Tmc, \Amc \models A(h_0(x))$.
\item Let $r(x,y)$ be a binary atom from $q$ such that at least one of $h_0(x)$ and $h_0(y)$ is an anonymous individual. Then the argument is similar to the previous case.
\end{itemize}
\end{proof}

\smallskip
\noindent
\textbf{Construction of the linear Datalog program.} Since every 2AFA can be transformed into an equivalent deterministic finite automaton (DFA) \cite{GeffertO14}, Lemma~\ref{lem:2afacorrect} also ensures the existence of a DFA $\Amf = (Q,\Sigma,\delta,s_0,F)$ that a word that encodes a pair $(\Amc, \abf)$
if and only if $\Amc \models Q(\abf)$. We use $\Amf$ to construct the desired linear Datalog rewriting of $Q$.

The idea for the program is to guess a tuple $\abf \in \mn{ind}(\Amc)^{\mn{ar}(q)}$ up front and then verify that $\Amc \models Q(\abf)$ by simulating $\Amf$. The program uses the states of \Amf as IDBs. Each of these IDBs can appear in any arity between $\mn{ar}(q)$ and $\mn{ar}(q) + k$ with $k$ the pathwidth of $Q$---technically, this means that we have $k+1$ different IDBs for every state, but we use the same symbol for all of them since the arity will always be clear from the context. The first $\mn{ar}(q)$ components of each IDB are used to store the tuple $\abf$ while the other components are used to store the individuals that occur in both of two consecutive bags of a path decomposition of $\Amc$.

\smallskip
\noindent
Start rules: Given the ABox $\Amc$, the program starts by guessing a tuple $\abf\in \mn{ind}(\Amc)^{\mn{ar}(q)}$ using the following rule:
\[s_0(x_1,\ldots,x_n) \leftarrow \top(x_1) \wedge \top(x_2) \wedge \cdots \wedge \top(x_n).\]

\smallskip
\noindent
Transition rules: Consider any transition $\delta(s_1,(\bbf,\Bmc,\cbf,f)) = s_2$. Let $\varphi_\Bmc$ be \Bmc viewed as a conjunction of atoms with individual names viewed as variables and let $\xbf'$ be obtained from $\xbf$ by replacing every variable $x_i \in \mn{dom}(f)$ by the individual name $f(x_i) \in \mn{ind}(\Bmc)$, also here viewed as a variable. We then include the following rule:
\[s_2(\xbf',\cbf) \leftarrow s_1(\xbf',\bbf) \wedge \varphi_\Bmc.\]
This rule says that if the DFA is in state $s_1$, the intersection between the last bag and the current bag is $\bbf$, and we see a homomorphic image of $\Bmc$, then the DFA can transition into state $s_2$ and remember the tuple $\cbf$. Applying such a rule leaves the tuple \abf stored in the first $\mn{ar}(q)$ components unchanged, but some of the variables in $\xbf'$ can appear in $\varphi_\Bmc$ to enforce that \abf is
compatible with the mapping of the answer variables that is prescribed by $f$ and
used in the simulated run of \Amf.

\smallskip
\noindent
Goal rules: If $s \in F$, then include the following rule:
\[\mn{goal}(\xbf) \leftarrow s(\xbf). \]

\begin{lemma}
  \label{lem:ldlogprogram}
  $\Pi$ is a rewriting of $Q$.
\end{lemma}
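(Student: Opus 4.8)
The plan is to prove the two inclusions of $\Amc \models Q(\abf) \Leftrightarrow \Amc \models \Pi(\abf)$ by setting up a tight correspondence between derivations of $\Pi(\abf)$ in $\Amc$ and pairs consisting of an accepting run of the DFA $\Amf$ on a word $w$ together with a homomorphism from the ABox encoded by $w$ into $\Amc$. The starting observation is that, because $\Pi$ is linear, every derivation of $\Pi(\abf)$ in $\Amc$ consists of a single IDB-spine: its root $\mn{goal}(\abf)$ is produced by a goal rule (forcing the unique IDB child $s(\abf)$ with $s \in F$), each subsequent spine edge applies a transition rule whose only non-leaf child is again an IDB atom, and the bottom applies the start rule. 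Reading the spine off yields a sequence of transition rules for symbols $\sigma_n,\dots,\sigma_1$, hence a word $w=\sigma_1\cdots\sigma_n$ together with a run $s_0 \to^{\sigma_1}\cdots\to^{\sigma_n} s$ of $\Amf$, and this run is accepting since its last state $s\in F$ and it begins in $s_0$.

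For the direction ``$\Amc \models \Pi(\abf) \Rightarrow \Amc \models Q(\abf)$'', I would take such a derivation and collect the substitutions used in the transition rules into one map. Each transition rule for $\sigma_i=(\bbf_i,\Bmc_i,\cbf_i,f_i)$ contributes the EDB atoms $\varphi_{\Bmc_i}$, which must be satisfied as facts of $\Amc$ at the leaves; gluing the bags $\Bmc_i$ along the shared intersection variables $\cbf_i=\bbf_{i+1}$ that are threaded through consecutive IDB atoms yields an ABox $\Amc_w$ together with a homomorphism $g:\Amc_w\to\Amc$. Because identifications occur only between consecutive bags and the first $\mn{ar}(q)$ IDB components are pinned to $\abf$ throughout, the sets $\mn{ind}(\Bmc_i)$ form a path decomposition of $\Amc_w$ of width $\le k$, and $g$ maps the answer tuple $\abf'$ marked by the $f_i$ to $\abf$. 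Thus $w$ is a genuine encoding of a pair $(\Amc_w,\abf')$ of pathwidth $\le k$, so $w\in L(\Amf)$ gives $\Amc_w \models Q(\abf')$ by Lemma~\ref{lem:2afacorrect}. Invariance of OMQ answers under homomorphisms between ABoxes (which follows from Lemma~\ref{lem:unimodelproperties}, as $g$ lifts to a homomorphism $\Umc_{\Amc_w,\Tmc}\to\Umc_{\Amc,\Tmc}$) then yields $\Amc \models Q(\abf)$.

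For the converse, I would start from $\Amc \models Q(\abf)$ and use the standing assumption $\mn{pw}(Q)=k$ to obtain a $\Sigma$-ABox $\Amc^*$ of pathwidth at most $k$, a homomorphism $g:\Amc^*\to\Amc$ that is the identity on $\abf$, and $\Amc^* \models Q(\abf)$. Fixing a $(j,k+1)$-path decomposition and encoding $(\Amc^*,\abf)$ as a word $w$, Lemma~\ref{lem:2afacorrect} gives $w\in L(\Amf)$, i.e.\ an accepting DFA run $s_0\to^{\sigma_1}\cdots\to^{\sigma_n} s\in F$. This run is turned into a derivation of $\Pi(\abf)$ in $\Amc$ by applying the transition rule of each $\sigma_i$ along a spine and instantiating the bag-individuals of $\Bmc_i$ via $g$; the EDB atoms $\varphi_{\Bmc_i}$ then hold in $\Amc$ because $g$ is a homomorphism, the $\top$-atoms of the start rule hold vacuously, and the goal rule applies because $s\in F$. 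The threading of the intersection tuples makes the IDB arguments of adjacent spine nodes agree (the shared individuals of $\Bmc_i$ and $\Bmc_{i+1}$ receive the same $g$-image), so the spine is a well-formed instantiation and $\Amc\models\Pi(\abf)$ follows.

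The main obstacle I expect is the bookkeeping in the middle step: verifying that the intersection-tuple threading in the transition rules realizes \emph{exactly} the glueing of the bags $\Bmc_i$ into an ABox whose natural bag-sequence is a path decomposition of width $\le k$, and that the EDB body atoms assemble into a single coherent homomorphism respecting the pinned answer components. The remaining ingredients are comparatively routine, namely the structural observation about linear derivations having a single IDB-spine and the homomorphism-invariance of OMQ answers via universal models.
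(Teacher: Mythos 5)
Your proposal is correct and follows essentially the same route as the paper: both directions go through the correspondence between the IDB-spine of a linear derivation and an accepting run of $\Amf$ on a word encoding a bounded-pathwidth ABox, combined with Lemma~\ref{lem:2afacorrect} and preservation of answers under ABox homomorphisms. The only cosmetic differences are that you build the derivation directly in $\Amc$ by composing with the homomorphism $g$ (where the paper builds it in the pathwidth-$k$ witness and then invokes preservation of Datalog answers under homomorphisms), and that the bag-glueing bookkeeping you flag as the main obstacle is exactly what the paper packages as the ABox $\Amc_D$ associated with a derivation together with Lemma~\ref{lem:nicestructure}.
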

\begin{proof}
  Let \Amc be a $\Sigma$-ABox and let $\abf \in \mn{ind}(\Amc)^{\mn{ar}(q)}$. First
  assume that $\Amc \models Q(\abf)$. Since $Q$ is of pathwidth $k$, there must be a $\Sigma$-ABox $\Amc'$ of pathwidth at most~$k$ such that $\Amc' \models Q(\abf)$ and there is a homomorphism from $\Amc'$ to $\Amc$ that is the identity on \abf.
  Let $w=w_1\ldots w_n \in \Gamma^*$ encode the pair $(\Amc', \abf)$
  where $w_i = (\bbf_i,\Bmc_i,\cbf_i,f_i)$, and assume that $w \in L(\Amf)$.
  There is an accepting run of \Amf on $w$ and thus we
  find states $s_0,\ldots,s_n$ of $\Amf$ such that $\delta(s_i,w_i) = s_{i+1}$ for $0 \leq i < n$ and $s_n$ is an accepting state. This yields a derivation of $\Pi(\abf)$ in $\Amc'$, as follows. First, use the start rule to derive $s_0(\abf)$. For the next $n$ steps, use the rule introduced for the transitions $\delta(s_i,w_i) = s_{i+1}$. In this way, we derive $s_n(\abf)$ since the individuals in the first $\mn{ar}(q)$ components do not change when using the transition rules. Because $s_n \in F$, a goal rule can be applied to derive $\mn{goal}(\abf)$ and thus $\Amc' \models \Pi(\abf)$. It is well-known that answers to Datalog programs are preserved under ABox homomorphisms \cite{AbiteboulHV95} and there is a homomorphism from $\Amc'$ to \Amc that is the identity on \abf, we obtain $\Amc \models \Pi(\abf)$ as desired.

  For the converse direction, assume that $\Amc \models \Pi(\abf)$. Then there is a derivation $D$ of $\Pi(\abf)$ in \Amc. Since $\Pi$ is of diameter at most $k$, $\Amc_D$ has pathwidth at most~$k$. Consider the encoding of $(\Amc_D,\abf)$ as a word $w \in \Gamma^*$, based on the path decomposition induced by $D$ in the obvious way. By construction of $\Pi$, $D$ must use a start rule for $\abf$, then a number of transition rules, and then a goal rule. Using the way in which these rules
  are constructed, it can be verified that this yields an accepting run of \Amf on
  $w$. Thus $\Amc_D \models Q(\abf)$. It remains to recall that there is a homomorphism from $\Amc_D$ to \Amc that is the identity on \abf, and that answers to OMQs from $(\ELI,\textnormal{CQ})$ are preserved under ABox homomorphisms~\cite{BienvenuCLW14}.  
\end{proof}

\section{The Trichotomy for Disconnected CQs}
\label{sec:disconnected}

We now lift the trichotomy result that is provided by Theorems~\ref{thm:AC0NL} and~\ref{thm:NLPTime} from connected CQs to unrestricted CQs. To achieve this, we show that the complexity of an OMQ $Q=(\Tmc,\Sigma,q)$ with $q$ a disconnected CQ is precisely the complexity of the hardest OMQ $(\Tmc,\Sigma,q')$ with $q'$ a maximal connected component (MCC) of $q$, provided that we first have removed redundant MCCs from $q$.

Let $Q=(\Tmc, \Sigma, q) \in (\EL, \textnormal{CQ})$. We say that $Q$ is \emph{empty} if $\Amc \not\models Q(\abf)$ for all $\Sigma$-ABoxes \Amc and tuples \abf. Every
empty OMQ is trivially FO-rewritable. An MCC of $q$ is \emph{Boolean} if it contains no answer variables. We call $Q$ \emph{redundant} if there is a Boolean MCC of $q$ such that the OMQ obtained from $Q$ by dropping that MCC from $q$ is equivalent to $Q$. For proving the intended trichotomy result, it is clearly sufficient to consider OMQs that are non-empty and non-redundant.
\begin{theorem}
\label{thm:AC0NLdisc}
Let $Q \in (\EL, \textnormal{CQ})$. Then either 
\begin{enumerate}
\item $Q$ is FO-rewritable and thus {\sc eval}$(Q)$ is in $\AC^0$ or
\item $Q$ is not FO-rewritable and {\sc eval}$(Q)$ is $\NL$-hard under FO reductions.
\end{enumerate}
\end{theorem}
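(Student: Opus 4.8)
The goal is to prove Theorem~\ref{thm:AC0NLdisc}: for a non-empty, non-redundant OMQ $Q=(\Tmc,\Sigma,q)\in(\EL,\textnormal{CQ})$ with possibly disconnected $q$, either $Q$ is FO-rewritable (hence in $\AC^0$) or it is not FO-rewritable and {\sc eval}$(Q)$ is $\NL$-hard. Since Theorem~\ref{thm:AC0NL} already establishes this dichotomy for \emph{connected} CQs, the plan is to reduce the disconnected case to the connected case by analyzing the maximal connected components (MCCs) of $q$.

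\textbf{The reduction strategy.} Write $q = q_1 \wedge \cdots \wedge q_n$ where the $q_i$ are the MCCs of $q$, and let $Q_i = (\Tmc,\Sigma,q_i)$ be the corresponding OMQs. The key semantic observation is that, because answers to $Q$ are characterized via homomorphisms into the universal model $\Umc_{\Amc,\Tmc}$ (Lemma~\ref{lem:unimodelproperties}(3)), a tuple $\abf$ is an answer to $Q$ on $\Amc$ exactly when each component $q_i$ is simultaneously satisfiable by a homomorphism respecting the assignment of $\abf$ to the answer variables of $q_i$. The non-Boolean MCCs each carry their own answer variables, so they behave essentially independently; the crux is the interaction with Boolean MCCs, which can map anywhere in $\Umc_{\Amc,\Tmc}$. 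First I would argue that $Q$ is FO-rewritable if and only if every $Q_i$ is FO-rewritable. The ``only if'' direction follows because a Boolean MCC $q_i$ being satisfiable is a monotone condition, so if $Q$ is FO-rewritable one can recover an FO-rewriting of each $Q_i$ by substituting fixed satisfying data for the other components (using non-redundancy to guarantee each Boolean MCC genuinely contributes). The ``if'' direction uses that FO-rewritability is equivalent to bounded depth (Theorem~\ref{thm:AC0NL}, equivalence (i)$\Leftrightarrow$(ii)) and that $\Mmc_Q$ decomposes into pieces coming from the $\Mmc_{Q_i}$: if each $Q_i$ has bounded depth then so does $Q$, since the minimal ABoxes in $\Mmc_Q$ are (roughly) disjoint unions of minimal ABoxes witnessing each component.

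\textbf{Deriving the two cases.} If every $Q_i$ is FO-rewritable, then by the above $Q$ is FO-rewritable and Case~1 holds. Otherwise, some $Q_j$ is \emph{not} FO-rewritable; since $Q_j\in(\EL,\textnormal{conCQ})$, Theorem~\ref{thm:AC0NL} gives that {\sc eval}$(Q_j)$ is $\NL$-hard under FO reductions. I would then establish $\NL$-hardness of {\sc eval}$(Q)$ by an FO reduction from {\sc eval}$(Q_j)$: given an input ABox $\Amc$ for $Q_j$, build $\Amc^*$ by taking $\Amc$ together with a fixed ``padding'' gadget $\Amc_{\text{pad}}$ (of constant size, independent of the input) that supplies satisfying data for all the \emph{other} MCCs $q_i$, $i\neq j$. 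Concretely, for each non-Boolean MCC one attaches fresh individuals realizing an answer, and for each Boolean MCC one glues in a fixed ABox known to satisfy it. Using the glueing Lemma~\ref{lem:aboxunion} (shared individuals have identical $\Tmc$-types, guaranteed by keeping the gadget's individuals disjoint from $\Amc$ except at designated answer individuals), one shows $\Amc^* \models Q(\abf^*)$ iff $\Amc \models Q_j(\abf)$, where $\abf^*$ combines $\abf$ with the answer individuals supplied by the gadget. Non-emptiness of $Q$ ensures such satisfying data for the other components exists.

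\textbf{The main obstacle.} The delicate point is controlling \emph{unintended homomorphisms}, exactly the phenomenon handled by the core-close machinery and Lemma~\ref{lem:core-close} in the connected case. When we glue the padding gadget to $\Amc$, a Boolean MCC $q_i$ that was ``meant'' to be satisfied inside $\Amc_{\text{pad}}$ might instead find a homomorphism into the part built from the $Q_j$-input $\Amc$, or vice versa, corrupting the equivalence $\Amc^*\models Q(\abf^*)\iff\Amc\models Q_j(\abf)$. I expect the bulk of the work to lie in arguing that the padding can be chosen so that each MCC is satisfied \emph{only} in its intended region, or more robustly, that any homomorphism of $q$ into $\Umc_{\Amc^*,\Tmc}$ can be \emph{rearranged} into a ``canonical'' one that satisfies each MCC in its designated region --- so that $\Amc^*\models Q(\abf^*)$ forces $q_j$ to be satisfied within the $\Amc$-part and hence $\Amc\models Q_j(\abf)$. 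The fact that the reduction must be an FO reduction (constant-size gadget, first-order-definable construction of $\Amc^*$, following \cite{immerman}) is not an additional difficulty here since $\Amc_{\text{pad}}$ has size depending only on the fixed $Q$, not on the input.
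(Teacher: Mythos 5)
Your skeleton matches the paper's proof: decompose $q$ into its MCCs $q_1,\dots,q_n$, handle the case where every $Q_i=(\Tmc,\Sigma,q_i)$ is FO-rewritable, and otherwise reduce {\sc eval}$(Q_j)$ to {\sc eval}$(Q)$ for some non-FO-rewritable connected component $Q_j$ by padding the input with fixed data for the other components. (Two small remarks on the first case: the paper simply takes the conjunction of the component rewritings, which is immediate since a homomorphism from $q$ is exactly a tuple of independent homomorphisms from the variable-disjoint $q_i$; your detour through bounded depth is unnecessary and slightly delicate, because the equivalence (i)$\Leftrightarrow$(ii) of Theorem~\ref{thm:AC0NL} is only stated for connected CQs. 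Also, the ``only if'' direction of your equivalence is not needed as an ingredient --- the paper derives non-FO-rewritability in case 2 from $\NL$-hardness and $\AC^0\neq\NL$.)

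The genuine gap is in the reduction. You correctly identify the obstacle --- a Boolean $q_j$ could be satisfied inside the padding, making $\Amc^*\models Q(\abf^*)$ hold independently of the input --- but you leave its resolution open, and of your two proposed escape routes only one can work. The ``rearrangement'' idea fails outright: if the padding satisfies the Boolean $q_j$, then $\Amc^*\models Q(\abf^*)$ for \emph{every} input $\Amc$ (provided the other components are met), and no manipulation of homomorphisms restores the equivalence. The viable route is to choose the padding so that it does not satisfy $q_j$, and the missing idea is \emph{why such padding exists}: for each $j'\neq j$ one needs a $\Sigma$-ABox $\Amc_{j'}$ with $\Tmc,\Amc_{j'}\models q_{j'}(\abf_{j'})$ and, when $q_j$ is Boolean, $\Tmc,\Amc_{j'}\not\models q_j$. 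This is exactly what non-redundancy buys: if every ABox satisfying $q_{j'}$ also satisfied $q_j$, then dropping the Boolean MCC $q_j$ would yield an equivalent OMQ. (You invoke non-redundancy only in your rewritability equivalence, and non-emptiness only for the existence of satisfying data; neither is used where it is actually needed.) Finally, the paper takes the \emph{disjoint union} of $\Amc$ with these $\Amc_{j'}$, not a glueing: then $\Umc_{\Amc^*,\Tmc}$ is the disjoint union of the component universal models, a connected $q_j$ maps entirely into one summand, and the above choice forces that summand to be $\Umc_{\Amc,\Tmc}$. Your appeal to Lemma~\ref{lem:aboxunion} and shared ``designated answer individuals'' is an unnecessary complication that would reintroduce exactly the interference you are trying to exclude.
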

\begin{proof}
  Let $Q=(\Tmc, \Sigma, q) \in (\EL, \textnormal{CQ})$ be non-redundant and non-empty and let $q_1(\xbf_1),\dots, q_n(\xbf_n)$ be the MCCs of $q(\xbf)$. 

  If every OMQ $Q_i=(\Tmc, \Sigma, q_i)$ is FO-rewritable, then the conjunction of all these FO-rewritings is an FO-rewriting of $Q$. It thus suffices to show that otherwise, $Q$ is $\NL$-hard. Thus assume that some $Q_i$ is not FO-rewritable. Since $q_i$ is connected, \textsc{eval}($Q_i$) is $\NL$-hard under FO reductions by Theorem~\ref{thm:AC0NL}. We prove that \textsc{eval}($Q$) is $\NL$-hard under FO reductions by giving an FO reduction from \textsc{eval}$(Q_i)$. Let $\Amc_i$ be a $\Sigma$-ABox and $\abf_i$ a tuple from $\mn{ind}(\Amc_i)^{\mn{ar}(q_i)}$. Since $Q$ is non-empty and non-redundant, for every $j \neq i$ we find a $\Sigma$-ABox $\Amc_j$ and a tuple $\abf_j$ such that
\begin{enumerate}
\item $\Tmc, \Amc_j \models q_j(\abf_j)$ and
 \item if $q_i$ is Boolean, then $\Tmc, \Amc_j \not \models q_i$.
\end{enumerate}
Define $\Amc$ to be the disjoint union of $\Amc_1,\dots,\Amc_n$ and $\abf=\abf_1\cdots\abf_n$. Clearly, $\Amc$ and $\abf$ can be defined by an FO-query, so this is an FO reduction.

We have to show that $\Amc_i \models Q_i(\abf_i)$ iff $\Amc \models Q(\abf)$.  The ``$\Rightarrow$'' direction is clear by construction of \Amc and \abf. For ``$\Leftarrow$'', assume that $\Amc \models Q(\abf)$. This implies $\Amc \models Q_i(\abf_i)$, so there is a homomorphism $h$ from $q_i$ to $\Umc_{\Amc, \Tmc}$ with $h(\xbf_i) = \abf_i$. The universal model $\Umc_{\Amc, \Tmc}$ is the disjoint union of the universal models $\Umc_{\Amc_j, \Tmc}$, $1 \leq j \leq n$. Since $q_i$ is connected, the range of $h$ lies completely inside one of the $\Umc_{\Amc_j, \Tmc}$. In fact, it must lie in $\Umc_{\Amc_i, \Tmc}$. If $q_i$ is not Boolean, this is the case because $h(\xbf_i)=\abf_i$ is a tuple from $\Amc_i$. If $q_i$ is Boolean, then this follows from $\Tmc, \Amc_j \not \models q_i$ which implies $\Umc_{\Amc_j, \Tmc} \not\models q_i$. We have thus shown that $\Umc_{\Amc_i, \Tmc} \models q_i(\abf_i)$, implying $\Amc_i \models Q_i(\abf_i)$ by Lemma~\ref{lem:unimodelproperties}, as desired. We have shown that $(\Tmc, \Sigma, q)$ is $\NL$-hard. It follows that $(\Tmc, \Sigma, q)$ is not FO-rewritable \cite{FurstSS81}.  \end{proof}

To lift the dichotomy between $\NL$ and $\PTime$ dichotomy including the equivalence of \NL and linear Datalog rewritability, we first give a helpful
lemma aboutlinear Datalog programs.

\begin{lemma}
\label{lem:LDlogproduct}
Let $\Pi_1,\ldots,\Pi_n$ be linear Datalog programs. Then there exists a linear Datalog program $\Pi$ of arity $\Sigma_{i=1}^n \mn{ar}(\Pi_i)$ such that for all ABoxes $\Amc$ and tuples $\abf_1,\ldots,\abf_n$,
\[ \Amc \models \Pi_i(\abf_i) \; \text{ for $1 \leq i \leq n$ iff } \; \Amc \models \Pi(\abf_1,\ldots,\abf_n) \;.\]
\end{lemma}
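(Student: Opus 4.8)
The plan is to build $\Pi$ by \emph{concatenating} the programs $\Pi_1,\dots,\Pi_n$ along a single derivation spine rather than running them in parallel. The obstacle that rules out the naive approach is linearity: a rule body may contain at most one IDB atom, so I cannot simply conjoin the goal atoms of the $\Pi_i$ in a single rule. The observation that makes sequentialization possible is the shape of derivations in a \emph{linear} program: as recorded before Lemma~\ref{lem:nicestructure}, such a derivation is a spine of IDB facts with EDB facts from \Amc hanging off as leaves, and (since leaves carry EDB facts and each body has at most one IDB) the deepest spine node is produced by a rule with no IDB atom in its body, a ``start'' rule. I will exploit this to thread the $n$ derivations one after another.

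Set $N=\sum_{i=1}^n \mn{ar}(\Pi_i)$. For every IDB relation $R$ of every $\Pi_i$ I introduce a fresh IDB $R^\sharp$ of arity $N+\mn{ar}(R)$; its first $N$ components store a guessed global answer $(\abf_1,\dots,\abf_n)$ that is \emph{never modified}, while its last $\mn{ar}(R)$ components hold the current working tuple of the $\Pi_i$-derivation being simulated. Then $\Pi$ consists of four rule families. For each start rule $S(\ybf)\leftarrow\psi$ of $\Pi_1$ (body all EDB), a rule
\[ S^\sharp(\abf_1,\dots,\abf_n,\ybf)\leftarrow\psi\wedge\textstyle\bigwedge_{y}\top(y) \]
that guesses the global answer (the conjunction ranging over all variables in $\abf_1,\dots,\abf_n$). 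For each rule $S(\ybf)\leftarrow T(\zbf)\wedge\psi$ of $\Pi_i$ with IDB $T$, a \emph{simulation} rule
\[ S^\sharp(\abf_1,\dots,\abf_n,\ybf)\leftarrow T^\sharp(\abf_1,\dots,\abf_n,\zbf)\wedge\psi . \]
For $1\le i<n$ and each start rule $S(\ybf)\leftarrow\psi$ of $\Pi_{i+1}$, a \emph{transition} rule
\[ S^\sharp(\abf_1,\dots,\abf_n,\ybf)\leftarrow \mn{goal}_i^\sharp(\abf_1,\dots,\abf_n,\abf_i)\wedge\psi , \]
where $\mn{goal}_i$ denotes the goal relation of $\Pi_i$ and the repeated $\abf_i$ in the working slot forces the answer just computed by $\Pi_i$ to equal the guessed $\abf_i$ while simultaneously restarting $\Pi_{i+1}$. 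Finally, a goal rule
\[ \mn{goal}(\abf_1,\dots,\abf_n)\leftarrow \mn{goal}_n^\sharp(\abf_1,\dots,\abf_n,\abf_n) . \]
Every body contains at most one IDB atom, so $\Pi$ is linear, and $\mn{ar}(\Pi)=N$ as required. I also check that each head variable occurs in its body: the stored components appear in the (single) IDB body atom or are guarded by $\top$, and the working variables are guarded by $\psi$ exactly as in the original safe rules.

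For correctness I would argue both directions by transforming spines. If $\Amc\models\Pi_i(\abf_i)$ for all $i$, take linear derivations $D_i$ of each; the start rule guesses $(\abf_1,\dots,\abf_n)$ at the very bottom, the simulation rules replay $D_1$, the first transition rule fires once $\mn{goal}_1^\sharp(\dots,\abf_1)$ is reached (its working tuple is $\abf_1$ because $D_1$ derives $\Pi_1(\abf_1)$) and starts $D_2$, and so on, yielding a derivation of $\Pi(\abf_1,\dots,\abf_n)$ in \Amc; the stored components can be instantiated to $\abf_j$ since each $\abf_j$ is a tuple of individuals of \Amc, so the $\top$ atoms are satisfiable. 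Conversely, any derivation of $\Pi(\abf_1,\dots,\abf_n)$ must, by inspecting which rules can produce which $R^\sharp$, have a spine that passes strictly through phases $1,2,\dots,n$: a phase-$i$ IDB is derivable only from a phase-$i$ simulation rule or from the transition consuming $\mn{goal}_{i-1}^\sharp$ (phase $1$ only from a start rule), and $\mn{goal}_i^\sharp$ is produced only when the simulation of $\Pi_i$ reaches its goal. Projecting away the stored components from the phase-$i$ segment yields a derivation of $\Pi_i(\abf_i)$ in \Amc, the equality with the guessed $\abf_i$ being enforced by the transition/goal rules and by the stored tuple being carried unchanged throughout. Hence $\Amc\models\Pi_i(\abf_i)$ for every $i$.

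The only genuinely delicate point, and thus the main obstacle, is precisely this preservation of linearity together with the bookkeeping that keeps the guessed answers globally consistent; everything else is routine spine-surgery on derivations. I expect the cleanest presentation to make the ``phases of the spine'' decomposition explicit and then to invoke the equivalence between entailment and the existence of a derivation that is used around Lemma~\ref{lem:nicestructure}.
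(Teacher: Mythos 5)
Your construction is correct, but it takes a genuinely different route from the paper. The paper forms an asynchronous \emph{product}: it treats the case $n=2$ (obtaining general $n$ by iteration) and introduces pair IDBs $(S_1,S_2)$ of arity $\mn{ar}(S_1)+\mn{ar}(S_2)$ that record the current state of both simulated derivations simultaneously, with a combined start rule that launches both programs at once and rules that advance one component of the pair while the other stays put; linearity is preserved because only the pair relations are IDBs of the new program. You instead \emph{sequentialize}: you guess the whole answer tuple $(\abf_1,\dots,\abf_n)$ at the bottom of the spine using $\top$-atoms (the same trick the paper itself uses in its DFA-to-linear-Datalog construction in Section~\ref{sec:boundedpathwidth}), carry it unchanged through every IDB, run the $\Pi_i$ one after another, and use the repeated occurrence of $\abf_i$ in the transition and goal rules to force each phase's computed answer to agree with the guess. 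Both constructions yield a linear program whose goal relation has arity $\sum_i\mn{ar}(\Pi_i)$, which is all the lemma asserts. The trade-offs are minor: the product keeps the non-goal IDB arities at the sum of the component arities and needs no guessing, while your sequential composition pads every IDB with $N=\sum_i\mn{ar}(\Pi_i)$ extra slots but buys a cleaner correctness argument via the phase decomposition of the derivation spine and handles general $n$ in one step. Your safety and linearity checks are the right ones, and your backward direction correctly isolates the one delicate point, namely that every derivation of $\Pi$ must traverse the phases in order because the only IDB-free rules are the phase-$1$ start rules.
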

\begin{proof}
  It suffices to give a proof for the case $n=2$, the general case follows by repeatedly applying the lemma for $n=2$. So let $\Pi_1$, $\Pi_2$ be linear Datalog programs. We assume w.l.o.g. that $\Pi_1$ and $\Pi_2$ use disjoint sets of variables. Define a program $\Pi$ that contains
  the following rules:
  \begin{itemize}

  \item for all rule $S_i(\xbf_i) \leftarrow \vp_i(\xbf_i,\ybf_i) \in \Pi_i$, $i \in \{1,2\}$,
    such that neither $\vp_1$ not $\vp_2$ contains an EDB relation, the rule
    $$
    (S_1,S_2)(\xbf_1,\xbf_2) \leftarrow \vp_1(\xbf_1,\ybf_1) \wedge \vp_2(\xbf_2,\ybf_2);
    $$

  \item for each rule $S_i(\xbf_i) \leftarrow \vp_i(\xbf_i,\ybf_i) \in \Pi_i$ and
    each IDB relation $S_{3-i}$ from $\Pi_{3-i}$, $i \in \{1,2\}$, the rule
    $$
    (S_1,S_2)(\xbf_1,\xbf_2) \leftarrow \vp_i(\xbf_i,\ybf_i) \wedge S_{3-i}(\xbf_{3-i})
    $$
    where $\xbf_{3-i}$ is a tuple of fresh variables;

  \item the rule
    $$
       \mn{goal}(\xbf_1,\xbf_2) \leftarrow (\mn{goal},\mn{goal})(\xbf_1,\xbf_2).
    $$
    
  \end{itemize}
  It can be verified that $\Amc \models \Pi(\abf_1, \abf_2)$ iff both $\Amc \models \Pi_1(\abf_1)$ and $\Amc \models \Pi_2(\abf_2)$, for all $\Sigma$-ABoxes \Amc
  and tuples $\abf_1$, $\abf_2$.
\end{proof}
\begin{theorem}
\label{thm:NLPTimedisc}
Let $Q \in (\EL, \textnormal{CQ})$. The following are equivalent (assuming $\NL \neq \PTime$):
\begin{enumerate}
\item $Q$ has bounded pathwidth;
\item $Q$ is linear Datalog rewritable;
\item {\sc eval}$(Q)$ is in $\NL$.
\end{enumerate}
If these conditions do not hold, then {\sc eval}$(Q)$ is $\PTime$-hard under FO reductions.
\end{theorem}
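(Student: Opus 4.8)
The plan is to mirror the proof of Theorem~\ref{thm:AC0NLdisc}, replacing the $\AC^0$/$\NL$ machinery with the $\NL$/$\PTime$ machinery. As in that proof, it suffices to treat OMQs $Q=(\Tmc,\Sigma,q)$ that are non-empty and non-redundant: empty OMQs are FO-rewritable and hence satisfy (i)--(iii), while a redundant Boolean MCC can be dropped to obtain an equivalent OMQ with fewer MCCs. Let $q_1(\xbf_1),\dots,q_n(\xbf_n)$ be the MCCs of $q(\xbf)$, so that $\xbf=\xbf_1\cdots\xbf_n$, and write $Q_i=(\Tmc,\Sigma,q_i)$. The equivalence (i)~$\Leftrightarrow$~(ii) already holds for arbitrary, possibly disconnected, CQs by Lemma~\ref{lem:LDLog}, and (ii)~$\Rightarrow$~(iii) is immediate since linear Datalog is evaluable in $\NL$. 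Hence the whole theorem follows once I show that \emph{unbounded pathwidth of $Q$ implies $\PTime$-hardness under FO reductions}: this is exactly the last sentence, and under the assumption $\NL\neq\PTime$ it also yields (iii)~$\Rightarrow$~(i), since a $\PTime$-hard problem cannot lie in $\NL$.

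First I would establish a combination step: \emph{if every $Q_i$ has bounded pathwidth, then so does $Q$}. The starting observation is that, because the MCCs are variable-disjoint, a homomorphism from $q$ to $\Umc_{\Amc,\Tmc}$ mapping $\xbf$ to $\abf=\abf_1\cdots\abf_n$ exists iff for each $i$ there is one from $q_i$ mapping $\xbf_i$ to $\abf_i$; hence by Lemma~\ref{lem:unimodelproperties}, $\Amc\models Q(\abf_1\cdots\abf_n)$ iff $\Amc\models Q_i(\abf_i)$ for all $i$. Now, if each $Q_i$ has bounded pathwidth, then by Lemma~\ref{lem:LDLog} each $Q_i$ admits a linear Datalog rewriting $\Pi_i$ of arity $\mn{ar}(q_i)$. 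Applying Lemma~\ref{lem:LDlogproduct} to $\Pi_1,\dots,\Pi_n$ yields a linear Datalog program $\Pi$ of arity $\sum_i\mn{ar}(q_i)=\mn{ar}(q)$ with $\Amc\models\Pi(\abf_1,\dots,\abf_n)$ iff $\Amc\models\Pi_i(\abf_i)$ for all $i$; by the equivalence noted above, $\Pi$ is a rewriting of $Q$. Invoking the ``$\Leftarrow$'' direction of Lemma~\ref{lem:LDLog} once more gives that $Q$ has bounded pathwidth. Taking the contrapositive, \emph{unbounded pathwidth of $Q$ forces some $Q_i$ to have unbounded pathwidth}.

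Fix such an $i$. Since $q_i$ is connected and has unbounded pathwidth, the last sentence of Theorem~\ref{thm:NLPTime} gives that {\sc eval}$(Q_i)$ is $\PTime$-hard under FO reductions. It then remains to FO-reduce {\sc eval}$(Q_i)$ to {\sc eval}$(Q)$, which I would carry out exactly as in Theorem~\ref{thm:AC0NLdisc}: given $(\Amc_i,\abf_i)$, use non-emptiness and non-redundancy of $Q$ to fix, for each $j\neq i$, a $\Sigma$-ABox $\Amc_j$ and tuple $\abf_j$ with $\Tmc,\Amc_j\models q_j(\abf_j)$ and, when $q_i$ is Boolean, additionally $\Tmc,\Amc_j\not\models q_i$; then set $\Amc$ to be the disjoint union of $\Amc_1,\dots,\Amc_n$ and $\abf=\abf_1\cdots\abf_n$. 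This map is FO-definable. Using that $\Umc_{\Amc,\Tmc}$ is the disjoint union of the $\Umc_{\Amc_j,\Tmc}$ and that $q_i$ is connected, any homomorphism witnessing $\Amc\models Q_i(\abf_i)$ must land entirely in a single $\Umc_{\Amc_j,\Tmc}$, and the extra condition on Boolean $q_i$ forces $j=i$; this gives $\Amc_i\models Q_i(\abf_i)$ iff $\Amc\models Q(\abf)$, so the reduction is correct. As $\PTime$-hardness under FO reductions is preserved under composition with FO reductions, {\sc eval}$(Q)$ is $\PTime$-hard.

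The main obstacle, as in Theorem~\ref{thm:AC0NLdisc}, is the careful treatment of Boolean MCCs. In the combination step this is only bookkeeping (empty answer tuples $\xbf_i$), but in the reduction it is essential: a Boolean $q_i$ could a priori be satisfied by a homomorphism into the universal model of one of the padding components $\Amc_j$, which would break the ``$\Leftarrow$'' direction of the correctness claim. This is precisely what non-redundancy rules out, by guaranteeing padding ABoxes $\Amc_j$ with $\Tmc,\Amc_j\not\models q_i$. Everything else reuses results already established: (i)~$\Leftrightarrow$~(ii) from Lemma~\ref{lem:LDLog}, the product construction from Lemma~\ref{lem:LDlogproduct}, and the connected-case hardness from Theorem~\ref{thm:NLPTime}.
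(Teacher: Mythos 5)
Your proposal is correct and follows essentially the same route as the paper: reduce to the non-empty, non-redundant case, use Lemma~\ref{lem:LDlogproduct} (together with Lemma~\ref{lem:LDLog}) to show that if $Q$ fails condition (i)/(ii) then some connected component $Q_i$ must fail it too, invoke Theorem~\ref{thm:NLPTime} for the connected-case \PTime-hardness, and transfer it to $Q$ via the same padding-based FO reduction as in Theorem~\ref{thm:AC0NLdisc}. The only cosmetic difference is that you phrase the combination step in terms of bounded pathwidth (converting back and forth via Lemma~\ref{lem:LDLog}) whereas the paper phrases it directly in terms of linear Datalog rewritability; these are interchangeable by the already-established equivalence (i)~$\Leftrightarrow$~(ii).
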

\begin{proof}
  The equivalence of (1) and (2) has been shown in Lemma~\ref{lem:LDLog} and the implication (2) $\Rightarrow$ (3) is clear. To finish the proof, we show that if (2) does not hold, then {\sc eval}$(Q)$ is $\PTime$-hard, proving the implication (3) $\Rightarrow$ (2) as well as the last sentence of the theorem.

  Let $Q=(\Tmc, \Sigma, q) \in (\EL, \textnormal{CQ})$ and assume that $Q$
  is not rewritable into linear Datalog. As before, we can assume
  $Q$ to be non-empty non-redundant. Let $q_1(\xbf_1), \dots, q_n(\xbf_n)$ be the connected components of $q(\xbf)$. By Theorem~\ref{thm:NLPTime}, every OMQ
  $Q_i = (\Tmc, \Sigma, q_i)$ is either rewritable into linear Datalog or $\PTime$-hard.
  By Lemma~\ref{lem:LDlogproduct}, every $(\Tmc, \Sigma, q_i)$ being rewritable into linear Datalog implies that also $(\Tmc, \Sigma, q)$ is linear Datalog rewritable. Since this is not the case, there must be some
  $Q_i$ that is not rewritable into linear Datalog and thus $\PTime$-hard.

  It is now possible to show \PTime-hardness of {\sc eval}$(Q)$ by an FO reduction from
   {\sc eval}$(Q_i)$, exactly as in the proof of Theorem~\ref{thm:AC0NLdisc}.
%
\end{proof}

\begin{remark}
  Recall that we are working with CQs that do not admit equality throughout this
  paper. However, the trichotomy result established in this section can easily be generalized to $(\EL, \textnormal{CQ}^=)$ where CQ$^=$ denotes the class of
  CQs with equality. This is due to the observation that for every OMQ $Q \in (\EL, \textnormal{CQ}^=)$,
  there is an OMQ $Q' \in (\EL, \textnormal{CQ})$ such that there is an FO reduction
  from {\sc eval}$(Q)$ to {\sc eval}$(Q')$ and vice versa. Let $Q=(\Tmc, \Sigma, q) \in (\EL, \textnormal{CQ}^=)$. To construct $Q'$, we simply eliminate the equality atoms in $q$ by identifying variables, ending up with a CQ  $q'$ without equality atoms that might have lower arity than $q$. It is easy to see that the required FO reductions exist,
  which essentially consist of dropping resp.\ duplicating components from answer tuples.
\end{remark}

\section{Width Hierarchy for Linear Datalog Rewritability}
\label{sec:hierarchy}

The width of the linear Datalog rewritings constructed in Section \ref{sec:NLPTime} depends on $\mn{pw}(Q)$, so if $Q$ has high pathwidth, then we end up with a linear Datalog rewriting of high width. We aim to show that this is unavoidable, that is, there is no constant bound on the width of linear Datalog rewritings of OMQs from $(\EL,\text{CQ})$ and in fact not even from $(\EL,\text{AQ})$. It is interesting to contrast this with the fact that every OMQ from $(\mathcal{EL},\textnormal{CQ})$ can be rewritten into a \emph{monadic} (non-linear) Datalog program \cite{DL-Textbook}.  Our result strengthens a result by \cite{DalmauK08} who establish an analogous statement for CSPs. This does not imply our result: While every OMQ from $(\mathcal{EL},\textnormal{AQ})$ is equivalent to a CSP up to complementation~\cite{BienvenuCLW14}, the converse is false and indeed the CSPs used by Dalmau and Krokhin are not equivalent to an OMQ from $(\mathcal{EL},\textnormal{AQ})$.
Our main aim is to prove the following.
\begin{theorem}
\label{thm:widthhier}
  For every $k >0$, there is an OMQ $Q_k \in (\EL,\textnormal{AQ})$ that is
  rewritable into linear Datalog, but not into a linear Datalog program  
  of width $k$.  
\end{theorem}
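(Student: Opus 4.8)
The plan is to exhibit a family $Q_k = (\Tmc_k, \Sigma, A(x)) \in (\EL, \textnormal{AQ})$ whose minimal witnesses (the ABoxes in $\Mmc_{Q_k}$) are trees of bounded branching but unbounded depth, and then to show that any linear Datalog rewriting must use IDB relations of arity exceeding $k$. Rewritability into linear Datalog at all will be immediate from bounded pathwidth, so the real work is the lower bound, which rests on relating the \emph{width} of a linear Datalog program to the size of the \emph{separators} that its derivations induce in the witness ABox.

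\textbf{Relating width to separators.} First I would refine Lemma~\ref{lem:nicestructure}. In a \emph{linear} derivation $D$ of $\Pi(\abf)$ the IDB atoms form a single spine $\mn{goal}(\abf) = I_0, I_1, \dots, I_m$, and cutting $D$ at $I_j$ splits $\Amc_D$ into two parts whose only shared individuals are the arguments of $I_j$. Since every non-goal IDB of $\Pi$ has arity at most its width $w$, these arguments are separators of size at most $\max(w, \mn{ar}(q))$ arranged along a path; hence $\Amc_D$ admits a path decomposition whose consecutive bags overlap in at most $w$ individuals, while the bag size is bounded only by the diameter $d$ of $\Pi$. Consequently, if $Q$ is rewritable into a width-$w$ program, then every answer $(\Amc,\abf)$ is witnessed by some $\Amc_D$ with a homomorphism to $\Amc$ that is the identity on $\abf$, with $\Amc_D \models Q(\abf)$, and with a path decomposition of overlap at most $w$. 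This bounded overlap is the only consequence of small width that is independent of the (freely growing) diameter, and it is what the lower bound will attack.

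\textbf{Construction and upper bound.} I would design $\Tmc_k$ so that $\Mmc_{Q_k}$ consists of trees running a long ``spine'' that carries $k$ parallel, mutually independent ``tracks'', where the TBox propagates the query concept backwards along the spine only when, at every spine node, all $k$ tracks are simultaneously in an accepting local state. Encoding the tracks by a fixed number of nested branches keeps the branching number of every minimal witness bounded by a function of $k$, whereas the spine length, and hence the depth, is unbounded. By Lemma~\ref{lem:pathwidthbranching} bounded branching yields bounded pathwidth, so by Lemma~\ref{lem:LDLog} $Q_k$ is rewritable into linear Datalog; and the unbounded depth guarantees via Theorem~\ref{thm:AC0NL} that $Q_k$ is genuinely recursive rather than a UCQ --- which is essential, since a bounded-size witness family would collapse to a rewriting of width $1$.

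\textbf{Lower bound and main obstacle.} Suppose towards a contradiction that a width-$w$ program $\Pi$ with $w \le k$ rewrites $Q_k$, and fix its diameter $d$. Take an answer $(\Amc,\abf)$ realised by a witness whose spine is much longer than $d$. By the separator refinement, $\Pi$ accepts via some $\Amc_D$ glued along interfaces of size at most $w$, and since the spine is long, the spine of $D$ visits more positions than there are interface configurations on $\le w$ individuals; pigeonhole gives two positions with the same configuration. Because $w \le k$, the interface individuals cannot touch all $k$ tracks, so there is a track on which these two positions carry different local states that are invisible to the configuration. Cutting the segment between the two positions and re-gluing then yields an accepting derivation for an ABox that differs from $\Amc$ only on that untouched track, which I would engineer to be a non-answer of $Q_k$, contradicting correctness of $\Pi$. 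The hardest part is exactly this cut-and-paste: I must set up $\Tmc_k$ so that (i) the $k$ tracks are provably independent and all necessary for acceptance, (ii) every interface of size $\le k$ provably fails to record some track, and (iii) the re-glued structure is a legitimate $\Sigma$-ABox that is genuinely not an answer, all while respecting the tree-shape of witnesses forced by the AQ setting --- this last constraint is what makes adapting the CSP argument of \cite{DalmauK08} nontrivial, since their templates are not of this form. Letting $k$ grow then defeats every fixed width, giving the theorem.
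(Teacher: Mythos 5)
Your upper bound (bounded branching $\Rightarrow$ bounded pathwidth $\Rightarrow$ linear Datalog rewritability) coincides with the paper's, but your lower bound takes a genuinely different route and has a gap at its core. You claim that the only diameter-independent consequence of width $w$ is that consecutive bags of the induced path decomposition of $\Amc_D$ overlap in at most $w$ individuals, and you then run a pigeonhole over ``interface configurations on $\le w$ individuals.'' This pigeonhole is not available as stated: an interface is an IDB relation name together with a tuple of up to $w$ \emph{individuals of $\Amc_D$}, and the number of such tuples grows with the input, so two spine positions need never repeat a configuration. To salvage this you must abstract the individuals into finitely many classes, and here the argument breaks: the IDB relation \emph{name} can encode arbitrary finite information, in particular the local states of all $k$ tracks, so ``$w \le k$ individuals cannot touch all $k$ tracks'' does not imply that some track's state is invisible across the cut. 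What you actually need to show is that the \emph{identity} of at least one individual per track must be retained across every cut (because each track still has unboundedly much unread structure whose processing depends on that identity), and your sketch supplies no mechanism for this. Relatedly, bounded overlap alone is nearly vacuous --- every connected graph admits a $(1,N)$-path decomposition with two huge bags sharing one vertex --- so attacking the overlap without first controlling bag size cannot work. The construction is also underspecified in a way that matters: if the ``$k$ parallel tracks'' are realized as pendant paths on a spine, the minimal witnesses have pathwidth $O(1)$ and Lemma~\ref{lem:LDLog} already yields a rewriting of constant width, defeating the lower bound; the witnesses must have pathwidth growing with $k$, i.e.\ contain deep full binary tree minors.

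The paper closes exactly these holes by a different argument. It first proves a \emph{diameter} lower bound (Lemma~\ref{lem:hierarchytwo}): any derivation ABox $\Amc_D$ maps homomorphically \emph{onto} a minimal witness $\Amc^n_k$ by minimality, hence has at least $\ell_2^k(n) \approx n^k$ leaves; a tree of bounded degree and depth $O(n)$ with that many leaves must contain a full binary tree of depth $>2k'+2$ as a minor (Lemma~\ref{lem:numleaves}), hence has pathwidth $>k'$, contradicting the pathwidth bound that small diameter imposes via Lemma~\ref{lem:nicestructure}. It then shows (Proposition~\ref{lem:lowerbounddiameter}) that a width-$\ell$ program can be rewritten into one of diameter $O(\ell)$ on the class of subdivided forest ABoxes, by normalizing rule bodies into forests with a single branching and chopping long paths into chains of low-arity auxiliary IDBs --- this is the step that converts a width bound into the diameter bound your sketch tries to bypass. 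If you want to pursue your cut-and-paste route instead, the missing ingredient is a quantitative reason why $\le \ell$ remembered individuals cannot separate a structure with $\sim n^k$ independent unfinished branches; that is precisely what the leaf-counting supplies.
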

When constructing the OMQs $Q_1,Q_2,\dots$ for Theorem~\ref{thm:widthhier}, we would like the ABoxes in~$\Mmc_{Q_k}$ to contain \emph{more and more branching}. Note that since we only work with $(\EL, \textnormal{AQ})$, $\Mmc_{Q_k}$ consists of tree-shaped ABoxes rather than of pseudo tree-shaped ones. Intuitively, if the ABoxes from $\Mmc_{Q_k}$ have a lot of branching, then a linear Datalog rewriting of $Q_k$ needs large width to simultaneously collect information about many different branches. However, we want $Q_k$ to be linear Datalog rewritable so by Lemma~\ref{lem:pathwidthbranching} the ABoxes from $\Mmc_{Q_k}$ must still branch only boundedly.  We thus construct $Q_k$ such that $\mn{br}(\Amc) \leq k$ for all $\Amc \in \Mmc_{Q_k}$ while for every $n \geq 1$, $\Mmc_{Q_k}$ contains an ABox $\Amc^n_k$ that takes the form of a tree of outdegree~2 and of depth $n$ such that $\mn{br}(\Amc^n_k) = k$ and $\Amc^n_k$ has the maximum number of leaves that any such ABox can have. To make the latter more precise, let $\ell_d^k(n)$ denote the maximum number of leaves in any tree that has degree $d$, depth $n$, and does not have the full binary tree of depth $k+1$ as a minor, $d,k,n \geq 0$. We then want $\Amc^n_k$ to have exactly $\ell_2^k(n)$ leaves, which ensures that it is maximally branching.

We now construct $Q_k$. For every $k \geq 1$, let $Q_k = (\Tmc_k, \Sigma, A_k(x))$ where $\Sigma = \{r, s, t, u\}$ and
$$
\begin{array}{rcl}
\Tmc_k &=& \{\top \sqsubseteq A_0\} \, \cup \\[1mm]
         &&\{\exists x.A_i \sqsubseteq B_{x,i} \mid x \in \{r,s,t,u\}, 0 \leq i \leq k-1 \} \, \cup \\ [1mm]
         &&\{\exists x.B_{x,i} \sqsubseteq B_{x,i} \mid x \in \{r,s,t,u\}, 0 \leq i \leq k-1 \} \, \cup \\ [1mm]
         &&\{B_{r,i} \sqcap B_{s,i} \sqsubseteq A_{i+1} \mid 0 \leq i \leq k-1\} \, \cup \\[1mm]
         &&\{B_{t,i} \sqcap B_{u,i+1} \sqsubseteq A_{i+1} \mid 0 \leq i \leq k-1\}. 
\end{array}
$$
Each concept name $A_i$ represents the existence of a full binary tree of depth $i$, that is, if $A_i$ is derived at the root of a tree-shaped $\Sigma$-ABox \Amc, then \Amc contains the full binary tree of depth $i$ as a minor. 
%
The concept inclusions $\exists x.B_{x,i} \sqsubseteq B_{x,i}$ in $\Tmc_k$ ensure that $Q_k$ is closed under subdivisions of ABoxes, that is, if $\Amc \in \Mmc_{Q_k}$ and $\Amc'$ is obtained from \Amc by subdividing an edge into a path (using the same role name as the original edge), then $\Amc \models Q_k(a)$ if and only if $\Amc' \models Q_k(a)$ for all $a \in \mn{ind}(\Amc)$. Informally spoken, subdivision will allow us to assume that every (connected) rule body in a linear Datalog rewriting can only `see' a single branching.

To provide a better understanding of the four role names used, we now explicitly define the ABoxes $\Amc^n_k$ mentioned above. We refer to non-leaf individuals by the combination of role names of their outgoing edges, e.g.\ an $rs$-individual is an individual that has one outgoing $r$-edge, one outgoing $s$-edge and no other outgoing edges.  Let $n,k \geq 1$.
If $n \leq k$, then $\Amc_k^n$ is the full binary tree of depth~$n$, where every non-leaf is an $rs$-individual. If $n > k = 1$, then $\Amc_k^n$ consists of a root that is a $tu$-individual where the $t$-successor is a leaf and the $u$-successor is the root of a copy of $\Amc_k^{n-1}$. Finally, for $n > k > 1$, take the disjoint union of $\Amc_{k-1}^{n-1}$ and $\Amc_k^{n-1}$ and introduce a new $tu$-individual as the root, the $t$-successor being the root of $\Amc_{k-1}^{n-1}$ and the $u$-successor the root of~$\Amc_k^{n-1}$. As an example, Figure~\ref{fig:tree-with-many-leaves} shows $\Amc^4_2$.

\begin{figure}
\begin{boxedminipage}[t]{\columnwidth}
\centering
\begin{tikzpicture}[->,>=stealth',
level 1/.style={sibling distance=18em},
level 2/.style={sibling distance=9em},
level 3/.style={sibling distance=5em},
level 4/.style={sibling distance=2.5em},
level 5/.style={sibling distance=0.7em},
level distance = 0.8cm,
font=\sffamily\small]
\tikzstyle{node}=[shape=circle, draw,inner sep=2.0pt, fill=black]
    \node [node] {}
        child {node [node] {}
            child {node [node] {}
                child {node [node] {}
                    child {node [node] {} edge from parent node[left] {$r$}}
                    child {node [node] {} edge from parent node[right] {$s$}}
                edge from parent node[left] {$r$}}
                child {node [node] {}
                    child {node [node] {} edge from parent node[left] {$r$}}
                    child {node [node] {} edge from parent node[right] {$s$}}
                edge from parent node[right] {$s$}}
            edge from parent node[above left] {$u$}}
            child {node [node] {}
                child {node [node] {}
                    child {node [node] {} edge from parent node[left] {$r$}}
                    child {node [node] {} edge from parent node[right] {$s$}}
                edge from parent node[left] {$u$}}
                child {node [node] {} edge from parent node[right] {$t$}}
            edge from parent node[above right] {$t$}}
        edge from parent node[above left] {$u$}}
        child {node [node] {}
            child {node [node] {}
                child {node [node] {}
                    child {node [node] {} edge from parent node[left] {$r$}}
                    child {node [node] {} edge from parent node[right] {$s$}}
                edge from parent node[left] {$u$}}
                child {node [node] {} edge from parent node[right] {$t$}}
            edge from parent node[above left] {$u$}}
            child {node [node] {}
            edge from parent node[above right] {$t$}}
        edge from parent node[above right] {$t$}};
\end{tikzpicture}
\end{boxedminipage}
\caption{The ABox $\Amc^4_2$, which has depth $4$, branching number $2$ and lies in $\Mmc_{Q_2}$. Since $4>2$, $\Amc^4_2$ is composed of one copy of $\Amc^3_2$ and one copy of $\Amc^3_1$ and a new $tu$-individual as root. This ABox has $11$ leaves, which is the largest number of leaves that a binary tree of depth $4$ can have, unless it contains the full binary tree of depth $3$ as a minor.}
\label{fig:tree-with-many-leaves}
\vspace*{-3mm}
\end{figure}

\begin{lemma}
\label{lem:Ank}
For all $n,k \geq 1$, 
\begin{enumerate}
\item $\Amc^n_k \in \Mmc_{Q_k}$;
\item $\mn{br}(\Amc^n_k) = k$;
\item $\Amc^n_k$ has exactly $\ell_2^k(n)$ leaves.
\end{enumerate}
\end{lemma}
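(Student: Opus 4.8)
The plan is to prove the three statements together by induction on $n$, following the recursive definition of $\Amc^n_k$: the base case is $n=k$, where $\Amc^n_k$ is the full binary tree of depth $k$, and the inductive step splits into $n>k=1$ and $n>k>1$. Note that the recursion only ever produces instances $\Amc^{n'}_{k'}$ with $n'\ge k'$, so it suffices to treat this regime. Claims (2) and (3) are purely combinatorial and I would dispatch them first; Claim (1), and in particular its minimality part, is the real work.

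For Claim (2) I would use the labeling algorithm computing $\mn{br}$ (leaves get $0$; an inner node gets the maximal child label $m$ if at most one child carries $m$, and $m+1$ if at least two do). The base $n=k$ labels the full binary tree of depth $k$ with $k$ at the root. In the step the root is a $tu$-individual whose children root $\Amc^{n-1}_k$ and, respectively, a leaf (if $k=1$) or $\Amc^{n-1}_{k-1}$ (if $k>1$); by induction these carry labels $k$ and $0$ (resp.\ $k-1$), so the maximal child label $k$ is attained by exactly one child and the root is labeled $k$. Since the root label equals $\mn{br}(\Amc^n_k)$, this gives $\mn{br}(\Amc^n_k)=k$. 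For Claim (3) the heart is the recurrence $\ell_2^k(n)=\ell_2^{k-1}(n-1)+\ell_2^k(n-1)$ for $n>k\ge 1$, with boundary values $\ell_2^k(n)=2^n$ for $n\le k$ and $\ell_2^0(n)=1$. The upper bound follows by splitting a witnessing tree at its root into at most two subtrees of depth $\le n-1$: since the whole tree has no full binary tree of depth $k+1$ as a minor, at most one subtree can have branching $k$, so the other has branching $\le k-1$, bounding the leaf count by $\ell_2^k(n-1)+\ell_2^{k-1}(n-1)$; placing one optimal subtree of each kind below a new root realizes this bound. Counting leaves in $\Amc^n_k$ directly from its definition then shows the same recurrence and boundary values (the $t$-side contributes $\ell_2^{k-1}(n-1)$, the $u$-side $\ell_2^k(n-1)$), whence $\Amc^n_k$ has exactly $\ell_2^k(n)$ leaves.

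For Claim (1), that $\Tmc_k,\Amc^n_k\models A_k(a)$ with $a$ the root is a straightforward induction along the same three cases: in the base one applies $B_{r,i}\sqcap B_{s,i}\sqsubseteq A_{i+1}$ up the full binary tree, while at a $tu$-root one combines $A_{k-1}$ from the $t$-subtree and $A_k$ from the $u$-subtree via $B_{t,k-1}\sqcap B_{u,k}\sqsubseteq A_k$. The difficult part is minimality: $\Amc'\not\models Q_k(a)$ for every $\Amc'\subsetneq\Amc^n_k$. By monotonicity of certain answers it suffices to show that deleting any single role assertion destroys entailment of $A_k$ at $a$. I would establish this from two auxiliary facts. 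First, a soundness statement proved by induction on derivations: if $A_i(b)$ is entailed, then $\Amc^b$ contains the full binary tree of depth $i$ as a minor; hence $A_i$ can hold only at nodes of branching at least $i$, which rules out shortcut derivations that would make a high-index concept surface at a low-branching node. Second, a path characterization of the auxiliary concepts: $B_{x,i}(b)$ holds exactly when there is a directed $x$-path from $b$ to a node carrying $A_i$, where -- crucially -- the subdivision inclusions $\exists x.B_{x,i}\sqsubseteq B_{x,i}$ are present only for indices below the query index, so the critical concept cannot be stretched across a branching individual.

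Combining the two facts, I would argue by induction on the recursive structure that the derivation of $A_k$ at the root is forced to traverse every edge: along a $t$-edge the required index strictly exceeds the branching available one step further down, so the witness is pinned to the immediate successor and its whole subtree is forced by the induction hypothesis, while the non-propagating behaviour of the top index pins the $u$-witness in the same way; removing any edge therefore either disconnects a subtree whose root is needed as a witness or lowers the branching below the threshold, and in either case $A_k$ is no longer derivable at the root. I expect this forced-derivation analysis -- calibrated precisely to the index ranges occurring in $\Tmc_k$ -- to be the main obstacle, since it is exactly the point where the delicate tuning of the subdivision inclusions (propagation being available for all indices needed to witness branchings, yet unavailable for the index that would allow a branch to be skipped) is what secures minimality.
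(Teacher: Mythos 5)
Your overall strategy --- a simultaneous induction on $n$ that follows the three cases of the recursive definition of $\Amc^n_k$ --- is exactly what the paper intends (it offers nothing beyond the sentence ``All three points can be proved by induction on $n$''), and your treatment of Claims (2) and (3) is correct: the labelling algorithm gives $\mn{br}(\Amc^n_k)=k$ at the root, and your recurrence $\ell_2^k(n)=\ell_2^k(n-1)+\ell_2^{k-1}(n-1)$ is precisely the identity ($**$) that the paper itself establishes inside the proof of Lemma~\ref{lem:numleaves}, so matching it against the leaf count of $\Amc^n_k$ is the right move.

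The gap is in the minimality half of Claim (1). A small point first: to derive $A_k$ at a $tu$-root you need $B_{u,k}$, which requires $\exists u.A_k\sqsubseteq B_{u,k}$; the family $\exists x.A_i\sqsubseteq B_{x,i}$ as printed stops at $i=k-1$, so you are silently reading its range as $i\le k$. More seriously, your pinning argument relies on the \emph{absence} of propagation for the top index, but the inductive step needs the same pinning for every index $i<k$ at deeper levels of the recursion, and for those indices $\Tmc_k$ \emph{does} contain $\exists u.B_{u,i}\sqsubseteq B_{u,i}$. Concretely, in $\Amc^4_2$ let $y$ be the $t$-child of the root (the root of a copy of $\Amc^3_1$) and $y_u$ its $u$-child (the root of a copy of $\Amc^2_1$), and delete the assertion $t(y_u,\ell)$ for the $t$-leaf $\ell$ of $y_u$. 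Then $A_1(y_u)$ is no longer derivable, but $B_{u,1}(y_u)$ still is (the $u$-child of $y_u$ is an $rs$-node carrying $A_1$), and $\exists u.B_{u,1}\sqsubseteq B_{u,1}$ lifts $B_{u,1}$ across the branching node $y_u$ to $y$; hence $A_1(y)$, $B_{t,1}$ at the root, and $A_2$ at the root all survive the deletion. So the statement you actually need --- every edge of the $t$-copy of $\Amc^{n-1}_{k-1}$ is necessary for deriving $A_{k-1}$ at its root \emph{under $\Tmc_k$} --- is strictly stronger than your induction hypothesis ($\Amc^{n-1}_{k-1}\in\Mmc_{Q_{k-1}}$, i.e.\ minimality under $\Tmc_{k-1}$), and is in fact false. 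The proof cannot be completed as sketched: either the index ranges of the $B_{x,i}$-axioms in $\Tmc_k$ must be repaired (which then collides with the closure under subdivisions that the paper asserts and that Lemma~\ref{lem:hierarchytwo} relies on), or the minimality argument needs a genuinely different idea. You correctly sensed that the calibration of the propagation inclusions is the crux; the trouble is that the calibration you describe only protects the topmost level of the recursion.
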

All three points can be proved by induction on $n$.
The following lemma establishes the first part of Theorem~\ref{thm:widthhier}.
\begin{lemma}
\label{lem:hierarchyone}
For every $k \geq 1$, $Q_k$ is rewritable into linear Datalog.
\end{lemma}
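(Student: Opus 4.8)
The plan is to reduce the statement to a purely combinatorial claim about branching and then prove that claim by induction. By Lemma~\ref{lem:LDLog}, rewritability into linear Datalog is equivalent to bounded pathwidth, and by Lemma~\ref{lem:pathwidthbranching} bounded pathwidth is in turn equivalent to bounded branching. Hence it suffices to show that $Q_k$ is boundedly branching; concretely, I would prove that $\mn{br}(\Amc) \le k$ for every $\Amc \in \Mmc_{Q_k}$. Since $Q_k \in (\EL,\textnormal{AQ})$, every such $\Amc$ is tree-shaped with its root $a$ as the one-element core, is set-inclusion minimal with $\Amc,\Tmc_k \models A_k(a)$, and --- because $\Sigma = \{r,s,t,u\}$ contains no concept names --- consists solely of role assertions. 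Thus all concept memberships are produced by $\Tmc_k$ starting from $\top \sqsubseteq A_0$, and minimality is a statement purely about the underlying edge-tree.

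The core of the argument is the following strengthened claim, proved by induction on $|\mn{ind}(\Amc)|$: for every tree-shaped $\Sigma$-ABox $\Amc$ with root $a$, (i) if $\Amc$ is minimal with $\Amc,\Tmc_k \models A_i(a)$ then $\mn{br}(\Amc)\le i$, and (ii) if $\Amc$ is minimal with $\Amc,\Tmc_k \models B_{x,j}(a)$ then $\mn{br}(\Amc)\le j$. The key structural observation, obtained from the derivation trees of Lemma~\ref{lem:AQderivation}, is that in a minimal $\Amc$ \emph{every} assertion is used by one fixed derivation: if $D$ derives $A_i(a)$ and $\Amc_D\subseteq\Amc$ is the set of assertions occurring at the leaves of $D$, then $\Amc_D,\Tmc_k\models A_i(a)$, so minimality forces $\Amc=\Amc_D$. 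Reading off the rule applied at the root of $D$, this pins down the shape of $a$: for (ii), $a$ has a single $x$-child $c$ and $\Amc^c$ is a minimal witness for $A_j(c)$ or for $B_{x,j}(c)$, so the induction hypothesis together with the fact that prepending a one-child parent does not change the branching number gives $\mn{br}(\Amc)\le j$. For (i) with $i\ge 1$, $A_i$ can only be produced by $B_{r,i-1}\sqcap B_{s,i-1}\sqsubseteq A_i$ or $B_{t,i-1}\sqcap B_{u,i}\sqsubseteq A_i$, and the all-used property forces $a$ to have exactly the two children feeding the chosen rule and no others.

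It then remains to combine the branching numbers of the two subtrees using the labelling algorithm for $\mn{br}$. In the $r/s$ case both subtrees are minimal $B_{\cdot,i-1}$-witnesses, so by (ii) each has branching at most $i-1$; two children of label $\le i-1$ yield a root label $\le i$. In the $t/u$ case the $t$-subtree has branching at most $i-1$ while the $u$-subtree, being a minimal $B_{u,i}$-witness, has branching at most $i$; since then at most one child (the $u$-child) can carry the maximal label, the root label is again $\le i$ rather than $i+1$. This asymmetric gadget is exactly what lets depth grow without branching. Applying the claim with $i=k$ gives $\mn{br}(\Amc)\le k$ for all $\Amc\in\Mmc_{Q_k}$, as required.

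The step I expect to be most delicate is the $t/u$ case of (i): the rule $B_{t,i-1}\sqcap B_{u,i}\sqsubseteq A_i$ refers to $A_i$ at the \emph{same} index, so an induction on $i$ alone does not close. Inducting on $|\mn{ind}(\Amc)|$ repairs this, since the $u$-witness lives in a strictly smaller subtree. The second point needing care is justifying that set-inclusion minimality --- which a priori allows several competing derivations --- really does force the clean two-child decomposition above, which is where the ``all assertions used by one derivation tree'' observation does the work.
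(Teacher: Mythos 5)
Your proposal is correct and follows essentially the same route as the paper: both reduce linear Datalog rewritability to bounded branching via Lemmas~\ref{lem:LDLog} and~\ref{lem:pathwidthbranching} and then bound $\mn{br}(\Amc)$ for $\Amc\in\Mmc_{Q_k}$ by an induction that uses minimality to force the two-child $rs$/$tu$ decomposition, with the asymmetric $tu$-rule being what keeps the branching number from growing. The paper packages this as a structural claim (every individual has degree at most two and is an $rs$- or $tu$-individual) followed by the exact correspondence ``$a$ is of type $i$ iff $\mn{br}(\Amc^a)=i$'' proved by induction on the number of leaves, whereas you run a mutual induction on minimal $A_i$/$B_{x,j}$-witnesses via derivation trees and prove only the upper bound, which indeed suffices.
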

\begin{proof}
We show that $\mn{br}(Q_k) = k$, which implies rewritability into linear Datalog by Lemma~\ref{lem:pathwidthbranching} and Theorem~\ref{thm:NLPTime}.

Let $\Amc \in \Mmc_{Q_k}$. We show that $\mn{br}(\Amc) = k$. First, let us analyse the types $\mn{tp}_{\Amc,\Tmc_k}(a)$, $a \in \mn{ind}(\Amc)$, and the structure of~\Amc. Since $\top \sqsubseteq A_0 \in \Tmc_k$, none of the types $\mn{tp}_{\Amc,\Tmc_k}(a)$ is empty. It is easy to verify that $\Tmc_k \models A_i \sqsubseteq A_{i-1}$ and $\Tmc_k \models B_{x,i} \sqsubseteq B_{x,i-1}$ for $1 \leq i \leq k$ and $x \in \{r,s,t,u\}$. 
We say that $a$ {\em is of type $i$} if $i$ is the largest integer such that $A_i \in \mn{tp}_{\Amc, \Tmc_k}(a)$ and that $a$ {\em is of $x$-type $j_x$} if $j_x$ is the largest integer such that $B_{x,j_x} \in \mn{tp}_{\Amc, \Tmc_k}(a)$.
\\[2mm]
\textbf{Claim 1.} Every individual in $\Amc$ has degree at most two and every individual of degree two is an $rs$-individual or a $tu$-individual.
\\[2mm]
We first argue that every individual has at most one $x$-successor for every $x \in \{r,s,t,u\}$. Assume to the contrary that there are distinct individuals $a,b,c$ and assertions $x(a,b), x(a,c) \in \Amc$ for some $x \in \{r,s,t,u\}$. Let $b$ have type $j$ and $x$-type $\ell$ and $c$ have type $m$ and $x$-type $n$. Then $B_{x,j}$, $B_{x,\ell}$, $B_{x,m}$ and $B_{x,n}$ are derived at $a$, but since $\Tmc_k \models B_{x,i} \sqsubseteq B_{x,i-1}$ for $1 \leq i \leq k$, these four concept names are already implied by $B_{x,\mn{max}\{j,\ell,m,n\}}$. Thus one of the individuals $b,c$ can be removed without altering the result of the query. 

Now we argue that every individual with degree greater than one is either an $rs$-individual or a $tu$-individual. All other combinations do not appear due to the minimality of \Amc. For example, assume that there is an $rst$-node $a$. Then some $B_{r,j}, B_{s,\ell}, B_{t,m}$ are derived at~$a$, assume that $j, \ell, m$ are maximal with this property. If $a$ is the root of \Amc, then the $t$-edge can be removed. If $a$ is not the root, it must be connected to its parent by a $t$-edge, since otherwise, the $t$-edge below $a$ can be removed. So assume, $a$ is a $t$-successor of its parent. If now $m \leq \mn{min}(j,\ell)$, the $t$-edge below $a$ can be removed. If $m > \mn{min}(j,\ell)$, but then both the $r$-edge and the $s$-edge can  be removed. In either case, \Amc is not minimal. This finishes the proof of Claim~1.

\smallskip

Using minimality of \Amc, it can be argued that for every $x$-individual $a$ (an indiviual with only one outgoing edge) with $x \in \{r,s,t,u\}$, there is some $b \in \mn{ind}(\Amc)$ with $x(b,a) \in \Amc$ and it follows that a path from one branching point to the next is always a chain of the same role. 
\\[2mm]
\textbf{Claim 2.} $a$ is of type $i$ iff $\mn{br}(\Amc^a)=i$ for all $a \in \mn{ind}(\Amc)$ that are leaves or of degree two.
\\[2mm]
We prove the claim by induction on the number $n$ of leaves $\Amc^a$. If $n=1$, then $a$ is a leaf itself, thus of type~0, and the statement follows. Now let $n > 1$ and let $a$ be an individual of degree two with $n$ leaves below it. We only argue the `if' direction, the `only if' direction can be argued similarly. So assume that $\mn{br}(\Amc^a) = i$ for some $i \geq 1$. Then by Claim~1, $a$ has two outgoing paths that both reach two nodes $b$ and $c$ that are a leaf or of degree two. Let $j = \mn{br}(\Amc^b)$ and $\ell = \mn{br}(\Amc^c)$ and w.l.o.g. assume $j \geq \ell$. By induction hypothesis, $b$ is of type $j$ and $c$ is of type $\ell$. There are two possibilities: Either $j = \ell$, which implies $i = j+1 = \ell+1$, or $j > \ell$, which implies $i=j$. In case $j = \ell$, $a$ must be an $rs$-individual. In fact, assuming $a$ was a $tu$-individual, then $A_i(a)$ would be derived using $B_{t,i-1} \sqcap B_{u,i} \sqsubseteq A_{i}$, so a full binary tree of depth $i$ below the $t$-successor of $a$ is not needed and one could remove any leaf below the $t$-successor of $a$ (contradicting minimality of $\Amc$), decreasing the depth of the largest binary tree minor by at most one. So since $a$ is an $rs$-individual, $B_{r,i-1} \sqcap B_{s,i-1} \sqsubseteq A_{i}$ applies and $a$ has type $i$. In case $j > \ell$, one can argue in a similar way that $a$ must be a $tu$-individual and $j = \ell +1$, and it follows that $a$ has type $i$.

\smallskip

Since $\Amc \models Q_k(a)$ for the root $a$ of $\Amc$, we know that $a$ is of type $k$, so Claim~2 says that $\mn{br}(\Amc) = k$.
\end{proof}

The following proofs rely on an estimate of $\ell_d^k(n)$, namely on the fact that $\ell_d^k(n)$ as a function of $n$ grows like a polynomial of degree~$k$. This is
established by the following lemma.
\begin{lemma}
\label{lem:numleaves}
$(d-1)^k(n-k)^k \leq \ell_d^k(n) \leq (k+1)(d-1)^k n^k$
for all $d,k \geq 0$ and $n \geq 2k$.
\end{lemma}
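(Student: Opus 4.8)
The plan is to reduce the lemma to a clean recurrence for $\ell_d^k(n)$ and then read off both bounds from its closed-form solution. First I would normalise an extremal tree so that its root has exactly $d$ children: one child rooting a subtree of branching number at most $k$, and the remaining $d-1$ children rooting subtrees of branching number at most $k-1$. This split is forced by the labelling algorithm that computes $\mn{br}$ described before Lemma~\ref{lem:pathwidthbranching}: if two children of the root carried label $k$, then the root would receive label $k+1$, so at most one child may reach branching number $k$. Maximising the number of leaves under this constraint, and using that a subtree of smaller admissible branching number can only have fewer leaves, yields the recurrence
\[
\ell_d^k(n) \;=\; \ell_d^k(n-1) \;+\; (d-1)\,\ell_d^{k-1}(n-1),
\]
with base cases $\ell_d^k(0)=1$ and $\ell_d^{-1}(n)=0$. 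I would note that we may assume $d\ge 2$, since for $d\le 1$ every admissible tree is a single path and the statement is immediate.

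Solving this recurrence is the combinatorial core of the argument, and I would verify by induction on $n$, using Pascal's rule in the inductive step, that
\[
\ell_d^k(n) \;=\; \sum_{i=0}^{k}\binom{n}{i}(d-1)^i .
\]
As a sanity check this matches the leaf counts of the trees $\Amc^n_k$ from Lemma~\ref{lem:Ank}, which are precisely the extremal trees. The upper bound is then routine: since $\binom{n}{i}\le n^i\le n^k$ and $(d-1)^i\le (d-1)^k$ for $0\le i\le k$ when $d\ge 2$, the sum of its $k+1$ nonnegative terms is bounded by $(k+1)(d-1)^k n^k$, as required.

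For the lower bound I would retain only the top-degree summand, giving $\ell_d^k(n)\ge \binom{n}{k}(d-1)^k$, and then estimate the binomial coefficient from below; the elementary bound $\binom{n}{k}\ge\bigl(\tfrac{n-k}{k}\bigr)^k$ already produces a lower bound of the correct shape, namely $(d-1)^k$ times a degree-$k$ polynomial in $n$. The main obstacle I anticipate lies exactly here, in matching the precise constant $(d-1)^k(n-k)^k$ claimed in the statement rather than merely the right order of growth: this is where the hypothesis $n\ge 2k$ must be exploited, both to ensure that $\binom{n}{0},\dots,\binom{n}{k}$ are increasing (so that passing to a single term loses little) and to absorb the factorial normalisation hidden in $\binom{n}{k}=\tfrac1{k!}\prod_{j=0}^{k-1}(n-j)$. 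I would attempt to close this gap either by summing several of the lower-order terms in the closed form instead of just the leading one, or by an induction on $k$ that unrolls the $n$-recurrence and compares the resulting sum $\sum_j j^{k-1}$ with $\int x^{k-1}\,dx$. This last estimation step, rather than the derivation of the closed form, is the delicate part of the proof and is where I would concentrate most of the care.
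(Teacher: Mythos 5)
Your proposal reconstructs the paper's proof step for step: the same root decomposition yielding the recurrence $\ell_d^k(n)=\ell_d^k(n-1)+(d-1)\,\ell_d^{k-1}(n-1)$, the same closed form $\ell_d^k(n)=\sum_{i=0}^{k}(d-1)^i\binom{n}{i}$ established by induction on $n$ via Pascal's rule, and the same term-wise estimate for the upper bound (the paper instead replaces every summand by the largest one, which is the $i=k$ term when $n\ge 2k$ and $d\ge 2$; this is the same computation). So as far as the intended argument goes, you have found it.

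The step you flag as delicate is precisely where the paper's own proof is too quick, and your suspicion is justified: that step cannot be closed in the form stated. Keeping only the $i=k$ summand gives $\ell_d^k(n)\ge (d-1)^k\binom{n}{k}$, and $\binom{n}{k}=\tfrac{1}{k!}\prod_{j=0}^{k-1}(n-j)\ge (n-k)^k/k!$, but the factor $1/k!$ cannot be dispensed with: for $d=2$, $k=2$, $n=10$ the closed form gives $\ell_2^2(10)=1+10+45=56$, whereas $(d-1)^k(n-k)^k=64$. Summing the lower-order terms, as you suggest, cannot help, since $56$ already \emph{is} the whole sum; no integral comparison will rescue it either. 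The lower bound of the lemma is therefore false as literally stated (the paper's one-line justification ``consider only the summand for $i=k$'' silently drops the $k!$), and the correct consequence of the closed form is $(d-1)^k(n-k)^k/k!\le \ell_d^k(n)$. This is benign for the only use of the lower bound, in Lemma~\ref{lem:hierarchytwo}, where any bound of the form $c\cdot(n-k)^k$ with $c>0$ depending only on $d$ and $k$ still dominates the degree-$(2k+2)$ polynomial upper bound for large $n$. (A similar degeneracy affects the upper bound when $d\le 1$ and $k\ge 1$, where $\ell_1^k(n)=1>0=(k+1)(d-1)^kn^k$; like you, the paper implicitly assumes $d\ge 2$.) In short, the obstacle you could not overcome is a genuine, if harmless, error in the lemma rather than a missing idea on your part.
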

\begin{proof}
We aim to show that for all $d,k \geq 0$ and $n \geq 2k$,
\begin{equation*}
  \label{eq:1}
  \ell_d^k(n) = \sum_{i=0}^k (d-1)^i \binom{n}{i}
\tag{$*$}
\end{equation*}
From ($*$), the lower bound stated in the lemma is obtained by considering only the summand for $i=k$ and the upper bound is obtained by replacing every summand with the largest summand, which is the one for $i=k$ if $n \geq 2k$. 

\smallskip

Towards proving ($*$), we first observe that for all $n \geq 1$ and $k \geq 1$:
\begin{equation}
  \label{eq:2}
\ell_d^k(n) = \ell_d^k(n-1) + (d-1)\ell_d^{k-1}(n-1)
\tag{$**$}
\end{equation}
Let $T$ be a tree with degree $d$ and depth $n$ that does not contain the full binary tree of depth $k+1$ as a minor and that has the largest possible number of leaves. It can easily be seen that the root of $T$ has degree $d$ and that $T$ contains the full binary tree of depth $k$ as a minor. Consider the subtrees $T_1, \ldots, T_d$ whose roots are the children of the root of $T$. There must be one of them that also has the full binary tree of depth $k$ as a minor and all of them must have the full binary tree of depth $k-1$ as a minor, otherwise $T$ would not have the maximum number of leaves. Moreover, there cannot be two subtrees that both have the full binary tree of depth $k$ as a minor, since then $T$ would have a minor of depth $k+1$. Since the number of leaves of $T$ is the sum of the leaves of all $T_j$, ($**$) follows.

\smallskip

Now we prove ($*$) by induction on $n$.  First observe that $\ell_d^k(0) = \ell_d^0(n) = 1$ for all $d,k,n$, thus ($*$) holds for all cases where $k=0$ or $n=0$. Now let $k \geq 1$ and $n \geq 1$ and assume that ($*$) holds for $\ell_d^k(n)$ and for $\ell_d^{k-1}(n)$. We show that it also holds for $\ell_d^k(n+1)$:
\begin{align*}
  \ell_d^k(n+1) &= \ell_d^k(n) + (d-1) \cdot \ell_d^{k-1}(n)\\
  &= \sum_{i=0}^k (d-1)^i \binom{n}{i} + (d-1) \sum_{i=0}^{k-1}(d-1)^i\binom{n}{i}\\
  &=\sum_{i=0}^k (d-1)^i \binom{n}{i} + \sum_{i=1}^k(d-1)^i \binom{n}{i-1}\\
  &= 1 + \sum_{i=1}^k (d-1)^i \binom{n+1}{i}\\
  &=\sum_{i=0}^k(d-1)^i\binom{n+1}{i}
\end{align*}
\end{proof}

To show that linear Datalog rewritings of the defined family of OMQs require unbounded width, we first show that they require unbounded diameter and then proceed by showing that the width of rewritings cannot be significantly smaller than the required diameter.  To make the latter step work, we actually show the former on an infinite family of classes of ABoxes of restricted shape. More precisely, for all $i \geq 0$ we consider the class $\Cmf_i$ of all forest-shaped $\Sigma$-ABoxes in which the distance between any two branching individuals exceeds $i$ (where a forest is a disjoint union of trees and a branching individual is one that has at least two successors). Since the queries $Q_k$ are closed under subdivisions of ABoxes, each class $\Cmf_i$ contains ABoxes whose root is an answer to the query. 

The idea for proving that any linear Datalog rewriting of $Q_k$ requires high diameter is then as follows. We assume to the contrary that there is a linear Datalog rewriting $\Pi$ of $Q_k$ that has low diameter and consider the linear Datalog derivation of $Q_k(a)$ in some $\Amc^n_k$ with root $a$. A careful analysis shows that $\Amc_D$ contains a tree-shaped sub-ABox of depth $n$ that has as many leaves as $\Amc^n_k$ and thus by Lemma~\ref{lem:numleaves} contains a deep full binary tree as a minor. Thus $\Amc_D$ has high pathwidth which contradicts the assumption that $\Pi$ has low diameter.

\begin{lemma}
\label{lem:hierarchytwo}  
For any $i \geq 0$, $Q_{2k+3}$ is not rewritable into a linear Datalog program of diameter $k$ on the class of ABoxes~$\Cmf_i$. 
\end{lemma}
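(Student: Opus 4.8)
The plan is to argue by contradiction: assuming a linear Datalog rewriting $\Pi$ of $Q_{2k+3}$ on $\Cmf_i$ of diameter $k$, I will produce a single derivation-induced ABox that is forced to have pathwidth at most $k$ (because the diameter is $k$) and pathwidth at least $k+1$ (because it must contain a deep full binary tree as a minor). First I would fix a large $n$ and build a witness $\Amc$ by taking $\Amc^n_{2k+3}$ and subdividing every edge into a directed path carrying the same role name, making the subdivided paths long enough that any two branching individuals have distance exceeding $i$; then $\Amc \in \Cmf_i$. Since $Q_{2k+3}$ is closed under subdivision and $\Amc^n_{2k+3}\in\Mmc_{Q_{2k+3}}$ by Lemma~\ref{lem:Ank}, the subdivided $\Amc$ is still minimal, i.e.\ $\Amc\in\Mmc_{Q_{2k+3}}$, and $\Amc\models Q_{2k+3}(a)$ for its root $a$. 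By completeness of $\Pi$ on $\Cmf_i$ we obtain $\Amc\models\Pi(a)$; fixing a derivation $D$ and applying Lemma~\ref{lem:nicestructure} yields an ABox $\Amc_D$ with $\Amc_D\models\Pi(a)$, a homomorphism $h\colon\Amc_D\to\Amc$ that is the identity on $a$, and $\mn{pw}(\Amc_D)\le k$.

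The next step is to show that $h$ is surjective. If $h$ missed some assertion of $\Amc$, then $h$ would be a homomorphism from $\Amc_D$ onto the proper sub-ABox $h(\Amc_D)\subsetneq\Amc$; since Datalog answers are preserved under ABox homomorphisms, $h(\Amc_D)\models\Pi(a)$, and since $h(\Amc_D)$ is a sub-forest of $\Amc$ it still lies in $\Cmf_i$, so soundness of $\Pi$ gives $h(\Amc_D)\models Q_{2k+3}(a)$ --- contradicting minimality of $\Amc$ in $\Mmc_{Q_{2k+3}}$. Hence $h$ covers every individual of $\Amc$, in particular every leaf.

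The heart of the argument, and the step I expect to be the main obstacle, is to lift the branching of $\Amc$ back into $\Amc_D$. Using surjectivity and the fact that $\Amc$ is a tree whose branches diverge only at branching individuals more than $i$ apart, I would select, for each leaf of $\Amc$, a directed preimage path in $\Amc_D$ from $a$ to a preimage of that leaf, arranging the choices so that two paths share a prefix exactly up to the preimage of the $\Amc$-node at which the corresponding leaves diverge. Because $h$ maps into a tree, apparent forks in $\Amc_D$ (two edges with the same role into one node) are sent to a single node and can be removed by fork elimination, so the union of the chosen paths contracts to a tree-shaped minor $T$ of $\Amc_D$ of degree at most $2$, of depth $n$ (after contracting the subdivided paths), with at least $\ell_2^{2k+3}(n)$ leaves. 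The care here lies in guaranteeing that the branch-divergence points really survive as genuine branchings in $\Amc_D$ --- this is exactly where the subdivision length and the separation forced by $\Cmf_i$ are used --- and in keeping the degree equal to $2$; the slack between $2k+3$ and $2k+1$ is what absorbs the constant losses incurred in this lifting.

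Finally I would close by counting. By Lemma~\ref{lem:numleaves} we have $\ell_2^{2k+3}(n)\ge(n-2k-3)^{2k+3}$, whereas the maximal number of leaves of a depth-$n$ degree-$2$ tree containing no full binary tree of depth $2k+1$ as a minor is $\ell_2^{2k}(n)\le(2k+1)\,n^{2k}$; for all sufficiently large $n$ the former strictly exceeds the latter. Hence $T$ must contain the full binary tree of depth $2k+1$ as a minor, and since that tree has pathwidth $\lceil(2k+1)/2\rceil=k+1$ and pathwidth does not increase when passing to a minor, $\mn{pw}(\Amc_D)\ge k+1$. This contradicts $\mn{pw}(\Amc_D)\le k$ obtained from Lemma~\ref{lem:nicestructure}, so no such $\Pi$ can exist, which is the claim of the lemma for the given $i$ and arbitrary $k$.
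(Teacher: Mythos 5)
Your overall strategy matches the paper's: pass to the subdivided ABox $\Bmc^n_{2k+3}\in\Cmf_i\cap\Mmc_{Q_{2k+3}}$, extract $\Amc_D$ from a derivation via Lemma~\ref{lem:nicestructure}, use minimality to get surjectivity of the homomorphism onto $\Amc$, and close by counting leaves against Lemma~\ref{lem:numleaves}. But the step you yourself flag as the main obstacle --- extracting from $\Amc_D$ a \emph{degree-two} tree minor whose branchings sit exactly at preimages of the branching individuals of $\Amc$ --- is where the argument breaks, and the subdivision length plus the $\Cmf_i$ separation do not repair it. Two concrete problems: first, a given preimage $u$ of a branching individual $v$ of $\Amc$ need not have outgoing edges covering both children of $v$ (surjectivity of $h$ only guarantees that \emph{some} preimage of each assertion exists somewhere in $\Amc_D$), so the preimage paths toward the two subtrees may be forced through different preimages of $v$ and hence to diverge at an earlier node mapping to a non-branching individual; second, nothing bounds the out-degree of $\Amc_D$, so the union of your chosen paths (equivalently, the tree obtained from it by fork elimination) can have out-degree as large as the number of leaves, and then the upper bound you need on leaf counts --- which by Lemma~\ref{lem:numleaves} grows like $(d-1)^{2k+2}$ in the degree $d$ --- is useless. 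Your final inequality $\ell_2^{2k+3}(n)>\ell_2^{2k}(n)$ only applies to trees of degree~$2$. You also leave implicit why a directed path from the root to a preimage of each leaf exists in $\Amc_D$ at all.

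The paper circumvents all of this by not attempting to reconstruct the branching structure of $\Amc$ isomorphically: it first prunes $\Amc_D$ to degree at most $|\Tmc|$ using Lemma~\ref{lem:smalldegree} and restricts to the part reachable from the root by a directed path (which is also what guarantees the existence of your paths), then fork-eliminates the \emph{entire} resulting DAG into a tree $\Bmc_t$ of degree at most $|\Tmc|$ and depth at most $n(i+1)$ that still has all $m\ge(n-2k-3)^{2k+3}$ leaf-preimages as leaves. The contradiction then comes purely from the gap in polynomial degree: if $\Bmc_t$ had no full binary tree of depth $2k+3$ as a minor, it could have at most $(2k+3)(|\Tmc|-1)^{2k+2}(n(i+1))^{2k+2}$ leaves, a degree-$(2k+2)$ polynomial in $n$, which for large $n$ is beaten by the degree-$(2k+3)$ lower bound; this slack is exactly why the lemma is stated for $Q_{2k+3}$ rather than $Q_{2k+1}$. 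If you replace your hand-picked paths by this prune-then-fork-eliminate step and target the depth-$(2k+3)$ binary tree minor instead of depth $2k+1$, the rest of your argument goes through.
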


\begin{proof}
  Let $i \geq 0$. For $n \geq k \geq 1$, denote by $\Bmc^n_k$ the ABox obtained from $\Amc^n_k$ by subdividing every edge into a path of length $i+1$ of the same role. Note that $\Bmc^n_k \in \Cmf_i$. Using Lemma~\ref{lem:Ank}, it is easy to see that $\Bmc^n_k \in \Mmc_{Q_k}$ and $\Bmc^n_k$ has $\ell_2^k(n)$ leaves, so from Lemma~\ref{lem:numleaves} it follows that $\Bmc_k^n$ has at least $(n-k)^k$ leaves.
  
For the sake of contradiction, assume that there is a $k \geq 1$, such that $Q_{2k+3}$ is rewritable into a linear Datalog program $\Pi$ of diameter $k$ on the class $\Cmf_i$. Choose $n$ very large (we will make this precise later) and let $\Amc=\Bmc_{2k+3}^n$, so $\Amc \in \Mmc_{Q_{2k+3}}$, it has depth $n(i+1)$ and at least $(n-2k-3)^{2k+3}$ leaves.

Let $a_0$ be the root of $\Amc$. We have $\Amc, \Tmc \models \Pi(a_0)$ and thus there is a derivation $D$ of $\Pi(a_0)$ in \Amc. Consider the ABox $\Amc_D$. By Lemma~\ref{lem:nicestructure}, we have the following:
  \begin{enumerate}
  \item $\Amc_D \models \Pi(a_0)$;
  \item there is a homomorphism from $\Amc_D$ to $\Amc$ that is the identity on $a_0$;
  \item $\Amc_D$ has pathwidth at most $k$.
  \end{enumerate}
We manipulate $\Amc_D$ as follows:
  \begin{itemize}
  \item restrict the degree to $|\Tmc|$ by taking a subset according to Lemma~\ref{lem:smalldegree};
  \item remove all assertions that involve an individual $a$ that is not reachable from $a_0$ in $G_\Amc$ by a directed path.
  \end{itemize}
  We use \Bmc to denote the resulting ABox. It can be verified that Conditions 1 to~3 still hold when $\Amc_D$ is replaced with \Bmc. In particular, this is true for Condition~1 since $\Amc_D \models \Pi(a_0)$ iff $\Amc_D \models Q_k(a_0)$ iff $\Bmc \models Q_k(a_0)$ iff $\Bmc \models \Pi(a_0)$. The second equivalence is easy to establish by showing how a model witnessing $\Bmc \not \models Q_k(a_0)$ can be transformed into a model that witnesses $\Amc_D \not \models Q_k(a_0)$. 

  Choose a homomorphism $h$ from \Bmc to \Amc that is the identity on $a_0$. Then $h$ must be surjective since otherwise, the restriction $\Amc^-$ of \Amc to the individuals in the range of $h$ would satisfy $\Amc^-,\Tmc \models A_0(a_0)$, contradicting the minimality of~\Amc. Let $a_1,\dots,a_m$ be the leaves of \Amc, $m \geq (n-2k-3)^{2k+3}$. For each~$a_i$, choose a $b_i$ with $h(b_i)=a_i$. Clearly, all individuals in $b_1,\dots,b_m$ must be distinct. 
  
  By construction, $\Bmc$ is connected. Since there is a homomorphism from \Bmc to~\Amc, \Bmc must be a DAG (directed acyclic graph). We proceed to exhaustively remove assertions from \Bmc as follows: whenever $r(c_1,c),r(c_2,c) \in \Bmc$ with $c_1 \neq c_2$, then choose and remove one of these two assertions. Using the fact that every individual in \Bmc is reachable from $a_0$, it can be proved by induction on the number of edge removals that the obtained ABoxes
  \begin{enumerate}
  \item[(i)] remain connected and
  \item[(ii)] contain the same individuals as \Bmc, that is, edge removal never results in the removal of an individual.
  \end{enumerate}
  Point (i) and the fact that we start from a DAG-shaped ABox means that the ABox $\Bmc_t$ ultimately obtained by this manipulation is tree-shaped. By construction of $\Bmc_t$, $h$ is still a homomorphism from $\Bmc_t$ to \Amc, $\Bmc_t$ has pathwidth at most~$k$, and the individuals $b_1,\ldots,b_m$ are leaves in $\Bmc_t$ (and thus $\Bmc_t$ has at least $(n-2k-3)^{2k+3}$ leaves). From the former, it follows that the depth of $\Bmc_t$ is at most $n(i+1)$.
  
  Assume that $\Bmc_t$ does not contain the full binary tree of depth $2k+3$ as a minor.
  Then by Lemma~\ref{lem:numleaves}, the number of leaves of $\Bmc_t$ is
  at most
  $$(2k+3)(|\Tmc|-1)^{2k+2}(n(i+1))^{2k+2},
  $$
  which is a polynomial in $n$ of degree $2k+2$. So if we choose $n$ such that
  $$
  (n-2k-3)^{2k+3} > (2k+3)(|\Tmc|-1)^{2k+2}(n(i+1))^{2k+2}
  $$
  in the beginning, this leads to a contradiction.
  Hence, $\Bmc_t$ must contain as a minor the full binary tree of depth at least
  $2k+3$. But it is well-known that any such tree has pathwidth at least $k+1$, in contradiction to $\Bmc_t$
  having pathwidth at most $k$.
\end{proof}
We are now ready to establish the hierarchy.
\begin{proposition}
\label{lem:lowerbounddiameter}
$Q_{8\ell+13}$ is not rewritable into a linear Datalog program of width $\ell$.
\end{proposition}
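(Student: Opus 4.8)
The plan is to reduce the statement to the diameter lower bound of Lemma~\ref{lem:hierarchytwo} by converting width into diameter on a class $\Cmf_i$. Set $k := 4\ell+5$, so that $Q_{8\ell+13}=Q_{2k+3}$, and assume towards a contradiction that $Q_{8\ell+13}$ is rewritable into a linear Datalog program $\Pi$ of width $\ell$. Choose $i$ strictly larger than the number of variables occurring in any single rule of $\Pi$. I then aim to show that on the class $\Cmf_i$, $\Pi$ is equivalent to a linear Datalog program $\Pi'$ of diameter at most $4\ell+5=k$; since $\Pi$ rewrites $Q_{8\ell+13}$, the program $\Pi'$ rewrites $Q_{8\ell+13}$ on $\Cmf_i$, and this contradicts Lemma~\ref{lem:hierarchytwo} applied with this $k$ and this $i$.

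The heart of the argument is the construction of $\Pi'$, which rewrites each rule of $\Pi$ into a sequence of small rules that simulate it on $\Cmf_i$-ABoxes. I would exploit two structural facts: (a) linearity, so that each rule body contains exactly one IDB atom, whose substitution forms an interface of at most $\ell$ individuals separating the already-derived part of a derivation from its continuation; and (b) the choice of $i$, which guarantees that the homomorphic image in any $\Cmf_i$-ABox of a connected part of a rule body lies within a ball of radius $<i$ and hence meets at most one branching individual of the ABox. After the degree reduction of Lemma~\ref{lem:smalldegree} and the tree-ification used in the proof of Lemma~\ref{lem:hierarchytwo}, the EDB part of each rule therefore realises, in the witness ABox, a tree with at most one branching vertex that hangs between the incoming and the outgoing interface. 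Since $Q_k$ is closed under subdivisions of ABoxes, I can replace each such rule by path-advancing rules of the form $T'(\ybf,y)\leftarrow T(\ybf,x)\wedge r(x,y)$ together with a single rule that handles the lone branching. Carrying the incoming interface, the outgoing interface, and the at most two further interface-sized tuples needed to keep the two branches at the unique branching point synchronised with the outgoing interface accounts for at most $4\ell$ individuals, and the constantly many path and branch vertices contribute the additive constant, yielding diameter at most $4\ell+5$.

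Having obtained $\Pi'$, the contradiction is immediate via Lemma~\ref{lem:hierarchytwo}. Alternatively one can phrase the whole argument without naming $\Pi'$: a width-$\ell$ derivation on $\Cmf_i$ yields, after the same reductions, a tree-shaped witness $\Bmc_t$ of pathwidth at most $4\ell+5$ together with a homomorphism to $\Amc=\Bmc^n_{8\ell+13}$; running the leaf-counting argument of Lemma~\ref{lem:hierarchytwo} verbatim—using Lemma~\ref{lem:numleaves} and the fact that $\Bmc^n_{8\ell+13}$ has $\ell_2^{8\ell+13}(n)$ leaves, which grows like a polynomial of degree $8\ell+13$ in the depth—then forces $\Bmc_t$ to contain the full binary tree of depth $8\ell+13$ as a minor for large $n$. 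Such a tree has pathwidth $\lceil (8\ell+13)/2\rceil = 4\ell+7 > 4\ell+5$, contradicting the pathwidth bound on $\Bmc_t$.

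The main obstacle is precisely the width-to-diameter conversion of the second paragraph: proving that the rewriting of each rule into path-advancing and single-branch steps is sound and complete on $\Cmf_i$, which is where subdivision-closure of $Q_k$ and the separation of branchings granted by the choice of $i$ must be combined carefully, and where one must verify that long non-branching stretches of a body can legitimately be traversed one edge at a time rather than matched in a single rule. Tracking the constants so that the resulting diameter is genuinely bounded by $4\ell+5$ (rather than a larger multiple of $\ell$) is the delicate part; everything downstream—the leaf estimate of Lemma~\ref{lem:numleaves} and the pathwidth of deep full binary trees—is already available from the surrounding development.
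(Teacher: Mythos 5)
Your overall strategy is the same as the paper's: assume a width-$\ell$ rewriting, pass to a subdivision class $\Cmf_i$ on which the program can be converted into one of diameter $4\ell+5$, and contradict Lemma~\ref{lem:hierarchytwo} (your choice of $i$ as exceeding the number of variables per rule plays the same role as the paper's choice $i=k$ with $k$ the diameter of the given program, and either works since the lemma is uniform in $i$). The decomposition of long non-branching stretches into binary path-advancing rules is also exactly the paper's third rewriting step.

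There is, however, a genuine gap in your accounting at the width-to-diameter conversion. You argue that the EDB part of a rule body realises ``a tree with at most one branching vertex'' and budget $4\ell$ variables as ``incoming interface, outgoing interface, and at most two further interface-sized tuples needed to keep the two branches at the unique branching point synchronised.'' But the ball-of-radius-$<i$ argument only applies \emph{per connected component} of the EDB part of the body, and a single rule body may have many such components; there is no unique branching point. The paper handles this in two steps you omit: first it discards every EDB component that contains neither a head variable nor a variable of the IDB atom (soundness of this pruning is not automatic for linear Datalog and is proved via a disjoint-union argument together with the fact that answers to $(\EL,\textnormal{AQ})$ OMQs depend only on the reachable part of the ABox); after pruning, each remaining component contains a special variable, so there are at most $2\ell$ components, hence at most $2\ell$ branching variables, and the interface to be carried is the set of at most $2\ell$ special plus at most $2\ell$ branching variables --- which is where the $4\ell$ really comes from. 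Without the pruning step and the per-component count, your construction is incomplete for rules with disconnected bodies, and your stated justification for the $4\ell$ bound does not hold. The rest of your argument (the inlined leaf-counting alternative in your last paragraph) is just Lemma~\ref{lem:hierarchytwo} unfolded and is fine once the diameter bound is actually secured.
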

%
%
\begin{proof}
  Assume to the contrary of what we have to show that $Q_{8\ell+13}$ is
  rewritable into a linear Datalog program $\Pi_0$ of width $\ell$. Let $k$ be
  the diameter of $\Pi_0$. Clearly, $\Pi_0$ is also a rewriting of  $Q_{8\ell+13}$ 
  on the class of ABoxes $\Cmf_k$. We show that $\Pi_0$ can be rewritten
  into a linear Datalog rewriting $\Pi'$ of $Q_{8\ell+13}$ of diameter $4 \ell+5$,  
  in contradiction to Lemma~\ref{lem:hierarchytwo}.

  We carry out a sequence of three rewriting steps. Informally, in the first
  rewriting we normalize the shape of rule bodies, in the second one we
  control the number of disconnected components in the rule body (or
  rather its restriction to the EDB relations), and in the third one we actually bound
  the diameter to $4 \ell + 5$. 

  In the first step, let $\Pi_1$
  be obtained from $\Pi_0$ by replacing every rule $S(\xbf) \leftarrow q(\ybf)$ 
  in $\Pi_0$ with the set of all rules $S(\xbf') \leftarrow q(\ybf')$ that can be obtained from the
  original rule by consistenly identifying variables in the rule body and head
  such that the restriction of $q(\ybf')$ to EDB relations (that is, concept and
  role names in $\Sigma$) takes the form of a forest in which every
  tree branches at most once. This step preserves equivalence on~$\Cmf_k$ since every homomorphism from
  the body of a rule in $\Pi$ 
into an ABox from~$\Cmf_k$ (and also to the extension of
 such an ABox with IDB relations) induces a variable identification that identifies a
 corresponding rule produced in the rewriting. 

 In the next step, we rewrite $\Pi_1$ into a linear Datalog program $\Pi_2$,
 as follows. Let $S(\xbf) \leftarrow q(\ybf)$ be a rule in $\Pi_1$ and call a variable in $q(\ybf)$
 \emph{special} if it occurs in \xbf or in the IDB atom in $q(\ybf)$, if existent. We obtain a new rule body $q'(\ybf')$ from $q(\ybf)$ in the following way:
 \begin{enumerate}

 \item remove the IDB atom (if existent), obtaining a forest-shaped rule body;

 \item remove all trees that do not contain a special variable;

 \item re-add the IDB atom (if existent).

 \end{enumerate}
 In $\Pi_2$, we replace $S(\xbf) \leftarrow q(\ybf)$ with $S(\xbf) \leftarrow q'(\ybf')$.

We argue that, on the class of ABoxes $\Cmf_k$, $\Pi_2$ is equivalent to $\Pi_1$.
Thus, let $\Amc$ be
an ABox from $\Cmf_k$ and $a \in \mn{ind}(\Amc)$ such that $\Amc \models \Pi_2(a)$.
We have to show that $\Amc \models \Pi_1(a)$. Let $q_1(\xbf_1),\dots,q_m(\xbf_m)$
be all trees that have been removed from a rule body during the construction of $\Pi_2$.
Let $\Amc_i$ be $q_i(\xbf_i)$ viewed as a $\Sigma$-ABox, $1 \leq i \leq m$.
Note that each $\Amc_i$ must be in $\Cmf_k$.  Let \Bmc be the disjoint union
of the ABoxes $\Amc,\Amc_1,\dots,\Amc_m$, assuming that these ABoxes
do not share any individual names, and note that $\Bmc$ is in $\Cmf_k$.
Since $\Amc \models \Pi_2(a)$, we must have $\Bmc \models \Pi_2(a)$.
By construction of $\Bmc$, this clearly implies $\Bmc \models \Pi_1(a)$.
Consequently, $\Bmc \models Q_{8\ell+13}(a)$.
Since answers to OMQs from $(\EL,\textnormal{AQ})$ depend only on the reachable part of ABoxes,
we obtain that $\Amc \models Q_{8\ell+13}(a)$, thus $\Amc \models \Pi_1(a)$ as required.

At this point, let us sum up the most important properties of the linear Datalog program~$\Pi_2$: it is a rewriting of $Q_{8\ell+13}$ on $\Cmf_k$, has width at most $\ell$ and diameter at most $k$, and 
\begin{enumerate}

\item[($*$)] the restriction of the rule body to EDB relations is a forest that consists
  of at most $2\ell$ trees.

\end{enumerate}
Note that the upper bound of $2\ell$ is a consequence of the fact that, by construction of $\Pi_2$, each of the relevant trees contains at least one special variable.

We now rewrite $\Pi_2$ into a final linear Datalog program $\Pi_3$ that is equivalent
to $\Pi_2$, has width at most $4\ell+2$, and diameter at most $4\ell+5$. Thus
$\Pi_3$ is a rewriting of $Q_{8\ell+13}$ on $\Cmf_k$ of diameter $4\ell+5$, which is a contradiction to Lemma~\ref{lem:hierarchytwo}. 

It thus remains to give the construction of $\Pi_3$.  Let $\rho = S(\xbf) \leftarrow q(\ybf)$ be a rule in $\Pi_2$ and let $\ybf' \subseteq \ybf$ be the set of variables $x$ that are special or a branching variable where the latter means that $q(\ybf)$ contains atoms of the form $r(x,y_1)$, $s(x,y_2)$ with $y_1 \neq y_2$. Due to~($*$), $\ybf'$ contains at most $4 \ell$ variables. Let $q'(\ybf')$ be the restriction of $q(\ybf)$ to the variables in $\ybf'$; we can assume that each variable $y$ from $\ybf'$ occurs in $q'(\ybf')$ since if this is not the case, we can add an atom $\top(y)$.  By construction of $\Pi_2$, it can be verified that $q(\ybf)$ is the union of $q'(\ybf')$ and path-shaped $q_1(\ybf_1),\dots,q_n(\ybf_n)$ such that for $1 \leq i \leq n$
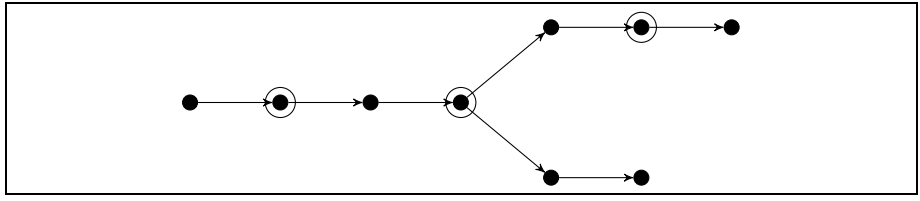
\begin{figure}
\begin{boxedminipage}[t]{\columnwidth}
\centering
\begin{tikzpicture}[->,>=stealth',level distance = 0.8cm,font=\sffamily\small]
\tikzstyle{node}=[shape=circle, draw,inner sep=2.0pt, fill=black]
\node [node] (A) at (0,0) {};
\node [node] (B) at (1.2,0) {};
\node [node] (C) at (2.4,0) {};
\node [node] (D) at (3.6,0) {};
\node [node] (E1) at (4.8,1) {};
\node [node] (F1) at (6,1) {};
\node [node] (G1) at (7.2,1) {};
\node [node] (E2) at (4.8,-1) {};
\node [node] (F2) at (6,-1) {};
\draw [->] (A) -- (B);
\draw [->] (B) -- (C);
\draw [->] (C) -- (D);
\draw [->] (D) -- (E1);
\draw [->] (E1) -- (F1);
\draw [->] (F1) -- (G1);
\draw [->] (D) -- (E2);
\draw [->] (E2) -- (F2);

\draw (B) circle (2mm);
\draw (D) circle (2mm);
\draw (F1) circle (2mm);
\end{tikzpicture}
\end{boxedminipage}
\caption{The body $q(\ybf)$ of a rule from $\Pi_2$ consists of one or several such trees, where at most one variable is branching. The branching variable and special variables are circled and they divide the body into five paths $q_i(\ybf_i)$.}
\label{fig:tree-to-path}
\vspace*{-3mm}
\end{figure}

\begin{itemize}

\item $q_i(\ybf_i)$ contains only EDB atoms,

\item each $q_i(\ybf_i)$ contains at most two variables from $\ybf'$ and
  each such variable is an end point of the path, and

\item the queries $q_1(\ybf_1),\dots,q_n(\ybf_n)$ only share variables from $\ybf'$.

\end{itemize}
The structure of $q(\ybf)$ is illustrated in Figure~\ref{fig:tree-to-path}.
We thus find  linear Datalog programs $\Gamma_1,\dots,\Gamma_n$ 
that are at most binary, of width at most two and diameter at most three
such that for any $\Sigma$-ABox \Amc and $\abf \subseteq \mn{ind}(\Amc)$,
$\Amc \models \Gamma_i(\abf)$ if and only if there is a homomorphism
$h_i$ from $q_i(\ybf_i)$ to \Amc such that $h_i(\ybf_i)=\abf$.
Let the goal relations of $\Gamma_1,\dots,\Gamma_n$ be $G_1,\dots,G_n$ and assume that $G_i$
occurs in $\Gamma_i$ only once, in a rule head $G_i(\xbf_i)$.
We assume w.l.o.g.\ that the programs $\Gamma_1,\dots,\Gamma_n$
do not share variables or IDB relations, and neither do they share variables
or IDB relations with $\Pi_2$.  In $\Pi_3$,
we replace $\rho = S(\xbf) \leftarrow q(\ybf)$ with the following rules:
\begin{itemize}
\item for any rule $P(\wbf) \leftarrow p(\zbf)$ in $\Gamma_1$ where $p(\zbf)$
  contains only EDB atoms, the rule $X^P_\rho(\ybf',\wbf) \leftarrow q'(\ybf') \wedge p(\zbf)$;

\item for any rule $P(\wbf) \leftarrow p(\zbf)$ in $\Gamma_i$, $1 \leq i \leq n$,
  where $p(\zbf)$ contains the IDB atom $R(\ubf)$, the rule
  $X_\rho^P(\ybf',\wbf) \leftarrow X^R_\rho(\ybf',\ubf) \wedge p(\zbf)$;

\item for any rule $P(\wbf) \leftarrow p(\zbf)$ in $\Gamma_i$, $1 < i \leq n$,
  where $p(\zbf)$ contains only EDB atoms, the rule
  $X^P_\rho(\ybf',\wbf) \leftarrow X^{G_{i-1}}_\rho(\ybf',\xbf_{i-1}) \wedge p(\zbf)$;

\item the rule $S(\xbf) \leftarrow X^{G_n}_\rho(\ybf',\xbf_n)$,
\end{itemize}
where the goal relations of $\Gamma_1,\dots,\Gamma_n$ become standard
(non-goal) IDB relations. It can be verified that $\Pi_3$ is as required.
\end{proof}

\section{Decidability and Complexity}
\label{sec:decidability}

We study the meta problems that emerge from the results in the previous sections such as deciding whether a given OMQ is in $\NL$, $\PTime$-hard, or rewritable into linear Datalog.  We show that all these problems are \ExpTime-complete. Apart
from applying and adapting known lower and upper bounds, the central ingredient
is giving a single exponential time decision procedure for deciding whether an OMQ from $(\EL,\textnormal{conCQ})$ has the ability to simulate \PSA. 
We start with lower bounds, which hold already for $(\EL, \textnormal{AQ})$.
\begin{theorem}
\label{thm:exptime-hard}
Given an OMQ $Q \in (\EL,\textnormal{AQ})$, the following problems are $\ExpTime$-hard:
\begin{enumerate}[(1)]
\item Is $Q$ FO-rewritable?
\item Is $Q$ rewritable into linear Datalog?
\item Is $\textsc{eval}(Q) \in \AC^0$?
\item Is $\textsc{eval}(Q) \in \NL$? (unless $\NL=\PTime$)
\item Is $\textsc{eval}(Q)$ $\NL$-hard?
\item Is $\textsc{eval}(Q)$ $\PTime$-hard?
\end{enumerate}
\end{theorem}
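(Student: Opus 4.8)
The plan is to establish all six $\ExpTime$ lower bounds with a \emph{single} polynomial-time reduction. Concretely, I would fix a convenient $\ExpTime$-complete problem $P$ and construct, from each instance $I$, an OMQ $Q_I=(\Tmc_I,\Sigma,A(x)) \in (\EL,\textnormal{AQ})$ such that
\begin{itemize}
\item if $I$ is a yes-instance, then $Q_I$ has bounded depth, and
\item if $I$ is a no-instance, then $Q_I$ has the ability to simulate \PSA.
\end{itemize}
Before worrying about the reduction itself, let me argue why this dichotomous behaviour simultaneously yields hardness for all six problems. In the yes-case, bounded depth gives FO-rewritability and membership in $\AC^0$ by Theorem~\ref{thm:AC0NL}; moreover bounded depth implies bounded branching and hence bounded pathwidth (Lemma~\ref{lem:pathwidthbranching}), so $Q_I$ is linear Datalog rewritable by Lemma~\ref{lem:LDLog} and in particular in $\NL$. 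Since $\AC^0 \neq \NL$ \cite{FurstSS81} and $\AC^0$ is closed under FO reductions, a problem in $\AC^0$ can be neither $\NL$-hard nor $\PTime$-hard under FO reductions. In the no-case, the ability to simulate \PSA makes {\sc eval}$(Q_I)$ $\PTime$-hard under FO reductions by Lemma~\ref{lem:psa-ptimehard} (hence also $\NL$-hard and, unless $\NL=\PTime$, outside $\NL$); by Lemmas~\ref{lem:bintree-psa} and~\ref{lem:pathwidthbranching} it has unbounded pathwidth and thus is not linear Datalog rewritable (Lemma~\ref{lem:LDLog}), and unbounded pathwidth forces unbounded depth, so $Q_I$ is neither FO-rewritable nor in $\AC^0$ by Theorem~\ref{thm:AC0NL}. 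Matching these answers against the six questions---using that $\ExpTime=\mathrm{co}\ExpTime$ for the ``hardness'' questions~(5) and~(6), whose answers are flipped---gives the theorem.

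For the reduction, I would reduce from the word problem of alternating Turing machines running in polynomial space, which is $\ExpTime$-complete, and engineer $\Tmc_I$ together with the ABox signature $\Sigma$ so that the minimal ABoxes in $\Mmc_{Q_I}$ succinctly encode the AND/OR computation tree of the machine: a single tape configuration is described by polynomially many concept names, so that exponentially many configurations are generated, and derivation of the goal concept $A$ at the root mimics accessibility in a path system. Existential configurations would correspond to a single-successor (chain) continuation, whereas universal configurations must be witnessed at \emph{both} successors and therefore instantiate the $\wedge$-gadget $\Amc_\wedge$ of Definition~\ref{def:psa}. A terminating accepting certificate then caps the derivation at bounded logical depth, while the absence of such a certificate produces arbitrarily deep full binary tree minors, i.e.\ unbounded branching. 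This mirrors and strengthens the known $\ExpTime$ lower bound for boundedness (equivalently FO-rewritability) of $\EL$-based OMQs underlying the characterisation of \cite{ijcai16}.

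The main obstacle is precisely this last point. The existing boundedness reduction only needs to guarantee the \emph{depth} dichotomy (bounded versus unbounded depth), which by itself separates $\AC^0$ from $\NL$ but not $\NL$ from $\PTime$; to cover problems~(2),(4),(5),(6) I must strengthen the hard case from unbounded depth (the ability to simulate \REACH) to unbounded \emph{branching} (the ability to simulate \PSA). This is where the alternating, rather than merely nondeterministic, structure of the source problem is essential: the universal states are what force genuine branching and hence the $\wedge$-gadgets required by Definition~\ref{def:psa}. Verifying the conditions of Definition~\ref{def:psa} for the constructed witnesses---in particular producing the two types $t_0 \subsetneq t_1$ with the prescribed behaviour under $\Amc_\wedge$ and $\Amc_{\mn{source}}$---is then a matter of tracing the encoding; and since here $q=A(x)$ is non-Boolean, Conditions~(5) and~(6) of Definition~\ref{def:psa} are vacuous, which I expect to make the bookkeeping routine once the gadget structure is in place.
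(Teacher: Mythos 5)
Your plan is essentially the paper's proof: the paper also reduces from the word problem of polynomially space-bounded ATMs and obtains all six lower bounds from a single dichotomy between an FO-rewritable (hence $\AC^0$, linear-Datalog-rewritable, non-hard) case and an unboundedly branching case (hence, via the ability to simulate \PSA, \PTime-hard and not linear-Datalog-rewritable). The only difference is that the paper does not build a fresh reduction but reuses the existing one from \cite{BieLuWo-IJCAI13} for FO-rewritability, observing by inspection that its hard instances are in fact unboundedly branching --- exactly the strengthening (alternation forcing genuine branching) that you identify as the crux.
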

\begin{proof}
$\ExpTime$-hardness of (1) is proved in (the appendix of) \cite{BieLuWo-IJCAI13}. By our Theorem~\ref{thm:AC0NLdisc}, (1) and (3) are equivalent, so (3) is also $\ExpTime$-hard.

For (2), (4), (5) and (6), we analyse the mentioned hardness proof from \cite{BieLuWo-IJCAI13} a little closer. The proof is
  by a reduction from the word problem of a polynomially space bounded
  alternating Turing machine (ATM) $M$ that solves an \ExpTime-complete
  problem. The reduction exhibits a polynomial time algorithm that constructs, given an input $w$ to~$M$, an
  OMQ $Q=(\Tmc,\Sigma,
  B(x)) \in (\mathcal{EL}, \textnormal{AQ})$ such that $Q$ is not FO-rewritable if and only if
  $M$ accepts~$w$.
  A careful inspection of the construction of $Q$ and of the ``$\Leftarrow$'' part
  of the proof reveals that
  \begin{itemize}

  \item[($*$)] if $M$ accepts~$w$, then $Q$ is unboundedly branching,
    thus (by Lemma~\ref{lem:pathwidthbranching} and Theorem~\ref{thm:NLPTimedisc}) not linear Datalog rewritable, \PTime-hard, and not in
    \NL (unless $\NL = \PTime$) and (by Theorem~\ref{thm:AC0NLdisc} and since no \PTime-hard problem can be in AC$^0$) 
\NL-hard.

  \end{itemize}
  If $M$ does not accept $w$, then FO-rewritability of $Q$ implies
  that $Q$ is 
  \begin{itemize}

  \item in AC$^0$ and thus in \NL and neither \NL-hard nor \PTime-hard;

  \item linear Datalog rewritable (since every FO-rewritable OMQ from $(\EL,\textnormal{AQ})$
  is rewritable into a UCQ \cite{BieLuWo-IJCAI13}).

  \end{itemize}
  The stated hardness results for (2), (4), (5) and (6) follow.
%
%
\end{proof}
The following theorem summarizes the corresponding upper bounds.
\begin{theorem}
\label{thm:upperbounds}
Given $Q \in (\EL, \textnormal{CQ})$, the following properties can be decided in $\ExpTime$:
\begin{enumerate}[(1)]
\item Is $Q$ FO-rewritable?
\item Is $Q$ rewritable into linear Datalog?
\item Is $\textsc{eval}(Q) \in \AC^0$?
\item Is $\textsc{eval}(Q) \in \NL$? (unless $\NL=\PTime$)
\item Is $\textsc{eval}(Q)$ $\NL$-hard?
\item Is $\textsc{eval}(Q)$ $\PTime$-hard? (unless $\NL=\PTime$)
\end{enumerate}
\end{theorem}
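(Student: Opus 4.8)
The plan is to reduce all six meta problems to deciding two structural properties of $Q$ and then to decide these in single exponential time via the emptiness problem for tree automata. By Theorem~\ref{thm:AC0NLdisc}, $Q$ is FO-rewritable iff $\textsc{eval}(Q)\in\AC^0$ iff $Q$ has bounded depth, and $\textsc{eval}(Q)$ is $\NL$-hard iff $Q$ is \emph{not} FO-rewritable; so problems (1),~(3),~(5) all amount to deciding bounded depth. By Theorem~\ref{thm:NLPTimedisc} together with Lemma~\ref{lem:pathwidthbranching}, $Q$ is linear Datalog rewritable iff $\textsc{eval}(Q)\in\NL$ iff $Q$ has bounded pathwidth iff $Q$ is boundedly branching, and (unless $\NL=\PTime$) $\textsc{eval}(Q)$ is $\PTime$-hard iff $Q$ lacks bounded pathwidth; so problems (2),~(4),~(6) all amount to deciding bounded pathwidth. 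Since $\ExpTime$ is closed under complement, it suffices to give $\ExpTime$ procedures for bounded depth and bounded pathwidth. First I would reduce to connected CQs: using the results of Section~\ref{sec:disconnected}, after removing redundant Boolean MCCs and checking non-emptiness --- both doable by standard $\EL$ entailment and OMQ-equivalence tests in $\ExpTime$ --- the OMQ $Q$ has bounded depth (resp.\ bounded pathwidth) iff every component OMQ $Q_i=(\Tmc,\Sigma,q_i)$ with $q_i$ an MCC does, and there are only polynomially many MCCs.

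By Lemma~\ref{lem:bintree-psa} and Lemma~\ref{lem:pathwidthbranching}, $Q\in(\EL,\textnormal{conCQ})$ has \emph{unbounded} pathwidth iff $Q$ has the ability to simulate \PSA; analogously (Lemma~\ref{lem:simulatereach} and its converse, which follows since the ability to simulate \REACH implies $\NL$-hardness and hence unbounded depth) unbounded depth iff $Q$ has the ability to simulate \REACH. So the heart of the matter is to decide the ability to simulate \PSA in $\ExpTime$; the \REACH case is analogous but simpler, having one fewer distinguished individual and no ancestor-path condition, and for (1)/(3)/(5) one may alternatively cite the known $\ExpTime$ bound for FO-rewritability of $(\EL,\textnormal{CQ})$ from \cite{ijcai16,BieLuWo-IJCAI13}. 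A witness is a pseudo tree-shaped $\Sigma$-ABox $\Amc$ of core size at most $|q|$ with marked $\abf,b,c,d$ and types $t_0\subsetneq t_1$ satisfying Conditions~(1)--(6) of Definition~\ref{def:psa}. I would encode such witnesses as finite labeled trees: the bounded-size core (of which there are only singly-exponentially many) is recorded in the root label, each tree of $\Amc$ is encoded as a subtree, node labels carry the $\Sigma$-assertions, the marks for $\abf,b,c,d$, and the guessed type $t_0$, and counters bounded by $|q|$ enforce the distance and incomparability constraints. I then build an alternating tree automaton $\Amf$ accepting exactly the encodings meeting Conditions~(1)--(6) and test $\Amf$ for emptiness.

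The automaton's work decomposes as follows. For $\Tmc$-types I would precompute, in time single-exponential in $|\Tmc|$, the closure operation that accounts for the anonymous chase trees, so that $\mn{tp}_{\Amc,\Tmc}(\cdot)$ is determined by purely \emph{local} propagation along the ABox tree (upward for inclusions $\exists r.A\sqsubseteq B$, together with the fixpoint induced by existential successors and the global effect of $\top\sqsubseteq A$); $\Amf$ verifies a guessed type labeling against this local rule. Conditions~(2) and~(4) then become local comparisons of type labels at $b,c,d$ and under the ``cut below $c$/$d$ and assert $t_0$'' operation, justified by the glueing Lemma~\ref{lem:aboxunion}. Conditions~(1) and~(3) are the CQ (non-)entailments $\Amc\models q(\abf)$ and $\Amc_b\cup t_0(b)\not\models q(\abf)$; by Lemma~\ref{lem:pseudo} the witnessing homomorphism is core-close and hence confined to the bounded-size core together with short anonymous trees, so its (non-)existence is checkable by $\Amf$ guessing a partial match near the core and verifying the required concept memberships through derivation trees (Lemma~\ref{lem:AQderivation}), exactly as in the 2AFA of Lemma~\ref{lem:2afacorrect}. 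Conditions~(5) and~(6) for Boolean $q$ --- core-closeness of \emph{all} homomorphisms and equality of the length-$|q|$ ancestor paths of $b,c,d$ --- are handled by analogous local and structural checks.

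\textbf{The main obstacle} will be keeping the entire construction at single-exponential cost. The type alphabet is already exponential in $|\Tmc|$, so a second exponential must be avoided. The danger lies in the \emph{negative} conditions --- the non-entailment in Condition~(3) and the universally quantified core-closeness in Condition~(5) --- which a naive design would check by complementing a nondeterministic tree automaton, incurring a second exponential blow-up. I would instead verify these conditions \emph{directly} against the pre-computed (exponential) type information: non-entailment is confirmed by checking that no core-close homomorphism exists relative to the type labeling, a test that does not require complementing an automaton. Designing $\Amf$ so that its state space stays singly-exponential and the negative conditions are absorbed into the type machinery --- so that the emptiness test (polynomial for nondeterministic tree automata, single-exponential for alternating ones over a singly-exponential state set) remains in $\ExpTime$ --- is the technical core of the argument, and matching the resulting upper bound with the lower bounds of Theorem~\ref{thm:exptime-hard} then yields $\ExpTime$-completeness.
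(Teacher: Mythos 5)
Your proposal follows the same overall route as the paper: reduce all six problems via Theorems~\ref{thm:AC0NLdisc} and~\ref{thm:NLPTimedisc} to the connected case (handling emptiness and redundant Boolean MCCs by \ExpTime entailment and containment tests), characterize the remaining question as the ability to simulate \PSA (the paper in fact just cites the known \ExpTime bound of \cite{ijcai16} for (1), (3), (5) rather than re-deriving it via \REACH), encode witnessing pseudo tree-shaped ABoxes together with $\abf,b,c,d$ as labeled trees, and decide the existence of a witness by tree-automata emptiness. The one point where you genuinely diverge is the treatment of the negative Conditions~(3) and~(5) of Definition~\ref{def:psa}, which you single out as the main obstacle. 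The paper sidesteps this entirely by working with two-way \emph{alternating} parity tree automata, which are closed under complementation with only polynomial blowup (dualize the transition function and shift priorities); the automaton for Condition~(3) is thus built exactly like the one for Condition~(1), for the modified ABox $\Amc_b \cup t_0(b)$, and then complemented, and similarly $\Amf_5$ is the complement of an automaton checking the existence of a non-core-close homomorphism. The double-exponential danger you describe arises only if one insists on nondeterministic automata. Your workaround---verifying negative conditions directly against a precomputed type labeling---can be made to work (it is close in spirit to the paper's completion sets in Lemma~\ref{lem:matchcompletion} and the dual of $\Amf_\mn{derive}$), but it is the vaguest part of your write-up; note that certifying that a concept name is \emph{not} derived at an individual is itself a negative condition, so either some complementation or a deterministic bottom-up computation of exact types is unavoidable. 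A minor further difference: the paper fixes the pair $(t_0,t_1)$ outside the automaton, building exponentially many TWAPAs each with polynomially many states, rather than guessing $t_0$ in node labels; both choices keep the emptiness test single exponential.
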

In \cite{ijcai16}, it was shown that (1) is in $\ExpTime$. By Theorem~\ref{thm:AC0NLdisc}, the same algorithm decides (3) and (5). By Theorem~\ref{thm:NLPTimedisc}, the remaining (2), (4) and (6) come down to a single decision problem. We concentrate on deciding (6). We first argue that it suffices to decide (6) for OMQs from $(\EL,\text{conCQ})$, that is, to restrict our attention to connected CQs.  

Let $Q=(\Tmc, \Sigma, q) \in (\EL, \textnormal{CQ})$. To decide whether $\textsc{eval}(Q)$ is $\PTime$-hard (unless $\NL=\PTime$), we can first check whether $Q$ is empty. This can be done in $\ExpTime$ \cite{ijcai16}) and an empty OMQ is clearly not \PTime-hard. Otherwise, we make $Q$ non-redundant (see Section~\ref{sec:disconnected}) by exhaustively removing Boolean MCCs that cause non-redundancy. This can also be done in exponential time since containment of OMQs from $(\EL, \textnormal{CQ})$ is in \ExpTime \cite{ijcai16}. The resulting OMQ $Q'=(\Tmc, \Sigma, q')$ is equivalent to $Q$ and as seen in the proof of Theorem~\ref{thm:NLPTimedisc}, $\textsc{eval}(Q')$ is \PTime-hard if and only if there is an MCC $q'_i$ of $q'$ such that $(\Tmc, \Sigma, q'_i) \in (\EL, \textnormal{conCQ})$ is $\PTime$-hard.

It thus remains to show how (6) can be decided in \ExpTime for OMQs $Q \in (\EL,\text{conCQ})$. For such $Q$, it follows from Lemmas~\ref{lem:pathwidthbranching}, \ref{lem:bintree-psa}, \ref{lem:psa-ptimehard}, and \ref{lem:LDLog} and Theorem~\ref{thm:NLPTime} that (6) is equivalent to deciding whether $Q$ has the ability to simulate \PSA.  In the remainder of this section, we reduce the question whether a given OMQ $Q \in (\EL, \textnormal{conCQ})$ has the ability to simulate \PSA to the \mbox{(non-)emptiness} problem of two-way alternating parity tree automata (TWAPA), which is 
\ExpTime-complete. In fact, we construct a TWAPA that accepts precisely those (encodings of) pseudo tree-shaped ABoxes that witness the ability to simulate \PSA and then check non-emptiness. 

\smallskip
\noindent
\textbf{Two-way alternating parity tree automata (TWAPA).}  A \emph{tree} is a non-empty (and potentially infinite) set $T \subseteq
\mathbb{N}^*$ closed under prefixes. We say that $T$ is \emph{$m$-ary}
if $T \subseteq \{1,\dots,m\}^*$. For an alphabet $\Gamma$, a
\emph{$\Gamma$-labeled tree} is a pair $(T,L)$ with $T$ a tree and $L:T
\rightarrow \Gamma$ a node labeling function.

For any set $X$, let $\Bmc^+(X)$ denote the set of all positive Boolean formulas over
$X$, i.e., formulas built using conjunction and disjunction over the
elements of $X$ used as propositional variables, and where the special
formulas $\mn{true}$ and $\mn{false}$ are allowed as well.
An \emph{infinite path} $P$ of a tree $T$ is a
prefix-closed set $P \subseteq T$ such that for every $i \geq 0$,
there is a unique $x \in P$ with $|x|=i$.

\begin{definition}[TWAPA]
  A \emph{two-way alternating parity automaton 
    (TWAPA) on finite $m$-ary trees} is a tuple
  $\Amf=(S,\Gamma,\delta,s_0,c)$ where $S$ is a finite set of
  \emph{states}, $\Gamma$ is a finite alphabet, $\delta: S \times
  \Gamma \rightarrow \Bmc^+(\mn{tran}(\Amf))$ is the \emph{transition
    function} with $\mn{tran}(\Amf) = \{ \langle i \rangle s, \ [i] s
  \mid -1 \leq
  i \leq m \text{ and } s \in S \}$ the set of
  \emph{transitions} of \Amf, $s_0 \in S$ is the \emph{initial state},
  and $c:S \rightarrow \mathbb{N}$ is the \emph{parity condition} that 
  assigns to each state a \emph{priority}.
\end{definition}
Intuitively, a transition $\langle i \rangle s$ with $i>0$ means that
a copy of the automaton in state $s$ is sent to the $i$-th successor
of the current node, which is then required to exist. Similarly,
$\langle 0 \rangle s$ means that the automaton stays at the current
node and switches to state $s$, and $\langle -1 \rangle s$ indicates
moving to the predecessor of the current node, which is then required
to exist. Transitions $[i] s$ mean that a copy of the automaton in
state $s$ is sent on the relevant successor if that successor exists
(which is not required).
\begin{definition}[Run, Acceptance]
  Let $\Amf = (S,\Gamma,\delta,s_0,c)$ be a TWAPA and $(T,L)$ a finite $\Gamma$-labeled tree. A \emph{configuration} is a pair from $T \times S$. A \emph{run} of $\Amf$ on $(T,L)$ from the configuration $\gamma$ is a $T \times S$-labeled tree $(T_r,r)$ such that the following conditions are satisfied:
  \begin{enumerate}

  \item $r(\varepsilon) = \gamma$;
    
  \item if $y \in T_r$, $r(y)=(x,s)$, and $\delta(s,L(x))=\vp$, then
    there is a (possibly empty) set $S \subseteq \mn{tran}(\Amf)$ such
    that $S$ (viewed as a propositional valuation) satisfies $\vp$ as
    well as the following conditions:
    \begin{enumerate}

    \item if $\langle i \rangle s' \in S$, then $x \cdot i \in T$ and 
      there is a node $y \cdot j \in T_r$ such that $r(y \cdot j)=(x 
      \cdot i,s')$;

    \item if $[i]s' \in S$ and $x \cdot i \in T$, then
      there is a
      node $y \cdot j \in T_r$ such that $r(y \cdot j)=(x \cdot
      i,s')$.

    \end{enumerate}

  \end{enumerate}
  We say that $(T_r,r)$ is \emph{accepting} if on all infinite paths
  of $T_r$, the maximum priority that
  appears infinitely often on this path is even.  A finite $\Gamma$-labeled tree
  $(T,L)$ is \emph{accepted} by \Amf if there is an accepting run of
  \Amf on $(T,L)$ from the configuration $(\varepsilon, s_0)$. We use $L(\Amf)$ to denote the set of all finite
  $\Gamma$-labeled tree accepted by \Amf.
  \end{definition}
It is known (and easy to see) that TWAPAs are closed under
complementation and intersection, and that these constructions
involve only a polynomial blowup. In particular, complementation
boils down to dualizing the transitions and increasing all priorities by one.
%
It is also known that 
their emptiness problem can be solved in time single exponential in the number of states and highest occurring priority, and polynomial in all other components of the automaton \cite{Vardi}. 
In what follows, we shall generally only explicitly
analyze the number of states of a TWAPA, but only implicitly take care
that all other components are of the allowed size for the complexity
result that we aim to obtain.

\smallskip
\noindent
\textbf{Encoding pseudo tree-shaped ABoxes.} To check the ability to simulate \PSA using TWAPAs, we build one TWAPA $\Amf_{t_0,t_1}$ for every pair $(t_0,t_1)$ of $\Tmc$-types. An input tree for the TWAPA encodes a tuple $(\Amc, \abf, b, c, d)$ of a pseudo tree-shaped ABox $\Amc$ of core size at most $|q|$, a tuple $\abf$ from the core and three distinguished individuals $b$, $c$ and $d$. The TWAPA $\Amf_{t_0,t_1}$ should accept a tree that encodes $(\Amc, \abf, b, c, d)$ if and only if $t_0,t_1,\Amc, \abf, b, c$ and $d$ witness the ability to simulate \PSA according to Definition~\ref{def:psa}. The $\ExpTime$ decision procedure is obtained by checking whether at least one of the (exponentially many) $\Amf_{t_0, t_1}$ accepts a non-empty language. 

We encode tuples $(\Amc, \abf, b, c, d)$ as finite $(|\Tmc| \cdot |q|)$-ary
$\Gamma_\varepsilon \cup \Gamma_N$-labeled trees, where
$\Gamma_\varepsilon$ is the alphabet used for labeling the root node
and $\Gamma_N$ is for non-root nodes. These alphabets are different
because the root of a tree encodes the entire core of a pseudo
tree-shaped ABox whereas each non-root node represents a single
non-core individual.

Let $\Csf_{\mn{core}} \subseteq \NI$ be a fixed set of size $|q|$. Define $\Gamma_\varepsilon$ to be the set of all tuples $(\Bmc, \abf)$, where $\Bmc$ is a $\Sigma$-ABox over $\Csf_\mn{core}$ and $\abf$ a tuple of length $\mn{ar}(q)$ from $\mn{ind}(\Bmc)$. Let $\mn{ROL}$ be the set of roles that appear in $\Tmc$ or $\Sigma$ and let $\mn{CN}$ by the set of all concept names that appear in $\Tmc$ or $\Sigma$. Let $S = \mn{ROL} \cup \mn{CN} \cup \Csf_\mn{core} \cup \{b, c, d\}$. The alphabet $\Gamma_N$ is defined to be the set of all subsets of $S$ that contain exactly one element from $\mn{ROL}$, at most one element from $\Csf_\mn{core}$ and at most one element of $\{b, c, d\}$. We call a $(\Gamma_\varepsilon \cup \Gamma_N)$-labeled tree $(T,L)$ \emph{proper} if
\begin{itemize}
\item $L(\varepsilon) \in \Gamma_\varepsilon$ and $L(x) \in \Gamma_N$ for all $x \neq \varepsilon$,
\item $L(x)$ contains an element of $\Csf_\mn{core}$ if and only if $x$ is a child of $\varepsilon$,
\item there is exactly one node $x_b \in T$ with $b \in L(x_b)$, exactly one node $x_c \in T$ with $c \in L(x_c)$ and exactly one node $x_d \in T$ with $d \in L(x_d)$,
\item the nodes $x_c$ and $x_d$ are incomparable descendants of $x_b$,
\item the nodes $\varepsilon$, $x_b$, $x_c$ and $x_d$ have pairwise distance more than $|q|$ from each other.
\end{itemize}
A proper tree $(T,L)$ encodes a tuple $(\Amc, \abf, b,c,d)$ in the following way. If $L(\varepsilon) = (\Bmc, \abf)$, then
\[
\begin{array}{rcl}
   \Amc &=& \Bmc \cup \{ A(x) \mid A \in L(x), x \neq \varepsilon \}\\[0.5mm]
   && \cup\; \{r(a,x) \mid \{a,r\} \subseteq L(x) \text{ with } a \in \Csf_\mn{core}\} \\[0.5mm]
   && \cup\; \{ r(x,y) \mid r \in L(y), y \text{ is a child of } x, x \neq \varepsilon\}
\end{array}
\]
with $x_b$ replaced with $b$, $x_c$ with $c$, and $x_d$ with $d$. It is easy to see that there is a TWAPA $\Amf_\mn{proper}$ that accepts a $(\Gamma_\varepsilon \cup \Gamma_N)$-labeled tree if and only if it is proper.

From now on, let $t_0$ and $t_1$ be fixed. We construct the TWAPA $\Amf_{t_0,t_1}$ as the intersection of $\Amf_\mn{proper}$ and TWAPAs $\Amf_1, \ldots, \Amf_6$ where each $\Amc_i$ accepts a proper input tree $(T,L)$ if and only if the tuple
$(\Amc, \abf, b,c,d)$ encoded by $(T,L)$ satisfies Condition~($i$) from Definition~\ref{def:psa}. We make sure that all $\Amf_i$ can be constructed in exponential time and have only polynomially many states in the size of $Q$.

\smallskip
\noindent
\textbf{Derivation of concept names.} Before describing any of the $\Amf_i$ in detail, we describe one capability of TWAPAs that most of the $\Amf_i$ will make use of, namely to check whether a certain concept name is derived at a certain individual. We thus construct a TWAPA $\Amf_\mn{derive}$ with states $S_\mn{derive} =$
$$\{d_A \mid A \in \mn{CN}\} \cup \{d_A^a \mid A \in \mn{CN} \wedge a \in \Csf_\mn{core}\} \cup \{d_r \mid r \in \mn{ROL}\} \cup \{d_a \mid a \in \Csf_\mn{core}\}$$ such that
\begin{itemize}
\item if $\Amf_\mn{derive}$ is started on a proper input tree encoding $(\Amc, \abf, b, c, d)$ from a configuration $(a, d_A)$, then it accepts if and only if $\Tmc, \Amc \models A(a)$;
\item if $\Amf_\mn{derive}$ is started on a proper input tree encoding $(\Amc, \abf, b, c, d)$ from a configuration $(d_A^a,\varepsilon)$, then it accepts if and only if $\Tmc, \Amc \models A(a)$.
\end{itemize}

By Lemma~\ref{lem:AQderivation}, $\Tmc, \Amc \models A(a)$ if and only if there is a derivation tree for $A(a)$. We give the straightforward construction of $\Amf_\mn{derive}$, that checks for the existence of a derivation tree of $A(a)$. 
For brevity, let $\ell=|\Tmc| + |q|$. Let $\sigma \in \Gamma_N$  not contain an element of $\Csf_\mn{core}$ and let $r$ be the unique role name in $\sigma$. If $A \in \sigma$ or $\top \sqsubseteq A \in \Tmc$, we set $\delta(d_A,\sigma) = \mn{true}$. Otherwise, set
\begin{align*}
\delta(d_A,\sigma) = &\Big( \bigvee_{\Tmc \models A_1 \sqcap \ldots \sqcap A_n \sqsubseteq A} \; \bigwedge_{i=1}^n \langle 0 \rangle d_{A_i} \Big) \vee \Big( \bigvee_{\exists s.B \sqsubseteq A \in \Tmc} \; \bigvee_{i=1}^\ell \langle i \rangle (d_B \wedge d_s) \Big)\\
&\vee \Big( \bigvee_{\exists r^-.B \sqsubseteq A} \langle -1 \rangle d_B \Big)
\end{align*}
Now let $\sigma \in \Gamma_N$ contain $a \in \Csf_\mn{core}$ and let again
$r$ be the unique role name in $\sigma$. If $A \in \sigma$ or $\top \sqsubseteq A \in \Tmc$, we set $\delta(d_A,\sigma) = \mn{true}$. Otherwise, set
\begin{align*}
\delta(d_A,\sigma) = &\Big( \bigvee_{\Tmc \models A_1 \sqcap \ldots \sqcap A_n \sqsubseteq A} \; \bigwedge_{i=1}^n \langle 0 \rangle d_{A_i} \Big) \vee \Big( \bigvee_{\exists s.B \sqsubseteq A \in \Tmc} \; \bigvee_{i=1}^\ell \langle i \rangle (d_B \wedge d_s) \Big)\\
&\vee \Big( \bigvee_{\exists r^-.B \sqsubseteq A} \langle -1 \rangle d_B^a \Big)
\end{align*}
Next, let $\sigma=(\Bmc,\abf) \in \Gamma_\varepsilon$. If $A(a) \in \Bmc$ or $\top \sqsubseteq A \in \Tmc$, we set $\delta(d_A^a,\sigma) = \mn{true}$. Otherwise, set
\begin{align*}
\delta(d_A^a,\sigma) = &\Big( \bigvee_{\Tmc \models A_1 \sqcap \ldots \sqcap A_n \sqsubseteq A} \; \bigwedge_{i=1}^n \langle 0 \rangle d_{A_i}^a \Big) \vee \Big( \bigvee_{\exists s.B \sqsubseteq A \in \Tmc} \; \bigvee_{i=1}^\ell \langle i \rangle (d_B \wedge d_s \wedge d_a) \Big)\\
 &\vee \Big( \bigvee_{\exists s.B \sqsubseteq A, s(a,b) \in \Bmc} \langle 0 \rangle d_B^b \Big) \vee \Big( \bigvee_{\exists s^-.B \sqsubseteq A, s(b,a) \in \Bmc} \langle 0 \rangle d_B^b \Big)
\end{align*}

Finally, let $\sigma \in \Gamma_N$ and $a \in \Csf_\mn{core}$. Set $\delta(d_a,\sigma) = \mn{true}$ if $a \in \sigma$ and $\delta(d_a,\sigma) = \mn{false}$ if $a \notin \sigma$.

\smallskip
\noindent
\textbf{Construction of $\Amf_1$.} This TWAPA checks whether $\Amc \models Q(\abf)$. 
Since by Condition~5 of the ability to simulate \PSA, every homomorphism from $q$ to $\Umc_{\Amc, \Tmc}$ is core close, we only need to check whether $\Amc \models Q(\abf)$ via a core close homomorphism. The existence of such a homomorphism, in turn, depends only on the core of \Amc and on which concept names are derived at core individuals. If, in fact, $h$ is a core close homomorphism from $q$ to $\Umc_{\Amc, \Tmc}$, then $h$ hits at least one core individual or an element in an anonymous subtree below a core individual. Of course, $h$ might also hit individuals and anonymous elements outside the core. However, outside the core $\Umc_{\Amc, \Tmc}$ is tree-shaped. Take any $z \in \mn{var}(q)$ such that $h(z)$ lies outside of the core. Then there is a unique $a \in \Csf_\mn{core}$ such that $h(z)$ lies in a tree below $a$. Since $q$ is connected, there is an atom $r(x,y) \in q$ such that $h(x) = a$, $h(y)$ lies in the tree below $a$, and $z \in \mn{reach}(x,y)$. But then $\Cmc_q$ contains $q|_{\mn{reach}(x,y)}$ viewed as an \EL-concept $C$, $\Tmc \models C \sqsubseteq A_C$, and $A_C(a) \in \Umc_{\Amc, \Tmc}$. Thus, the concept names $A_C$ derived at core individuals completely represent the restriction of $h$ to the variables in $q$ that are mapped to outside the core.

Let $\Amc$ be a pseudo tree-shaped $\Sigma$-ABox with core $\Bmc$, $\mn{ind}(\Bmc) \subseteq \Csf_\mn{core}$, and $\abf$ a tuple from $\Csf_\mn{core}$. If $\Amc \models Q(\abf)$, then the set $\{A(a) \mid a \in \Csf_\mn{core} \text{ and } a \in A^{\Umc_{\Amc, \Tmc}}\}$ is a \emph{completion for $(\Bmc, \abf)$}. Let $\mn{Comp}{(\Bmc, \abf)}$ be the set of all completions for $(\Bmc, \abf)$. Using the arguments above, one can show the following.
\begin{lemma}
\label{lem:matchcompletion}
 Let $\Amc$ be a pseudo tree-shaped $\Sigma$-ABox with core $\Bmc$, $\mn{ind}(\Bmc) \subseteq \Csf_\mn{core}$, and $\abf$ a tuple from $\Csf_\mn{core}$. Then, 
$\Amc \models Q(\abf)$ if and only if there is an  $M \in \mn{Comp}{(\Bmc, \abf)}$ such that $\Amc \models A(a)$ for all $A(a) \in M$.
\end{lemma}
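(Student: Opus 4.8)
The plan is to prove the two directions separately: the ``only if'' direction is immediate from the definition of a completion, while the ``if'' direction carries the content and reuses the $\Cmc_q$-machinery set up just before the lemma.

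For ``only if'', assume $\Amc \models Q(\abf)$. Then, by the definition of completion given above, the set $M = \{A(a) \mid a \in \Csf_\mn{core} \text{ and } a \in A^{\Umc_{\Amc, \Tmc}}\}$ is itself a completion for $(\Bmc, \abf)$, so $M \in \mn{Comp}(\Bmc, \abf)$; and since $a \in A^{\Umc_{\Amc, \Tmc}}$ means exactly $\Amc \models A(a)$, every assertion of $M$ is entailed by $\Amc$. This already witnesses the right-hand side.

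For ``if'', fix $M \in \mn{Comp}(\Bmc, \abf)$ with $\Amc \models A(a)$ for all $A(a) \in M$, and let $\Amc'$ be a pseudo tree-shaped $\Sigma$-ABox with core $\Bmc$ realizing $M$, so that $\{A \mid A(a) \in M\} = \mn{tp}_{\Amc', \Tmc}(a)$ for every $a \in \Csf_\mn{core}$. Since $\Amc' \models Q(\abf)$, I would fix (by Lemma~\ref{lem:pseudo}) a homomorphism $h \colon q \to \Umc_{\Amc', \Tmc}$ with $h(\xbf) = \abf$ whose range consists solely of core individuals and anonymous elements in trees rooted at core individuals; in particular $h$ is core close. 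The hypothesis then reads $\mn{tp}_{\Amc', \Tmc}(a) \subseteq \mn{tp}_{\Amc, \Tmc}(a)$ for all core individuals $a$. I build a homomorphism $g \colon q \to \Umc_{\Amc, \Tmc}$ with $g(\xbf) = \abf$ as follows. On every variable $x$ with $h(x)$ a core individual $a$ put $g(x) = a$; the concept atoms $A(x) \in q$ survive because $A \in \mn{tp}_{\Amc', \Tmc}(a) \subseteq \mn{tp}_{\Amc, \Tmc}(a)$, and role atoms joining two such variables hold already in the shared core $\Bmc$. For a variable $z$ with $h(z)$ outside the core, the core-closeness of $h$ pins down a unique core individual $a$ whose anonymous tree contains $h(z)$, together with an entry atom $r(x,y) \in q$ satisfying $h(x) = a$, $h(y)$ below $a$, and $z \in \mn{reach}(x,y)$; then $q|_{\mn{reach}(x,y)}$, read as the $\EL$-concept $C$, lies in $\Cmc_q$, whence $A_C \in \mn{tp}_{\Amc', \Tmc}(a)$ and so $A_C(a) \in M$. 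Domination of types gives $A_C \in \mn{tp}_{\Amc, \Tmc}(a)$, and since $\Umc_{\Amc, \Tmc}$ is a universal model of \Tmc this forces a matching anonymous subtree below $a$ into which I extend $g$ on all of $\mn{reach}(x,y)$. Assembling these partial maps over all entry atoms yields $g$, hence $\Amc \models Q(\abf)$.

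The main obstacle is the ``if'' direction, and within it the step that justifies replacing the whole ABox $\Amc$ by the concept names derived at its core: one must argue that $h$ may be taken core close (which is where the standing core-close restriction, enforced separately via Condition~5 of Definition~\ref{def:psa}, is used) and, crucially, that the names $A_C$ recorded in $M$ faithfully account for \emph{every} excursion of $h$ into the anonymous trees below core individuals. This is exactly what the deliberate choice of $\Cmc_q$ --- containing $q|_{\mn{reach}(x,y)}$ for every guarded pair $(x,y)$ --- was engineered to guarantee, so the remaining delicate bookkeeping is to check that distinct excursions attach to distinct entry atoms and that the resulting partial maps combine into a single well-defined homomorphism $g$.
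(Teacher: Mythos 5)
Your overall strategy is the one the paper intends (the paper itself offers no detailed proof, only ``using the arguments above''), and your ``only if'' direction is correct. The genuine gap sits exactly where you locate the main obstacle: the appeal to Lemma~\ref{lem:pseudo}. Applied to $\Amc'\models Q(\abf)$, that lemma does \emph{not} yield a homomorphism from $q$ to $\Umc_{\Amc',\Tmc}$ whose range is confined to core individuals and anonymous trees rooted at core individuals; it yields a \emph{different} pseudo tree-shaped ABox $\Amc''$ whose core is determined by where the query happens to map and need not be $\Bmc$ at all. For the given $\Amc'$ with the given core $\Bmc$, a witnessing homomorphism will in general send variables to non-core ABox individuals inside the trees $\Amc'_i$ (and to anonymous elements below those), and your construction of $g$ never treats this case: you only handle images that are core individuals or lie in anonymous trees directly below core individuals. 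This particular omission is cheaply repaired, since everything strictly below a core individual $a$ (ABox individuals of the $\Amc'_i$ and anonymous elements alike) is tree-shaped, so your entry-atom/$\mn{reach}(x,y)$/$A_C$ argument applies verbatim to excursions into the ABox trees; but it must be run for all variables mapped below $a$, not only the anonymous ones.

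The more serious consequence of the same error is the case, possible when $q$ is Boolean, in which the witness for $\Amc'\models Q(\abf)$ hits no core individual at all; then there is no entry atom $r(x,y)$ with $h(x)=a$ and your $g$ is empty. Condition~5 of Definition~\ref{def:psa} does not rescue this: it constrains homomorphisms into $\Umc_{\Amc,\Tmc}$ for the candidate input ABox, not into $\Umc_{\Amc',\Tmc}$ for the ABox generating the completion $M$. Indeed, with completions ranging over \emph{all} pseudo tree-shaped $\Amc'$ with core $\Bmc$, the ``if'' direction fails outright: take $q=\exists x\,\exists y\,(r(x,y)\wedge A(y))$, $\Tmc=\emptyset$ (before the $A_C$-extension), core $\Bmc=\{B(a_0)\}$, and $\Amc'$ obtained by attaching the path $r(a_0,b_1),r(b_1,b_2),A(b_2)$; the resulting completion records nothing useful at $a_0$, yet $\Bmc$ alone does not satisfy $Q$. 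So the proof must explicitly restrict to completions generated by core-close witnesses (and handle the ``whole query inside one anonymous tree below $a$'' case by monotonicity of the chase-generated tree in the type of $a$), rather than deriving such a restriction from Lemma~\ref{lem:pseudo}, which cannot supply it.
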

Each set $\mn{Comp}{(\Bmc, \abf)}$ has at most exponentially many completions
and can be computed in exponential time. Moreover, there are only exponentially
many choices for $(\Bmc,\abf)$. The strategy of $\Amf_1=(S_\mn{derive} \cup \{s_0\},\Gamma_\varepsilon \cup \Gamma_N,\delta,s_0,c)$ is as follows: Guess a match completion set $M$ for $Q$ regarding $(\Bmc, \abf)$, where $(\Bmc, \abf) \in \Gamma_\varepsilon$ is the label of the root of the input tree and then check whether all $A(a) \in M$ can be derived in $\Amc$. The parity condition $c$ assigns $1$ to every state, so precisely the finite runs are accepting. The transition function $\delta$ for states in $S_\mn{derive}$ is defined as before, and additionally we set

\begin{align*}
\delta(s_0,(\Bmc,\abf)) = \bigvee_{M \in \mn{Comp}_Q^{(\Bmc,\abf)}} \bigwedge_{A(a) \in M} d_A^a.
\end{align*}

\smallskip
\noindent
\textbf{Construction of $\Amf_2$.} This TWAPA checks that $t_1 = \mn{tp}_{\Amc, \Tmc} (b) = \mn{tp}_{\Amc, \Tmc} (c) = \mn{tp}_{\Amc, \Tmc} (d)$. Using $\Amf_\mn{derive}$
and its dualization, this is straightforward: send a copy to the nodes in the input
tree that are marked with $b$, $c$, and $d$, and then use $\Amf_\mn{derive}$ to
make sure that all $A \in t_1$ are derived there and the dual of $\Amf_\mn{derive}$
to make sure that no $A \notin t_1$ is derived there.

\smallskip
\noindent
\textbf{Construction of $\Amf_3$.} This TWAPA checks that $\Amc_b \cup t_0(b), \Tmc \not\models q(\abf)$. It is constructed in the same way as $\Amf_1$, but using $\Amc_b \cup t_0(b)$ instead of $\Amc$ for defining completions and
modifying $\Amf_\mn{derive}$ to that it assumes all concept names from $t_0$
to be true at the node of the input tree marked with $b$ and disregards the
subtree below. Also, we complement the constructed TWAPA at the end.

\smallskip
\noindent
\textbf{Construction of $\Amf_4$.} Similar to $\Amc_3$.


\smallskip
\noindent
\textbf{Construction of $\Amf_5$.} This TWAPA checks that every homomorphism from $q$ to $\Umc_{\Amc, \Tmc}$ is core close and this condition is only required if $q$ is Boolean. The condition is always true when $q$ is not treeifiable, since every homomorphism from a query that is not treeifiable into $\Umc_{\Amc, \Tmc}$ hits the core. Thus, if $Q$ is not Boolean or not treeifiable, we define $\Amf_5$ to be the TWAPA that accepts every input. If $Q$ is Boolean and treeifiable, we define a TWAPA $\Amf_5'$ that checks the negation of Condition~5 and then define $\Amf_5$ to be the complement of $\Amf_5'$. Let $C_q$ be the $\EL$-concept that corresponds to $q^\mn{tree}$. The TWAPA $\Amf_5'$ has to check whether $C_q$ is derived at any non-core individual $a$ or at an anonymous individual below a non-core individual. To check whether $C_q$ gets derived at an anonymous individual, define $M_Q$ be the set of all $\Tmc$-types $t$ with $t(a), \Tmc \models \exists x \, C_q(x)$. The set $M_Q$ can be computed in exponential time. Now $\Amf_5'$ guesses a non-core individual $a$ and $t \in M_Q$ and checks that $\mn{tp}_{\Amc, \Tmc}(a) =t$ using $\Amf_\mn{derive}$ and its dualization.

\smallskip
\noindent
\textbf{Construction of $\Amf_6$.} Condition~6 is only required when $q$ is Boolean. 
If $q$ is Boolean, this TWAPA checks that $b$, $c$ and $d$ all have the same ancestor path up to length $|q|$. The idea is to guess an ancestor path $r_1 r_2 \ldots r_{|q|}$ up front and then verify that $b$, $c$ and $d$ all have this ancestor path. To achieve this using only polynomially many states, the guessed path is not stored in a single state. Instead, we use $|q|$ copies of the automaton, the $i$-th copy guessing
states of the form $s_{i,r}$ which stands for $r_i = r$. This copy then further
spawns into three copies that visit the nodes labeled $b$, $c$, and $d$, travels 
upwards from there $n-i$ steps, and checks that the node label there contains $r$.
%
%

\section{Conclusion}

We have established a complexity trichotomy between AC$^0$, \NL, and \PTime for OMQs from $(\EL,\text{CQ})$. We have also proved that linear Datalog rewritability coincides with OMQ evaluation in \NL and that deciding all these (and related) 
properties is \ExpTime complete with the lower bounds applying already to
$(\EL,\text{AQ})$.

There are several natural directions in which our results can be generalized. One is to transitions from CQs to unions of CQs (UCQs), that is, to consider the OMQ language $(\EL, \textnormal{UCQ})$. We conjecture that this generalization is not difficult and
can be achieved by replacing CQs with UCQs in all of our proofs; where we work 
with connected CQs, one would then work with UCQs in which every CQ is connected.
In fact, we only refrained from doing so since it makes all proofs more technical and distracts from the main ideas.

An important direction for future work is to extend our analysis to \ELI, that is,
to add inverse roles. Even the case of  $(\ELI,\textnormal{AQ})$ appears to be
challenging. In fact, it can be seen that a complexity classification of $(\ELI,\textnormal{AQ})$ is equivalent to a complexity classification of all
CSPs that have tree obstructions. In the following, we elaborate on this
extension.

With inverse roles, there are OMQs $(\Tmc,\Sigma,A(x)$) that are \LogSpace-complete such as when setting $\Tmc = \{ \exists r . A \sqsubseteq A,\ \exists r^- . A \sqsubseteq A\}$ and $\Sigma = \{ r,A\}$. Using a variation of the techniques from Section~\ref{sec:AC0NL} and the technique of \emph{transfer sequences} from \cite{ijcai16}, it should not be too hard to establish a dichotomy between AC$^0$ and \LogSpace in $(\ELI,\text{CQ})$. We conjecture that $\AC^0$, \LogSpace, \NL, and \PTime are the only complexities that occur. We also conjecture that \LogSpace-completeness coincides with rewritability into symmetric Datalog \cite{EgLaTe07}. 

Lifting our dichotomy between \NL and \PTime to $(\ELI,\text{AQ})$ is non-trivial. In fact, we give below an example which shows that unbounded branching no longer coincides with bounded pathwidth and thus our proof strategy, which uses unbounded branching in central places, has to be revised. Moreover, it seems
difficult to approach the dichotomy between \LogSpace and \NL without first
solving the \NL versus \PTime case. In this context, it is interesting to point
out that for CSPs, the following conditional result is known \cite{Kazda15}: if
rewritability into linear Datalog coincides with \NL, then rewritability into
symmetric Datalog coincides with \LogSpace.

\begin{example}
\label{exa:ELI}
Consider the OMQ $Q=(\Tmc, \Sigma, A(x)) \in (\ELI, \textnormal{AQ})$ with $\Sigma =  \{r, s, B, E, L\}$ and
\begin{align*}
\Tmc = \{&B \sqsubseteq M_1, \exists s.M_1 \sqsubseteq M_1, \exists r M_1 \sqsubseteq M_1', \exists s^- M_1' \sqsubseteq M_2,\\
&\exists r^- M_2 \sqsubseteq M_2, M_2 \sqcap L \sqsubseteq M_1, M_2 \sqcap E \sqsubseteq A, \exists s.A \sqsubseteq A\}\,.
\end{align*}
$Q$ is unboundedly branching, as witnessed by the ABox in Figure~\ref{fig:ELI} and generalizations thereof to arbitrary depth. A derivation of the query starts at $B$, the beginning marker, then it uses markers $M_1$ and $M_2$ to visit all the leafs from left to right in sequence, until it reaches $E$, the end marker, to derive the queried concept name $A$.

At the same time, $Q$ is rewritable into linear Datalog and thus in $\NL$, showing that unbounded branching and $\PTime$-hardness no longer coincide.
\end{example}

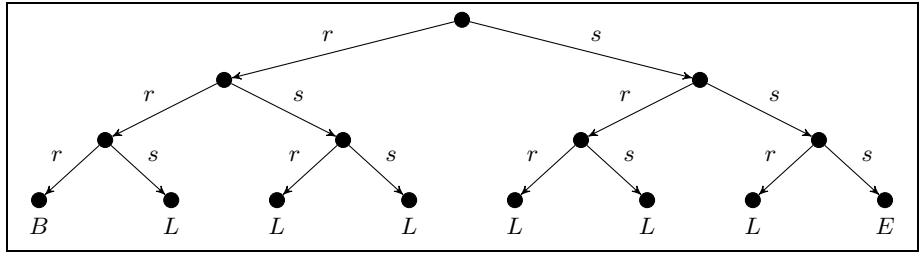
\begin{figure}
\label{fig:ELI}
\begin{boxedminipage}[t]{\columnwidth}
\centering
\begin{tikzpicture}[->,>=stealth',
level 1/.style={sibling distance=18em},
level 2/.style={sibling distance=9em},
level 3/.style={sibling distance=5em},
level 4/.style={sibling distance=2.5em},
level 5/.style={sibling distance=0.7em},
level distance = 0.8cm,
font=\sffamily\small]
\tikzstyle{individual}=[shape=circle, draw,inner sep=2.0pt, fill=black]

\node [individual] {}
    child{ node [individual] {}
        child{ node [individual] {} 
            child{ node [individual] [label=below:$B$] {} edge from parent node[above left] {$r$}}
            child{ node [individual] [label=below:$L$] {} edge from parent node[above right] {$s$}}
            edge from parent node[above left] {$r$}}
        child{ node [individual] {}
            child{ node [individual] [label=below:$L$] {} edge from parent node[above left] {$r$}}
            child{ node [individual] [label=below:$L$] {} edge from parent node[above right] {$s$}}
            edge from parent node[above right] {$s$}}
        edge from parent node[above left] {$r$}}
    child{ node [individual] {} 
        child{ node [individual] {} 
            child{ node [individual] [label=below:$L$] {} edge from parent node[above left] {$r$}}
            child{ node [individual] [label=below:$L$] {} edge from parent node[above right] {$s$}}
            edge from parent node[above left] {$r$}}
        child{ node [individual] {}
            child{ node [individual] [label=below:$L$] {} edge from parent node[above left] {$r$}}
            child{ node [individual] [label=below:$E$] {} edge from parent node[above right] {$s$}}
            edge from parent node[above right] {$s$}}
        edge from parent node[above right] {$s$}}
; 
\end{tikzpicture}
\end{boxedminipage}
\caption{An ABox $\Amc$ with $\Amc \models Q(a)$, where $a$ is the root of $\Amc$ and $Q$ is the OMQ from Example~\ref{exa:ELI}.}
\end{figure}

\noindent {\bf Acknowledgements}. This research was supported by ERC consolidator grant 647289 CODA.

\section*{References}

\end{document}